\newlength\abovesectionskip
\newlength\belowsectionskip
\def\sectionfont{\normalfont\Large\bfseries}
\newlength\abovesubsectionskip
\newlength\belowsubsectionskip
\def\subsectionfont{\normalfont\large\bfseries}
\newlength\abovesubsubsectionskip
\newlength\belowsubsubsectionskip
\def\subsubsectionfont{\normalfont\normalsize\bfseries}
\newlength\aboveparagraphskip
\newlength\belowparagraphskip
\def\paragraphfont{\normalfont\normalsize\bfseries}
\def\section{\@startsection{section}{1}{\z@}{-\abovesectionskip}%
               {\belowsectionskip}{\sectionfont}}
\def\subsection{\@startsection{subsection}{2}{\z@}{-\abovesubsectionskip}%
                  {\belowsubsectionskip}{\subsectionfont}}
\def\subsubsection{\@startsection{subsubsection}{3}{\z@}%
                     {-\abovesubsubsectionskip}{\belowsubsubsectionskip}%
                     {\subsubsectionfont}}
\def\paragraph{\@startsection{paragraph}{4}{\z@}{-\aboveparagraphskip}%
                 {-\belowparagraphskip}{\paragraphfont}}
\renewenvironment{align*}{%
  \abovedisplayskip 3pt plus 1pt%
  \belowdisplayskip 3pt plus 1pt%
  \start@align\@ne\st@rredtrue\m@ne
}{%
  \endalign
}
\let\stdequation\equation
\renewcommand*\equation{%
  \abovedisplayskip 3pt plus 1pt%
  \belowdisplayskip 4pt plus 1pt%
  \stdequation}
\DeclareRobustCommand{\[}{
  \abovedisplayskip 3pt plus 1pt%
  \belowdisplayskip 4pt plus 1pt%
  \begin{equation*}
}
\renewenvironment{itemize}{
    \begin{list}{$\bullet$}{
        \setlength{\labelsep}{3mm}\setlength{\itemindent}{0mm}\setlength{\labelwidth}{3mm}
        \setlength{\leftmargin}{12mm}
        \setlength{\itemsep}{1pt}\setlength{\parsep}{0mm}
        \setlength{\topsep}{0mm}\setlength{\listparindent}{0pt}
    }
}
{
    \end{list}
}
\newcommand{\mycomment}[1]{}
\newcommand{\comment}[1]{}
\numberwithin{equation}{section} 
\numberwithin{figure}{section} 
\newcommand{\thmabove}{3pt}
\newcommand{\thmbelow}{2pt}
\newtheoremstyle{mythmstyle}
  {\thmabove}   
  {\thmbelow}   
  {\itshape}    
  {}            
  {\bfseries}   
  {. }          
  {2.5pt}       
  {\thmname{#1}\thmnumber{ #2}\thmnote{ \normalfont (#3)}}   
\theoremstyle{mythmstyle}
\newcommand{\nc}{\newcommand}
\newcommand{\nt}{\newtheorem}
\newcommand{\proofbelow}{4pt}
\renewenvironment{proof}{\noindent\textbf{Proof.}\,}{\afterproof}
\newenvironment{proofof}[1]{\vspace{\proofbelow}\noindent\textbf{Proof} \,(of #1).\,}{\afterproof}
\newenvironment{proofsketch}{\vspace{\proofbelow}\noindent\textbf{Proof Sketch.}\,}{\afterproof}
\newenvironment{proofsketchof}[1]{\vspace{\proofbelow}\noindent\textbf{Proof Sketch} \,(of #1).\,}{\afterproof}
\newenvironment{subproof}{\noindent\textit{Proof.}\,}{\aftersubproof}
\newcommand{\afterproof}{\hfill $\blacksquare$ \par \vspace{\proofbelow}}
\newcommand{\aftersubproof}{\hfill $\Box$ \par \vspace{\proofbelow}}
\newcommand{\repeatclaim}[2]{\vspace{6pt}\medskip\noindent\textbf{#1. }{\it #2} \medskip}
\nc{\ignore}[1]{}
\newcommand{\E}{{\bf E}}
\newcommand{\hE}{\hat{{\bf E}}}
\newcommand{\hclass}{C}
\newcommand{\vcdim}{D}
\newcommand{\hyp}{h}
\newcommand{\err}{\mathrm{err}}
\newcommand{\herr}{\widehat{\mathrm{err}}}
\newcommand{\hf}{\hat{f}}
\newcommand{\indices}{\cB}
\newcommand{\indicest}{J}
\newcommand{\reals}{\mathbb{R}}
\newcommand{\targetf}{{f^*}}
\newcommand{\ind}{\chi}
\newcommand{\trainS}{{\cal S}}
\newcommand{\NOR}{\textsc{Nor}\xspace}
\newcommand{\X}{{\cal X}}
\newcommand{\sgn}{{\operatorname{sgn}}}
\newcommand{\ground}{{[n]}}
\newcommand{\rankf}{\mathrm{rank}}
\newcommand{\fclass}{{\cal F}}
\newcommand{\rad}{R}
\newcommand{\hrad}{\hat{R}}
\newcommand{\Ind}{{\cal I}}
\definecolor{darkblue}{rgb}{0,0.1,0.5}
\definecolor{violet}{rgb}{0.3,0.0,0.5}
\newcommand{\cSz}{{\mathcal{S}}_{0}}
\newcommand{\cSnz}{{\mathcal{S}}_{\neq 0}}
\newcommand{\Pone}{\textbf{P1}}
\newcommand{\Ptwo}{\textbf{P2}}
\newcommand{\Pthree}{\textbf{P3}}
\newcommand{\abs}[1]{\lvert #1 \rvert}
\newcommand{\card}[1]{\abs{#1}}
\newcommand{\intersect}{\cap}
\newcommand{\union}{\cup}
\newcommand{\Union}{\bigcup}
\newcommand{\set}[1]{\left \{ #1 \right \}}                     
\newcommand{\setst}[2]{\left\{\; #1 \,:\, #2 \;\right\}}        
\newcommand{\transpose}{^{\textsf{T}}}
\renewcommand{\And}{~\wedge~}
\newcommand{\defeq}{\,:=\,}                                     
\renewcommand{\th}{\ifmmode{^{\textrm{th}}}\else{\textsuperscript{th}\ }\fi}
\newcommand{\smallfrac}[2]{{\textstyle \frac{#1}{#2}}}
\newcommand{\norm}[1]{\left\lVert #1 \right\rVert}
\newcommand{\argmax}{\operatornamewithlimits{argmax}}
\newcommand{\bF}{\mathbb{F}}        
\newcommand{\bN}{\mathbb{N}}        
\newcommand{\bR}{\mathbb{R}}
\newcommand{\bZ}{\mathbb{Z}}
\newcommand{\cA}{\mathcal{A}}
\newcommand{\cB}{\mathcal{B}}
\newcommand{\cC}{\mathcal{C}}
\newcommand{\cE}{\mathcal{E}}
\newcommand{\cF}{\mathcal{F}}
\newcommand{\cI}{\mathcal{I}}
\newcommand{\cL}{\mathcal{L}}
\newcommand{\cM}{\mathcal{M}}
\newcommand{\cP}{\mathcal{P}}
\newcommand{\cS}{\mathcal{S}}
\newcommand{\cU}{\mathcal{U}}
\newcommand{\cY}{\mathcal{Y}}
\newcommand{\cZ}{\mathcal{Z}}
\newcommand{\tO}{\tilde{O}}
\newcommand{\tOmega}{\tilde{\Omega}}
\newcommand{\prob}[1]{\operatorname{Pr}\left[\,#1\,\right]}               
\newcommand{\probg}[2]{\operatorname{Pr}\left[\,#1 \:\mid\: #2\,\right]}  
\newcommand{\probover}[2]{\operatorname{Pr}_{#1}\left[\,#2 \,\right]}     
\newcommand{\expect}[1]{\operatorname{\bf E}\left[\,#1\,\right]}          
\newcommand{\expectover}[2]{\operatorname{\bf E}_{#1}\left[\,#2 \,\right]}     
\newcommand{\Algorithm}[1]{Algorithm~\ref{alg:#1}}
\newcommand{\AlgorithmName}[1]{\label{alg:#1}}
\newcommand{\Appendix}[1]{Appendix~\ref{app:#1}}
\newcommand{\AppendixName}[1]{\label{app:#1}}
\newcommand{\Claim}[1]{Claim~\ref{clm:#1}}
\newcommand{\ClaimName}[1]{\label{clm:#1}}
\newcommand{\Corollary}[1]{Corollary~\ref{cor:#1}}
\newcommand{\CorollaryName}[1]{\label{cor:#1}}
\newcommand{\DefinitionName}[1]{\label{def:#1}}
\newcommand{\Equation}[1]{Eq.\:\eqref{eq:#1}}
\newcommand{\EquationName}[1]{\label{eq:#1}}
\newcommand{\Lemma}[1]{Lemma~\ref{lem:#1}}
\newcommand{\LemmaName}[1]{\label{lem:#1}}
\newcommand{\Proposition}[1]{Proposition~\ref{prop:#1}}
\newcommand{\PropositionName}[1]{\label{prop:#1}}
\newcommand{\Section}[1]{Section~\ref{sec:#1}}
\newcommand{\SectionName}[1]{\label{sec:#1}}
\newcommand{\Theorem}[1]{Theorem~\ref{thm:#1}}
\newcommand{\neigh}{\Gamma}
\newcommand{\mat}{\mathbf{M}}
\newcommand{\poly}{\operatorname{poly}}
\newcommand{\newterm}[1]{\textit{#1}}
\newcommand{\FigureName}[1]{\label{fig:#1}}
\newcommand{\smallsum}[2]{{\textstyle \sum_{#1}^{#2}}}
\newcommand{\TheoremName}[1]{\label{thm:#1}}
\newcommand{\ALG}{\mathcal{ALG}}
\newcommand{\demandSet}[1]{\mathcal{D}^{#1}}
\begin{document}

\title{Submodular Functions: Learnability, Structure, and Optimization\thanks{A preliminary version of this paper appeared in the 43rd ACM Symposium on Theory of Computing under the title ``Learning Submodular Functions''.}}
\author{
Maria-Florina Balcan\thanks{
    Georgia Institute of Technology, School of Computer Science.
    Email: \texttt{ninamf@cc.gatech.edu}.   }
\and
Nicholas J. A. Harvey\thanks{
    University of British Columbia.
    Email: \texttt{nickhar@cs.ubc.ca}.
    }
}

\date{} \maketitle


\begin{abstract}%
Submodular functions are discrete functions that model laws of diminishing returns and enjoy
numerous algorithmic applications. They have been used in many areas, including combinatorial optimization, machine learning, and economics.
In this work we study submodular functions from a learning theoretic angle.
We provide algorithms for learning submodular functions, as well as lower bounds on their
learnability.
In doing so, we uncover several novel structural results revealing
ways in which submodular functions can be both surprisingly structured
and surprisingly unstructured.
We provide several concrete implications of our work in other domains including algorithmic game
theory and combinatorial optimization.

At a technical level, this research combines ideas from many areas,
including learning theory (distributional learning and PAC-style analyses),
combinatorics and optimization (matroids and submodular functions),
and pseudorandomness (lossless expander graphs).
%

\end{abstract}

\maketitle

\section{Introduction}
\SectionName{intro}

Submodular functions are a discrete analog of convex functions that enjoy numerous applications and have
structural properties that can be exploited algorithmically. They arise naturally in the study of graphs, matroids,
covering problems, facility location problems, etc., and they have been extensively studied in operations
research and combinatorial optimization for many years ~\cite{EdmondsSubmodular}.
More recently, submodular functions have
become key concepts in other areas including machine learning,  algorithmic game theory, and social sciences.
For example, submodular functions have been used
to model bidders' valuation functions in combinatorial
auctions~\cite{HansonMartin,LLN06,DNS06,BBM08,V08}, and
for solving several machine learning problems, including feature selection problems in graphical models~\cite{KG05}
and various clustering problems~\cite{bilmes09}.

In this work we use a learning theory perspective to uncover new structural properties of submodular functions.
In addition to providing algorithms and lower bounds for learning submodular functions, we discuss
numerous implications of our work in algorithmic game theory, economics, matroid theory
and combinatorial optimization.

One of our foremost contributions is to provide the first known results about learnability of
submodular functions in a distributional (i.e., PAC-style) learning setting. Informally, such a
setting has a fixed but unknown submodular function $\targetf$
and a fixed but unknown distribution over the domain of $\targetf$.
The goal is to design an efficient algorithm which
provides a good approximation of $\targetf$ with respect to that
distribution, given only a small number of samples from the distribution.

Formally, let $[n] = \set{1,\ldots,n}$ denote a ground set of items and let $2^{[n]}$ be the power
set of $[n]$.
A function $f : 2^{[n]} \rightarrow \bR$ is submodular if it satisfies
$$
f(T \cup \set{i})-f(T) ~~\leq~~ f(S \cup \set{i})-f(S)
    \qquad\forall S \subseteq T \subseteq \ground ,\, i \in \ground.
$$
The goal is to output a
function $f$ that, with probability $1-\delta$ over the
samples,
is a good approximation of
$\targetf$ on most of the sets coming from the distribution.
Here ``most'' means a $1-\epsilon$ fraction and ``good approximation'' means that
$f(S) \leq \targetf(S) \leq \alpha \cdot f(S)$ for some approximation factor $\alpha$.
We prove nearly matching $\alpha=O(n^{1/2})$ upper and  $\alpha=\tOmega(n^{1/3})$ lower bounds on the approximation factor
achievable when the algorithm receives only $\poly(n,1/\epsilon,1/\delta)$ examples from an
arbitrary (fixed but unknown) distribution.
We additionally provide a learning algorithm with constant approximation factor for the case
that the underlying distribution is a product distribution. This is based on a new result
proving strong concentration properties of submodular functions.


To prove the $\tOmega(n^{1/3})$ lower bound for learning under  arbitrary distributions,
we construct a new family of matroids whose rank functions are fiendishly unstructured.
Since matroid rank functions are submodular, this  shows unexpected extremal properties of submodular functions and gives new insights into their complexity. This construction also provides a general tool for proving lower bounds
in several areas where submodular functions arise. We derive and discuss such implications in:

 \begin{itemize}
 \item Algorithmic Game Theory and Economics:  An important consequence of our construction is that matroid rank functions  do not have a ``sketch'',
i.e., a concise, approximate representation. As matroid rank functions are known to satisfy the
gross substitutes property~\cite{kazuo-book}, our work implies that  gross substitutes functions also do
not have a concise, approximate representation. This provides a surprising answer to an 
open question in algorithmic game theory and economics~\cite{BingLM04PresentationaSoSV}
\cite[Section 6.2.1]{liad-thesis} \cite[Section 2.2]{BN05}.

 \item Combinatorial Optimization:
Many optimization problems involving submodular functions, such as submodular function minimization,
are very well behaved and their optimal solutions have a rich structure.
In contrast, we show that, for several other submodular optimization problems which have been
considered recently in the literature, including submodular $s$-$t$ min cut and submodular vertex
cover, their optimal solutions are very unstructured, in the
sense that the optimal solutions do not have a succinct representation,
or even a succinct, approximate representation.

%
\end{itemize}

Although our new family of matroids proves that matroid rank functions (and more generally submodular functions)
are surprisingly unstructured, our concentration result for submodular functions shows that, in a different sense,
matroid rank functions (and other sufficiently ``smooth'' submodular functions) are surprisingly
structured.%

Submodularity has been an increasingly useful tool in machine learning in recent years.
For example, it has been used for feature selection problems in graphical models~\cite{KG05}
and various clustering problems~\cite{bilmes09}. In fact, submodularity 
has been the topic of several tutorials and workshops
at recent major conferences in machine learning~\cite{nips-2009-sub,icml-2008-sub,ijcai-09,nips-2011-sub}.
Nevertheless, our work is the first to use a  learning theory perspective to derive new structural
results for submodular functions and related structures (including matroids), thereby yielding
implications in many other areas.
Our work also potentially has useful applications --- our learning algorithms
can be employed in many areas where submodular functions arise (e.g., medical decision making and
economics). We discuss such applications in \Section{applications}.
Furthermore, our work defines a new learning model for approximate distributional learning that could
be useful for analyzing learnability of other interesting classes of real-valued functions. In fact,
this model has already been used to analyze the learnability of several classes of set functions
widely used in economics --- see Section~\ref{introlearn} and \Section{subsequent}.

\subsection{Our Results and Techniques}
\SectionName{overviewofresults}


The central topic of this paper is proving new structural results for submodular functions, motivated
by learnability considerations.
In the following we provide a more detailed description of our results.
For ease of exposition, we start by describing our new structural results,
then present our learning model and our learnability results within this model, and finally we
describe implications of our results in various areas.

\subsubsection{New Structural Results}

\paragraph{A new matroid construction}
The first result in this paper is the construction of a family of submodular
functions with interesting technical properties.
These functions are the key ingredient in our lower bounds for learning
submodular functions, inapproximability results for submodular optimization problems,
and the non-existence of succinct, approximate representations for gross substitutes functions.

Designing submodular functions directly is difficult because there is very little
tangible structure to work with.
It turns out to be more convenient to work with \newterm{matroids}\footnote{
    For the reader unfamiliar with matroids, a brief introduction to them is given in
    \Section{matroidbackgr}.
    For the present discussion, the only fact that we need about matroids is that
    the rank function of a matroid on $[n]$ is a submodular function on $2^{[n]}$.
},
because every matroid has an associated submodular function (its \newterm{rank function})
and because matroids are a very rich class of combinatorial objects
with numerous well-understood properties.

\newcommand{\rhigh}{\ensuremath{r_\mathrm{high}}\xspace}
\newcommand{\rlow}{\ensuremath{r_\mathrm{low}}\xspace}

\label{ourgoal}

Our goal is to find a collection of subsets of $[n]$
and two values \rhigh and \rlow such that,
for \emph{any} labeling of these subsets as either \textsc{High} or \textsc{Low},
we can construct a matroid for which each set labeled \textsc{High} has rank value \rhigh
and each set labeled \textsc{Low} has rank value \rlow.
We would like both the size of the collection and the
ratio $\rhigh / \rlow$ to be as large as possible.

Unfortunately existing matroid constructions can only achieve this goal
with very weak parameters; for further discussion of existing matroids, see \Section{related}.
Our new matroid construction, which involves numerous technical steps,
achieves this goal with the collection
of size super-polynomial in $n$ and the ratio
$\rhigh / \rlow = \tilde{\Omega}(n^{1/3})$.
This shows that matroid rank functions can be fiendishly unstructured
--- in our construction, knowing the value of the rank function on all-but-one of the sets in the
collection does not determine the rank value on the remaining set,
even to within a multiplicative factor $\tilde{\Omega}(n^{1/3})$.

More formally, let the collection of sets be $A_1,\ldots,A_k \subseteq [n]$
where each $\card{A_i} = \rhigh$.
For every set of indices $B \subseteq \set{1,\ldots,k}$ there is a
matroid $\mat_B$ whose associated rank function $r_B : 2^{[n]} \rightarrow \bR$ has the form
\begin{equation}
\EquationName{extremalmatroids}
r_B(S) ~=~ \max
\setst{ \card{I \cap S} }{
      \Big| I \intersect \bigcup_{j \in J} A_j \Big|
        \leq
        \rlow \cdot \card{J} - \sum_{j \in J} \card{A_j} +
        \Big| \bigcup_{j \in J} A_j \Big|
        ~~~\forall J \subseteq B ,\: \card{J} < \tau
}.
\end{equation}
We show that, if the sets $A_i$ satisfy a strong \newterm{expansion} property,
in the sense that any small collection of $A_i$ has small overlap,
and the parameters $\rhigh,\rlow,\tau$ are carefully chosen,
then this function satisfies $r_B(A_i) = \rlow$ whenever $i \in B$
and $r_B(A_i) = \rhigh$ whenever $i \not\in B$.

\comment{
Briefly, a matroid is a family of finite sets that satisfy some combinatorial properties
which are derived from properties of linearly independent vectors in a vector space.
For example, a matroid family is ``downward closed'', meaning that if a set is in the family,
all its subsets are also in the family; this mirrors the fact that removing vectors
from a linearly independent set preserves the linear independence property.
Also, a matroid family has no ``local maxima'', meaning that any inclusionwise-maximal set in the family
necessarily has the largest cardinality of any set in the family;
this mirrors the fact that any linearly independent set of vectors can be extended to a basis.

Our purposes require matroids with some very strong properties.
We are unaware of any matroid constructions in the literature which can achieve these properties.
To illustrate our desired properties, consider the following example.
Let $A_1,\ldots,A_k$ be subsets of $\set{1,\ldots,n}$,
and let $r, b_1, \ldots, b_k$ be non-negative integers.
Under what conditions is the family
\begin{equation}
\EquationName{extremalmatroids}
    \cI ~=~ \setst{ I }{ \card{I} \leq r ~~\wedge~~ \card{I \intersect A_i} \leq b_i ~\:\forall
    i=1,\ldots,k }
\end{equation}
a matroid?
We would like to choose $k$ and $r$ to be ``large'', then find sets $A_i$ which are also ``large''
such that the family $\cI$ is a matroid, regardless of how the $b_i$ values are chosen.

A very simple type of matroids is the \newterm{partition matroids}.
These are the special case of \eqref{eq:extremalmatroids} where $r = \infty$
and the sets $A_i$ are pairwise disjoint.
Unfortunately this last condition implies that $k \leq n$;
such values of $k$ are too small for our purposes.

Another important type of matroids is the \newterm{paving matroids}.
These are the special case of \eqref{eq:extremalmatroids} where
$\card{A_i \intersect A_j} \leq r-2$ for each $i \neq j$
and where every $b_i = r-1$.
Unfortunately this last condition imposes a very severe restriction on the $b_i$ values.

We seek a new class of matroids that has the advantages of both of these
type of matroids, with few of the disadvantages.
We would like $k$ to be superpolynomial in $n$, which can be achieved by paving matroids,
and we would like the $b_i$ values to be nearly arbitrary, which can be achieved by partition
matroids.
Our key insight is that the common property underlying both partition and paving matroids
is an \emph{expansion} property of the collection $\set{A_1,\ldots,A_k}$.
With partition matroids the sets must be disjoint, which is equivalent to having perfect expansion.
With paving matroids the sets must have small pairwise intersections,
which is a rather weak expansion condition.

As a compromise, we require the collection $\set{A_1,\ldots,A_k}$
to have imperfect but very strong expansion.
This is possible even if $k$ is superpolynomial in $n$, and even if the sets $A_i$ are quite large.
Using such highly-expanding sets and a rather technical construction, we are able to obtain the
desired family of matroids.

}

\paragraph{Concentration of submodular functions}
A major theme in probability theory is proving concentration bounds
for a function $f : 2^{[n]} \rightarrow \bR_{\geq 0}$ under product distributions\footnote{
    A random set $S \subseteq 2^{[n]}$ is said to have a \newterm{product distribution}
    if the events $i \in S$ and $j \in S$ are independent for every $i \neq j$.
}.
For example, when $f$ is linear, the Chernoff-Hoeffding bound is applicable.
For arbitrary $f$, the McDiarmid inequality is applicable.
The quality of these bounds also depends on the ``smoothness'' of $f$, which is quantified using the
Lipschitz constant $L := \max_{S,i} \abs{f(S \cup \set{i}) - f(S)}$.

We show that McDiarmid's tail bound can be strengthened under the
additional assumption that the function is monotone and submodular.
For a $1$-Lipschitz function (i.e., $L=1$),
McDiarmid's inequality gives concentration comparable to that of a
Gaussian random variable with standard deviation $\sqrt{n}$.
For example, the probability that the value of $f$ is $\sqrt{n}$ less than its expectation
is bounded above by a constant.
Such a bound is quite weak when the expectation of $f$ is significantly less than $\sqrt{n}$,
because it says that the probability of $f$ being negative is at most a constant,
even though that probability is actually zero.

Using Talagrand's inequality, we show that $1$-Lipschitz, monotone, submodular functions
are extremely tightly concentrated around their expected value.
The quality of concentration that we show is similar to
Chernoff-Hoeffding bounds --- importantly, it depends only on the expected value
of the function, and not on the dimension $n$.


\paragraph{Approximate characterization of matroids}
Our new matroid construction described above can be viewed at a high level as saying
that matroids can be surprisingly unstructured.
One can pick numerous large regions of the matroid (namely, the sets $A_i$)
and arbitrarily decide whether each region should have large rank or small rank.
Thus the matroid's structure is very unconstrained.

Our next result shows that, in a different sense, a matroid's structure is actually very constrained.
If one fixes any integer $k$ and looks at the rank values
amongst all sets of size $k$, then those values are extremely tightly concentrated
around their average --- almost all sets of size $k$ have nearly the same rank value.
Moreover, these averages are concave as a function of $k$.
That is, there exists a concave function $h : [0,n] \rightarrow \bR_{\geq 0}$
such that almost all sets $S$ have rank approximately $h(|S|)$.

This provides an interesting converse to the well-known fact that
the function $f : 2^{[n]} \rightarrow \bR$ defined by
$f(S)=h(\card{S})$ is a submodular function whenever $h : \bR \rightarrow \bR$ is concave.
Our proof uses our aforementioned result on concentration for submodular functions
under product distributions, and the multilinear extension \cite{CCPV}
of submodular functions, which has been of great value in recent work.

%

\subsubsection{Learning Submodular Functions}
\label{introlearn}
\paragraph{The learning model}
To study the learnability of submodular functions, we extend Valiant's classic PAC
model~\cite{Valiant:acm84}, which captures settings where the learning goal is to
predict the future based on past observations.
The abbreviation PAC stands for ``Probably Approximately Correct''.
The PAC model however is primarily designed for learning
\emph{Boolean-valued functions}, such as linear threshold functions, decision trees, and low-depth circuits~\cite{Valiant:acm84,KV:book94}.
For \emph{real-valued functions}, it is more meaningful to change the model by ignoring
small-magnitude errors in the predicted values.
Our results on learning submodular functions are presented in this new model,
which we call the \emph{PMAC model};
this abbreviation stands for ``Probably Mostly Approximately Correct''.

In this model, a learning algorithm is given a collection $\trainS=\set{S_1,S_2,\ldots}$
of polynomially many sets drawn i.i.d.~from
some fixed, but unknown, distribution $D$ over sets in $2^{[n]}$.
There is also a fixed but unknown function $\targetf : 2^{[n]} \rightarrow \bR_+$,
and the algorithm is given the value of $\targetf$ at each set in $\trainS$.
The goal is to design a polynomial-time algorithm that outputs a polynomial-time-evaluatable
function $f$ such that, with large probability over $\trainS$,
the set of sets for which $f$ is a good approximation for $\targetf$
has large measure with respect to $D$.
More formally,
$$
\operatorname{Pr}_{S_1,S_2,\ldots \sim D}\Big[~\:
    \probover{S \sim D}{
        f(S) \leq \targetf(S) \leq \alpha f(S)
    } \:\geq\: 1-\epsilon
~\:\Big]
\:~\geq~\: 1-\delta,
$$
where $f$ is the output of the learning algorithm when given inputs
$\set{\, (S_i,\targetf(S_i) ) \,}_{i=1,2,\ldots}$.
The approximation factor $\alpha \geq 1$ allows for multiplicative error in the function values.
Thus, whereas the PMAC model requires one to \emph{approximate} the value of a function on a set of
large measure and with high confidence,
the traditional PAC model requires one to predict the value \emph{exactly}
on a set of large measure and with high confidence.
The PAC model is the special case of our model with $\alpha=1$.

An alternative approach for dealing with real-valued functions in learning theory
 is to consider other loss functions such as the squared-loss or the L1-loss.
However, this approach does not distinguish between the case of having low error on most of the distribution and high error on just a few
points, versus moderately high error everywhere. In comparison, the PMAC model allows for more
fine-grained control with separate parameters for the amount and extent of errors, and in addition it
allows for consideration of multiplicative error which is often more natural in this context.
We discuss this further in \Section{related}.

Within the PMAC model we prove several algorithmic and hardness results for learning submodular
functions.  Specifically:

\paragraph{Algorithm for product distributions}
Our first learning result concerns product distributions.
This is a natural first step when studying
 learnability of various classes of functions, particularly when the class of functions has high complexity~\cite{KKMS,adam09,LMN,roccoproduct}.
By making use of our new concentration result
for monotone, submodular functions under product distributions,
we show that if the underlying distribution is a product distribution, then sufficiently ``smooth'' (formally, $1$-Lipschitz) submodular functions can be PMAC-learned with a
constant approximation factor $\alpha$  by a very simple algorithm.

\paragraph{Inapproximability for general distributions}
Although $1$-Lipschitz submodular functions can be PMAC-learned with a constant approximation factor
under product distributions, this result does not generalize to arbitrary distributions.
By making use of our new matroid construction, we show that every algorithm for PMAC-learning monotone, submodular functions
under arbitrary distributions must have approximation factor $\tOmega(n^{1/3})$ even for constant $\epsilon$ and $\delta$, and even if the
functions are matroid rank functions.
Moreover, this lower bound holds even if the algorithm knows the underlying distribution and
it can adaptively query the given function at points of its choice.

\paragraph{Algorithm for general distributions}
Our $\tOmega(n^{1/3})$ inapproximability result for general distributions
turns out to be close to optimal.
We give an algorithm to PMAC-learn an arbitrary non-negative, monotone, submodular function
with approximation factor $O(\sqrt{n})$  for any $\epsilon$ and $\delta$ by using a number of samples $\tilde{O}\left(n/\epsilon \log(1/\delta)\right)$.

This algorithm is based on a recent structural result which shows that any monotone, non-negative,
submodular function can be approximated within a factor of $\sqrt{n}$ on every point by the
square root of a linear function \cite{nick09}.
We leverage this result to reduce the problem of PMAC-learning a submodular function
to learning a linear separator in the usual PAC model.
We remark that an improved structural result for any subclass of submodular
functions would yield an improved analysis of our algorithm for that subclass.
Moreover, the  algorithmic approach we provide is quite robust and can be extended to handle more
general scenarios, including forms of noise.

\paragraph{The PMAC model}
Although this paper focuses only on learning submodular functions, the PMAC model that we introduce
is interesting in its own right, and can be used to study the learnability of other real-valued
functions.  Subsequent work by Badanidiyuru et al.\ \cite{BDFKNR} and Balcan et al.~\cite{BCIW} has
used this
model for studying the learnability of other classes of real-valued set functions that are widely
used in algorithmic game theory.  See \Section{related} for further discussion.

\subsubsection{Other Hardness Implications of Our Matroid Construction}

\paragraph{Algorithmic Game Theory and Economics}
An important consequence of our matroid construction is that matroid rank functions do not have a ``sketch'',
i.e., a concise, approximate representation.
Formally, there exist matroid rank functions on $2^{[n]}$ that do not have any
$\poly(n)$-space representation which approximates every value of the function to within a
$\tilde{o}(n^{1/3})$ factor.

In fact, as matroid rank functions are known to satisfy the gross substitute
property~\cite{kazuo-book}, our work implies that  gross substitutes do not have a concise,
approximate representation, or, in game theoretic terms, gross substitutes do not have a bidding
language. This provides a surprising answer to an 
open question in
economics~\cite{BingLM04PresentationaSoSV} \cite[Section 6.2.1]{liad-thesis}
\cite[Section 2.2]{BN05}.

\paragraph{Implications for submodular optimization}
Many optimization problems involving submodular functions, such as
linear optimization over a submodular base polytope,
submodular function minimization,
and submodular flow,
are very well behaved and their optimal solutions have a rich structure.
We consider several other submodular optimization problems which have been
considered recently in the literature, specifically
submodular function minimization under a cardinality constraint,
submodular $s$-$t$ min cut and submodular vertex cover.
These are difficult optimization problems, in the sense that the optimum value is
hard to compute. We show that they are also difficult in the sense that their optimal solutions
are very unstructured: the optimal solutions do not have a succinct
representation, or even a succinct, approximate representation.
%
%

Formally, the problem of submodular function minimization under a cardinality constraint is
$$
    \min \{\, f(A) \,:\, A \subseteq [n] ,\, |A| \geq d \,\}
$$
where $f$ is a monotone, submodular function.
We show that there there is no representation in $\poly(n)$ bits for the minimizers
of this problem, even allowing a factor $o(n^{1/3} / \log n)$ multiplicative error.
In contrast, a much simpler construction~\cite{bobby,SF,nick09} shows that no deterministic
algorithm performing $\poly(n)$ queries to $f$ can approximate the minimum value to within a
factor $o(n^{1/2} / \log n)$, but that construction implies nothing about
small-space representations of the minimizers.

For the submodular $s$-$t$ min cut problem, which is a generalization of the classic
$s$-$t$ min cut problem in network flow theory, we show that there is no
representation in $\poly(n)$ bits for the minimizers, even allowing
a factor $o(n^{1/3} / \log n)$ multiplicative error.
Similarly, for the submodular vertex cover problem, which is a generalization of the
classic vertex cover problem, we show that there is no representation in $\poly(n)$ bits for the
minimizers, even allowing a factor $4/3$ multiplicative error.

\subsection{Applications}
\SectionName{applications}

Algorithms for learning submodular functions could be
very useful in some of the applications where these functions arise.
For example, in the context of economics, our work provides useful tools for learning the valuation  functions of (typical) customers, with applications such as
bundle pricing, predicting demand, advertisement, etc.
%
%
Our algorithms are also useful in settings where one would like to predict
  the value of some function over objects described by features, where
  the features have positive but decreasing marginal impact on the
  function's value.  Examples include predicting the rate of growth of
  jobs in cities as a function of various amenities or enticements that the
  city offers, predicting the sales price of a house as a function of
  features (such as an updated kitchen, extra bedrooms,
  etc.) that it might have, and predicting the demand for a new
  laptop as a function of various add-ons that
  might be included. In all of these settings (and many
  others) it is natural to assume diminishing returns, making them
  well-suited to a formulation as a problem of learning a submodular function.

\subsection{Related Work}
\SectionName{related}

This section focuses primarily on prior work.
\Section{subsequent} discusses subsequent work that was directly motivated by this paper.

\paragraph{Submodular Optimization}
Optimization problems involving submodular functions have long played a central role
in combinatorial optimization. Recently there have been many applications of
these optimization problems in machine learning, algorithmic game theory and social networks.

The past decade has seen significant progress in algorithms for solving
submodular optimization problems.
There have been improvements in both the conceptual understanding
and the running time of algorithms for submodular function minimization \cite{IFF01,IO09,Lex00}.
There has also been much progress on approximation algorithms for various problems.
For example, there are now optimal approximation algorithms
for submodular maximization subject to a matroid constraint \cite{CCPV,Filmus,V08},
nearly-optimal algorithms for non-monotone submodular maximization \cite{FMV07,MNS,OGV},
and algorithms for submodular maximization subject to a wide variety of
constraints~\cite{CVZ10,CVZ11,FNS,KST,LMNS,LSV,OGV,V09}.

Approximation algorithms for submodular analogues of several other optimization problems
have been studied, including
load balancing~\cite{SF},
set cover~\cite{IN09,W82},
shortest path~\cite{GKTW09},
sparsest cut~\cite{SF},
$s$-$t$ min cut~\cite{JB},
vertex cover~\cite{GKTW09,IN09},
etc.
In this paper we provide several new results on the difficulty of such problems.
Most of these previous papers on submodular optimization prove inapproximability results
using matroids whose rank function has the same form as \Equation{extremalmatroids},
but only for the drastically simpler case of $k=1$.  Our construction is much more intricate since
we must handle the case $k = n^{\omega(1)}$.

Recent work of Dobzinski and Vondr\'ak \cite{DV} proves inapproximability
of welfare maximization in combinatorial auctions with submodular valuations.
Their proof is based on a collection of submodular functions that take high values on every set in a
certain exponential-sized family, and low values on sets that are far from that family.
Their proof is in the same spirit as our inapproximability result, although their construction
is technically very different than ours.
In particular, our result uses
a special family of submodular functions and family of sets for which the sets are \emph{local
minima} of the functions, whereas their result uses
a different family of submodular functions and family of sets for which the sets are \emph{local
maxima} of the functions.


\paragraph{Learning real-valued functions and the PMAC model}
In the machine learning literature~\cite{tishbirani,Vapnik:book98},
learning real-valued functions in a distributional  setting is often addressed by considering
 loss functions such as the $L_2$-loss or the $L_1$-loss, where the loss incurred by predicting according to hypothesis $f$ on a given example $x$
 is $l_f(x, \targetf)=(f(x)-{\targetf}(x))^2$ for $L_2$-loss and $l_f(x, \targetf)=|f(x)-{\targetf}(x)|$ for $L_1$-loss.
In this context,
one typically normalizes the function to be in $[0,1]$, and the aim is
to achieve low expected loss $\expectover{x}{ l_f(x, \targetf)}$.
However, lower bounds on expected loss do not distinguish between the case of achieving
low loss on most of the distribution and high loss on just a few
points, versus moderately high loss everywhere.

For example, consider a function $f$ with codomain $\set{0,1,\ldots,n}$.
Here we would normalize by a factor $1/n$, so a lower bound of
$\Omega(n^{1/3})$ on expected $L_1$-loss before normalizing
is equivalent to a lower bound of $\Omega(n^{-2/3})$ after normalizing.
But such a lower bound would not distinguish between the following two scenarios:
(1) 
one where any hypothesis $f$ produced by the algorithm has
$L_1$-loss of $\Omega(n^{-2/3})$ on a $1/2$ fraction of the points, and
(2) one where an algorithm can output a hypothesis $f$ that is exactly correct on a $1 \!-\!
O(n^{-2/3})$ fraction of the points, but has high loss on the rest.

In comparison, the PMAC model provides more fine-grained control,
with separate parameters for the amount $\epsilon$ and extent $\alpha$ of errors.
For instance, in scenario (1) if the normalized function has $f(x)=1/n$ on the points of
high $L_1$-loss,
then this would correspond to a lower bound of $\alpha=\Omega(n^{1/3})$ and $\epsilon=1/2$
in the PMAC model.
In contrast, scenario (2) would correspond to having an upper bound of $\alpha=1$ and $\epsilon=n^{-2/3}$.

Another advantage of the PMAC model is that, since it uses multiplicative error,
an algorithm in the PMAC model provides good approximations {\em uniformly at all
scales}.\footnote{For example, if $f$ is a good hypothesis in the PMAC model,
and one focuses on points $x$ such that $\targetf(x) \leq c$ and rescales, then 
the multiplicative approximation guarantee provided by $f$ remains true in the restricted
and rescaled domain, as long as this set has sufficiently large probability
mass.}
Existing work in the the learning theory literature has also considered
guarantees that combine both multiplicative and additive aspects,
in the context of sample complexity bounds~\cite{haussler,phil-relative}.
However, this work did not consider the development of efficient algorithms or learnability of
submodular functions.

We remark that our construction showing the $\tilde{\Omega}(n^{1/3})$
inapproximability in the PMAC model immediately implies a lower bound
of $\tilde{\Omega}(n^{-2/3})$ for the $L_1$-loss and
$\tilde{\Omega}(n^{-4/3})$ for the $L_2$-loss (after normalization).



\paragraph{Learning submodular functions}
To our knowledge, there is no prior work on learning submodular functions in a distributional,
PAC-style learning setting.
The most relevant work is a paper of Goemans et al.~\cite{nick09},
which considers the problem of ``approximating submodular functions everywhere''.
That paper considers the algorithmic problem of efficiently finding a function which approximates a
submodular function at \emph{every} set in its domain.
They give an algorithm which achieves an approximation factor $\tO(\sqrt{n})$,
and they also show $\tOmega(\sqrt{n})$ inapproximability.
Their algorithm adaptively queries the given function {\em on sets of its choice},
and their output function must approximate the given function on \emph{every} set.\footnote
{Technically speaking, their model can be viewed as ``approximate learning everywhere with value queries'',
which is less natural in certain machine learning  scenarios.
    In particular, in many
    applications arbitrary membership or value queries are undesirable because
    natural oracles, such as hired humans, have difficulty labeling synthetic examples~\cite{baum}.
    Also, negative results for approximate learning everywhere do not necessarily imply hardness for learning in more
    widely used learning models. We discuss this in more detail below.}
In contrast, our PMAC model falls into the more widely studied passive, supervised learning
setting~\cite{AB99,KV:book94,Valiant:acm84,Vapnik:book98},
which is more relevant for our motivating applications discussed in \Section{applications}.

Our algorithm for PMAC-learning under general distributions and the Goemans et al.~algorithm both
rely on the structural result (due to Goemans et al.) that monotone, submodular functions can be
approximated by the square root of a linear function to within a factor $\sqrt{n}$.
In both cases, the challenge is to find this linear function.
The Goemans et al.~algorithm is very sophisticated:
it gives an intricate combinatorial algorithm to approximately solve a certain convex program
which produces the desired function.
Their algorithm requires query access to the function and so it is not applicable in the PMAC model.
Our algorithm, on the other hand, is very simple:
given the structural result, we can reduce our problem to that of learning a linear separator,
which is easily solved by linear programming.
Moreover, our algorithm is noise-tolerant and more amenable to extensions; 
we elaborate on this in \Section{general-upper}. 

On the other hand, our lower bound is significantly more involved than the lower bound of Goemans et
al.~\cite{nick09} and the related lower bounds of Svitkina and~Fleischer~\cite{SF}.
Essentially, the previous results show only show \emph{worst-case} inapproximability,
whereas we need to show \emph{average-case} inapproximability.
A similar situation occurs with Boolean functions,
where lower bounds for distributional learning are typically much harder to show
than lower bounds for exact learning (i.e., learning everywhere).
For instance, even conjunctions are hard to learn in the exact learning model (from random examples or via membership queries),
and yet they are trivial to PAC-learn.
Proving a lower bound for PAC-learning requires exhibiting
some fundamental complexity in the class of functions.
It is precisely this phenomenon which makes our lower bound challenging to prove.

\paragraph{Learning valuation functions and other economic solutions concepts}
As discussed in \Section{applications}, one important application of our results on learning is
for learning valuation functions.  G. Kalai~\cite{kalai03} considered the problem of learning
\newterm{rational choice functions} from random examples.  Here, the learning algorithm
observes sets $S \subseteq [n]$ drawn from some distribution $D$,
along with a choice $c(S) \in [n]$ for each $S$.  The goal is then to
learn a good approximation to $c$ under various natural assumptions
on $c$.  For the assumptions considered in \cite{kalai03}, the choice function $c$
has a simple description as a linear ordering.  In contrast, in our
work we consider valuation functions that may be much more complex and for which the PAC model
would not be sufficient to capture the inherent easiness or difficulty of the problem.
Kalai briefly considers utility functions over bundles and
remarks that ``the PAC-learnability of preference relations and choice
functions on commodity bundles ... deserves further study''~\cite{kalai01}.

\subsection{Structure of the paper}
We begin with background about matroids and submodular functions in \Section{backgr}.
In Section~\ref{section-structure} we present our new structural results: a new extremal family of
matroids and new concentration results for submodular functions.  We present our new framework for
learning real-valued functions as well as our results for learning submodular functions within this
framework in~\Section{learning}. We further present implications of our matroid construction in
optimization and algorithmic game theory in \Section{implications} and Section~\ref{sec:econ}.

\section{Preliminaries: Submodular Functions and Matroids}
\SectionName{backgr}

\subsection{Notation}
Let $[n]$ denote the set $\set{1,2,\ldots,n}$.
This will typically be used as the ground set for the matroids and submodular functions
that we discuss.
For any set $S \subseteq [n]$ and element $x \in [n]$,
we let $S+x$ denote $S \union \set{x}$.
The indicator vector of a set $S \subseteq [n]$ is $\ind(S) \in \set{0,1}^{n}$,
where $\ind(S)_i$ is $1$ if $i$ is in $S$ and $0$ otherwise.
We frequently use this natural isomorphism between $\set{0,1}^n$ and $2^{[n]}$.

\subsection{Submodular Functions and Matroids}
\SectionName{matroidbackgr}

In this section we give a brief introduction to matroids and submodular functions
and discuss some standard facts that will be used throughout the paper.
A more detailed discussion can be found in standard references \cite{Frank,Fuji,L83,Oxley,Schrijver}.
The reader familiar with matroids and submodular functions
  may wish to skip to Section~\ref{section-structure}.

Let $V=\set{v_1,\ldots,v_n}$ be a collection of vectors in some vector space $\bF^m$.
Roughly one century ago, several researchers observed
that the linearly independent subsets of $V$ satisfy some interesting combinatorial properties.
For example, if $B \subseteq V$ is a basis of $\bF^m$ and $I \subseteq V$ is linearly independent
but not a basis, then there is always a vector $v \in B$ which is not in the span of $I$,
implying that $I + v$ is also linearly independent.

These combinatorial properties are quite interesting to study in their own right,
as there are a wide variety of objects which satisfy these properties but
(at least superficially) have no connection to vector spaces.
A \newterm{matroid} is defined to be any collection of elements that satisfies these same
combinatorial properties, without referring to any underlying vector space.
Formally, a pair $\mat = (\ground, \Ind)$ is called a matroid if
$\Ind \subseteq 2^\ground$ is a non-empty family such that
\begin{itemize}
\item if $J \subseteq I$ and $I \in \Ind$, then $J \in \Ind$, and
\item if $I, J \in \Ind$ and $\card{J} < \card{I}$, then there exists an $i \in I \setminus J$
such that $J+i \in \Ind$.
\end{itemize}
The sets in $\Ind$ are called \newterm{independent}.

Let us illustrate this definition with two examples.
\begin{description}
\item[Partition matroid]
Let $V_1 \union \cdots \union V_k$ be a partition of $\ground$,
i.e., $\bigcup_i V_i = [n]$ and $V_i \intersect V_j = \emptyset$ whenever $i \neq j$.
Define $\Ind \subseteq 2^{\ground}$ be the family of partial transversals of $\ground$,
i.e., $I \in \Ind$ if and only if $\card{I \intersect V_i} \leq 1$ for all $i \in [k]$.
It is easy to verify that the pair $(\ground,\Ind)$ satisfies the definition of a matroid.
This is called a \newterm{partition matroid}.

This definition can be generalized slightly.
Let $I \in \Ind$ if and only if $\card{I \intersect V_i} \leq b_i$ for all $i=1,\ldots,k$,
where the $b_i$ values are arbitrary.
The resulting pair $(\ground,\Ind)$ is a (generalized) partition matroid.

\item[Graphic matroid]
Let $G$ be a graph with edge set $E$.
Define $\Ind \subseteq 2^{E}$ to be the collection of all acyclic sets of edges.
One can verify that the pair $(\ground,\Ind)$ satisfies the definition of a matroid.
This is called a \newterm{graphic matroid}.
\end{description}

One might wonder: given an arbitrary matroid $(\ground,\Ind)$,
do there necessarily exist vectors $V = \set{v_1,\ldots,v_n}$ in some vector space
for which the independent subsets of $V$ correspond to $\Ind$?
Although this is true for partition matroids and graphic matroids,
in general the answer is no.
So matroids do not capture all properties of vector spaces.
Nevertheless, many concepts from vector spaces do generalize to matroids.

For example, given vectors $V \subset \bF^m$, all maximal linearly independent subsets of $V$
have the same cardinality, which is the dimension of the span of $V$.
Similarly, given a matroid $(\ground,\Ind)$, all maximal sets in $\Ind$
have the same cardinality, which is called the \newterm{rank} of the matroid.

\newcommand{\Span}{\operatorname{span}}

More generally, for any subset $V' \subseteq V$, we can define its rank
to be the dimension of the span of $V'$;
equivalently, this is the maximum size of any linearly independent subset of $V'$.
This notion generalizes easily to matroids.
The \newterm{rank function} of the matroid $(\ground,\Ind)$ is the function
$\rankf_\mat : 2^\ground \rightarrow \bN$ defined by
\[
    \rankf_\mat(S) \defeq \max \setst{ |I| }{ I\subseteq S ,\, I\in \Ind }.
\]

Rank functions also turn out to have numerous interesting properties,
the most interesting of which is the \newterm{submodularity} property.
Let us now illustrate this via an example.
Let $V'' \subset V' \subset V$ be collections of vectors in some vector space.
Suppose that $v \in V$ is a vector which does not lie in $\Span(V')$.
Then it is clear that $v$ does not lie in $\Span(V'')$ either.
Consequently,
$$
\rankf(V'+v)-\rankf(V')=1 \qquad\implies\qquad \rankf(V''+v)-\rankf(V'')=1.
$$
The submodularity property is closely related: it states that
$$
\rankf_\mat(T+i)-\rankf_\mat(T) ~~\leq~~ \rankf_\mat(S+i)-\rankf_\mat(S)
    \qquad\forall S \subseteq T \subseteq \ground ,\, i \in \ground.
$$

The following properties of real-valued set functions play an important role in this paper.
A function $f : 2^{[n]} \rightarrow \bR$ is
\begin{itemize}
\item \newterm{Normalized} if $f(\emptyset) = 0$.
\item \newterm{Non-negative} if $f(S) \geq 0$ for all $S$.
\item \newterm{Monotone} (or \newterm{non-decreasing}) if $f(S) \leq f(T)$ for all $S \subseteq T$.
\item \newterm{Submodular} if it satisfies
\begin{equation}
\EquationName{submoddef1}
f(T+i)-f(T) ~~\leq~~ f(S+i)-f(S)
    \qquad\forall S \subseteq T \subseteq \ground ,\, i \in \ground.
\end{equation}
An equivalent definition is
\begin{equation}
\EquationName{submoddef2}
f(A)+f(B) ~~\geq~~ f(A \union B)+f(A \intersect B)
    \qquad\forall A \subseteq B \subseteq \ground.
\end{equation}
\item \newterm{$L$-Lipschitz} if $\abs{f(S+i)-f(S)} \leq L$ for all $S \subseteq [n]$ and $i \in [n]$.
\end{itemize}
Matroid rank functions are integer-valued,
normalized, non-negative, monotone, submodular and $1$-Lipschitz.
The converse is also true: any function satisfying those properties is a matroid rank function.

The most interesting of these properties is submodularity.
It turns out that there are a wide variety of set functions which satisfy
the submodularity property but do not come from matroids.
Let us mention two examples.
\begin{description}
\item[Coverage function]
Let $S_1,\ldots,S_n$ be a subsets of a ground set $[m]$.
Define the function $f : 2^{\ground} \rightarrow \bN$ by
$$
f(I) ~=~ \Big\lvert \bigcup_{i \in I} S_i \Big\rvert.
$$
This is called a \newterm{coverage function}.
It is integer-valued, normalized, non-negative, monotone and submodular,
but it is not $1$-Lipschitz.

\item[Cut function]
Let $G=(\ground,E)$ be a graph.
Define the function $f : 2^{\ground} \rightarrow \bN$ by
$$
f(U) ~=~ \card{ \delta(U) }
$$
where $\delta(U)$ is the set of all edges that have exactly one endpoint in $U$.
This is called a \newterm{cut function}.
It is integer-valued, normalized, non-negative and submodular,
but it is not monotone or $1$-Lipschitz.
\end{description}

%
%
%

\section{New Structural Results About Matroids and Submodular Functions}
\label{section-structure}

\subsection{A New Family of Extremal Matroids}
\label{section-new-extremal-matroid}

In this section we
present a new family of matroids whose rank functions take wildly varying
values on many sets.
The formal statement of this result is as follows.

\begin{figure}[t]
\begin{center}
\ifpdf
    \includegraphics[scale=.8]{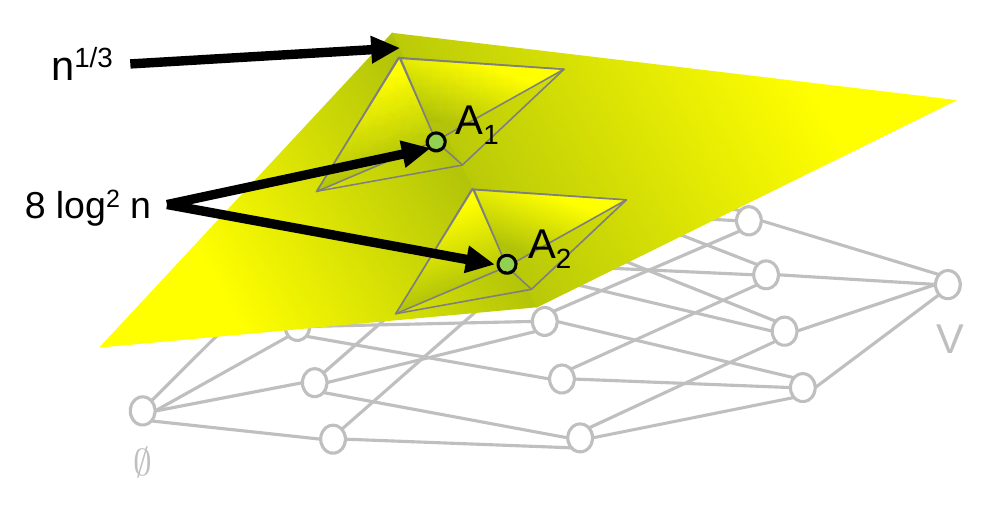}
\else
    \includegraphics[scale=.8]{Fig1.eps}
\fi
\caption{This figure aims to illustrate a function $\rankf_{\mat_\cB}$ that is
constructed by \Theorem{manymatroids}.
This is a real-valued function whose domain is the lattice of subsets of $V$.
The family $\cB$ contains the sets $A_1$ and $A_2$, both of which have size $n^{1/3}$.
Whereas $\rankf_{\mat_\cB}(S)$ is large (close to $n^{1/3}$) for most sets $S$
of size $n^{1/3}$, we have $\rankf_{\mat_\cB}(A_1) = \rankf_{\mat_\cB}(A_2) = 8 \log^2 n$.
In order to ensure submodularity, sets near $A_1$ or $A_2$ also have low values.
}
\FigureName{fig1}
\end{center}
\end{figure}

\newcommand{\manymatroids}
{
For any $k \geq 8$ with $k = 2^{o(n^{1/3})}$,
there exists a family of sets $\cA \subseteq 2^{[n]}$
and a family of matroids $\cM = \setst{ \mat_\cB }{ \cB \subseteq \cA }$ with the following properties.
\begin{itemize}
\item $\card{\cA} = k$ and $\card{A} = n^{1/3}$ for every $A \in \cA$.
\item For every $\cB \subseteq \cA$ and every $A \in \cA$, we have
\[
\rankf_{\mat_\cB}(A) ~=~
    \begin{cases}
    8 \log k   &\qquad\text{(if $A \in \cB$)} \\
    \card{A}    &\qquad\text{(if $A \in \cA \setminus \cB$)}.
    \end{cases}
\]
\end{itemize}
}

\begin{theorem}
\TheoremName{manymatroids}
\manymatroids
\end{theorem}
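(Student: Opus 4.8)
\medskip\noindent\textbf{Proof approach.}\
The plan is to decouple the statement into a pseudorandom ingredient and a matroid-theoretic one. Fix $\rhigh := n^{1/3}$, $\rlow := 8\log k$ (so $\rlow < \rhigh$, using $k = 2^{o(n^{1/3})}$), and a truncation threshold $\tau$ to be chosen. For $J \subseteq [k]$ write $U_J := \bigcup_{j\in J}A_j$ and let $d(J) := \sum_{j\in J}\card{A_j} - \card{U_J}$ be the overlap deficiency (so $d(\emptyset) = d(\{i\}) = 0$ and $\card{A_i\cap U_J} = d(J\cup\{i\}) - d(J)$). The one structural property of $\cA = \{A_1,\dots,A_k\}$ I will use, call it (E), is that
\[
  d(J) \;\le\; \rlow\,(\card{J}-1) \qquad\text{whenever } \card{J} \le \tau .
\]
This is much weaker than requiring the $A_i$ pairwise disjoint (which would force $k \le n$), but much stronger than the bounded-pairwise-intersection property behind paving matroids. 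The proof then has three parts: (a) deduce the rank values from (E), \emph{granting} that \Equation{extremalmatroids} is a matroid rank function; (b) build $\cA$ satisfying (E); (c) establish the matroid claim, which is the hard part.

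Part (a) is a direct check. For $A_i \in \cB$: the constraint $J = \{i\}$ in \Equation{extremalmatroids} reads $\card{I\cap A_i} \le \rlow$, so $\rankf_{\mat_\cB}(A_i) \le \rlow$; and for any $I_0 \subseteq A_i$ of size $\rlow$, property (E) (applied to $J$ when $i\in J$, and to $J\cup\{i\}$ when $i\notin J$) verifies that $I_0$ satisfies every remaining constraint, so $\rankf_{\mat_\cB}(A_i) = \rlow$. For $A_i \notin \cB$: every $J$ appearing in \Equation{extremalmatroids} lies in $\cB$, hence avoids $i$, and $\card{A_i\cap U_J} = d(J\cup\{i\}) - d(J) \le \rlow\card{J} - d(J)$ by (E), so $A_i$ itself is independent and $\rankf_{\mat_\cB}(A_i) = \card{A_i} = \rhigh$.

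Part (b): take $A_1,\dots,A_k$ independent and uniform among $n^{1/3}$-subsets of $[n]$. Since $\rhigh^2/n = n^{-1/3}\to 0$, overlaps are vanishingly rare: $\expect{d(J)} = O(\card{J}^2 n^{-1/3})$, and $\prob{d(J)\ge\ell}$ decays like $n^{-\Omega(\ell)}$ up to a $\card{J}^{O(\ell)}$ factor. A union bound over the $2^{O(\tau\log k)}$ sets $J$ with $\card{J}\le\tau$ and over thresholds $\ell\approx\rlow(\card{J}-1)$ shows that (E) fails with probability $o(1)$, provided $\tau\log k$ stays below the entropy $\Theta(\ell\log n)$ created by the rarity of collisions; since $\log k = o(n^{1/3})$ this holds with room to spare, and $\tau$ may be taken as large as a fixed small power of $n$. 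Hence a family $\cA$ satisfying (E), with sets distinct and of size exactly $n^{1/3}$, exists.

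Part (c) is the main obstacle: that for \emph{every} $\cB\subseteq\cA$ the system in \Equation{extremalmatroids} really is the family of independent sets of a matroid. The natural approach is to exhibit a monotone submodular $f_\cB : 2^{[n]}\to\bR_{\ge0}$ whose induced matroid $\{I : \card{I\cap S}\le f_\cB(S)\ \forall S\}$ — which exists classically, with rank $\min_{T\subseteq S}(f_\cB(T)+\card{S}-\card{T})$, whenever $f_\cB$ is monotone submodular — coincides with \Equation{extremalmatroids}. One checks that the candidate
\[
  f_\cB(S) \;=\; \min_{\,J\subseteq\cB,\ \card{J}<\tau}\Big(\rlow\card{J} - d(J) + \card{S\setminus U_J}\Big)
\]
($J=\emptyset$ giving the term $\card{S}$) is monotone, is nonnegative by (E), and induces exactly the family of \Equation{extremalmatroids}; the only — genuinely nontrivial — missing point is submodularity, since $f_\cB$ is a pointwise minimum of modular functions supported on the \emph{different} sets $U_J$, and such minima are not submodular in general. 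The purpose of the expander, in full strength (which forces $U_J\cap U_{J'}\approx U_{J\cap J'}$, $U_J\cup U_{J'}\approx U_{J\cup J'}$, $\card{U_J}\approx\rhigh\card{J}$), is exactly to make this family of modular functions behave — up to the controlled deficiencies $d(\cdot)$, which (E) keeps small — like the genuinely submodular ``partition'' function one gets when the $A_i$ are disjoint; pushing the $d(\cdot)$ terms through the submodularity inequality and absorbing them via (E) is where I expect the bulk of the technical work to sit. The truncation $\card{J}<\tau$ is what keeps those terms controllable, and $\tau$ is taken as large as (E) permits so the same construction also delivers the stronger fact (needed later, cf.\ \Figure{fig1}) that $\rankf_{\mat_\cB}(S)\approx\card{S}$ for almost all $S$ of size $n^{1/3}$.
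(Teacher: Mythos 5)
Your parts (a) and (b) track the paper's Claims~\ref{clm:smallrank}, \ref{clm:largerank} and its expander-based choice of $\cA$ closely enough, but part (c) — which you correctly identify as the crux — is not a proof: you write down a candidate $f_\cB$, observe that a minimum of modular functions need not be submodular, and then stop at ``pushing the $d(\cdot)$ terms through the submodularity inequality \dots is where I expect the bulk of the technical work to sit.'' That is the theorem left unproved, not a proof of it.

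More importantly, the mechanism you sketch for (c) is off-target. You attribute the matroid property to full-strength expansion of the $A_i$'s making the pointwise minimum ``behave like'' the partition case. In fact the paper's Theorem~\ref{thm:mainthm} shows the untruncated system $\cI = \{I : \card{I\cap A(J)} \le g(J)\ \forall J\}$ is a matroid for \emph{arbitrary} sets $A_1,\dots,A_k$ — no expansion, no condition (E) at all. The argument is a purely formal uncrossing: one does not try to prove submodularity of a function of $S$; instead one fixes $I$ and shows the tight constraints uncross, because $h_I(J) := g(J) - \card{I\cap A(J)} = \card{A(J)\setminus I} - \sum_{j\in J}(\card{A_j}-b_j)$ is a submodular function of $J$ (a coverage function minus a modular function), so its set of minimizers — the tight $J$'s — is closed under union. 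This combined with \Lemma{generalindep} gives the matroid axioms directly. Expansion enters only later, and for a different reason: to prove that $g_\cB$ is $(d,\tau)$-large (Claim~\ref{clm:mtlarge}) so the truncated system of Theorem~\ref{thm:modified} is nonempty and still a matroid, and to pin down the rank values. Your framing inverts this: it treats the matroid property as the place where near-disjointness is spent, when in the actual proof the matroid property is free and near-disjointness is spent on the truncation and on the values $\rlow,\rhigh$. Unless you can actually prove submodularity of your $f_\cB$ (which I doubt holds in the form stated, and which in any case is not needed — the family can be a matroid without the defining function being submodular), part (c) remains a gap, and you should look for the $J$-submodularity/uncrossing route instead.
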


\Theorem{manymatroids} implies that there exists a super-polynomial-sized collection of subsets
of $[n]$ such that, for \emph{any} labeling of those
sets as \textsc{High} or \textsc{Low}, we can construct a matroid where the sets
in \textsc{High} have rank $r_\mathrm{high}$ and the sets in \textsc{Low} have rank
$r_\mathrm{low}$, and the ratio
$r_\mathrm{high}/r_\mathrm{low} = \tilde{\Omega}(n^{1/3})$.
For example, by picking $k = n^{\log n}$, in the matroid $\mat_\cB$,
a set $A$ has rank only $O(\log^2 n)$ if $A \in \cB$,
but has rank $n^{1/3}$ if $A \in \cA \setminus \cB$.
In other words, as $\cB$ varies, the rank of a set $A \in \cA$ varies wildly,
depending on whether $A \in \cB$ or not.

Later sections of the paper use \Theorem{manymatroids} to prove various negative results.
In \Section{general-lower} we use the theorem to prove our inapproximability result for PMAC-learning
submodular functions under arbitrary distributions.
In \Section{implications} we use the theorem to prove results on the difficulty of several submodular
optimization problems.

In the remainder of Section \ref{section-new-extremal-matroid} we discuss~\Theorem{manymatroids} and
give a detailed proof.

\subsubsection{Discussion of \Theorem{manymatroids} and Sketch of the Construction}

We begin by discussing some set systems which give intuition on how \Theorem{manymatroids}
is proven. Let $\cA = \set{ A_1,\ldots,A_k }$ be a collection of subsets of $[n]$
and consider the set system
$$
    \cI ~=~ \setst{ I }{ \card{I} \leq r ~~\wedge~~ \card{I \intersect A_j} \leq b_j ~\:\forall
        j \in [k]}.
$$
If $\cI$ is the family of independent sets of a matroid $\mat$,
and if $\rankf_\mat(A_j) = b_j$ for each $j$, then perhaps such a construction
can be used to prove \Theorem{manymatroids}.

Even in the case $k=2$, understanding $\cI$ is quite interesting.
First of all, $\cI$ typically is not a matroid.
Consider taking $n=5$, $r=4$, $A_1 = \set{1,2,3}$, $A_2 = \set{3,4,5}$ and $b_1=b_2=2$.
Then both $\set{1,2,4,5}$ and $\set{2,3,4}$ are maximal sets in $\cI$
but their cardinalities are unequal, which violates a basic matroid property.
However, one can verify that $\cI$ is a matroid
if we additionally require that $r \leq b_1+b_2-\card{A_1 \intersect A_2}$.
In fact, we could place a constraint on $\card{I \intersect (A_1 \union A_2)}$
rather than on $\card{I}$, obtaining
$$
    \setst{ I }{ \card{I \intersect A_1} \leq b_1
        ~~\wedge~~ \card{I \intersect A_2} \leq b_2
        ~~\wedge~~ \card{I \intersect (A_1 \union A_2)} \leq b_1 + b_2 - \card{A_1 \intersect A_2} },
$$
which is the family of independent sets of a matroid.
In the case that $A_1$ and $A_2$ are disjoint, the third constraint becomes
$\card{I \intersect (A_1 \union A_2)} \leq b_1 + b_2$, which is redundant because it is
implied by the first two constraints.
In the case that $A_1$ and $A_2$ are ``nearly disjoint'', this third constraint
becomes necessary and it incorporates an ``error term'' of $-\card{A_1 \intersect A_2}$.

To generalize to $k > 2$, we impose similar constraints for every subcollection of $\cA$,
and we must include additional ``error terms'' that are small when the $A_j$'s are nearly disjoint.
\Theorem{mainthm} proves that
\begin{equation}
    \EquationName{Idef}
    \cI ~=~ \setst{ I }{ \card{ I \intersect A(J) } \:\leq\: g(J) ~\:\forall J \subseteq [k] }.
\end{equation}
is a matroid, where the function $g : 2^{[k]} \rightarrow \bZ$ is defined by
\begin{equation}
\EquationName{fdef}
g(J) \defeq
\sum_{j \in J} b_j \:-\:
        \Big( \sum_{j \in J} \card{A_j} - \card{A(J)} \Big),
~~\qquad\text{where}\qquad~~
A(J) \defeq \Union_{j \in J} A_j.
\end{equation}
In the definition of $g(J)$,
we should think of $-\big( \sum_{j \in J} \card{A_j} - \card{A(J)}\big)$ as an ``error term'',
since it is non-positive, and it captures the ``overlap'' of the sets $\setst{ A_j }{ j \in J }$.
In particular, in the case $J=\set{1,2}$, this error term is $-\card{A_1 \intersect A_2}$,
as it was in our discussion of the case $k=2$.

Let us now consider a special case of this construction.
If the $A_j$'s are all disjoint then the error terms are all $0$,
so the family $\cI$ reduces to
$$
    \setst{ I }{  \card{ I \intersect A_j } \:\leq\: b_j ~\:\forall j \in [k] },
$$
which is a (generalized) partition matroid, regardless of the $b_j$ values.
Unfortunately these matroids cannot achieve our goal of having superpolynomially many sets
labeled \textsc{High} or \textsc{Low}.
The reason is that, since the $A_j$'s must be disjoint, there can be at most $n$ of them.

In fact, it turns out that any matroid of the form \eqref{eq:Idef}
can have at most $n$ sets in the collection $\cA$.
To obtain a super-polynomially large $\cA$ we must modify this construction slightly.
\Theorem{modified} shows that, under certain conditions, the family
\[
\bar{\cI}
    ~=~ \Big\{~ I ~:~
        \card{I} \leq d
        ~\And~
        \card{I \intersect A(J)} \leq g(J)
            ~\: \forall J \subseteq [k],\, \card{J} < \tau
        ~\Big\}
\]
is also the family of independent sets of a matroid.
Introducing the crucial parameter $\tau$ allows us to have
obtain a super-polynomially large $\cA$.

There is an important special case of this construction.
Suppose that $\card{A_j} = d$ and $b_j = d-1$ for every $j$,
and that $\card{ A_i \intersect A_j } \leq 2$ for all $i \neq j$.
The resulting matroid is called a \newterm{paving matroid}, a well-known type of matroid.
These matroids are quite relevant to our goals of having super-polynomially
many sets labeled \textsc{High} and \textsc{Low}.
The reason is that the conditions on the $A_j$'s are equivalent to $\cA$ being a constant-weight
error-correcting code of distance $4$, and it is well-known that such codes can have
super-polynomial size.
Unfortunately this construction has $r_\mathrm{low} = d-1$ and $r_\mathrm{high}=d$;
this small, additive gap is much too weak for our purposes.

The high-level plan underlying \Theorem{manymatroids} is to find a new class of matroids
that somehow combines the positive attributes of both partition and paving matroids.
From paving matroids we will inherit the large size of the collection $\cA$,
and from partition matroids we will inherit a large ratio $r_\mathrm{high} / r_\mathrm{low}$.

One of our key observations
is that there is a commonality between partition and paving matroids:
the collection $\cA$ must satisfy an ``expansion'' property,
which roughly means that the $A_j$'s cannot overlap too much.
With partition matroids the $A_j$'s must be disjoint, which amounts to having ``perfect'' expansion.
With paving matroids the $A_j$'s must have small pairwise intersections,
which is a fairly weak sort of expansion.

It turns out that the ``perfect'' expansion required by partition matroids is too strong
for $\cA$ to have super-polynomial size, and the ``pairwise'' expansion required by paving
matroids is too weak to allow a large ratio $r_\mathrm{high} / r_\mathrm{low}$.
Fortunately, weakening the expansion from ``perfect'' to ``nearly-perfect''
is enough to obtain a collection $\cA$ of super-polynomial size.
With several additional technical ideas, we show that these
nearly-perfect expansion properties can be leveraged to achieve
our desired ratio $r_\mathrm{high} / r_\mathrm{low} = \tilde{\Omega}(n^{1/3})$.
These ideas lead to a proof of \Theorem{manymatroids}.

\subsubsection{Our New Matroid Constructions}
\SectionName{bumpy}


Our first matroid construction is given by the following theorem,
which is proven in \Section{matroidconstructions}.

\begin{theorem}
\TheoremName{mainthm}
The family $\cI$ given in \Equation{Idef}
is the family of independent sets of a matroid, if it is non-empty.
\end{theorem}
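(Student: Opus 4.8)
The plan is to recognize $\cI$ as the family of independent sets of the matroid \emph{induced by a polymatroid} --- a classical construction --- by collapsing all the constraints into a single submodular function on $2^{[n]}$. Define $f : 2^{[n]} \to \bZ$ by
$$ f(U) \defeq \min_{J \subseteq [k]} \big( g(J) + |U \setminus A(J)| \big) . $$
First I would check that $\cI = \setst{ I \subseteq [n] }{ |I \cap U| \le f(U) \ \forall U \subseteq [n] }$. The inclusion ``$\subseteq$'' follows by splitting $I\cap U$ along $A(J)$ for an arbitrary $J$: $|I\cap U| \le |I\cap A(J)| + |U\setminus A(J)| \le g(J) + |U\setminus A(J)|$, and then minimizing over $J$. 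The reverse inclusion follows by taking $U = A(J)$, since $f(A(J)) \le g(J) + |A(J)\setminus A(J)| = g(J)$. It therefore suffices to show that $f$ is non-negative, monotone, integer-valued and submodular with $f(\emptyset)=0$, and then invoke the standard theorem that such an $f$ induces a matroid (with rank function $S \mapsto \min_{U\subseteq S}(f(U)+|S\setminus U|)$).

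The crux is submodularity of $f$, and the key is the algebraic reformulation
$$ g(J) + |U \setminus A(J)| ~=~ \sum_{j\in J}\big(b_j - |A_j|\big) ~+~ \big| A(J) \cup U \big| , $$
which comes from the definition of $g$ together with $|U| + |A(J)\setminus U| = |A(J)\cup U|$. Writing $\psi(J,U)$ for the right-hand side, I would show $\psi$ is submodular on the \emph{product} lattice $2^{[k]}\times 2^{[n]}$: the term $\sum_{j\in J}(b_j-|A_j|)$ is modular, and $(J,U)\mapsto |A(J)\cup U|$ is submodular because cardinality is modular, the map $(J,U)\mapsto A(J)\cup U$ preserves joins ($A(J_1)\cup A(J_2) = A(J_1\cup J_2)$), and $A(J_1\cap J_2)\cup(U_1\cap U_2) \subseteq (A(J_1)\cup U_1)\cap(A(J_2)\cup U_2)$ using only that $J\mapsto A(J)$ is monotone. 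Finally, a one-line exchange argument shows that minimizing a function submodular on a product lattice over one coordinate leaves it submodular in the other: if $J_1,J_2$ are minimizers for $U_1,U_2$, then $f(U_1)+f(U_2)=\psi(J_1,U_1)+\psi(J_2,U_2)\ge \psi(J_1\cup J_2,U_1\cup U_2)+\psi(J_1\cap J_2,U_1\cap U_2)\ge f(U_1\cup U_2)+f(U_1\cap U_2)$.

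Monotonicity and integrality of $f$ are immediate. For non-negativity and $f(\emptyset)=0$ I would use the hypothesis that $\cI$ is non-empty: any $I_0 \in \cI$ gives $g(J) \ge |I_0 \cap A(J)| \ge 0$ for every $J$, so $f(\emptyset) = \min_J g(J) = g(\emptyset) = 0$, and then $f \ge 0$ by monotonicity. Together with the classical induced-matroid theorem this finishes the proof. I expect the only genuine obstacle to be the submodularity of $f$; once one spots the reformulation $g(J) + |U\setminus A(J)| = (\textrm{modular in } J) + |A(J)\cup U|$, the product-lattice computation is routine. A secondary point is to state precisely which induced-matroid theorem one invokes (non-negative, monotone, integer-valued submodular $f$ with $f(\emptyset)=0$) and verify its hypotheses are met.
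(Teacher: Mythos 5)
Your proof is correct, and it takes a genuinely different route from the paper's. The paper establishes a bespoke uncrossing lemma (\Lemma{generalindep}) showing that if, for any $I\in\cI$, the tight constraints are closed under union (or have disjoint carriers), then $\cI$ satisfies the matroid exchange axiom; it then verifies uncrossing by fixing $I$, defining $h_I(J)=g(J)-|I\cap A(J)| = |A(J)\setminus I| - \sum_{j\in J}(|A_j|-b_j)$, noting $h_I$ is submodular in $J$ by coverage-type submodularity, observing that tight constraints are exactly the minimizers of the non-negative function $h_I$, and using that minimizers of a submodular function are closed under union. You instead construct an explicit integer polymatroid $f(U)=\min_J(g(J)+|U\setminus A(J)|)$ realizing $\cI$ as its induced matroid, and invoke Edmonds' classical theorem. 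Your reduction of $\cI$ to $\setst{I}{|I\cap U|\le f(U)\ \forall U}$, the reformulation $g(J)+|U\setminus A(J)|=(\text{modular in }J)+|A(J)\cup U|$, the product-lattice submodularity, the marginal-minimization step, and the use of non-emptiness to obtain $f\ge 0$ and $f(\emptyset)=0$ are all sound. Both proofs lean on essentially the same algebraic observation --- in fact $\psi(J,I)=h_I(J)+|I|$, so your $\psi$ and the paper's $h_I$ differ by a constant in $J$ --- but the paper fixes $I$ and only needs submodularity in $J$, while you prove \emph{joint} submodularity on the product lattice and marginalize to get a genuine polymatroid $f$ on $2^{[n]}$. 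Amusingly, the paper's footnote to \Lemma{generalindep} says the authors ``were unable to use'' existing constructions, including Edmonds', to prove this theorem; your argument shows Edmonds' construction does work for \Theorem{mainthm}. The likely real obstruction is \Theorem{modified}, where the truncated $\bar g$ is not submodular and hence does not directly yield a polymatroid, whereas the paper's uncrossing lemma handles both constructions uniformly. Your approach buys an explicit polymatroid and the rank formula $\rankf_\mat(S)=\min_{U\subseteq S}(f(U)+|S\setminus U|)$; the paper's buys a single reusable lemma.
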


As mentioned above, \Theorem{mainthm} does not suffice to prove \Theorem{manymatroids}.
To see why, suppose that $\card{\cA} = k > n$ and that $b_i < \card{A_i}$ for every $i$.
Then $g([k]) \leq n-k < 0$, and therefore $\cI$ is empty.
So the construction of \Theorem{mainthm} is only applicable when $k \leq n$,
which is insufficient for proving \Theorem{manymatroids}.

We now modify the preceding construction by introducing
a sort of ``truncation'' operation which allows us to take $k \gg n$.
We emphasize that this truncation is \emph{not} ordinary matroid truncation.
The ordinary truncation operation \emph{decreases} the rank of the matroid,
whereas we want to \emph{increase} the rank by throwing away constraints in the definition of $\cI$.
We will introduce an additional parameter $\tau$,
and only keep constraints for $\card{J} < \tau$.
So long as $g$ is large enough for a certain interval,
then we can truncate $g$ and still get a matroid.

\begin{definition}
Let $d$ and $\tau$ be non-negative integers.
A function $g : 2^{[k]} \rightarrow \bR$ is called $(d,\tau)$-large if
\begin{equation}
\EquationName{large}
g(J) ~\geq~ \begin{cases}
0  &\quad\forall J \subseteq [k],~ \card{J} < \tau \\
d  &\quad\forall J \subseteq [k],~ \tau \leq \card{J} \leq 2\tau-2.
\end{cases}
\end{equation}
The truncated function $\bar{g} : 2^{[k]} \rightarrow \bZ$ is defined by
\[
\bar{g}(J) \defeq
    \begin{cases}
    g(J) &\quad\text{\rm (if $\card{J} < \tau$)} \\
    d &\quad\text{\rm (otherwise).}
    \end{cases}
\]
\end{definition}


\newcommand{\thmmodified}{
    Suppose that the function $g$ defined in \Equation{fdef} is $(d,\tau)$-large.
    Then the family
    \[
    \bar{\cI}
        ~=~ \setst{ I }{ \card{I \intersect A(J)} \leq \bar{g}(J) ~~\forall J \subseteq [k] }
    \]
    is the family of independent sets of a matroid.
}
\begin{theorem}
\TheoremName{modified} \thmmodified
\end{theorem}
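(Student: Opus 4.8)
The plan is to verify the matroid independence axioms for $\bar{\cI}$ directly, after recasting the $[k]$-indexed constraints as constraints indexed by subsets of the actual ground set $[n]$. Non-emptiness and downward-closure are immediate once one observes that $\bar{g} \geq 0$: the first line of \Equation{large} gives $\bar{g}(J) = g(J) \geq 0$ for $\card{J} < \tau$, and $\bar{g}(J) = d \geq 0$ otherwise, so $\emptyset \in \bar{\cI}$ and shrinking a set only decreases each $\card{I \intersect A(J)}$. The whole content therefore lies in the exchange axiom. For this I would introduce $f : 2^{[n]} \to \bZ$ by $f(T) = \min_{J \subseteq [k]}\bigl(\bar{g}(J) + \card{T \setminus A(J)}\bigr)$. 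A routine check shows $f$ is integer-valued and non-negative, and that $\bar{\cI} = \setst{I}{\card{I \intersect T} \leq f(T)\ \forall\, T \subseteq [n]}$ (for ``$\subseteq$'', split $I \intersect T$ along $A(J)$ for the minimizing $J$; for ``$\supseteq$'', take $T = A(J)$).

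By the standard construction of matroids from submodular functions, $\setst{I}{\card{I \intersect T} \leq f(T)\ \forall T}$ is the family of independent sets of a matroid as soon as $f$ is integer-valued, non-negative, and submodular. One can also see the exchange axiom by hand: if $I, I' \in \bar{\cI}$ with $\card{I'} < \card{I}$, then since $T \mapsto \card{I' \intersect T}$ is modular and $f$ is submodular, the family of $I'$-tight sets $\setst{T}{\card{I' \intersect T} = f(T)}$ is closed under union, so its union $Z$ is itself $I'$-tight; then $\card{I \intersect Z} \leq f(Z) = \card{I' \intersect Z}$ forces $\card{I \setminus Z} > \card{I' \setminus Z}$, giving an element $x \in (I \setminus Z) \setminus I'$, and $I' + x \in \bar{\cI}$ because $x$ avoids $Z$ (so no tight constraint is touched) while every non-tight constraint still has room, by integrality of $f$. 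Either way, everything reduces to the submodularity of $f$.

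Proving $f$ submodular is the crux and, I expect, the main obstacle. For the \emph{untruncated} function this is essentially the content of \Theorem{mainthm}: $\cI$ in \Equation{Idef} equals $\setst{I}{\card{I \intersect T} \leq \min_J(g(J) + \card{T \setminus A(J)})}$, and the argument proving \Theorem{mainthm} establishes that this induced function is submodular, using that $g$ of \Equation{fdef} is itself submodular on $2^{[k]}$ (being the sum of a modular function and the coverage function $J \mapsto \card{A(J)}$). So I would re-run that argument with $\bar{g}$ in place of $g$, and the $(d,\tau)$-large hypothesis is exactly what is needed to make the truncation harmless. The delicate configuration is a pair $J_1, J_2$ with $\card{J_1}, \card{J_2} < \tau$ but $\card{J_1 \union J_2} \geq \tau$, so that $\bar{g}$ has dropped to the constant $d$ on the union: here $\tau \leq \card{J_1 \union J_2} \leq \card{J_1} + \card{J_2} \leq 2\tau - 2$, so the second line of \Equation{large} gives $g(J_1 \union J_2) \geq d = \bar{g}(J_1 \union J_2)$, and submodularity of $g$ then closes the inequality $\bar{g}(J_1) + \bar{g}(J_2) = g(J_1) + g(J_2) \geq g(J_1 \union J_2) + g(J_1 \intersect J_2) \geq \bar{g}(J_1 \union J_2) + \bar{g}(J_1 \intersect J_2)$. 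Configurations involving index sets of size $\geq \tau$ should be easier, since the monotonicity of $J \mapsto A(J)$ makes the constant-$d$ constraints collapse onto the single one for $J = [k]$. The difficulty is doing the truncation bookkeeping carefully enough, across all relevant configurations of index-set sizes, that the whole argument of \Theorem{mainthm} goes through — since the reduction rests entirely on $f$ being submodular.
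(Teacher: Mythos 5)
Your reduction of $\bar{\cI}$ to $\setst{I}{\card{I \intersect T} \leq f(T)\ \forall T}$ with $f(T) = \min_{J}\bigl(\bar{g}(J) + \card{T \setminus A(J)}\bigr)$ is correct, and $f$ really is submodular, so the strategy is salvageable --- but there is a genuine gap in how you propose to establish that submodularity. You identify submodularity of $\bar{g}$ on $2^{[k]}$ as the key input, and the ``delicate case'' you verify is precisely the submodular inequality for $\bar{g}$. That is not enough: the operator $\psi \mapsto \min_J\bigl(\psi(J) + \card{\,\cdot\, \setminus A(J)}\bigr)$ does \emph{not} preserve submodularity when the $A_j$ overlap. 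Concretely, take $n=3$, $k=3$, $A_1 = \set{1,2}$, $A_2 = \set{2,3}$, $A_3 = \set{1,3}$, and $\psi(J) = \card{J}/2$ (modular, hence submodular). One computes $f(\set{1,2}) = f(\set{2,3}) = f(\set{2}) = 1/2$ but $f(\set{1,2,3}) = 1$, so $f(\set{1,2}) + f(\set{2,3}) = 1 < 3/2 = f(\set{1,2,3}) + f(\set{2})$. The obstruction is that after uncrossing the minimizers $J_1 = \set{1}$, $J_2 = \set{2}$, the set-difference parts go the wrong way: $A(J_1 \intersect J_2) = A(\emptyset) = \emptyset$ is strictly smaller than $A(J_1) \intersect A(J_2)$, so $\card{(T_1 \intersect T_2) \setminus A(J_1 \intersect J_2)}$ can exceed anything available on the left.

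What the argument actually needs --- and what your passing remark about ``modular plus coverage'' gestures at but does not use --- is the specific decomposition $g(J) = m(J) + \card{A(J)}$ with $m$ modular, so that $g(J) + \card{T \setminus A(J)} = m(J) + \card{A(J) \union T}$. Only after merging the coverage term with the set-difference term does the uncrossing work: modularity of $m$, the identity $\card{X}+\card{Y} = \card{X\union Y}+\card{X\intersect Y}$, and $(A(J_1)\union T_1)\intersect(A(J_2)\union T_2) \supseteq A(J_1\intersect J_2)\union(T_1\intersect T_2)$ give the inequality. Your $(d,\tau)$-largeness step (replacing $g(J_1\union J_2)$ by $\bar{g}(J_1\union J_2)$ when $\card{J_1\union J_2}\in[\tau,2\tau-2]$) is the right move but must be made inside that merged form, not applied to $\bar{g}$ in isolation; and the case where a minimizing $J_i$ has $\card{J_i}\geq\tau$ (where $\bar{g}(J_i)$ is the constant $d$, not of modular-plus-coverage form) needs its own short calculation with $J_i$ replaced by $[k]$, which you dismiss as ``easier'' without supplying. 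The paper's proof sidesteps all of this: it never shows $f$ submodular, but invokes \Lemma{generalindep}, whose hypothesis --- for fixed $I$, tight constraints are closed under union --- is strictly weaker. There, Case 1 (some $\card{J_i}\geq\tau$) is a one-liner because the union constraint's right-hand side is already $d = \bar{g}(J_1)$, and Case 2 reuses the \Theorem{mainthm} uncrossing of $h_I(J) = g(J) - \card{I\intersect A(J)}$, which again rests on the modular-plus-coverage structure of $g$, not merely its submodularity.
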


Consequently, we claim that the family
\[
\bar{\cI}
    ~=~ \Big\{~ I ~:~
        \card{I} \leq d
        ~\And~
        \card{I \intersect A(J)} \leq g(J)
            ~\: \forall J \subseteq [k],\, \card{J} < \tau
        ~\Big\}
\]
is also the family of independent sets of a matroid.
This claim follows immediately if the $A_i$'s cover the ground set (i.e., $A([k])=[n]$),
because the matroid definition in \Theorem{modified} includes the constraint
$\card{I} = \card{I \intersect A([k])} \leq \bar{g}([k]) = d$.
Alternatively, if $A([k]) \neq [n]$, we may we apply the well-known matroid truncation operation
which constructs a new matroid simply by removing all independent sets of size greater than $d$.

This construction yields quite a broad family of matroids.
We list several interesting special cases in \Appendix{specialcases}.
In particular, partition matroids and paving matroids are both special cases.
Thus, our construction can produce ``non-linear'' matroids
(i.e., matroids that do not correspond to vectors over any field),
as the V\'amos matroid is a paving matroid that is non-linear \cite{Oxley}.

\subsubsection{Proofs of \Theorem{mainthm} and \Theorem{modified}}
\SectionName{matroidconstructions}

In this section, we will prove \Theorem{mainthm} and \Theorem{modified}.
We start with a simple but useful lemma which describes
a general set of conditions that suffice to obtain a matroid.

Let $\cC \subseteq 2^{[n]}$ be an arbitrary family of sets
and let $g : \cC \rightarrow \bZ$ be a function.
Consider the family
\begin{equation}
\EquationName{independentsets}
\cI ~=~ \setst{ I }{ \card{I \intersect C} \leq g(C) ~~\forall C \in \cC }.
\end{equation}
For any $I \in \cI$, define $T(I) = \setst{ C \in \cC }{ \card{I \intersect C} = g(C) }$
to be the set of constraints that are ``tight'' for the set $I$.
Suppose that $g$ has the following property:
\begin{equation}
\EquationName{uncross2}
\forall I \in \cI, \quad C_1, C_2 \in T(I)
    \quad\implies\quad
    (C_1 \union C_2 \in T(I)) ~\vee~ (C_1 \intersect C_2 = \emptyset).
\end{equation}
Properties of this sort are commonly called ``uncrossing'' properties.
Note that we do not require that $C_1 \intersect C_2 \in \cC$.
We show in the following lemma that this uncrossing property is sufficient\footnote{
    There are general matroid constructions in the literature
    which are similar in spirit to \Lemma{generalindep},
    e.g., the construction of Edmonds~\cite[Theorem 15]{EdmondsSubmodular}
    and the construction of Frank and Tardos~\cite[Corollary 49.7a]{Schrijver}.
    However, we were unable to use those existing constructions to prove
    \Theorem{mainthm} or \Theorem{modified}.
}
to obtain a matroid.

\begin{lemma}
\LemmaName{generalindep}
Assume that \Equation{uncross2} holds.
Then $\cI$ is the family of independent sets of a matroid, if it is non-empty.
\end{lemma}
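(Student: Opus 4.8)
The plan is to verify the two matroid axioms directly for $\cI$, using the uncrossing property \eqref{eq:uncross2} as the key structural tool. The downward-closure axiom is immediate: if $I \in \cI$ and $J \subseteq I$, then $\card{J \intersect C} \leq \card{I \intersect C} \leq g(C)$ for every $C \in \cC$, so $J \in \cI$. Also $\cI$ is non-empty by hypothesis (and in fact $\emptyset \in \cI$ since each $g(C) \geq 0$ — which follows because $\cI \ne \emptyset$ forces $g(C) \geq \card{I \intersect C} \geq 0$). So the whole content is the exchange axiom.

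For the exchange axiom, I would take $I, J \in \cI$ with $\card{J} < \card{I}$ and seek $i \in I \setminus J$ with $J + i \in \cI$. The obstruction to adding an element $i \in I \setminus J$ to $J$ is precisely the existence of a \emph{tight} constraint $C \in T(J)$ with $i \in C$: if $i$ lies in no set of $T(J)$, then $J + i \in \cI$. So suppose for contradiction that every $i \in I \setminus J$ lies in some $C \in T(J)$. The natural move is to look at the union $C^* = \bigcup \setst{C \in T(J)}{C \intersect (I \setminus J) \ne \emptyset}$ of all tight constraints that block some element of $I \setminus J$, and argue via the uncrossing property that $C^*$ is itself tight for $J$, i.e. $C^* \in T(J)$. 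Here is where \eqref{eq:uncross2} does the work: I would process the relevant tight sets one at a time, maintaining a current union $C'$ that is tight; when I bring in the next tight set $C$, either $C' \intersect C = \emptyset$ (so $C' \cup C$ is trivially tight, since $\card{J \intersect (C' \cup C)} = \card{J \intersect C'} + \card{J \intersect C} = g(C') + g(C)$ — wait, that does not immediately give tightness of $C' \cup C$ as a constraint, since $g(C' \cup C)$ need not equal $g(C') + g(C)$; rather, in the disjoint case $C' \cup C$ need not be in $\cC$ at all, but that is fine because I only need it as an auxiliary set, not a constraint) or $C' \cup C \in T(J)$ directly. Actually the cleaner formulation: I claim the \emph{union of any subfamily of $T(J)$ whose members pairwise intersect in a connected way} stays tight; more carefully, group the blocking tight sets into ``clusters'' by the transitive closure of the intersection relation, and within each cluster repeated application of \eqref{eq:uncross2} (the disjoint alternative never triggers within a cluster by construction) shows the cluster's union is in $T(J)$. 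Let $C_1^*, \dots, C_m^*$ be the resulting pairwise-disjoint tight sets covering all of $I \setminus J$.

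Now I would do the counting. Since each $C_\ell^* \in T(J)$, we have $\card{J \intersect C_\ell^*} = g(C_\ell^*) \geq \card{I \intersect C_\ell^*}$, the last inequality because $I \in \cI$. Summing over $\ell$, and using that the $C_\ell^*$ are pairwise disjoint and that $I \setminus J \subseteq \bigcup_\ell C_\ell^*$, we get
\[
\card{J} \;\geq\; \sum_{\ell=1}^m \card{J \intersect C_\ell^*} \;\geq\; \sum_{\ell=1}^m \card{I \intersect C_\ell^*} \;\geq\; \card{I \intersect (I \setminus J)} \;+\; \card{I \intersect J \intersect \textstyle\bigcup_\ell C_\ell^*}.
\]
Hmm — I need to be a little careful to recover $\card{I}$ rather than just $\card{I \setminus J}$. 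The slicker route: note $\card{J \intersect C_\ell^*} \geq \card{I \intersect C_\ell^*}$ gives $\card{(J \setminus I) \intersect C_\ell^*} \geq \card{(I \setminus J) \intersect C_\ell^*} = \card{(I\setminus J)\intersect C_\ell^*}$, and since the $C_\ell^*$ cover $I \setminus J$, summing yields $\card{J \setminus I} \geq \sum_\ell \card{(J\setminus I)\intersect C_\ell^*} \geq \sum_\ell \card{(I\setminus J)\intersect C_\ell^*} \geq \card{I \setminus J}$. Combined with $\card{I \intersect J} = \card{I \intersect J}$, this gives $\card{J} = \card{J \setminus I} + \card{I \intersect J} \geq \card{I \setminus J} + \card{I \intersect J} = \card{I}$, contradicting $\card{J} < \card{I}$. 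Hence some $i \in I \setminus J$ is blocked by no tight constraint, and $J + i \in \cI$.

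\textbf{The main obstacle} I expect is the uncrossing step — turning the local two-set property \eqref{eq:uncross2} into the global statement that the union of all blocking tight sets (suitably clustered) is tight. One must be careful that \eqref{eq:uncross2} only guarantees $C_1 \cup C_2 \in T(I)$ when $C_1 \cap C_2 \ne \emptyset$, so the induction has to be organized around connected clusters of tight sets, and one must confirm that when merging a new tight set into a cluster it genuinely intersects the running union (not merely some original member), which holds because the running union contains that member. Everything after that is elementary disjointness-based counting.
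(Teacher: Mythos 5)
Your proof is correct and follows essentially the same strategy as the paper: use \eqref{eq:uncross2} to produce pairwise-disjoint tight sets for the smaller independent set, then count to locate an addable element outside their union. The paper reaches the disjoint tight sets more directly by taking the \emph{maximal} sets in $T(I)$ (which are automatically disjoint, since intersecting maximal tight sets would have a strictly larger tight union by \eqref{eq:uncross2}), which sidesteps the connected-component clustering you needed to justify that the union of a chain of pairwise-intersecting tight sets is itself tight.
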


\begin{proof}
We will show that $\cI$ satisfies the required axioms of an independent set family.
If $I \subseteq I' \in \cI$ then clearly $I \in \cI$ also.
So suppose that $I \in \cI$, $I' \in \cI$ and $\card{I} < \card{I'}$.
Let $C_1, \ldots, C_m$ be the maximal sets in $T(I)$ and let $C^* = \union_i \, C_i$.
Note that these maximal sets are disjoint, otherwise we could replace
any intersecting sets with their union.
In other words,
$C_i \intersect C_j = \emptyset$ for $i \!\neq\! j$, otherwise \Equation{uncross2} implies that
$C_i \union C_j \in T(I)$, contradicting maximality.
So
$$
\card{I' \intersect C^*}
    ~=~ \sum_{i=1}^m \card{I' \intersect C_i}
    ~\leq~ \sum_{i=1}^m g(C_i)
    ~=~ \sum_{i=1}^m \card{I \intersect C_i}
    ~=~ \card{I \intersect C^*}.
$$
Since $\card{I'} > \card{I}$ but
$\card{I' \intersect C^*} \leq \card{I \intersect C^*}$,
we must have that $\card{I' \setminus C^*} > \card{I \setminus C^*}$.
The key consequence is that some element $x \in I' \setminus I$ is not contained in any tight set,
i.e., there exists $x \in I' \setminus \big( C^* \union I \big)$.
Then $I+x \in \cI$ because for every $C \in \cC$ with $x \in C$
we have $\card{I \intersect C} \leq g(C)-1$.
\end{proof}


We now use \Lemma{generalindep} to prove \Theorem{mainthm}, restated here.

\repeatclaim{\Theorem{mainthm}}{
    The family $\cI$ defined in \Equation{Idef}, namely
    $$
        \cI ~=~ \setst{ I }{  \card{ I \intersect A(J) } \:\leq\: g(J) ~\:\forall J \subseteq [k] },
    $$
    where
    $$
    g(J) \defeq
    \sum_{j \in J} b_j \:-\:
            \Big( \sum_{j \in J} \card{A_j} - \card{A(J)} \Big)
    ~~\qquad\text{and}\qquad~~
    A(J) \defeq \Union_{j \in J} A_j,
    $$
    is the family of independent sets of a matroid, if it is non-empty.
}

This theorem is proven by showing that the constraints defining $\cI$
can be ``uncrossed'' (in the sense that they satisfy \eqref{eq:uncross2}),
then applying \Lemma{generalindep}.
It is not a~priori obvious that these constraints can be uncrossed:
in typical uses of uncrossing, the right-hand side $g(J)$ should be a submodular function of $J$
and the left-hand side $\card{I \intersect A(J)}$ should be a \textit{super}modular function of $J$.
In our case both $g(J)$ and $\card{I \intersect A(J)}$ are submodular
functions of $J$.

\begin{proofof}{\Theorem{mainthm}}
The proof applies \Lemma{generalindep}
to the family $\cC = \setst{ A(J) }{ J \subseteq [k]}$.
We must also define a function $g' : \cC \rightarrow \bZ$.
However there is a small issue: it is possible that there exist $J \neq J'$
with $A(J)=A(J')$ but $g(J) \neq g(J')$, so we cannot simply define $g'(A(J)) = g(J)$.
Instead, we define the value of $g'(A(J))$ according the tightest constraint
on $\card{I \intersect A(J)}$, i.e.,
$$
    g'(C) \defeq \min \setst{ g(J) }{ A(J)=C }
    \qquad\forall C \in \cC.
$$

Now fix $I \in \cI$ and suppose that $C_1$ and $C_2$ are tight,
i.e., $\card{I \intersect C_i} = g'(C_i)$.
Define $h_I : 2^{[k]} \rightarrow \bZ$ by
\[
h_I(J) ~:=~ g(J) - \card{I \intersect A(J)}
     ~=~ \card{A(J) \setminus I} - \sum_{j \in J} (\card{A_j} - b_j).
\]
We claim that $h_I$ is a submodular function of $J$.
This follows because $J \mapsto \card{A(J) \setminus I}$ is a submodular function of $J$
(cf.~\Theorem{coverage-subm} in \Appendix{facts-submodular}),
and $J \mapsto \sum_{j \in J} (\card{A_j} - b_j)$ is a modular function of $J$.

Now choose $J_i$ satisfying $C_i = A(J_i)$ and $g'(C_i) = g(J_i)$, for both $i \in \set{1,2}$.
Then
\[
    h_I(J_i) = g(J_i) - \card{I \intersect A(J_i)} = g'(C_i) - \card{I \intersect C_i} = 0,
\]
for both $i \in \set{1,2}$.
However $h_I \geq 0$, since we assume $I \in \cI$ and therefore
$\card{I \intersect A(J)} \leq g(J)$ for all $J$.
So we have shown that $J_1$ and $J_2$ are both minimizers of $h_I$.
It is well-known that the minimizers of any
submodular function are closed under union and intersection. (See Lemma~\ref{minim-subm}
in \Appendix{facts-submodular}.) So $J_1 \union J_2$ and $J_1 \intersect J_2$ are
also minimizers, implying that $A(J_1 \union J_2)=A(J_1) \union A(J_2) = C_1 \union C_2$
is also tight.

This shows that \Equation{uncross2} holds, so the theorem follows from \Lemma{generalindep}.
\end{proofof}

A similar approach is used for our second construction.


\begin{proofof}{\Theorem{modified}}
Fix $I \in \bar{\cI}$.
Let $J_1$ and $J_2$ satisfy $\card{I \intersect A(J_i)} = \bar{g}(J_i)$.
By considering two cases, we will show that
$$\card{I \intersect A(J_1 \union J_2)} ~\geq~ \bar{g}(J_1 \union J_2),$$
so the desired result follows from \Lemma{generalindep}.

\noindent\textit{Case 1:}
$\max \set{\card{J_1}, \card{J_2}} \geq \tau$.
Without loss of generality, $\card{J_1} \geq \card{J_2}$.
Then $$\bar{g}(J_1 \union J_2) = d = \bar{g}(J_1) = \card{ I \intersect A(J_1) }
\leq \card{ I \intersect A(J_1 \union J_2) }.$$

\noindent\textit{Case 2:}
$\max \set{\card{J_1}, \card{J_2}} \leq \tau-1$.
So $\card{J_1 \union J_2} \leq 2\tau-2$.
We have $\card{I \intersect A(J_i)} = \bar{g}(J_i) = g(J_i)$ for both $i$.
As argued in the proof of \Theorem{mainthm},
we also have $\card{I \intersect A(J_1 \union J_2)} = g(J_1 \union J_2)$.
But $g(J_1 \union J_2) \geq \bar{g}(J_1 \union J_2)$ since $g$ is $(d,\tau)$-large,
so $\card{I \intersect A(J_1 \union J_2)} \geq \bar{g}(J_1 \union J_2)$, as desired.
\end{proofof}

\subsubsection{Putting it all together: Proof of \Theorem{manymatroids}}
\SectionName{lb}

In this section we
use the construction in \Theorem{modified}
to prove \Theorem{manymatroids}, which is restated here.

\repeatclaim{\Theorem{manymatroids}}{\manymatroids}

To prove this theorem, we must construct a family of sets $\cA = \set{A_1,\ldots,A_k}$
where each $\card{A_i} = n^{1/3}$,
and for every $\cB \subseteq \cA$ we must construct a matroid $\mat_\cB$
with the desired properties.
It will be convenient to let $d=n^{1/3}$ denote the size of the $A_i$'s,
to let the index set of $\cA$ be denoted by $U := [k]$,
and to let the index set for $\cB$ be denoted by $U_\cB := \setst{ i \in U }{ A_i \in \cB }$.
Each matroid $\mat_\cB$ is constructed by applying \Theorem{modified}
with the set family $\cB$ instead of $\cA$, so its independent sets are
\[
    \cI_\cB
    ~:=~ \Big\{~ I ~:~
        \card{I} \leq d
        ~\And~
        \card{I \intersect A(J)} \leq g_\cB(J)
            ~\: \forall J \subseteq U_\cB,\, \card{J} < \tau
        ~\Big\}.
\]
where the function $g_{\indices} : 2^{U_\indices} \rightarrow \reals$ is defined as in
\Equation{fdef}, taking all $b_i$'s to be equal to a common value $b$:
\begin{align*}
g_{\indices}(\indicest)
    ~:=~ \sum_{j \in J} b \:-\: \Big( \sum_{j \in J} \card{A_j} - \card{A(J)} \Big)
    ~=~ (b-d)\card{J} + \card{A(\indicest)}
            \qquad\forall\indicest \subseteq U_\cB.
\end{align*}

Several steps remain. We must choose the set family $\cA$, then choose parameters
carefully such that, for every $\cB \subseteq \cA$, we have
\begin{itemize}
\item \Pone: $\mat_\cB$ is indeed a matroid,
\item \Ptwo: $\rankf_{\mat_\cB}(A_i) = 8 \log k$ for all $A_i \in \cB$, and
\item \Pthree: $\rankf_{\mat_\cB}(A_i) = \card{A}$ for all $A_i \in \cA \setminus \cB$.
\end{itemize}

Let us start with \Ptwo. Suppose $A_i \in \cB$.
The definition of $\cI_\cB$ includes the constraint $\card{I \intersect A_i} \leq g_\cB(\set{i})$,
which implies that $\rankf_{\mat_\cB}(A_i) \leq g_\cB(\set{i}) = b$.
This suggests that choosing $b := 8 \log k$ may be a good choice to satisfy \Ptwo.

On the other hand, if $A_i \not \in \cB$ then \Pthree\ requires that $A_i$ is independent in $\mat_\cB$.
To achieve this, we need the constraints $\card{I \intersect A(J)} \leq g_\cB(J)$
to be as loose as possible, i.e., $g_\cB(J)$ should be as large as possible.
Notice that $g_\cB(J)$ has two terms, $\sum_{j \in J} b$, which grows as a function of $J$,
and $- \big( \sum_{j \in J} \card{A_j} - \card{A(J)} \big)$, which is non-positive.
So we desire that $\card{A(J)}$ should be as close as possible to $\sum_{j \in J} \card{A_j}$,
for all $J$ with $\card{J} < \tau$.
Set systems with this property are equivalent to expander graphs.

\begin{definition}
\DefinitionName{nice}
Let $G=(U \cup V, E)$ be a bipartite graph.
For $J \subseteq U$, define
$$
\neigh(J) \defeq \setst{ v }{ \exists u \in J \text{ such that } \set{u,v} \in E }.
$$
The graph $G$ is called a $(d,L,\epsilon)$-\newterm{expander} if
$$ \setlength{\arraycolsep}{1pt}
\begin{array}{rll}
|\neigh(\set{u})| &~=~ d \qquad&\forall u \in U \\
|\neigh(J)| &~\geq~ (1-\epsilon) \cdot d \cdot |J| \qquad&\forall J \subseteq U,\, |J| \leq L.
\end{array}
$$
Additionally, $G$ is called a lossless expander if $\epsilon < 1/2$.
\end{definition}

Given such a graph $G$, we construct the set family
$\cA = \set{A_1,\ldots,A_k} \subseteq 2^\ground$
by identifying $U = [k]$, $V = \ground$,
and for each vertex $i \in U$ defining $A_i := \neigh(\set{i})$.
The resulting sets satisfy:
\begin{equation}
\EquationName{nice11}
\setlength{\arraycolsep}{1pt}
\begin{array}{rll}
\card{A_i} &~=~ d
    &\qquad\forall i \in U \\
\card{A(J)} &~\geq~ (1-\epsilon) \cdot d \cdot |J|
    &\qquad\forall J \subseteq U,\, |J| \leq L \\
\implies\qquad
\sum_{j \in J} \card{A_j} - \card{A(J)} &~\leq~ \epsilon \cdot d \cdot |J|
    &\qquad\forall J \subseteq U,\, |J| \leq L.
\end{array}
\end{equation}
This last inequality will allow us to show that $g_\cB(J)$ is sufficiently large.

To make things concrete, let us now state the expander construction
that we will use.
Lossless expanders are well-studied~\cite{guv,HLW},
and several probabilistic constructions are known, both in folklore
and in the literature \cite[Lemma 3.10]{BMRV},
\cite[\S 1.2]{HLW}, \cite[Theorem 26]{SS}, \cite[Theorem 4.4]{Vadhan}.
The following construction of Buhrman et al.\ \cite[Lemma 3.10]{BMRV} has parameters that
match our requirements.

\begin{theorem}
\TheoremName{expanders}
    Suppose $k \geq 8$, $n \geq 25 L \log(k)/\epsilon^2$, and $d \geq \log(k)/ 2\epsilon$.
    Then there exists a graph $G=(U \cup V, E)$
    with $\card{U}=k$ and $\card{V}=n$ that is a $(d,L,\epsilon)$-lossless expander.
\end{theorem}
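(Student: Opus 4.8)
The plan is to prove this by the probabilistic method, following the standard argument for lossless expanders. I would take $G$ to be a random bipartite graph on $U \cup V$ with $\card{U}=k$ and $\card{V}=n$, formed by letting each vertex $u \in U$ choose $d$ neighbours in $V$ as $d$ independent uniform samples from $V$ (allowing repeats). After deleting repeated edges and then restoring each left-degree to exactly $d$ by adding arbitrary fresh edges, neither operation decreases $\card{\neigh(J)}$ for any $J$ and the first expander condition $\card{\neigh(\set{u})}=d$ holds; so a lower bound on $\card{\neigh(J)}$ proved in the i.i.d.\ model transfers, and all the work is in the second condition.

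Next I would fix a subset $J \subseteq U$ with $\card{J} = \ell \leq L$ and bound the probability that $\card{\neigh(J)} < (1-\epsilon) d \ell$. The cleanest route is to reveal the $d\ell$ edge-endpoints incident to $J$ one at a time and call an endpoint a \emph{collision} if it coincides with a previously revealed one; then $\card{\neigh(J)}$ equals $d\ell$ minus the number of collisions, and, conditioned on any history, the next endpoint is a collision with probability at most $d\ell/n$. Hence
\[
\prob{ \card{\neigh(J)} < (1-\epsilon)d\ell }
 ~\leq~ \binom{d\ell}{\epsilon d\ell} \left( \frac{d\ell}{n} \right)^{\epsilon d\ell}
 ~\leq~ \left( \frac{e \, d\ell}{\epsilon \, n} \right)^{\epsilon d\ell}.
\]
(An equivalent approach bounds, for each candidate set $W \subseteq V$ with $\card{W} = (1-\epsilon)d\ell$, the probability that all $d\ell$ endpoints fall inside $W$, and union-bounds over $W$.)

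Then I would union-bound over all $J$: there are at most $(ek/\ell)^\ell$ subsets of size $\ell$, so the total failure probability is at most $\sum_{\ell=1}^{L} (ek/\ell)^\ell \left( e d\ell /(\epsilon n) \right)^{\epsilon d\ell} = \sum_{\ell=1}^{L} \big[\, (ek/\ell)^{1/(\epsilon d)} \cdot e d\ell/(\epsilon n) \,\big]^{\epsilon d\ell}$. The three hypotheses are exactly what is needed here: $d \geq \log(k)/2\epsilon$ together with $k \geq 8$ (so $\log k$ is bounded away from $0$) makes $(ek/\ell)^{1/(\epsilon d)}$ at most an absolute constant, and $n \geq 25 L \log(k)/\epsilon^2$ makes $e d\ell/(\epsilon n) \leq e dL/(\epsilon n)$ small enough that the bracketed base drops below $\tfrac12$; since $\epsilon d \geq 1$, the sum is then dominated by a convergent geometric series and is strictly less than $1$, so a graph with the required expansion exists.

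The main obstacle is this final quantitative balancing: the union-bound factor $\binom{k}{\ell}$ grows like $k^\ell$ while the per-set failure probability decays only as a power with exponent $\epsilon d\ell$, so one needs $\epsilon d$ to beat $\log k$ (this is where $d \geq \log(k)/2\epsilon$ enters) and simultaneously $n$ to beat $L\log(k)/\epsilon^2$ so that $d\ell/n$ is small for \emph{all} $\ell \leq L$ at once --- which is why one sums a geometric series over $\ell$ rather than analysing a single worst-case $\ell$. Checking that the specific constants (the $25$, the $1/2$ in the degree bound) actually make every inequality close is routine but must be done carefully, e.g.\ using $\log k \geq \log 8 > 2$ to absorb the lower-order terms.
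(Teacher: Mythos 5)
Your argument is essentially the one the paper uses --- but for a companion statement, not for \Theorem{expanders} itself. The paper attributes \Theorem{expanders} to Buhrman, Miltersen, Radhakrishnan, and Venkatesh (Lemma~3.10 of their paper) and does not reprove it; what it does prove, in \Appendix{expander}, is \Theorem{ourexpanders}, whose proof is exactly the random left-$d$-regular multigraph, reveal-endpoints-one-at-a-time, count-collisions, union-bound-over-$J$, then repair-parallel-edges argument you propose. Your per-set bound $\binom{d\ell}{\epsilon d\ell}(d\ell/n)^{\epsilon d\ell}$ and the rewriting of the union bound as $\sum_\ell \big[(ek/\ell)^{1/(\epsilon d)} \cdot e d\ell/(\epsilon n)\big]^{\epsilon d\ell}$ are correct, as is the observation that deduplicating and then re-adding arbitrary fresh edges only helps.

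There is one place where the argument does not close from the hypotheses as literally stated. You claim $n \geq 25 L \log(k)/\epsilon^2$ makes $e d L/(\epsilon n)$ small; but that ratio is at most $e\epsilon d/(25 \log k)$, which is bounded below $\tfrac12$ only when $d = O(\log(k)/\epsilon)$. The theorem gives only a \emph{lower} bound $d \geq \log(k)/(2\epsilon)$, so nothing rules out $d$ so large that the bound fails, and indeed for $d > n$ no $(d,L,\epsilon)$-lossless expander can exist at all; your numerical checks implicitly take $d$ near $\log(k)/(2\epsilon)$. This is arguably a looseness in the transcription of the theorem rather than a flaw in your reasoning, but it is worth flagging: the paper's own \Theorem{ourexpanders} sidesteps the issue by writing the $n$-hypothesis as $n \geq 16 L d/\epsilon$, with $d$ appearing explicitly, so that $(e/\epsilon)(d\ell/n) \leq e/16 < 1/4$ holds uniformly in $d$ and the subsequent union bound then closes from $d \geq \log(k)/\epsilon$ alone. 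To make your write-up airtight for \Theorem{expanders} as stated, either take $d$ equal to the minimal permitted value, or add an explicit upper bound on $d$ (for example $d \leq C\log(k)/\epsilon$) as a hypothesis.
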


The next theorem states another (folklore) probabilistic construction
that also matches our requirements.
We include a proof in \Appendix{expander} for the sake of completeness,
and because we will require a slight variant in \Section{implications}.

\newcommand{\expthm}
{
    Let $G=(U \union V, E)$ be a random multigraph where
    $\card{U}=k$, $\card{V}=n$, and every $u \in U$
    has exactly $d$ incident edges, each of which has an endpoint chosen
    uniformly and independently from all nodes in $V$.
    Suppose that $k \geq 4$, $d \geq \log(k)/\epsilon$ and $n \geq 16 L d/\epsilon$.
    Then, with probability at least $1-2/k$,
    $$
    |\neigh(J)| ~\geq~ (1-\epsilon) \cdot d \cdot |J| \qquad\forall J \subseteq U,\, |J| \leq L.
    $$
    If it is desired that $|\neigh(\set{u})| = d$ for all $u \in U$
    then this can be achieved by replacing any parallel edges incident on $u$
    by new edges with distinct endpoints.
    This cannot decrease $|\neigh(J)|$ for any $J$.
}

\begin{theorem}
\TheoremName{ourexpanders}
\expthm
\end{theorem}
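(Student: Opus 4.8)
The plan is to use the probabilistic method directly on the random multigraph $G$ described in the statement. First I would dispose of parallel edges: each of the $d$ edges at a vertex $u \in U$ lands on a uniform vertex of $V$, so the probability that two specified edges at $u$ collide is $1/n$, and a union bound over the at most $\binom{d}{2}$ pairs at $u$ and the $k$ vertices of $U$ gives $\prob{\text{$G$ has a parallel edge}} \leq k \binom{d}{2}/n \leq k d^2/(2n)$. With the hypothesis $n \geq 16 Ld/\epsilon$ and $d \geq \log(k)/\epsilon$ this quantity is at most $1/k$ (indeed $kd^2/(2n) \leq kd^2 \epsilon/(32 Ld) = kd\epsilon/(32L) $, and since $L \geq 1$ and $d\epsilon \geq \log k$ one checks this is $o(1)$; I would choose the constants so the bound is $\leq 1/k$). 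Conditioning on the absence of parallel edges only changes probabilities by a $(1-1/k)^{-1}$ factor, so it suffices to prove the expansion property holds except with probability $\leq 1/k$ in the unconditioned model and then combine.

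The core of the argument is the expansion bound. Fix $J \subseteq U$ with $|J| = j \leq L$. The set $\neigh(J)$ is the union of $j$ independent ``balls-into-bins'' experiments, each throwing $d$ balls into $V$; altogether we throw $m := jd$ balls into $n$ bins. We want $|\neigh(J)| \geq (1-\epsilon) m$, which fails iff at least $\epsilon m$ of the $m$ throws are ``repeats'' (land in a bin already hit). I would bound the failure probability for a fixed $J$ by the standard first-moment / encoding argument: if $|\neigh(J)| < (1-\epsilon)m$ then there is a set $T \subseteq V$ with $|T| = \lceil (1-\epsilon) m \rceil$ containing all neighbors, and the probability that all $m$ edges from $J$ land in a fixed such $T$ is $(|T|/n)^m \leq ((1-\epsilon)m/n)^m$ — wait, more carefully: I pick the $\epsilon m$ ``collision'' edges (choose their positions: $\binom{m}{\epsilon m}$ ways) and demand each lands in the $\leq (1-\epsilon)m$ previously-occupied bins, giving probability at most $((1-\epsilon)m/n)^{\epsilon m}$. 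Hence
\[
\prob{|\neigh(J)| < (1-\epsilon)m}
~\leq~ \binom{m}{\epsilon m}\left(\frac{(1-\epsilon)m}{n}\right)^{\epsilon m}
~\leq~ \left(\frac{e}{\epsilon}\right)^{\epsilon m}\left(\frac{m}{n}\right)^{\epsilon m}
~=~ \left(\frac{e m}{\epsilon n}\right)^{\epsilon m}.
\]
Now union-bound over all $J$ of size $j$: there are $\binom{k}{j} \leq (ek/j)^j \leq k^j$ of them, and $m = jd$, so the contribution of size-$j$ sets to the failure probability is at most $k^j \cdot (ejd/(\epsilon n))^{\epsilon jd}$. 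Using $n \geq 16 L d/\epsilon \geq 16 jd/\epsilon$ we get $ejd/(\epsilon n) \leq e/16 < 1/2$, so each size-$j$ term is at most $k^j \cdot 2^{-\epsilon jd} = (k \cdot 2^{-\epsilon d})^j$. Since $d \geq \log(k)/\epsilon$ we have $2^{-\epsilon d} \leq 1/k$ — here I should be careful whether ``$\log$'' is base $2$ or natural; I would fix it to base $2$ so that $2^{\epsilon d} \geq 2^{\log k} = k$, giving $k \cdot 2^{-\epsilon d} \leq 1$. That is too weak (need the sum over $j$ to be small), so I would instead push $d \geq \log(k)/\epsilon$ a bit: actually $k \cdot 2^{-\epsilon d} \leq 1$ makes each term $\leq 1$, not summable. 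I expect the right move is to use the factor of $e/16$ more aggressively: each size-$j$ term is $k^j (e/16)^{\epsilon jd} \leq k^j (e/16)^{jd\epsilon}$ and with $d\epsilon \geq \log_2 k$ and $(e/16)^{\log_2 k} = k^{\log_2(e/16)} = k^{-c}$ for $c = \log_2(16/e) > 2$, each term is $\leq k^{j} \cdot k^{-cj} = k^{-(c-1)j} \leq k^{-j}$. Summing over $j = 1, \ldots, L$ gives $\sum_{j\geq 1} k^{-j} \leq 2/k - $ hmm, $\leq \frac{1/k}{1-1/k} \leq 2/k$ for $k \geq 2$. Combined with the $\leq 1/k$ bound for parallel edges this would give total failure probability $\leq 2/k$ if I allocate the budget as $1/k + 1/k$, but the theorem claims the \emph{good} event has probability $\geq 1 - 2/k$, i.e. total bad probability $\leq 2/k$, so I need parallel edges $\leq$ (something) and expansion failure $\leq$ (something) summing to $2/k$; the constants above give roughly this and can be tuned.

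The main obstacle is bookkeeping the constants so that the two union bounds (over pairs of edges for parallel-edge-freeness, and over small subsets $J$ for expansion) both come out to a clean $O(1/k)$ under exactly the stated hypotheses $k \geq 4$, $d \geq \log(k)/\epsilon$, $n \geq 16Ld/\epsilon$ — in particular getting the base of the logarithm and the slack in the $e/16 < 1/2$ step to cooperate. Everything else is the routine ``balls in bins'' collision estimate plus $\binom{a}{b} \leq (ea/b)^b$. I would carry out: (i) the parallel-edge union bound; (ii) the single-$J$ collision bound via the encoding/choice argument above; (iii) the union bound over $J$ with $|J| = j$ and summation over $j \leq L$; (iv) assembling (i) and (iii) into the claimed $1 - 2/k$, adjusting the absolute constant $16$ if a slightly larger one is needed. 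The full computation belongs in Appendix~\ref{app:expander} as indicated.
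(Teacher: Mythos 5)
Your expansion argument matches the paper's exactly: fix $J$ with $|J|=j$, view the $jd$ independent uniform draws as a sequence, call a draw a ``repeat'' when it hits an already-occupied bin, bound the number of repeats via $\binom{jd}{\epsilon jd}(jd/n)^{\epsilon jd}\leq(e/\epsilon)^{\epsilon jd}(jd/n)^{\epsilon jd}\leq(1/4)^{\epsilon jd}$, union-bound over $\binom{k}{j}\leq k^j$ choices of $J$ using $d\geq\log_2(k)/\epsilon$ to get a contribution $\leq k^{-j}$ from size $j$, and sum the geometric series to $\leq 2/k$. Your worries about base-$2$ vs.\ natural log and about the slack in $e/16$ resolve exactly the way you guessed; the paper's constants ($16$, $\log_2$) are tight enough.

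The genuine gap is in your treatment of parallel edges. You attempt a second union bound $\prob{\text{parallel edge}}\leq k\binom{d}{2}/n\leq kd^2/(2n)\leq kd\epsilon/(32L)$, and then assert this is $o(1)$ ``since $L\geq 1$ and $d\epsilon\geq\log k$.'' That inequality goes the wrong way: $d\epsilon\geq\log k$ gives $kd\epsilon/(32L)\geq k\log(k)/(32L)$, which is \emph{large} whenever $L$ is small compared to $k\log k$ --- and nothing in the hypotheses $k\geq 4$, $d\geq\log(k)/\epsilon$, $n\geq 16Ld/\epsilon$ prevents that. In fact in the paper's own application of this theorem (Section~\ref{sec:lb}) one has $L=n^{1/3}/(2\log k)$, $d=n^{1/3}$, and $k$ superpolynomial, so parallel edges occur with probability $1-o(1)$. ``Adjusting the constants'' cannot save this; the event genuinely has high probability. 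The paper sidesteps the issue entirely with a deterministic repair step: after the random draws, at any $u$ with $|\neigh(\{u\})|<d$ one replaces parallel edges by fresh edges with distinct endpoints; this only \emph{increases} $|\neigh(J)|$ for every $J$, so the expansion estimate is preserved while \eqref{eq:exp1} now holds surely. (Formally one is then proving existence/the claim about the repaired graph, which suffices for all uses of the theorem.) You would need this trick, or something like it, in place of your probabilistic parallel-edge bound.

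A smaller point: your sentence about ``conditioning on the absence of parallel edges only changes probabilities by a $(1-1/k)^{-1}$ factor'' is not the right way to combine the two bad events (they are not independent, and conditioning changes the neighbor distribution). A plain union bound over the two bad events, or the repair argument above, is what you want.
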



%


We require an expander with the following parameters.
Recall that $n$ is arbitrary and $k = 2^{o(n^{1/3})}$.
$$
    d     ~:=~ n^{1/3}\qquad\quad
    L       ~:=~ \frac{ n^{1/3} }{ 2 \log k }\qquad\quad
    \epsilon~:=~ \frac{2 \log k}{n^{1/3}}
$$
These satisfy the hypotheses of \Theorem{expanders} (and \Theorem{ourexpanders}),
so a $(d,L,\epsilon)$-expander exists,
and a set family $\cA$ satisfying \Equation{nice11} exists.
Next we use these properties of $\cA$ to show
that \Pone, \Ptwo\ and \Pthree\ hold.

The fact that \Pone\ holds follows from \Theorem{modified} and the following claim.
Recall that $b = 8 \log k$.

\begin{claim}
    \ClaimName{mtlarge}
    Set $\tau = n^{1/3} / 4 \log k $.
    Then $g_\indices$ is $(d,\tau)$-large,
    as defined in \eqref{eq:large}.
\end{claim}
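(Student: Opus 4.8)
The plan is to derive both parts of the $(d,\tau)$-largeness condition \eqref{eq:large} from a single lower bound, $g_\cB(J) \geq (b - \epsilon d)\card{J}$, which the expansion property of $\cA$ supplies for every $J$ with $\card{J} \le L$. First I would use the form of $g_\cB$ already recorded above, namely $g_\cB(J) = (b-d)\card{J} + \card{A(J)}$. Since the index set $U_\cB$ is a subset of $U$, the expansion guarantee \Equation{nice11} applies: for any $J$ with $\card{J} \le L$ we have $\card{A(J)} \ge (1-\epsilon)\,d\,\card{J}$, hence
$$
g_\cB(J) ~\ge~ (b-d)\card{J} + (1-\epsilon)\,d\,\card{J} ~=~ \bigl(b - \epsilon d\bigr)\card{J}.
$$
Now I plug in the chosen parameters: $\epsilon d = \tfrac{2\log k}{n^{1/3}}\cdot n^{1/3} = 2\log k$ and $b = 8\log k$, so $b - \epsilon d = 6\log k > 0$, giving $g_\cB(J) \ge 6\log k \cdot \card{J}$ for all $J$ with $\card{J}\le L$.

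Next I would check that the entire range of set sizes appearing in the definition of $(d,\tau)$-largeness lies inside the regime $\card{J}\le L$ where this bound is valid. With $\tau = n^{1/3}/(4\log k)$ and $L = n^{1/3}/(2\log k)$ we have $L = 2\tau$, so every $J$ with $\card{J} \le 2\tau - 2$ satisfies $\card{J} < L$, and the bound $g_\cB(J)\ge 6\log k\cdot\card{J}$ applies throughout. The first line of \eqref{eq:large} is then immediate, since $\card{J} < \tau \le L$ forces $g_\cB(J) \ge 6\log k\cdot\card{J} \ge 0$. For the second line, if $\tau \le \card{J} \le 2\tau-2$ then
$$
g_\cB(J) ~\ge~ 6\log k\cdot\card{J} ~\ge~ 6\log k\cdot\tau ~=~ 6\log k\cdot\frac{n^{1/3}}{4\log k} ~=~ \tfrac{3}{2}\,n^{1/3} ~\ge~ n^{1/3} ~=~ d,
$$
which is exactly what is required.

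The argument collapses to a short parameter computation once the lossless-expander bound is in hand, so there is no genuine obstacle beyond bookkeeping. The one point that deserves a moment's attention is verifying $2\tau - 2 \le L$: the expander guarantee in Definition~\ref{def:nice} only controls $\card{A(J)}$ for $\card{J}\le L$, so this inequality is what makes the bound usable over the whole interval $\tau \le \card{J} \le 2\tau - 2$. The parameters were evidently set up precisely so that $L = 2\tau$, making this hold with room to spare.
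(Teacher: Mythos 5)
Your proof is correct and follows essentially the same route as the paper's: both start from $g_\cB(J) = (b-d)|J| + |A(J)|$, apply the expansion bound \Equation{nice11} (valid since $2\tau-2 \le L$), and obtain $g_\cB(J) \ge (b-\epsilon d)|J| = \tfrac{3b}{4}|J| = 6\log k\cdot|J|$, from which both parts of $(d,\tau)$-largeness follow by the same parameter arithmetic. The only cosmetic difference is that the paper keeps the bound in the form $\tfrac{3b}{4}|J|$ while you evaluate it numerically as $6\log k\cdot|J|$.
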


\begin{proof}
Consider any $J \subseteq U_\cB$ with $\card{J} \leq 2\tau-2$.
Then
\begin{align}
\nonumber
g_{\indices}(\indicest)
    &~=~ (b - d)\card{\indicest} + \card{A(\indicest)}
        \\\nonumber
    &~\geq~ b \card{\indicest} - \epsilon d \card{\indicest}
        \qquad\text{(by \Equation{nice11}, since $\card{\indicest} \leq 2 \tau-2 \leq L$)}\\
        \EquationName{fbb}
    &~=~ \frac{3b}{4} \card{\indicest}
        \qquad\text{(since $\epsilon = b/4d$)}.
\end{align}
This shows $g_\indices(\indicest) \geq 0$.
If additionally $\card{\indicest} \geq \tau$ then
$g_\indices(\indicest) \geq (3/4) b \tau > d$.
\end{proof}

The following claim implies that \Ptwo\ holds.

\begin{claim}
    \ClaimName{smallrank}
    For all $\cB \subseteq \cA$ and all $A_i \in \cB$ we have
    $\rankf_{\mat_\cB}(A_i) = b$.
\end{claim}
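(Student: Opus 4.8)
The plan is to prove the two inequalities $\rankf_{\mat_\cB}(A_i)\le b$ and $\rankf_{\mat_\cB}(A_i)\ge b$ separately, where recall $b=8\log k$. The upper bound was already observed in the discussion preceding \Ptwo: the family $\cI_\cB$ defining $\mat_\cB$ contains the constraint $\card{I\cap A_i}\le g_\cB(\set{i})=(b-d)+\card{A_i}=b$, so every independent set of $\mat_\cB$ meets $A_i$ in at most $b$ elements, hence $\rankf_{\mat_\cB}(A_i)\le b$. The real work is the lower bound.

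For the lower bound I would fix an arbitrary set $I\subseteq A_i$ with $\card{I}=b$ --- possible since $\card{A_i}=d=n^{1/3}$ and $b=8\log k$, and $k=2^{o(n^{1/3})}$ forces $b<d$ for $n$ large enough --- and verify that $I\in\cI_\cB$, which gives $\rankf_{\mat_\cB}(A_i)\ge\card{I}=b$ immediately. Since $\card{I}=b\le d$, the cardinality constraint holds, and every constraint indexed by a $J$ with $\card{J}\ge\tau$ is automatic because $\card{I\cap A(J)}\le\card{I}\le d$; so it suffices to check $\card{I\cap A(J)}\le g_\cB(J)$ for all $J\subseteq U_\cB$ with $\card{J}<\tau$. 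I would split on whether $i\in J$. If $i\in J$ and $\card{J}=1$, then $J=\set{i}$ and $\card{I\cap A_i}=b=g_\cB(\set{i})$. If $i\in J$ and $\card{J}\ge 2$, then $\card{I\cap A(J)}\le\card{I}=b$, while the estimate $g_\cB(J)\ge\tfrac{3}{4} b\,\card{J}$ from the proof of \Claim{mtlarge} (see \Equation{fbb}) gives $g_\cB(J)\ge\tfrac{3}{2} b>b$. If $i\notin J$, then $I\cap A(J)\subseteq A_i\cap A(J)$, and inclusion--exclusion together with the expansion property \Equation{nice11} bounds $\card{A_i\cap A(J)}=\card{A_i}+\card{A(J)}-\card{A(J\cup\set{i})}\le d+d\card{J}-(1-\epsilon)d(\card{J}+1)=\epsilon d(\card{J}+1)$; here the lower bound on $\card{A(J\cup\set{i})}$ is legitimate because $\card{J\cup\set{i}}\le\tau\le L$. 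Since $\epsilon d=2\log k=b/4$, this reads $\card{I\cap A(J)}\le\tfrac{1}{4} b(\card{J}+1)\le\tfrac{1}{2} b\,\card{J}\le\tfrac{3}{4} b\,\card{J}\le g_\cB(J)$, using $\card{J}\ge 1$ and again \Equation{fbb}.

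Combining the two bounds yields $\rankf_{\mat_\cB}(A_i)=b$. I do not expect a serious obstacle, but two points need care: first, that $\card{J\cup\set{i}}\le L$ whenever $\card{J}<\tau$, which holds because $\tau=n^{1/3}/(4\log k)$ and $L=n^{1/3}/(2\log k)=2\tau$; and second, the exact arithmetic $\epsilon d=b/4$, which is what makes the ``error term'' caused by the imperfect (lossless rather than perfect) expansion of the $A_j$'s exactly small enough to be swallowed by the slack in the bound $g_\cB(J)\ge\tfrac{3}{4} b\,\card{J}$. The case $i\notin J$ is the crux of the argument, since it is the only place where the near-disjointness of the sets $A_j$ is actually invoked.
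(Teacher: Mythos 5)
Your proof is correct, but it does noticeably more work than the paper's argument, and the extra work obscures the real point. The paper's observation is this: since $g_\cB(\{j\}) = b$ exactly for every singleton and $g_\cB(J) \geq \tfrac{3}{4}b\,\card{J} > b$ for every $J$ with $2 \leq \card{J} < \tau$ (all from \Equation{fbb}), \emph{every} constraint in the definition of $\cI_\cB$ other than the vacuous $J=\emptyset$ one has right-hand side at least $b$. Consequently \emph{any} $b$-element set is independent --- in particular any $b$-subset of $A_i$ --- and $\rankf_{\mat_\cB}(A_i) \geq b$ follows with no further inspection of where $I$ lives relative to $A_i$ or $A(J)$.

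Your case split on $i\in J$ versus $i\notin J$, and in particular the invocation of the expansion bound $\card{A_i\cap A(J)}\leq \epsilon d(\card{J}+1)$ in the $i\notin J$ branch, is therefore unnecessary here. What you have effectively done in case 3 is re-derive, for a $b$-subset of $A_i$, the computation that the paper only needs for \Claim{largerank} (where the whole set $A_i$ of size $d$ must be shown independent, and then the smallness of $\card{A_i\cap A(J)}$ really is essential). Your remark that the $i\notin J$ case is ``the crux'' and ``the only place where the near-disjointness of the $A_j$'s is actually invoked'' is thus misleading for this particular claim: for \Claim{smallrank} the near-disjointness already did its job inside \Equation{fbb}, and no further use of it is needed. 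Everything you wrote is valid (your arithmetic, the check $\tau \leq L$, the need for $b < d$ are all fine), but the paper gets the lower bound in one line once it notices that the constraints are uniformly loose enough to admit any $b$-element set.
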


\begin{proof}
The definition of $\cI_\cB$ includes the constraint
$\card{I \intersect A_i} \leq g_\cB(\set{i}) = b$.
This immediately implies $\rankf_{\mat_\cB}(A_i) \leq b$.
To prove that equality holds, it suffices to prove that
$g_\cB(J) \geq b$ whenever $\card{J} \geq 1$, since
this implies that every constraint in the definition of $\cI_\cB$
has right-hand side at least $b$
(except for the constraint corresponding to $J=\emptyset$, which is vacuous).
For $\card{J}=1$ this is immediate, and for $\card{J} \geq 2$ we
use \eqref{eq:fbb} to obtain $ g_\cB(J) = 3b\card{J}/4 > b $.
\end{proof}

Finally, the following claim implies that \Pthree\ holds.

\begin{claim}
    \ClaimName{largerank}
    For all $\cB \subseteq \cA$ and all $A_i \in \cA \setminus \cB$ we have
    $\rankf_{\mat_\cB}(A_i) = d$.
\end{claim}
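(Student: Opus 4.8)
\textbf{Proof proposal for \Claim{largerank}.}

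The plan is to show that $A_i$ is independent in $\mat_\cB$ when $A_i \notin \cB$, which combined with $\card{A_i} = d$ and the cardinality constraint $\card{I} \le d$ in the definition of $\cI_\cB$ gives $\rankf_{\mat_\cB}(A_i) = d$. So it suffices to verify that $A_i$ itself satisfies every defining constraint of $\cI_\cB$, i.e., that $\card{A_i \intersect A(J)} \le g_\cB(J)$ for every $J \subseteq U_\cB$ with $\card{J} < \tau$. The key point that makes this possible is precisely that $i \notin U_\cB$: the index $i$ never appears in any such $J$, so the overlap $\card{A_i \intersect A(J)}$ is controlled by the expansion property applied to $J \cup \set{i}$.

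Concretely, I would fix such a $J$ and bound the left-hand side using the expansion inequality from \Equation{nice11} applied to the set $J + i$ (which has size $\card{J}+1 \le \tau \le L$). We have
\[
\card{A_i \intersect A(J)} ~=~ \card{A_i} + \card{A(J)} - \card{A(J+i)}
~=~ d + \card{A(J)} - \card{A(J+i)}.
\]
Now $g_\cB(J) = (b-d)\card{J} + \card{A(J)}$, so the desired inequality $\card{A_i \intersect A(J)} \le g_\cB(J)$ is equivalent to
\[
d + \card{A(J)} - \card{A(J+i)} ~\le~ (b-d)\card{J} + \card{A(J)},
\]
i.e., $\card{A(J+i)} \ge d - (b-d)\card{J} + \card{A(J)} \cdot 0$ — more cleanly, it rearranges to $\card{A(J+i)} \ge d + (d-b)\card{J}$. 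Since $d - b \le d$ and more importantly $(d-b)\card{J} = d\card{J} - b\card{J}$, the target becomes $\card{A(J+i)} \ge d(\card{J}+1) - b\card{J}$. By \Equation{nice11}, $\card{A(J+i)} \ge (1-\epsilon)d(\card{J}+1) = d(\card{J}+1) - \epsilon d(\card{J}+1)$, so it is enough to check $\epsilon d (\card{J}+1) \le b\card{J}$. Using $\epsilon = b/(4d)$, the left side is $(b/4)(\card{J}+1)$, and since $\card{J} \ge 1$ we have $(b/4)(\card{J}+1) \le (b/4)(2\card{J}) = b\card{J}/2 \le b\card{J}$. The case $J = \emptyset$ gives the vacuous constraint and the case $\card{J} = 1$ can be handled directly (then $\card{A_i \intersect A(J)} \le \epsilon d \cdot 2$ by \Equation{nice11}, comfortably below $g_\cB(J) = b$ when $i$ is excluded — actually here one just needs $\card{A_i \cap A_j} \le \epsilon d \cdot 2 \le b$, which holds).

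The main obstacle is bookkeeping with the expansion bound: one must apply \Equation{nice11} to the \emph{augmented} index set $J + i$ rather than to $J$, and track carefully that $\card{J+i} \le \tau \le L$ so the expansion hypothesis is valid, and that the constant $\epsilon = b/(4d)$ is small enough to absorb the extra $d$ coming from $\card{A_i}$. Everything else is a routine rearrangement; no delicate matroid argument is needed here since \Pone\ (that $\mat_\cB$ is a matroid) and the cardinality bound $\card{I} \le d$ have already been established via \Theorem{modified} and \Claim{mtlarge}.
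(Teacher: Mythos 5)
Your proposal is correct and follows the same route as the paper: reduce to $A_i \in \cI_\cB$, apply the inclusion--exclusion identity $\card{A_i \cap A(J)} = \card{A_i} + \card{A(J)} - \card{A(J+i)}$, and invoke the expansion bound \eqref{eq:nice11} on the augmented index set $J+i$ with $\epsilon = b/4d$ and $\card{J}\geq 1$. The only cosmetic difference is that you cancel the $\card{A(J)}$ term exactly against $g_\cB(J) = (b-d)\card{J} + \card{A(J)}$, which spares you the appeal to \eqref{eq:fbb} that the paper makes in its final step.
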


\begin{proof}
Since $d = \card{A_i}$, the condition $\rankf_{\mat_\cB}(A_i) = d$ holds
iff $A_i \in \cI_\cB$.
So it suffices to prove that $A_i$ satisfies all constraints
in the definition of $\cI_\cB$.
%

The constraint $\card{A_i} \leq d$ is trivially satisfied.
So it remains to show that for every $J \subseteq U_\cB$ with $\card{J} < \tau$, we have
\begin{equation}
\EquationName{Aifeas}
\card{ A_i \intersect A(J) }  ~\leq~ g_\cB(J).
\end{equation}
This is trivial if $J = \emptyset$, so assume $\card{J} \geq 1$.
We have
\begin{align*}
 \card{A_i \intersect A(J)}
    &~=~ \card{A_i} + \card{A(J)} - \card{A(J+i)}  \\
    &~\leq~ d + d \card{J} - (1-\epsilon) d \card{J+i}
        \qquad\text{(by \Equation{nice11})}\\
    &~=~ \frac{b \, \card{J+i}}{4}
        \qquad\text{(since $\epsilon = b/4d$)}
        \\
    &~\leq~ \frac{b \, \card{J}}{2} \\
    &~\leq~ g_\cB(J) \qquad\text{(by \Equation{fbb})}.
\end{align*}
This proves \Equation{Aifeas}, so $A_i \in \cI_\cB$, as desired.
\end{proof}



\subsection{Concentration Properties of Submodular Functions}
\SectionName{subsec-concentr}

In this section we provide a strong concentration
bound for submodular functions.

\newcommand{\OURTALAGRAND}{
    Let $f : 2^\ground \rightarrow \bR_+$ be a non-negative, monotone, submodular, $1$-Lipschitz function.
    Let the random variable $X \subseteq \ground$ have a product distribution.
    For any $b, t \geq 0$,
    $$
    \prob{ f(X) \leq b - t \sqrt{b} } \cdot \prob{ f(X) \geq b } ~\leq~ \exp( - t^2 / 4 ).
    $$
}

\begin{theorem}
\TheoremName{OURTALAGRAND}
\OURTALAGRAND
\end{theorem}

To understand \Theorem{OURTALAGRAND}, it is instructive to compare it with known results.
For example, the Chernoff bound is precisely a concentration
bound for \emph{linear}, Lipschitz functions.
On the other hand, if $f$ is an arbitrary 1-Lipschitz function then McDiarmid's inequality
implies concentration, although of a much weaker form, with standard deviation roughly $\sqrt{n}$.
If $f$ is additionally known to be submodular, then we can apply
\Theorem{OURTALAGRAND} with $b$ equal to a median, which can be much smaller than $n$.
So \Theorem{OURTALAGRAND} can be viewed as saying that McDiarmid's inequality can be
significantly strengthened when the given function is known to be submodular.

Our proof of \Theorem{OURTALAGRAND} is based on the Talagrand inequality
\cite{Talagrand,AlonSpencer,MolloyReed,Janson}.
Independently, Chekuri et al.~\cite{CVZ10} proved a similar result using the FKG
inequality.
Concentration results of this flavor can also be proven using the framework of self-bounding
functions~\cite{BLM}, as observed in an earlier paper by Hajiaghayi et al.~\cite{Hajiaghayi} (for a specific class of submodular functions);
see also the survey by Vondr\'ak~\cite{V10}.

\Theorem{OURTALAGRAND} most naturally implies concentration around a median of $f(X)$.
As shown in the following corollary, this also implies concentration around the expected value.
This corollary, with better constants, also follows from the results of Chekuri et al.~\cite{CVZ10}
and Vondr\'ak~\cite{V10}

\newcommand{\rankconcentr}{
    Let $f : 2^\ground \rightarrow \bR_+$ be a non-negative, monotone, submodular, $1$-Lipschitz function.
    Let the random variable $X \subseteq \ground$ have a product distribution.
    For any $0 \leq \alpha \leq 1$,
    $$
    \prob{ |f(X) - \expect{f(X)}| > \alpha \expect{f(X)} }
      ~\leq~ 4 \exp\big( - \alpha^2 \expect{f(X)} / 422 \big).
    $$
}

\begin{corollary}
\CorollaryName{rank-concentr}
\rankconcentr
\end{corollary}

%

As an interesting application of \Corollary{rank-concentr}, let us consider the case where
$f$ is the rank function of a linear matroid. 
Formally, fix a matrix $A$ over any field.
Construct a random submatrix by selecting the $i\th$ column of $A$ with probability $p_i$,
where these selections are made independently.
Then \Corollary{rank-concentr} implies that the rank of the resulting submatrix is highly
concentrated around its expectation, in a way that does not depend on the number of rows of $A$.
%

The proofs of this section are technical applications of Talagrand's inequality and are provided
in~\Appendix{appendix-special}.
Later sections of the paper use \Theorem{OURTALAGRAND} and \Corollary{rank-concentr}
to prove various results.
In \Section{special} we use these theorems
to analyze our algorithm for PMAC-learning submodular functions under product distributions.
In \Section{approxconv} we use these theorems to give an approximate characterization
of matroid rank functions.

\section{Learning Submodular Functions}
\SectionName{learning}

\subsection{A New Learning Model: The PMAC Model}
In this section we introduce a new learning model for learning real-valued functions in the passive, supervised learning paradigm, which we call the PMAC model.  In this model, a learning algorithm is given a collection $\trainS=\set{x_1,x_2,\ldots, x_\ell}$
of polynomially many sets drawn i.i.d.~from
some fixed, but unknown, distribution $D$ over an instance space $\X$.  There is also a fixed but unknown function $\targetf :\X \rightarrow \bR_+$,
and the algorithm is given the value of $\targetf$ at each set in $\trainS$.
The algorithm may perform an arbitrary polynomial time computation on the  examples $\set{ (x_i,\targetf(x_i)) }_{1 \leq i \leq \ell}$,
then must output another function $f : \X \rightarrow \reals_+$.
This function is called a ``hypothesis function''.
The goal is that, with high probability, $f$ is a good approximation of $\targetf$
for most points in $D$.
Formally:

\begin{definition}
Let $\cF$ be a family of non-negative, real-valued functions with domain $\X$.
We say that an algorithm $\cA$ \,\newterm{PMAC-learns} $\cF$ with
approximation factor $\alpha$ if, for any distribution $D$ over $\X$,
for any target function $\targetf \in \cF$,
and for $\epsilon \geq 0 $ and $\delta \geq 0$ sufficiently small:
\begin{itemize}
\item The input to $\cA$ is a sequence of pairs
$\set{ (x_i,\targetf(x_i)) }_{1 \leq i \leq \ell}$
where each $x_i$ is i.i.d.~from $D$.

\item The number of inputs $\ell$ provided to $\cA$ and the running time of $\cA$
are both at most $\poly(n, 1/\epsilon, 1/\delta)$.

\item The output of $\cA$ is a function $f : \X \rightarrow \bR$
that can be evaluated in time $\poly(n, 1/\epsilon, 1/\delta)$ and that satisfies
\[
\operatorname{Pr}_{x_1,\ldots,x_\ell \sim D}
\Big[~~
\probover{x \sim D}{f(x) \leq \targetf(x) \leq \alpha \cdot f(x)} \geq 1-\epsilon
~~\Big]
~\geq~ 1-\delta.
\]
\end{itemize}
\end{definition}

The name PMAC stands for ``Probably Mostly Approximately Correct''.
It is an extension of the PAC model to learning non-negative, real-valued functions,
allowing multiplicative error $\alpha$.
The PAC model for learning boolean functions is precisely the special case when $\alpha=1$.


%

In this paper we focus on the PMAC-learnability of submodular functions. In this case
$\X=\set{0,1}^n$ and $\cF$ is the family of all non-negative, monotone, submodular functions.
We note that it is quite easy to PAC-learn the class of \emph{boolean} submodular functions.
Details are given in \Appendix{boolean}.
The rest of this section considers the much more challenging task of
PMAC-learning the general class of real-valued, submodular functions.


\subsection{Product Distributions}
\SectionName{special}
A first natural and common step in studying learning problems is  
 to study learnability of functions
when the examples are distributed according to the uniform distribution or a product distribution
\cite{KKMS,KDS,LMN}.
In this section we consider learnability of submodular functions when the underlying distribution is a product distribution.
Building on our concentration results in \Section{subsec-concentr} we
provide an algorithm that PMAC learns the class of Lipschitz submodular   functions  with a constant approximation factor.

We will let $L < M < H$ and $K$ be universal constants,
whose values we can take to be $L = 10550$, $M = 11250$, $H=12500$, and $K=26000$.
We begin with the following technical lemma which states some useful concentration bounds. 

\newcommand{\usefullemma}
{Let $f : 2^{[n]} \rightarrow \bR$ be a non-negative, monotone, submodular, $1$-Lipschitz function.
Suppose that $S_1, \ldots, S_l$ are drawn from a product distribution $D$ over $2^{[n]}$.
Let $\mu$ the empirical average
$\mu = \sum_{i=1}^\ell \targetf(S_i) / \ell$, which is our estimate for $\expectover{S \sim D}{\targetf(S)}$.
Let $\epsilon,\delta \leq 1/5$.
We have:
\begin{enumerate}
\item[(1)] If $\expect{\targetf(S)} > H \log(1/\epsilon)$ and $\ell \geq 16 \log(1/\delta)$ then
$$\prob{ \mu \geq M \log(1/\epsilon) } ~\geq~ 1-\delta/4.$$

\item[(2)] If $\expect{\targetf(S)} > L \log(1/\epsilon)$ and $\ell \geq 16 \log(1/\delta)$ then
$$\prob{ \smallfrac{5}{6} \expect{\targetf(S)} \leq \mu \leq \smallfrac{4}{3} \expect{\targetf(S)}}
    ~\geq~ 1 - \delta/4.$$

\item[(3)] If $\expect{\targetf(S)} \leq H \log(1/\epsilon)$ then
$$
\prob{ \targetf(S) < K \log(1/\epsilon) } ~\geq~ 1 - \epsilon.
$$

\item[(4)] If $\expect{\targetf(S)} < L \log(1/\epsilon)$ and $\ell \geq 16 \log(1/\delta)$ then
$$
\prob{ \mu < M \log(1/\epsilon) } ~\geq~ 1-\delta/4.
$$
\end{enumerate}
}

\begin{lemma}
\LemmaName{useful}
\usefullemma
\end{lemma}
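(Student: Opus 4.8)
The plan is to derive all four parts from the concentration inequality in \Corollary{rank-concentr} together with a standard Chernoff bound for the sum of the i.i.d.\ samples $\targetf(S_1),\ldots,\targetf(S_\ell)$. First I would separate the argument into two regimes according to whether $\expect{\targetf(S)}$ is ``large'' (bigger than a constant multiple of $\log(1/\epsilon)$) or ``small''. In the large regime I will use \Corollary{rank-concentr} to argue that a single draw $\targetf(S)$ lies within a constant factor of its mean with probability exponentially close to $1$; then, since the $\targetf(S_i)$ are i.i.d.\ and bounded below by a multiple of $\log(1/\epsilon)$ on the good event, a Chernoff/Hoeffding bound on $\mu$ (using $\ell \geq 12\log(1/\delta)$) gives the claimed $1-\delta/4$ confidence. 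In the small regime I instead note that the mean is below the relevant threshold, so concentration forces $\targetf(S)$ (and hence $\mu$) to stay below a slightly larger threshold with the stated probability.

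Concretely: for part (2), apply \Corollary{rank-concentr} with $\alpha = 1/6$ — this needs $\expect{\targetf(S)} \geq 240/\alpha = 1440$, which is guaranteed once $\expect{\targetf(S)} > 400\log(1/\epsilon)$ and $\epsilon$ is small enough (here $\log$ is natural log, so $400\log(1/\epsilon) \geq 1440$ needs $\epsilon$ below a small absolute constant, consistent with $\epsilon \leq 1/5$ after possibly tightening; I would double-check the exact constant the authors intend). The corollary gives that each $\targetf(S_i) \in [\smallfrac{5}{6}\mathbf{E}, \smallfrac{7}{6}\mathbf{E}]$ except with probability $4\exp(-\mathbf{E}/576)$, which is tiny. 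On that event $\mu$ is trivially in $[\smallfrac{5}{6}\mathbf{E},\smallfrac{7}{6}\mathbf{E}] \subseteq [\smallfrac{5}{6}\mathbf{E},\smallfrac{4}{3}\mathbf{E}]$; taking a union bound over $\ell$ samples and using $\ell \geq 12\log(1/\delta)$ (so $4\ell\exp(-\mathbf{E}/576) \leq \delta/4$, using $\mathbf{E} \gtrsim \log(1/\epsilon)$ to absorb the $\ell$ and the constants) yields (2). For part (1), the same event gives $\targetf(S_i) \geq \smallfrac{5}{6}\expect{\targetf(S)} > \smallfrac{5}{6}\cdot 500\log(1/\epsilon) \geq 416\log(1/\epsilon)$, hence $\mu \geq 416\log(1/\epsilon) \geq 450\log(1/\epsilon)$ — wait, that's false; I would instead invoke a one-sided version with a larger $\alpha$ (say $\alpha$ such that $(1-\alpha)\cdot 500 \geq 450$, i.e.\ $\alpha \leq 1/10$), which is admissible since a smaller $\alpha$ only weakens the hypothesis $\expect{\targetf(S)} \geq 240/\alpha$; with $\alpha=1/10$ we need $\expect{\targetf(S)} \geq 2400$, again fine for small $\epsilon$.

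For parts (3) and (4), which are the small-mean cases, I would argue one-sidedly. In (3), $\expect{\targetf(S)} \leq 500\log(1/\epsilon)$; I want $\prob{\targetf(S) \geq 1200\log(1/\epsilon)} \leq \epsilon$. If $\expect{\targetf(S)} \leq 240/\alpha$ is too small to invoke the corollary directly, I handle it by monotonicity: replace $\targetf$ by $\targetf$ conditioned on a trivial coupling, or more simply observe that $\prob{\targetf(S) \geq 1200\log(1/\epsilon)} \leq \prob{\targetf(S) - \expect{\targetf(S)} \geq 700\log(1/\epsilon)}$ and apply \Corollary{rank-concentr} with $\alpha = 700\log(1/\epsilon)/\expect{\targetf(S)} \geq 700/500 = 7/5 > 1$ — but the corollary requires $\alpha \leq 1$, so here I should instead go back to \Theorem{OURTALAGRAND} directly with $b$ a median and $t = 700\log(1/\epsilon)/\sqrt{b}$, or simply note that when $\expect{\targetf(S)}$ is $O(\log(1/\epsilon))$, a deviation to $1200\log(1/\epsilon)$ is many standard deviations out, and Talagrand gives a bound like $\exp(-c\log(1/\epsilon)) \leq \epsilon$ for small $\epsilon$. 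Part (4) is symmetric and easier: $\expect{\targetf(S)} < 400\log(1/\epsilon)$, and I want $\mu < 450\log(1/\epsilon)$; since $\mu$ has mean below $400\log(1/\epsilon)$, a Bernstein/Bennett bound using $\ell \geq 12\log(1/\delta)$ and the sub-Gaussian-type tail from \Theorem{OURTALAGRAND} (each summand has an exponential upper tail, so their average concentrates) gives $\prob{\mu \geq 450\log(1/\epsilon)} \leq \delta/4$. The main obstacle is bookkeeping: making the absolute constants ($240$, $400$, $450$, $500$, $1200$, $12$, $16$) all fit together across the four cases, correctly tracking whether $\log$ is natural or base-$2$, and choosing the right one-sided versions of the corollary (and falling back to \Theorem{OURTALAGRAND} when the required $\alpha$ would exceed $1$). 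None of the individual steps is deep; the care is entirely in the constants and in the boundary between the ``corollary applies'' and ``go directly to Talagrand'' regimes.
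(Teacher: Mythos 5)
Your plan has the right general shape (large-mean vs.\ small-mean regimes, falling back to \Theorem{OURTALAGRAND} when \Corollary{rank-concentr} doesn't apply), but it is missing the single key device the paper uses and, as written, part of your argument does not work.

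The missing idea is this: to control the empirical average $\mu$, the paper does not concentrate each $\targetf(S_i)$ separately and then combine. Instead it observes that $\hat{f}(S_1,\ldots,S_\ell) := \sum_{i=1}^\ell \targetf(S_i)$ is \emph{itself} a non-negative, monotone, submodular, $1$-Lipschitz function on the product ground set $[n]\times[\ell]$, and that $X=(S_1,\ldots,S_\ell)$ has a product distribution on $2^{[n]\times[\ell]}$. One then applies \Corollary{rank-concentr} (for parts (1), (2)) or \Theorem{OURTALAGRAND} (for part (4)) directly to $\hat{f}$. That is what produces a failure probability that \emph{decreases} as $\ell$ grows. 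Your proposed union bound over the $\ell$ samples goes the wrong way: the lemma only lower-bounds $\ell$ (e.g.\ \Theorem{productalg} feeds in $\ell$ polynomial in $n$), so a bound of the form $4\ell\exp(-\mathbf{E}/576)\le \delta/4$ cannot hold uniformly --- it degrades, rather than improves, with $\ell$. Your alternative suggestion (Chernoff/Bernstein on $\mu$) is not a trivial fix either: \Corollary{rank-concentr} gives a two-sided tail bound, not a moment bound, so you would first have to convert it to a subexponential/subgaussian estimate and then run a Bernstein argument; the paper's sum trick sidesteps all of this.

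For parts (3) and (4) you correctly identify that the required ``$\alpha$'' would exceed $1$ and that one must use \Theorem{OURTALAGRAND} directly. But you leave the application vague. The concrete step you are missing is Markov's inequality on the \emph{lower} tail: take $b=1200\log(1/\epsilon)$ and $t=4\sqrt{\log(1/\epsilon)}$ (part (3)), check $b-t\sqrt b \ge 2\,\expect{\targetf(S)}$ so $\prob{\targetf(S)\le b-t\sqrt b}\ge 1/2$, and conclude $\prob{\targetf(S)\ge b}\le 2\exp(-t^2/4)\le\epsilon$; part (4) is the same with $\hat f$, $b=450\log(1/\epsilon)\ell$, $t=4\sqrt{\log(1/\delta)}$, and the Markov step giving a constant lower bound like $1/20$. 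Without that step you cannot extract a useful inequality from \Equation{ourtalagrand}, since that equation only bounds the \emph{product} of the two tail probabilities.

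Finally, a smaller point: your self-caught arithmetic slip in part (1) (trying $\alpha=1/6$ and getting $416 < 450$) is best fixed not by just shrinking $\alpha$ for a pointwise bound, but by applying \Corollary{rank-concentr} with $\alpha=1/10$ to $\hat f$, as the paper does. The constants $240/\alpha$, $500\ell$, etc.\ are chosen precisely so that this works for all admissible $\epsilon,\delta\le 1/5$.
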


The proof of~\Lemma{useful},
which is provided in Appendix~\ref{appendix-learn-product},
follows easily from~\Theorem{OURTALAGRAND} and \Corollary{rank-concentr}.
We now present our main result in this section.

\newcommand{\thmprodalg}{
Let $\cF$ be the class of non-negative, monotone, 1-Lipschitz, submodular functions
with ground set $\ground$ and minimum non-zero value $1$.
Let $D$ be a product distribution on $\set{0,1}^n$.
For any sufficiently small $\epsilon > 0$ and $\delta > 0$,
\Algorithm{talagrand} PMAC-learns $\cF$
with approximation factor $\alpha=K \log(1/\epsilon)$.
The number of training examples used is
$\ell= n \log(n/\delta) / \epsilon + 16 \log(1/\delta)$.

If it is known a priori that $\expect{f^*(S)} \geq L \log(1/\epsilon)$ then the approximation
factor improves to $8$, and the number of examples can be reduced to
$\ell = 16 \log(1/\delta)$, which is independent of $n$ and $\epsilon$.
}
\begin{theorem}
\TheoremName{productalg}
\thmprodalg
\end{theorem}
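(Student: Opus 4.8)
The plan is to design \Algorithm{talagrand} as a simple threshold-based algorithm that exploits the concentration from \Theorem{OURTALAGRAND}. The key observation is that, under a product distribution, a non-negative monotone submodular $1$-Lipschitz function $f^*$ is, by \Corollary{rank-concentr}, tightly concentrated around its expectation whenever that expectation is large (say at least $500\log(1/\epsilon)$); and when the expectation is small, \Lemma{useful}(3) tells us $f^*(S)$ is below $1200\log(1/\epsilon)$ on all but an $\epsilon$-fraction of $S$. So the algorithm would: (i) draw $\ell = 12\log(1/\delta)$ examples and compute the empirical mean $\mu$; (ii) if $\mu \geq 450\log(1/\epsilon)$, output the constant hypothesis $f(S) = \tfrac{5}{6}\expect{f^*(S)}/\alpha$ appropriately scaled — more precisely, output $f(S) = c$ for a constant $c$ chosen so that $c \leq f^*(S) \leq \alpha c$ holds with probability $1-\epsilon$, using the two-sided concentration bound; (iii) if $\mu < 450\log(1/\epsilon)$, we are (by \Lemma{useful}(4) and its converse) in the low-expectation regime, so output the constant hypothesis $f(S) = 1$, which satisfies $1 \leq f^*(S) \leq 1200\log(1/\epsilon) \cdot 1$ on a $1-\epsilon$ fraction since $f^*$ is integer-valued with minimum non-zero value $1$ — but we must also account for the sets where $f^*(S) = 0$. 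That last point forces the extra $n\log(n/\delta)/\epsilon$ samples: we separately learn the support of $f^*$. Since $f^*(S) = 0$, monotonicity, and submodularity imply $f^*$ vanishes on a downward-closed family, and in fact $\{S : f^*(S) = 0\}$ is determined by which singletons have value $0$; with $O(n\log(n/\delta)/\epsilon)$ samples we can with probability $1-\delta/2$ identify every element $i$ such that $\Pr[i \in S] \geq \epsilon/n$ and $f^*(\{i\}) > 0$, hence correctly predict $f(S) = 0$ versus $f(S) \geq 1$ on a $1-\epsilon$ fraction.

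The steps I would carry out, in order: first, establish the structural fact that the zero-set of $f^*$ is governed by singletons (monotone + $f^*(\emptyset) = 0$ is not assumed, so I'd handle normalization, but note minimum non-zero value is $1$); second, show the support-learning subroutine succeeds with the stated sample bound via a union bound over the $n$ coordinates and a Chernoff bound; third, in the high-expectation branch, apply \Lemma{useful}(1) and (2) to argue that $\mu \geq 450\log(1/\epsilon)$ happens with probability $\geq 1-\delta/4$ when $\expect{f^*(S)}$ is genuinely large, and that then $\mu$ estimates $\expect{f^*(S)}$ within a $[5/6, 4/3]$ factor; fourth, combine this with \Corollary{rank-concentr} (with $\alpha$ a suitable constant) to get that $f(S) := \tfrac{5}{6\cdot\alpha_0}\mu \cdot \tfrac{6}{5} = \ldots$ — here I'd pick the constant carefully so that both $f(S) \leq f^*(S)$ and $f^*(S) \leq 8 f(S)$ hold except on an $\epsilon$-fraction, yielding approximation factor $8$; fifth, in the low-expectation branch, use \Lemma{useful}(3),(4) plus the support subroutine to get approximation factor $1200\log(1/\epsilon) = O(\log(1/\epsilon))$. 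Finally, a union bound over the failure probabilities ($\delta/4$ from each concentration call, $\delta/2$ from support learning) gives total confidence $1-\delta$.

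The main obstacle I anticipate is the case analysis at the threshold: the algorithm does not know $\expect{f^*(S)}$, only $\mu$, and there is a "gray zone" of expectations (between $400\log(1/\epsilon)$ and $500\log(1/\epsilon)$) where either branch could be taken. The four parts of \Lemma{useful} are evidently engineered precisely to make the argument go through regardless of which branch is taken in that zone — parts (1) and (4) ensure $\mu$ is on the correct side of $450\log(1/\epsilon)$ with high probability when $\expect{f^*(S)}$ is safely large or safely small, and parts (2) and (3) ensure that whichever branch fires, the output hypothesis is still correct (with the constant-$8$ guarantee in the large case and the $O(\log(1/\epsilon))$ guarantee in the small case). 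Verifying that the constants line up — in particular that the scaling constant in the high-expectation branch, combined with the $[5/6,4/3]$ multiplicative error in $\mu$ and the concentration width from \Corollary{rank-concentr} with an appropriately chosen $\alpha$, yields exactly $f(S) \leq f^*(S) \leq 8f(S)$ — is the delicate bookkeeping step. The other subtlety is handling sets $S$ with $f^*(S) = 0$ in the PMAC guarantee, since $f(S) \leq f^*(S) \leq \alpha f(S)$ with $f^*(S) = 0$ forces $f(S) = 0$; this is exactly why the support-learning subroutine (and its $n/\epsilon$ sample cost) is unavoidable, and is absent only when we know a priori that $\expect{f^*(S)}$ is large (so that $f^*(S) = 0$ happens on at most an $\epsilon$-fraction anyway, by \Corollary{rank-concentr}).
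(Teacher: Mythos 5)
Your overall plan matches the paper's: compute the empirical mean $\mu$ from $O(\log(1/\delta))$ samples, branch at the threshold $450\log(1/\epsilon)$, use \Lemma{useful} to resolve the ``gray zone'' of expectations between $400\log(1/\epsilon)$ and $500\log(1/\epsilon)$, output a constant $\mu/4$ in the large-expectation branch (giving factor $8$ via \Corollary{rank-concentr}), and output a $\{0,1\}$-valued hypothesis in the small-expectation branch (giving factor $O(\log(1/\epsilon))$). The structural fact you cite --- that $\cZ=\{S: \targetf(S)=0\}$ is a subcube $\{S: S\subseteq U^*\}$ with $U^*=\{i: \targetf(\{i\})=0\}$, by downward-closure (monotonicity) and union-closure (submodularity plus nonnegativity) --- is also correct and is exactly how the paper handles zeros.

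The gap is in your zero-subcube learning subroutine. You claim that with $O(n\log(n/\delta)/\epsilon)$ samples one can ``identify every element $i$ such that $\Pr[i\in S]\geq\epsilon/n$ and $\targetf(\{i\})>0$,'' but you do not give an inference rule, and the natural one does not straightforwardly work. From random labeled examples you can only certify $i\in U^*$ by observing $i$ inside a \emph{zero} sample; the per-sample probability of observing $i$ in a zero sample is $p_i\cdot\probover{S\sim D}{S\subseteq U^*}$, which carries an extra factor $q:=\probover{}{S\subseteq U^*}$ that your per-coordinate Chernoff/union-bound sketch ignores. With the fixed threshold $p_i\geq\epsilon/n$ the count of samples needed scales like $n\log(n/\delta)/(\epsilon q)$, not $n\log(n/\delta)/\epsilon$; one can repair this by choosing a $q$-dependent threshold (and noting the residual error $q\sum_{i\text{ missed}}p_i$ is what must be $\leq\epsilon$), but this bookkeeping is missing from your proposal. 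The paper sidesteps all of it with a cleaner argument (\Claim{handlezeros}): it takes $U$ to be the union of the observed zero training examples and outputs $f(A)=0$ iff $A\subseteq U$ --- so there are no ``false zeros'' at all, since $U\subseteq U^*$ --- and then bounds the measure of $\cZ\setminus\{S: S\subseteq U\}$ by an iterative argument: as long as the missed measure exceeds $\epsilon$, a block of $\log(n/\delta)/\epsilon$ fresh samples strictly enlarges $U$ with probability $1-\delta/n$, and this can happen at most $n$ times. This gives the stated sample bound $n\log(n/\delta)/\epsilon$ with no per-coordinate reasoning and no dependence on $q$. You would need to adopt this (or an equivalent) argument to close the gap.
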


\begin{algorithm}
\begin{itemize}
\item Let $\mu = \sum_{i=1}^\ell \targetf(S_i) / \ell$.
\item \textit{Case 1:} If $\mu \geq M \log(1/\epsilon)$,
    then return the constant function $f = \mu/4$.
\item \textit{Case 2:} If $\mu < M \log(1/\epsilon)$, then compute the set
$ U = \Union_{i \::\: \targetf(S_i)=0} ~ S_i $.
Return the function $f$ where $f(A)=0$ if $A \subseteq U$ and $f(A)=1$ otherwise.
\end{itemize}
\caption{\: An algorithm for PMAC-learning a non-negative, monotone, $1$-Lipschitz, submodular function
$\targetf$ with minimum non-zero value $1$,
when the examples come from a product distribution.
Its input is a sequence of labeled training examples
$(S_1,\targetf(S_1)), \ldots, (S_\ell,\targetf(S_\ell))$, parameters $\epsilon$ and $\ell$.
}
\AlgorithmName{talagrand}
\end{algorithm}

%

\begin{proof}
We begin with an overview of the proof.
Consider the expected value of $\targetf(S)$ when $S$ is drawn from distribution $D$.
When this expected value of $\targetf$ is large compared to $\log(1/\epsilon)$,
we simply output a constant function given by the empirical average $\mu$ estimated by the algorithm.
Our concentration bounds for submodular functions (\Theorem{OURTALAGRAND} and \Corollary{rank-concentr})
allow us to show that this constant function provides a good estimate.
However, when the expected value of $\targetf$ is small,
we must carefully handle the zeros of $\targetf$,
since they may have large measure under distribution $D$.
The key idea here is to use the fact that the zeros of a non-negative, monotone, submodular function have special structure:
they are both union-closed and downward-closed, so it is sufficient to \emph{PAC-learn} the Boolean \NOR function which indicates the
zeros of $\targetf$.

We now present the proof formally.
%
By \Lemma{useful}, with probability at least $1-\delta$ over the choice of examples,
we may assume that the following implications hold.
\begin{equation}
\EquationName{assumedimplication}
\begin{split}
\mu \geq M \log(1/\epsilon) &\quad\implies\quad
    \expect{\targetf(S)} \geq L \log(1/\epsilon) \quad\text{and}\quad
    \smallfrac{5}{6} \expect{\targetf(S)} \leq \mu \leq \smallfrac{4}{3}
    \expect{\targetf(S)}
    \\
\mu < M \log(1/\epsilon) &\quad\implies\quad
    \expect{\targetf(S)} \leq H \log(1/\epsilon).
\end{split}
\end{equation}
Now we show that the function $f$ output by the algorithm approximates $\targetf$
to within a factor $K \log(1/\epsilon)$.

\vspace{3pt}
\noindent
\textbf{Case 1:} $\mu \geq M \log(1/\epsilon)$.
Since we assume that \eqref{eq:assumedimplication} holds, we have
$\smallfrac{5}{6} \expect{\targetf(S)} \leq \mu \leq \smallfrac{4}{3} \expect{\targetf(S)}$
and $\expect{\targetf(S)} \geq L \log(1/\epsilon)$.
Using these together with~\Corollary{rank-concentr} we obtain:
\begin{equation}
\begin{split}
\EquationName{talagLargeRank}
\prob{ \mu/4 \leq \targetf(S) \leq 2\mu }
    &~~\geq~~ \prob{ \smallfrac{1}{3} \expect{\targetf(S)} \leq \targetf(S) \leq \smallfrac{5}{3}
    \expect{\targetf(S)} } \\
    &~~\geq~~ 1-\prob{ \abs{\targetf(S) - \expect{\targetf(S)}} \geq (2/3) \expect{\targetf(S)} } \\
    &~~\geq~~ 1-4 \exp\big(-\expect{\targetf(S)}/950 \big)
    ~~\geq~~ 1-\epsilon,
\end{split}
\end{equation}
since $L \geq 4000$ and $\epsilon \leq 1/2$.
Therefore, with confidence at least $1-\delta$,
the constant function $f$ output by the algorithm
approximates $\targetf$ to within a factor $8$
on all but an $\epsilon$ fraction of the distribution.

\vspace{3pt}
\noindent
\textbf{Case 2:} $\mu < M \log(1/\epsilon)$.
As mentioned above, we must separately handle the zeros and the non-zeros of $\targetf$.
To that end, define
$$
\cP = \setst{ S }{ \targetf(S) > 0 } \qquad\text{and}\qquad
\cZ = \setst{ S }{ \targetf(S) = 0 }.
$$
Recall that the algorithm sets $U = \Union_{\targetf(S
_i)=0} S_i$.
Monotonicity and submodularity imply that $\targetf(U) = 0$.
Furthermore, setting $\cL = \setst{ T }{ T \subseteq U }$,
monotonicity implies that
\begin{equation}
\EquationName{zeros}
\targetf(T)~=~0 \qquad\forall T \in \cL.
\end{equation}

We wish to analyze the measure of the points for which the function $f$ output by the algorithm
fails to provide a good estimate of $\targetf$.
So let $S$ be a new sample from $D$ and let $\cE$ be the event that $S$ violates the inequality
$$
    f(S) ~\leq~ \targetf(S) ~\leq~ \big(K \log(1/\epsilon)\big) \cdot f(S).
$$
Our goal is to show that, with probability $1-\delta$ over the training examples,
we have $\prob{ \cE } \leq \epsilon$.
Clearly
$$
    \prob{ \cE } \:=\: \prob{\: \cE \:\wedge\: S \!\in\! \cP \:}
                 \:+\: \prob{\: \cE \:\wedge\: S \!\in\! \cZ \:}.
$$
We will separately analyze these two probabilities.

First we analyze the non-zeros of $\targetf$.
So assume that $S \in \cP$, which implies that $\targetf(S) \geq 1$ by our hypothesis.
Then $S \not\subseteq U$ (by \Equation{zeros}), and hence $f(S)=1$
by the definition of $f$.
Therefore the event $\cE \,\wedge\, S \!\in\! \cP$ can only occur when
$\targetf(S) > K \log(1/\epsilon)$.
Since we assume that \eqref{eq:assumedimplication} holds,
we have $\expect{\targetf(S)} \leq H \log(1/\epsilon)$,
so we can apply \Lemma{useful}, statement (3).
This shows that
$$
\prob{\cE \,\wedge\, S \!\in\! \cP }
~\leq~ \prob{ \targetf(S) > K \log(1/\epsilon) }
~\leq~ \epsilon.
$$


It remains to analyze the zeros of $\targetf$.
Assume that $S \in \cZ$, i.e., $\targetf(S)=0$.
Since our hypothesis has $f(S)=0$ for all $S \in \cL$,
the event $\cE \,\wedge\, S \!\in\! \cZ$ holds only if $S \in \cZ \setminus \cL$.
The proof now follows from \Claim{handlezeros}.
\end{proof}

\begin{claim}
\ClaimName{handlezeros}
With probability at least $1-\delta$, the set $\cZ \setminus \cL$ has measure at most $\epsilon$.
\end{claim}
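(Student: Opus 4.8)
The plan is to exploit the special structure of the zero set
$\cZ = \setst{ S }{ \targetf(S)=0 }$.
Since $\targetf$ is non-negative, monotone and submodular, $\cZ$ is both downward-closed and union-closed, so $U = \Union_{i\,:\,\targetf(S_i)=0} S_i$ lies in $\cZ$, and therefore $\cL = \setst{T}{T \subseteq U}$ satisfies $\cL \subseteq \cZ$; in other words the hypothesis never wrongly predicts ``zero''. It remains to bound the measure of the region $\cZ \setminus \cL$ consisting of those zeros of $\targetf$ that are not contained in $U$.

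The key step is a union bound over ground-set elements. A set $S \in \cZ$ fails to lie in $\cL$ precisely when it contains some $j \in \ground \setminus U$, so
\[
D(\cZ \setminus \cL) ~\leq~ \sum_{j \in \ground \setminus U} \prob{ S \ni j \And \targetf(S)=0 } ~=:~ \sum_{j \in \ground \setminus U} q_j,
\]
the probability being over $S \sim D$. I would then argue that every coordinate with non-negligible $q_j$ has, with high probability, already been absorbed into $U$. Indeed, each training example is independently a zero containing $j$ with probability exactly $q_j$, so $\prob{ j \notin U } = (1-q_j)^\ell \leq e^{-q_j \ell}$. With $\ell \geq n \log(n/\delta)/\epsilon$ (the term that the sample complexity in \Theorem{productalg} supplies for this purpose), any $j$ with $q_j > \epsilon/n$ belongs to $U$ except with probability at most $\delta/n$. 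A union bound over the $n$ coordinates shows that, with probability at least $1-\delta$, every $j \notin U$ has $q_j \leq \epsilon/n$, and plugging this into the displayed inequality gives $D(\cZ \setminus \cL) \leq n \cdot (\epsilon/n) = \epsilon$, as claimed.

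There is no serious obstacle; the one thing to get right is the choice of the per-coordinate quantity $q_j$, so that the same quantity simultaneously controls the union bound $D(\cZ \setminus \cL) \leq \sum_{j \notin U} q_j$ and the probability that coordinate $j$ is missed by all the sampled zeros. A minor bookkeeping point is that this argument and the estimation of $\mu$ via \Lemma{useful} both consume part of the failure-probability budget, which is handled in the usual way by replacing $\delta$ with a constant fraction of itself throughout.
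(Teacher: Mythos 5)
Your proof is correct, but it takes a genuinely different route from the paper's. You bound the bad measure by a per-element union bound: $D(\cZ\setminus\cL)\le\sum_{j\notin U}q_j$ where $q_j=\prob{S\ni j\:\wedge\:\targetf(S)=0}$, then argue that each coordinate $j$ with $q_j>\epsilon/n$ is absorbed into $U$ with probability $1-\delta/n$, and union-bound over the $n$ coordinates. The paper instead runs a ``potential'' argument: it splits the $\ell$ training examples into $n$ phases of length $\log(n/\delta)/\epsilon$; in any phase that begins with $D(\cZ\setminus\cL_k)\ge\epsilon$, a fresh zero of $\targetf$ outside the current null subcube is observed with probability $1-\delta/n$, strictly enlarging $U_k$; since $|U_k|\le n$ this can happen at most $n$ times, so after all phases the measure has dropped below $\epsilon$ (and it stays there because $\cL_k$ only grows). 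Both arguments use the same sample complexity $\ell\approx n\log(n/\delta)/\epsilon$ and the same monotonicity/union-closedness of $\cZ$. Your version is arguably cleaner, because it directly exhibits a coordinate-wise quantity $q_j$ that simultaneously controls the bad measure and the probability of missing coordinate $j$, rather than reasoning indirectly through a potential that increases at most $n$ times. The paper's phase argument is slightly more robust in that it never needs to name the $q_j$'s, but otherwise the two proofs buy essentially the same thing.
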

\begin{proof}
The idea of the proof is as follows.
At any stage of the algorithm, we can compute the set $U$ and the subcube
$\cL = \setst{ T }{ T \subseteq U }$.
We refer to $\cL$ as the algorithm's \newterm{null subcube}.
Suppose that there is at least an $\epsilon$ chance that a new example is a zero of $\targetf$,
but does not lie in the null subcube.
Then such a example should be seen in the next sequence of $\log(1/\delta)/\epsilon$ examples,
with probability at least $1-\delta$.
This new example increases the dimension of the null subcube by at least one,
and therefore this can happen at most $n$ times.

Formally, for $k \leq \ell$, define
$$
    U_k ~=~ \Union_{\substack{i \leq k \\ \targetf(S_i)=0}} \!\!\!\! S_i
        \qquad\text{ and }\qquad
    \cL_k ~=~ \setst{ S }{ S \subseteq \cU_k }.
$$
As argued above, we have $\cL_k \subseteq \cZ$ for any $k$.
Suppose that, for some $k$, the set $\cZ \setminus \cL_k$ has measure at least $\epsilon$.
Define $k' = k+\log(n/\delta)/\epsilon$. Then amongst the subsequent examples
$S_{k+1},\ldots,S_{k'}$, the probability that none of them lie in $\cZ \setminus \cL_k$ is at most
$$(1-\epsilon)^{\log(n/\delta)/\epsilon} \leq \delta/n.$$ On the other hand, if one of them does lie
in $\cZ \setminus \cL_k$, then $\card{\cU_{k'}} > \card{\cU_k}$. But $\card{\cU_k} \leq n$ for all
$k$, so this can happen at most $n$ times. Since $\ell \geq n \log(n/\delta) / \epsilon$, with
probability at least $\delta$ the final set $\cZ \setminus \cL_\ell$ has measure at most $\epsilon$.
\end{proof}

The class $\cF$ defined in \Theorem{productalg} contains the class of matroid rank functions.
We remark that \Theorem{productalg} can be easily modified to handle
the case where the minimum non-zero value for functions in $\cF$ is $\eta<1$.
To do this, we simply modify Step $2$ of the algorithm to output $f(A)=\eta$
for all $A \not \subseteq U$.
The same proof shows that this modified algorithm has an approximation factor of
$K \log(1/\epsilon)/\eta$.


\subsection{Inapproximability under Arbitrary Distributions}
\SectionName{general-lower}

The simplicity of \Algorithm{talagrand} might raise one's hopes that
a constant-factor approximation is possible under arbitrary distributions.
However, we show in this section that no such approximation is possible.
In particular, by making use of the new family of matroids we presented in Section~\ref{section-new-extremal-matroid}, we show that no algorithm can PMAC-learn the class of non-negative, monotone,
submodular functions with approximation factor $o({n^{1/3}}/{\log n})$. Formally:

\newcommand{\thmmainlb}{
Let $\ALG$ be an arbitrary learning algorithm that uses only a polynomial number of training
examples drawn i.i.d.\ from the underlying distribution.
There exists a distribution $D$ and a submodular target function $\targetf$ such that, with probability at least $1/8$
(over the draw of the training samples),
the hypothesis function $f$ output by $\ALG$ does not approximate $\targetf$ within a
$o({n^{1/3}}/{\log n})$ factor on at least a $1/4$ fraction of the
examples under $D$.
    This holds even for the subclass of matroid rank functions.
}

\begin{theorem}
\TheoremName{mainlb}
\thmmainlb
\end{theorem}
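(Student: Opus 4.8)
The plan is to combine Theorem~\ref{thm:manymatroids} with a standard information-theoretic argument showing that too few labeled examples cannot distinguish enough of the exponentially many matroids $\mat_\cB$ from one another. The key idea is that the family $\cA$ provided by Theorem~\ref{thm:manymatroids} has superpolynomial size $k = n^{\log n}$ (say), and on each set $A_i \in \cA$ the rank value $\rankf_{\mat_\cB}(A_i)$ is either $n^{1/3}$ or $8\log k = \Theta(\log^2 n)$, depending only on whether $A_i \in \cB$. So the ``hard'' instances differ from each other by a multiplicative factor $\tilde\Omega(n^{1/3})$ precisely on the sets in $\cA$, and the learner must effectively recover (most of) the indicator of $\cB$ in order to succeed.

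\textbf{Construction of the hard distribution.} I would let $D$ be (roughly) uniform on the collection $\cA = \{A_1,\dots,A_k\}$ — say $D$ puts equal mass $1/k$ on each $A_i$ — and draw the target $\targetf = \rankf_{\mat_\cB}$ by choosing $\cB$ randomly: include each $A_i$ in $\cB$ independently with probability $1/2$. Since the learner sees only $\poly(n)$ training examples, and each example $S_j$ is some $A_{i_j}$ together with its label $\rankf_{\mat_\cB}(A_{i_j})$, the learner obtains the value of the $\cB$-indicator on at most $\poly(n)$ of the $k = n^{\omega(1)}$ coordinates. On every unqueried $A_i$, the posterior on ``$A_i \in \cB$'' is still $1/2$ and independent across $i$'s, so the learner's hypothesis $f$ — which is a fixed function of the training data — is, on each such $A_i$, equally likely to face target value $n^{1/3}$ or $8\log k$. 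For the required sandwich $f(A_i) \le \targetf(A_i) \le \alpha f(A_i)$ to hold with $\alpha = o(n^{1/3}/\log n)$, the value $f(A_i)$ would have to simultaneously lie in $[\,8\log k/\alpha,\, 8\log k\,]$ and in $[\,n^{1/3}/\alpha,\, n^{1/3}\,]$; since $\alpha \cdot 8\log k = o(n^{1/3})$, these two intervals are disjoint, so $f$ can satisfy the bound for at most one of the two possible target values on $A_i$. Hence $f$ errs on each unqueried $A_i$ with probability at least $1/2$ over the draw of $\cB$.

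\textbf{Finishing the argument.} The unqueried sets carry at least a $1 - \poly(n)/k = 1 - o(1)$ fraction of the mass of $D$. By linearity of expectation, the expected (over $\cB$ and the training draw) $D$-measure of the set of $A_i$ on which $f$ errs is at least $\tfrac12(1-o(1)) \ge \tfrac13$. By Markov / an averaging argument (applied to the complementary ``good'' event), there is a fixed $\cB$ and a positive-probability event over the training samples on which the error measure is at least, say, $1/4$; chasing the constants (as in the classic distributional lower-bound template) gives probability at least $1/8$ over the training draw that $f$ fails on a $\ge 1/4$ fraction. I would also note the stronger claim in the theorem statement — that this holds even against an algorithm that knows $D$ and can adaptively query $\targetf$ at arbitrary points — which needs a slightly more careful accounting: even adaptive value queries reveal $\rankf_{\mat_\cB}$ at only $\poly(n)$ sets, and one checks (using the explicit form of $\rankf_{\mat_\cB}$ from \eqref{eq:extremalmatroids}, or the independence structure of the construction) that querying at sets outside $\cA$ leaks only a limited amount of information about which $A_i$'s are in $\cB$; I would encapsulate this in a lemma saying the labels of $\poly(n)$ arbitrary sets determine membership in $\cB$ for only $\poly(n)$ of the indices.

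\textbf{The main obstacle} I expect is exactly this last point: for the non-adaptive version the counting is immediate, but to handle adaptive value queries at \emph{arbitrary} sets $S \subseteq [n]$ (not just sets in $\cA$) one must argue that the value $\rankf_{\mat_\cB}(S)$ does not encode membership information about many $A_i$ at once. This requires understanding how the constraints indexed by $J \subseteq U_\cB$ in the definition of $\cI_\cB$ interact — intuitively, a single set $S$ can be ``pushed down'' by at most a bounded number of the constraint families before the cardinality constraint $|I|\le d$ dominates, so each query reveals $O(1)$ (or at most $\poly(n)$ in total) bits about $\cB$. Making this precise, and tracking all the constants so the final probabilities come out to $1/8$ and $1/4$, is where the real work lies; the information-theoretic skeleton itself is routine.
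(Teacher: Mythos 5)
Your proposal matches the paper's argument step-for-step: $D$ uniform on $\cA$, $\cB$ uniformly random over subsets, the observation that on each unqueried $A\in\cA$ the conditional law of $\targetf(A)$ given the training data is still uniform on the two possible rank values so the sandwich $f(A)\le\targetf(A)\le\alpha f(A)$ fails with probability $\ge 1/2$, and a final averaging step.

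One parameter choice, however, would cost you the stated bound. Taking $k = n^{\log n}$ gives $8\log k = \Theta(\log^2 n)$, so the ratio of the two candidate rank values is only $n^{1/3}/\Theta(\log^2 n)$, and the argument rules out only $\alpha = o(n^{1/3}/\log^2 n)$ --- weaker than the claimed $o(n^{1/3}/\log n)$. The paper's actual choice is $k = 2^t$ with $t = c\log n + 3$, i.e.\ $k = 8n^c$ where $n^c$ bounds the sample size $\ell$. This keeps $\log k = O(\log n)$, preserving the ratio $n^{1/3}/(16t) = \Omega(n^{1/3}/\log n)$; the unqueried fraction is then only $1 - \ell/k = 7/8$ rather than $1-o(1)$, but that still gives expected error $\ge 7/16$, from which the reverse-Markov step (if $\Pr[\text{err} \ge 1/4] < 1/8$ then $\E[\text{err}] < \tfrac18 + \tfrac78\cdot\tfrac14 = \tfrac{11}{32} < \tfrac{7}{16}$) yields the stated $1/8$ and $1/4$. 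So take $k$ just large enough to overwhelm $\ell$, not superpolynomially large --- that tension between $\log k$ and $k/\ell$ is exactly what fixes the $\log$ exponent. (Also, the adaptive-value-query extension you raise at the end is not part of this theorem's statement; it is the separate \Theorem{mainlbadaptive}, which the paper proves by bounding the number of distinct query-answer sequences a deterministic algorithm can see --- hence the number of possible output hypotheses --- and then applying Hoeffding and a union bound, rather than by a per-query information-leakage argument of the kind you sketch.)
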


\begin{proof}
To show the lower bound, we use the family of matroids from~\Theorem{manymatroids} in \Section{lb},
whose rank functions take wildly varying values on large set of points.
The high level idea is to show that for a super-polynomial sized
set of $k$ points in $\set{0,1}^{n}$, and for \emph{any} partition of those
points into \textsc{High} and \textsc{Low}, we can construct a matroid where the points
in \textsc{High} have rank $r_\mathrm{high}$ and the points in \textsc{Low} have rank
$r_\mathrm{low}$, and the ratio
$r_\mathrm{high}/r_\mathrm{low} = \tilde{\Omega}(n^{1/3})$.
This then implies hardness for learning over the uniform
distribution on these $k$ points from any polynomial-sized sample,
even with value queries.

To make the proof formal, we  use the probabilistic method. Assume that $\ALG$ uses $\ell \leq n^c$ training examples for
some constant $c$.
To construct a hard family of submodular functions, we will apply \Theorem{manymatroids} with $k = 2^t$ where $t = c \log(n) + 3$.
Let $\cA$ and $\cM$ be the families that are guaranteed to exist by \Theorem{manymatroids}.
Let the underlying distribution $D$ on $2^\ground$ be the uniform distribution on $\cA$.
(We note that $D$ is {\em not} a product distribution.)
Choose a matroid $\mat_\cB \in \cM$ uniformly at random and let the target function be
$\targetf = \rankf_{\mat_\cB}$. Clearly $\ALG$ does not know $\cB$.

Assume that $\ALG$ uses a set $\cS$ of $\ell$ training examples.
For any $A \in \cA$ that is not a training example,
the algorithm $\ALG$ has \emph{no information} about $\targetf(A)$; in particular, the conditional distribution of its value, given
  $\cS$, remains uniform in $\{8t,|A|\}$.
So $\ALG$ cannot determine its value better than
randomly guessing between the two possible values $8 t$ and $\card{A}$.
The set of non-training examples has measure $1-2^{-t + \log \ell}$.
 Thus
$$
\E_{\targetf, \cS}\Bigg[~
    \Pr_{A \sim D} \Big[\:
        \targetf(A) \not \in \big[f(A), \smallfrac{ n^{1/3}}{16 t} f(A)\big]
    \:\Big]
~\Bigg]
~\geq~ \frac{1-2^{-t + \log \ell}}{2} ~\geq~ 7/16.
$$
Therefore, there exists $\targetf$ such that
$$
\Pr_{\cS}\Bigg[~
    \Pr_{A \sim D} \Big[\:
        \targetf(A) \not \in \big[f(A), \smallfrac{ n^{1/3}}{16 t} f(A)\big]
    \:\Big]
    \geq 1/4
~\Bigg]
~\geq~ 1/8.
$$
That is there exists $\targetf$ such that with probability at least $1/8$
(over the draw of the training samples) we have that
the hypothesis function  $f$ output by $\ALG$ does not approximate $\targetf$ within a $o({n^{1/3}}/{\log n})$ factor on at least $1/4$ fraction of the
examples  under $D$.
\end{proof}

We can further show that the lower bound in \Theorem{mainlb} holds even if the algorithm is told the underlying distribution,
even if the algorithm can query the function on inputs of its choice,
and even if the queries are adaptive. In other words, this inapproximability still holds in the PMAC model augmented with value queries.
Specifically:

\newcommand{\thmmainlbadaptive}{
Let $\ALG$ be an arbitrary learning algorithm that uses only a polynomial number of training
examples, which can be either drawn i.i.d.\ from the underlying distribution or value queries.
There exists a distribution $D$ and a submodular target function $\targetf$ such that, with probability at least $1/4$ (over the draw of the training samples),
the hypothesis function output by $\ALG$ does not approximate $\targetf$ within a $o({n^{1/3}}/{\log
n})$ factor on at least a $1/4$ fraction of the
examples under $D$.
    This holds even for the subclass of matroid rank functions.
}

\begin{theorem}
\TheoremName{mainlbadaptive} \thmmainlbadaptive
\end{theorem}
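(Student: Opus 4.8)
The plan is to reuse the hard family of matroids from \Theorem{manymatroids}, but to take the index set $\cA$ much larger and to replace the explicit counting argument in the proof of \Theorem{mainlb} with an information-theoretic one, which is insensitive both to adaptivity and to the fact that value queries can be placed at sets outside $\cA$. Suppose $\ALG$ uses at most $\ell \le n^c$ training examples for some constant $c$, each of which is either an i.i.d.\ sample from the distribution or a value query. Apply \Theorem{manymatroids} with $k = n^{c+2}$ (which is $2^{o(n^{1/3})}$) to obtain a fixed family $\cA = \{A_1,\dots,A_k\}$ with $|A_i| = n^{1/3}$ and matroids $\{\mat_\cB : \cB \subseteq \cA\}$. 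Let $D$ be the uniform distribution on $\cA$; the algorithm may know $D$. Draw $\cB$ uniformly at random, i.e.\ let $(b_1,\dots,b_k) \in \{0,1\}^k$ be uniform with $b_i = 1$ exactly when $A_i \in \cB$, and set $\targetf = \rankf_{\mat_\cB}$. This is a matroid rank function, and by \Theorem{manymatroids} it satisfies $\targetf(A_i) = 8\log k = \Theta(\log n)$ when $b_i = 1$ and $\targetf(A_i) = n^{1/3}$ when $b_i = 0$; moreover every value of $\targetf$ lies in $\{0,1,\dots,n^{1/3}\}$ because of the truncation at $d = n^{1/3}$.

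Let $V$ denote the algorithm's entire view: its internal random coins together with the transcript of its interaction with the oracle (the adaptively chosen query points / sample requests and the answers received). Conditioned on the coins, the transcript is determined by the sequence of oracle responses, and each response is either a rank value in $\{0,\dots,n^{1/3}\}$ or, for an i.i.d.\ sample, a pair consisting of an index in $[k]$ and a bit; each such response carries $O(\log n)$ bits, so the transcript given the coins has entropy at most $\ell \cdot O(\log n) = \poly(n)$. Since the coins are independent of $\cB$, this yields $I(\cB ; V) \le \poly(n) \ll k$. By subadditivity of conditional entropy, $\sum_{i=1}^k \big(1 - H(b_i \mid V)\big) = k - \sum_{i} H(b_i \mid V) \le k - H(\cB \mid V) = I(\cB;V) \le \poly(n)$, and hence $\E_{i \sim [k]}\big[\,1 - H(b_i \mid V)\,\big] \le \poly(n)/k = o(1)$: for the typical index $i$, the posterior distribution of $b_i$ given the view is essentially uniform.

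To turn this into an error bound, note that the hypothesis $f$ output by $\ALG$ is a function of $V$, and that ``$f$ is correct on $A_i$'' means $f(A_i) \le \targetf(A_i) \le \alpha f(A_i)$. When $\alpha = o(n^{1/3}/\log n)$, the sets of admissible values of $f(A_i)$ in the two cases, namely $[\,n^{1/3}/\alpha,\ n^{1/3}\,]$ if $b_i = 0$ and $[\,8\log k / \alpha,\ 8\log k\,]$ if $b_i = 1$, are disjoint for all large $n$; consequently no choice of $f(A_i)$ can be correct for both values of $b_i$, so $\Pr[\,f \text{ correct on } A_i \mid V\,] \le \tfrac12 + \gamma_i$, where $\gamma_i$ is the bias of $b_i$ conditioned on $V$. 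The standard inequality relating the bias of a $\{0,1\}$-valued random variable to one minus its binary entropy, together with Jensen's inequality, gives $\E_V[\gamma_i] \le C\sqrt{\,1 - H(b_i \mid V)\,}$ for an absolute constant $C$. Averaging over $i$ and applying Jensen once more, the probability --- over $\cB$, the coins, and a fresh test point $A \sim D$ --- that $f$ is correct on $A$ equals $\E_V\E_{i}[\Pr[f \text{ correct on } A_i \mid V]] \le \tfrac12 + C\sqrt{\E_i[\,1 - H(b_i\mid V)\,]} = \tfrac12 + o(1)$. Therefore the expected $D$-measure of the set on which $f$ fails to approximate $\targetf$ within $\alpha$ is at least $\tfrac12 - o(1)$; since that measure is at most $1$, Markov's inequality applied to its complement gives $\Pr_{\cB, V}[\,\text{failure measure} \ge 1/4\,] \ge \tfrac13 - o(1)$, which is at least $1/4$ for all sufficiently large $n$, and averaging over $\cB$ produces a single target function $\targetf$ witnessing the theorem.

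The main obstacle, and the reason \Theorem{mainlb} does not already imply this, is exactly that with value queries the algorithm can probe arbitrary sets, so one can no longer argue ``a query at a point of $\cA$ reveals only that point's label'': a priori, a cleverly chosen query at a set assembled from many $A_i$'s might reveal the status of many of them at once. The information-theoretic argument circumvents this entirely, using only that each oracle response is a small integer, so the total information the algorithm extracts about $\cB$ is $\poly(n)$ regardless of which (adaptively chosen) sets are queried. The only genuinely routine parts are pinning down the entropy-versus-bias inequality and choosing the polynomial blow-up of $k$ so that the stated constants come out; no facts about the matroids $\mat_\cB$ beyond those supplied by \Theorem{manymatroids} are required.
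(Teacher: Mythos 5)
Your proof is correct, but it takes a genuinely different route from the paper's. The paper's argument is combinatorial: it first restricts to a deterministic algorithm making only (adaptive) value queries, observes that such an algorithm can emit at most $n^q$ distinct hypotheses, applies a Hoeffding bound to show that each fixed hypothesis fails badly for all but an $n^{-2q}$ fraction of random labelings $\cB$, takes a union bound over the $n^q$ hypotheses, and finally invokes Yao's minimax principle to transfer the hardness to randomized algorithms that may also draw i.i.d.\ samples. Your argument replaces this entire pipeline with a single information-theoretic bound: since every oracle response is an $O(\log n)$-bit object, the transcript (conditioned on the coins) has only $\poly(n)$ bits of entropy, so $I(\cB;V)\le\poly(n)\ll k$; subadditivity of conditional entropy then shows that the posterior on $b_i$ given the view is essentially uniform for the typical $i$, and a Pinsker/Jensen step converts that into a $\tfrac12+o(1)$ upper bound on per-point correctness. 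What each buys: the paper's route is elementary and uses nothing beyond Hoeffding and a union bound, but it needs the somewhat delicate Yao reduction (and the implicit step that a randomized algorithm with $D$-samples can be simulated by a deterministic query-only one, since $D$ is known). Your route handles randomization, adaptivity, and the mixture of i.i.d.\ samples with value queries in one stroke, and makes it transparent that the only relevant quantity is how many bits the oracle can reveal, not where the queries are placed; the price is that it invokes slightly heavier machinery (mutual information, the entropy-to-bias inequality, two uses of Jensen). Both yield the stated constants after the same final reverse-Markov step. The one place to be slightly careful in your write-up is the accounting for the i.i.d.\ sample responses: the sampled index is independent of $\cB$, so it is only the returned label (a single bit given the index is in $\cA$) that contributes to $I(\cB;V)$, which keeps the bound $\poly(n)$; your phrasing already works because you allow $O(\log n)$ bits per response, but it is worth noting this is slack rather than essential.
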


\Theorem{mainlb} is an information-theoretic hardness result.
A slight modification yields \Corollary{crypto}, which is a complexity-theoretic hardness result.

\newcommand{\cryptocor}{
    Suppose one-way functions exist.
    For any constant $\epsilon>0$,
    no algorithm can
    PMAC-learn the class of non-negative,
    monotone, submodular functions with approximation factor $O({n^{1/3-\epsilon}})$,
    even if the functions are given by polynomial-time algorithms
    computing their value on the support of the distribution.
}
\begin{corollary}
\CorollaryName{crypto}
\cryptocor
\end{corollary}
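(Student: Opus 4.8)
The plan is to turn the information‑theoretic hardness of \Theorem{mainlb} into computational hardness by choosing the hard target \emph{pseudorandomly} rather than uniformly at random. Fix the exponent slack $\epsilon>0$ and instantiate \Theorem{manymatroids} with $k=n^{\log n}$ (super‑polynomial, so that no polynomial‑time learner can sample a noticeable fraction of $\cA$): then $\card{\cA}=k$, every $A\in\cA$ has $\card{A}=n^{1/3}$, and for every $\cB\subseteq\cA$ the gap between the two rank values is $n^{1/3}/(8\log k)=n^{1/3}/(8\log^2 n)=\omega(n^{1/3-\epsilon})$. Since one‑way functions imply pseudorandom function families (and the input length $\log k=O(\log^2 n)$ is polynomially bounded), fix a pseudorandom family $\set{F_s:[k]\to\set{0,1}}$ with seed length and security parameter $\poly(n)$. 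For a seed $s$ put $\cB_s=\setst{A_i\in\cA}{F_s(i)=1}$ and let the target be $\targetf_s=\rankf_{\mat_{\cB_s}}$. Appending a short index gadget to each $A_i$ (at negligible additive cost to $n$ and to the gap) lets us assume $i$ is recoverable in polynomial time from $A_i$; hence on the support $\cA$ of the distribution $D$ (uniform on $\cA$) the function $\targetf_s$ is polynomial‑time computable — it equals $8\log k$ if $F_s(i)=1$ and $n^{1/3}$ otherwise — even though $\rankf_{\mat_{\cB_s}}$ on a general set is defined by $n^{\omega(1)}$ constraints and need not be efficiently computable. So each $\targetf_s$ is a non‑negative, monotone, submodular (indeed matroid rank) function of the kind allowed in \Corollary{crypto}.

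Next I would set up the reduction. Suppose some polynomial‑time algorithm $\ALG$ PMAC‑learns the class with approximation factor $\alpha=O(n^{1/3-\epsilon})$, and run it with accuracy and confidence parameters both set to a small absolute constant $c_0$. I claim $\ALG$ gives a distinguisher against $F$. Given oracle access to a function $O:[k]\to\set{0,1}$ (either truly random or $F_s$ for random $s$), the distinguisher simulates a PMAC interaction for the target implicitly defined by $O$: each labeled training example is produced by sampling a uniform index $i$, querying $O(i)$, and returning $(A_i,\,8\log k)$ if $O(i)=1$ and $(A_i,\,n^{1/3})$ otherwise; it runs $\ALG$ to obtain a hypothesis $f$; then it draws $\poly(n)$ fresh uniform test indices and, again using $O$ to compute the true values, estimates the fraction $\hat p$ of test sets $A$ for which $f(A)\le\targetf(A)\le\alpha f(A)$ fails. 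It outputs ``random'' if $\hat p$ exceeds a fixed constant threshold $\theta$ and ``pseudorandom'' otherwise. This runs in polynomial time precisely because every evaluation of $\targetf$ it needs is on a set in $\cA$.

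The two worlds are then separated. In the pseudorandom world $\targetf_s\in\cF$, so the PMAC guarantee yields, with probability $\ge 1-c_0$ over the sample, a good $\alpha$‑approximation on a $(1-c_0)$‑fraction of $D$; hence $\hat p\le c_0+o(1)$ with probability $\ge 1-c_0-o(1)$, and the distinguisher says ``random'' with probability at most $c_0+o(1)$. In the random world, conditioned on the training sample the value of $\targetf$ on every non‑training set in $\cA$ is still uniform on $\set{8\log k,\, n^{1/3}}$; since $\alpha<n^{1/3}/(8\log k)$, no single $f(A)$ is a valid $\alpha$‑approximation of both possible values, so $\ALG$ errs on each non‑training set with probability $1/2$, and exactly as in the displayed bound of the proof of \Theorem{mainlb} the expected fraction of bad points is at least $\tfrac12(1-\poly(n)/k)\ge 7/16$. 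A reverse‑Markov argument shows the fraction of bad points exceeds $\theta$ with constant probability, so $\hat p$ does too. Choosing $c_0$ and $\theta$ as suitable small constants makes the distinguishing advantage a positive constant, contradicting the pseudorandomness of $F$; hence no such $\ALG$ exists, proving \Corollary{crypto}.

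The main obstacle is reconciling efficiency with the matroid construction: the matroids $\mat_{\cB_s}$ are specified by super‑polynomially many constraints, so the natural rank oracle is not polynomial time, and the argument works only because the distinguisher never needs a value of $\targetf_s$ off the support $\cA$, where the value is just one of two numbers read off from $O$. This is exactly why \Corollary{crypto} speaks of functions ``given by polynomial‑time algorithms computing their value on the support of the distribution''. It is also where extending to the value‑query model of \Theorem{mainlbadaptive} would require extra care, since an adversarial $\ALG$ could query $\targetf_s$ outside $\cA$; one would then need either to argue such queries can be answered without loss of generality (they reveal nothing about $\cB_s$ beyond the on‑support values) or to replace the hard family by a variant whose rank function is globally polynomial‑time evaluatable.
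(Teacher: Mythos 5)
Your proposal takes essentially the same route as the paper's proof: both follow the Kearns--Valiant paradigm of using a pseudorandom Boolean function on the index set of $\cA$ to select $\cB\subseteq\cA$, and then argue that a polynomial-time PMAC learner with approximation factor below the $\rhigh/\rlow$ gap would yield an efficient predictor/distinguisher for that pseudorandom function. Your parameter choice $k=n^{\log n}$ in place of the paper's $k=2^{n^\epsilon}$, the fully spelled-out distinguisher, and the ``index gadget'' making $i\mapsto A_i$ efficiently invertible (which the paper leaves implicit when it claims the targets are polynomial-time computable on the support) are minor variants of the same argument rather than a different approach.
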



The proofs of \Theorem{mainlbadaptive} and \Corollary{crypto} are given in~\Appendix{mainlb}.
The lower bound in~\Corollary{crypto} gives a family of
submodular functions that are hard to learn, even though the functions
can be evaluated by polynomial-time algorithms on the \emph{support of the distribution}.
However we do not prove that the functions can be evaluated by polynomial-time algorithms
at \emph{arbitrary points}, and we leave it as an open question whether such a construction is
possible.

\subsection{An $O(\sqrt{\lowercase{n}})$-approximation Algorithm}
\SectionName{general-upper}

In this section we discuss  our most general upper bound for efficiently PMAC-learning the class of non-negative, monotone, submodular functions with with an approximation factor of $O(\sqrt{n})$.

We start with a useful structural lemma concerning submodular functions.
\begin{lemma}[Goemans et al.~\cite{nick09}]
\LemmaName{structure-monotone}
Let $f: 2^{[n]} \rightarrow \reals_{+}$ be a normalized, non-negative, monotone, submodular function.
Then there exists a function $\hf$ of the form $\hf(S) = \sqrt{ w \transpose \ind(S) }$
where $w \in \bR^n_+$
such that for all $S \subseteq [n]$ we have $$\hf(S) ~\leq~ f(S) ~\leq~ \sqrt{n} \hf(S).$$
\end{lemma}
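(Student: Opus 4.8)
The statement is Lemma~\ref{lem:structure-monotone}, attributed to Goemans et al.~\cite{nick09}, so the natural route is to reconstruct their argument: produce the weight vector $w$ as an optimal solution to a convex program whose feasible region is governed by $f$, and then read off the two-sided bound from the structure of that program. Concretely, I would consider the ``ellipsoid'' associated with the polymatroid of $f$: since $f$ is normalized, non-negative, monotone and submodular, the polytope $P_f = \{x \in \bR^n_{\ge 0} : x(S) \le f(S)\ \forall S\}$ is a polymatroid, and $f(S) = \max\{x(S) : x \in P_f\}$. The idea is to find the minimum-volume (or, more convenient here, the John) ellipsoid $E$ of the form $E = \{x : \sum_i x_i^2 / w_i \le 1\}$ — i.e.\ an axis-aligned ellipsoid — that contains an appropriate convex body derived from $f$, and to set $\hf(S) = \sqrt{w\transpose \ind(S)}$. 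The factor $\sqrt n$ then comes from the classical fact that the John ellipsoid of a convex body in $\bR^n$ approximates it within a factor $\sqrt n$ (or $n$, with a square root saving because $\hf$ is itself a square root).

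**Key steps, in order.** First I would set up the right convex body: for a monotone function the natural object is $B_f = \operatorname{conv}\{\,\ind(S)\cdot\sqrt{f(S)}/\sqrt{\card S}\, \}$ or, more cleanly, work with the level sets $\{x : f\text{-induced norm}(x)\le 1\}$; the cleanest formulation is to define, for the submodular $f$, the norm-like quantity and observe that $\sqrt{f(S)}$ is (up to the $\sqrt n$ slack) comparable to a Euclidean-type quantity $\sqrt{w\transpose\ind(S)}$. Second, invoke John's theorem to get axis-aligned $w$: one must check that the symmetry of the hypercube vertices $\ind(S)$ lets us take the containing/contained ellipsoids to be axis-aligned, so $w \in \bR^n_+$ rather than a general PSD matrix. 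Third, verify the lower bound $\hf(S) \le f(S)$: this should follow because $w$ is chosen feasibly, so $w\transpose \ind(S) \le f(S)$ pointwise for singletons and then extends by submodularity/monotonicity, or directly because the ellipsoid is inscribed. Fourth, verify the upper bound $f(S) \le \sqrt n\,\hf(S)$: this is where the John factor enters — the circumscribing ellipsoid is within $\sqrt n$ (before taking square roots, within $n$) of the body, giving $f(S) \le n\, (w\transpose\ind(S))$, hence $f(S) \le \sqrt n \cdot \sqrt{f(S)\cdot w\transpose\ind(S)}^{1/1}$... I would need to be careful here to get exactly $\sqrt n$ and not $n$; the square-root form of $\hf$ is precisely what converts the factor-$n$ John bound into a factor-$\sqrt n$ multiplicative guarantee on $f$ versus $\hf^2$, and then versus $\hf$.

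**The main obstacle.** The hard part is getting the bookkeeping exactly right so that the final factor is $\sqrt n$ and not $n$, and in particular choosing the convex body so that both inequalities $\hf \le f \le \sqrt n\,\hf$ hold simultaneously. The subtlety is that $f$ is not itself a norm or a square of a norm, so one cannot directly apply John's theorem to $f$; instead one applies it to a convex body canonically associated with $f$ (essentially the polymatroid, appropriately rescaled), and the translation between ``$f(S)$'' and ``the gauge of $\ind(S)$ in that body'' incurs losses that must be controlled. A secondary point to handle is ensuring $w$ has nonnegative entries and that the argument is genuinely existential (no efficiency claim is needed here — the algorithmic content is in \cite{nick09} and is used separately), which simplifies matters. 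Since the lemma is quoted verbatim from prior work, in the paper itself I would simply cite \cite{nick09} for the proof; if a self-contained argument were wanted, the John-ellipsoid-of-the-polymatroid approach above is the one I would flesh out.
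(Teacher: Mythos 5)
The paper does not reprove this lemma; it is quoted verbatim from Goemans et al.~\cite{nick09} and used as a black box, which is also how you say you would handle it in context.  Your reconstruction of the Goemans et al.\ argument has the right high-level shape (John ellipsoid of a convex body canonically associated with $f$, axis-alignment from symmetry), so in that sense you have identified the correct route, but a few of the specifics you were unsure about are worth straightening out.

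The correct body to use is the \emph{symmetrized polymatroid}
$\tilde P_f \defeq \setst{x \in \bR^n}{\smallsum{i \in S}{} \card{x_i} \le f(S) \ \forall S \subseteq [n]}$,
not a convex hull of the rescaled indicator vectors that you tentatively proposed.  This body is centrally symmetric and invariant under all coordinate sign flips, so by uniqueness the maximum-volume inscribed (John) ellipsoid inherits these symmetries and is therefore axis-aligned, i.e.\ of the form $E = \setst{x}{\smallsum{i}{} x_i^2/w_i \le 1}$ with $w \in \bR^n_+$.  The support function of $E$ in the direction $\ind(S)$ is exactly $\sqrt{w\transpose \ind(S)}$ (Cauchy--Schwarz), while the support function of $\tilde P_f$ in that direction is $f(S)$ by monotonicity.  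The containment $E \subseteq \tilde P_f$ gives $\sqrt{w\transpose\ind(S)} \le f(S)$, and $\tilde P_f \subseteq \sqrt n\, E$ gives $f(S) \le \sqrt n\, \sqrt{w\transpose\ind(S)}$.  On your worry about $\sqrt n$ versus $n$: John's theorem gives the factor $\sqrt n$ directly for \emph{centrally symmetric} bodies (the factor $n$ is for general convex bodies), and $\tilde P_f$ is centrally symmetric by construction, so no additional ``square-root saving'' maneuver is needed; the square root in $\hf$ is simply the support function of an axis-aligned ellipsoid evaluated at a $0/1$ vector, not a device for improving the John constant.
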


This result, proven by Goemans et al.~\cite{nick09},
follows from properties of submodular polyhedra and John's theorem on approximating centrally-symmetric
convex bodies by ellipsoids~\cite{John}.
We now use it in proving our main algorithmic result.
%
%

\begin{algorithm}
{\bf Input:} A sequence of labeled training examples $\cS= \left\{(S_1,\targetf(S_1)),
(S_2,\targetf(S_2)), \ldots (S_{\ell},\targetf(S_{\ell})) \right\}$.
\begin{itemize}
\item Let $\cSnz=\left\{(A_1,\targetf(A_1)), \ldots, (A_a,\targetf(A_a)) \right\}$ be the
subsequence of $\cS$ with $\targetf(A_i) \neq 0 ~~\forall i$.
Let $\cSz=\cS \setminus \cSnz$.
Let $\cU_{0}$ be the set of indices defined as
$$
\cU_{0} ~=~ \Union_{\substack{i \leq \ell \\ \targetf(S_i)=0}} \!\!\!\! {S_i}.
$$


\item For each $i \in [a]$,
let $y_i$ be the outcome of flipping a fair $\set{+1,-1}$-valued coin,
each coin flip independent of the others.
Let $x_i \in \bR^{n+1}$ be the point defined by
$$
x_i ~=~ \begin{cases}
\big(\: \ind(A_i) ,\, \targetf^2(A_i) \:\big)              &\quad\text{(if $y_i=+1$)} \\
\big(\: \ind(A_i) ,\, (n+1) \cdot \targetf^2(A_i) \:\big)  &\quad\text{(if $y_i=-1$)}.
\end{cases}
$$

\item  Find a linear separator $u = (w,-z) \in \bR^{n+1}$, where $w \in \bR^n$ and $z \in \bR$,
such that $u$ is consistent with the labeled examples $(x_i,y_i) \:~\forall i \in [a]$,
and with the additional constraint that $w_j=0 \:~\forall j \in \cU_{0}$.
\end{itemize}
{\bf Output:} The function $f$ defined as
$f(S) = \left(\frac{w \transpose \ind(S)}{(n+1) z} \right)^{1/2}$.

\caption{\: Algorithm for PMAC-learning the class of non-negative, monotone, submodular functions.
} \AlgorithmName{algsubmodular}
\end{algorithm}

\newcommand{\thmnmonotone}{
    Let $\cF$ be the class of non-negative, monotone, submodular
    functions over $\X=2^{[n]}$. There is an algorithm that PMAC-learns $\cF$ with
    approximation factor $\sqrt{n+1}$. That is, for any distribution $D$  over $\X$, for any
    $\epsilon$, $\delta$ sufficiently small, with probability $1-\delta$, the algorithm
    produces a function $f$ that approximates $\targetf$ within a multiplicative factor of
    $\sqrt{n+1}$ on a set of measure $1-\epsilon$ with respect to $D$. The algorithm uses $
    \ell = \frac{48n}{\epsilon} \log \left(\frac{9n}{\delta \epsilon} \right) $ training
    examples and runs in time $\poly(n,1/\epsilon,1/\delta)$.
}
\begin{theorem}
\TheoremName{n-monotone}
\thmnmonotone
\end{theorem}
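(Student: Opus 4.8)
The plan is to reduce the task of PMAC-learning a monotone submodular function to learning a linear separator in the standard PAC model, using the structural result of \Lemma{structure-monotone}. By that lemma, the unknown target $\targetf$ satisfies $\hf(S) \le \targetf(S) \le \sqrt{n}\,\hf(S)$ where $\hf(S) = \sqrt{w^{*\top}\ind(S)}$ for some unknown $w^* \in \bR^n_+$; equivalently $w^{*\top}\ind(S) \le \targetf^2(S) \le n\cdot w^{*\top}\ind(S)$. The idea is that a hyperplane through the origin in $\bR^{n+1}$ with normal vector roughly $(w^*, -1)$ separates the ``stretched'' points $(\ind(A_i),\,(n+1)\targetf^2(A_i))$ from the ``unstretched'' points $(\ind(A_i),\,\targetf^2(A_i))$: indeed $w^{*\top}\ind(A_i) - \targetf^2(A_i) \le 0$ always, while $(n+1)w^{*\top}\ind(A_i) - \targetf^2(A_i) \ge (n+1)w^{*\top}\ind(A_i) - n\,w^{*\top}\ind(A_i) = w^{*\top}\ind(A_i) \ge 0$ — and this last quantity is $\ge 1$ when $\targetf(A_i)\neq 0$, since then $w^{*\top}\ind(A_i) \ge \targetf^2(A_i)/n \ge 1/n$... here one must be a little careful, so I would instead argue that for a nonzero set $\hf(A_i)>0$ hence $w^{*\top}\ind(A_i)\ge 1$ using integrality/the structure of the Goemans et al.\ witness, giving a genuine margin. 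So the randomized labels $y_i$ in \Algorithm{algsubmodular} are realizable by a halfspace, and a consistent separator can be found by linear programming.

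First I would handle the zero set exactly as in the product-distribution algorithm: set $\cU_0 = \bigcup_{\targetf(S_i)=0} S_i$, note that monotonicity and submodularity give $\targetf(\cU_0)=0$ and hence $\targetf(T)=0$ for all $T\subseteq \cU_0$, and impose $w_j = 0$ for $j\in\cU_0$ so that the output $f$ correctly reports $0$ on the null subcube. The measure of zero-sets lying outside this subcube is bounded by $\epsilon$ with probability $1-\delta$ by the same ``the null subcube's dimension can increase at most $n$ times'' argument as in \Claim{handlezeros}, given $\ell = \Omega((n/\epsilon)\log(n/(\delta\epsilon)))$ samples. Second, on the nonzero part, I would run the separator argument: the VC dimension of halfspaces in $\bR^{n+1}$ is $n+2$, so by standard PAC sample-complexity bounds, with $\ell$ as in the statement and with probability $1-\delta$, the learned separator $u=(w,-z)$ agrees with the $\targetf$-$w^*$-induced classification rule on a $1-\epsilon$ fraction of $D$ restricted to nonzero sets. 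On such a ``good'' set $S$, consistency with \emph{both} a $y=+1$ and a $y=-1$ copy of the point $(\ind(S),\targetf^2(S))$ forces $w^\top\ind(S) \le (n+1)z\cdot\targetf^2(S)$ (from the $-1$ copy being on the correct side... I would set up the signs so this reads $\tfrac{1}{(n+1)z}w^\top\ind(S) \le \targetf^2(S)$) and $\targetf^2(S) \le \tfrac{n+1}{(n+1)z}w^\top\ind(S) = \tfrac{1}{z}w^\top\ind(S)$; equivalently $f(S) \le \targetf(S) \le \sqrt{n+1}\,f(S)$.

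The main obstacle — and the step I would spend the most care on — is the reduction to the \emph{realizable} (rather than agnostic) PAC setting, i.e.\ arguing that a halfspace consistent with the sampled $(x_i,y_i)$ in fact \emph{exists}, so that the algorithm does not fail and so that standard realizable-case generalization applies. This requires (a) the separation inequalities above to hold simultaneously for \emph{every} realization of the coins $y_i$ — which they do, since $(w^*,-1)$ (suitably scaled, and with the coordinates in $\cU_0$ zeroed out, which is fine because those coordinates of $w^*$ can be taken to be $0$ as $\targetf$ vanishes there) separates the two stretched copies for \emph{all} $i$ regardless of $y_i$ — and (b) a clean accounting of how the random labels interact with the generalization bound: one wants that for a fresh nonzero $S$, both its potential lifts are classified correctly, so a union bound over the two outcomes costs only a factor of $2$ in the failure probability, absorbed into the stated sample size. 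I would then combine the three error events — null-subcube failure ($\le \delta/3$), separator-generalization failure ($\le\delta/3$), and the empirical-average/Chernoff-type estimates controlling the nonzero mass ($\le\delta/3$) — by a union bound, and conclude that with probability $1-\delta$ the output $f$ satisfies $f(S)\le\targetf(S)\le\sqrt{n+1}\,f(S)$ on a $1-\epsilon$ fraction of $D$. Finally I would verify that $f$ is evaluatable in $\poly(n)$ time (it is a square root of a linear function) and that the LP in the middle step runs in $\poly(n,1/\epsilon,1/\delta)$ time, completing the proof.
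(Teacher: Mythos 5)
Your proposal follows essentially the same approach as the paper's proof: the same reduction via the two lifts $(\ind(S),\targetf^2(S))$ and $(\ind(S),(n+1)\targetf^2(S))$ to a realizable linear-separator problem, the same null-subcube treatment of the zero set with the constraint $w_j=0$ for $j\in\cU_0$, the same output hypothesis $f(S)=\bigl(\tfrac{1}{(n+1)z}w^\top\ind(S)\bigr)^{1/2}$, and the same VC-based generalization bound with a union bound over the three failure events.

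One algebraic point needs fixing. The scaling in the separating normal must lie strictly between $n$ and $n+1$: with your $u=((n+1)w^*,-1)$, the stretched lift gives $u^\top x = (n+1)\bigl(w^{*\top}\ind(S)-\targetf^2(S)\bigr)\le 0$ but not strictly negative when $\hf(S)=\targetf^2(S)$ exactly, so the LP with strict (or margin-normalized) constraints may be infeasible. The paper's choice $u=((n+\tfrac12)w^*,-1)$ yields $u^\top x \ge \hf(S)/2>0$ on the unstretched lift and $u^\top x\le -\targetf^2(S)/2<0$ on the stretched lift for every $S\in\cP$. Relatedly, your detour via ``integrality'' to conclude $w^{*\top}\ind(S)\ge 1$ is both unnecessary (strictness on $\cP$ already follows from $\targetf(S)>0$ once the scale is chosen correctly) and unjustified in general, as the Goemans et al.\ witness $w$ need not be integer-valued for an arbitrary real-valued monotone submodular $\targetf$. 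These are local slips; your overall argument and decomposition match the paper's.
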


\begin{proof}
As in \Theorem{productalg},
because of the multiplicative error allowed by the PMAC-learning model, we will
separately analyze the subset of the instance space where $\targetf$ is zero and the
subset of the instance space where $\targetf$ is non-zero. For convenience, let us
define:
$$
\cP = \setst{ S }{ \targetf(S) \neq 0 } \qquad\text{and}\qquad \cZ = \setst{ S }{
\targetf(S) = 0 }.
$$

The main idea of our algorithm is to reduce our learning problem to the standard problem
of learning a binary classifier (in fact, a linear separator) from i.i.d.\ samples in the
passive, supervised learning setting~\cite{KV:book94,Vapnik:book98} with a slight twist
in order to handle the points in $\cZ$. The problem of learning a linear separator in the
passive supervised learning setting is one where the instance space is $\bR^m$, the
samples are independently drawn from some fixed and unknown distribution $D'$ on $\bR^m$,
and there is a fixed but unknown target function $c^* : \bR^m \rightarrow \set{-1,+1}$
defined by $c^*(x) = \sgn(u \transpose x)$ for some vector $u \in \bR^m$.
The examples induced by $D'$ and $c^*$ are called \emph{linearly separable}.

The linear separator learning problem we reduce to is defined as follows. The instance
space is $\bR^m$ where $m=n+1$ and the distribution $D'$ is defined by the following
procedure for generating a sample from it. Repeatedly draw a sample $S \subseteq [n]$
from the distribution $D$ until $\targetf(S) \neq 0$. Next, flip a fair coin. The sample
from $D'$ is
\begin{equation}
\EquationName{coin}
\begin{array}{ll}
\big(\: \ind(S) ,\, \targetf(S)^2 \:\big) &\qquad\text{(if the coin is heads)} \\[2pt]
\big(\: \ind(S) ,\, (n+1) \cdot \targetf(S)^2 \:\big) &\qquad\text{(if the coin is tails).}
\end{array}
\end{equation}
The function $c^*$ defining the labels is as follows:
samples for which the coin was heads are labeled $+1$, and the others are labeled $-1$.

We claim that the distribution over labeled examples induced by $D'$ and $c^*$ is
linearly separable in $\bR^m$.
To prove this we use \Lemma{structure-monotone} which says that there exists a linear
function $\hat{f}(S) = w \transpose \ind(S)$ such that
\begin{equation}
\EquationName{LinearNApprox}
    \hat{f}(S) ~\leq~ \targetf(S)^2 ~\leq~ n \cdot \hat{f}(S)
        \qquad\forall S \subseteq [n].
\end{equation}
Let $u = \big(\, (n+1/2)\cdot w ,\, -1 \,\big) \in \bR^m$.
For any point $x$ in the support of $D'$ we have
\begin{align*}
x \:=\: \big(\, \ind(S) ,\, \targetf(S)^2 \,\big)
&\quad\implies\quad
u \transpose x \:=\: (n+1/2) \cdot \hat{f}(S) - \targetf(S)^2 \:>\: 0
\\
x \:=\: \big(\, \ind(S) ,\, (n+1)\cdot\targetf(S)^2 \,\big)
&\quad\implies\quad
u \transpose x \:=\: (n+1/2) \cdot \hat{f}(S) - (n+1)\cdot\targetf(S)^2 \:<\: 0.
\end{align*}
This proves the claim.

Moreover, due to \eqref{eq:LinearNApprox},
the linear function $\hat{f}$ also satisfies $\hat{f}(S)=0$ for every $S \in \cZ$.
In particular, every training example $S_i$ satisfies $\hat{f}(S_i)=0$ whenever $S_i \in \cZ$,
and moreover
$$
\hat{f}(\set{j})=w_j=0 \quad\forall j \in \cU_{D}
\qquad\text{ where }\qquad
\cU_D ~=~ \Union_{\substack{S_i \in \cZ }}{S_i}.
$$

Our algorithm is now as follows. It first partitions the training set $\cS=
\left\{(S_1,\targetf(S_1)), \ldots, (S_\ell,\targetf(S_\ell))\right\}$ into two sets
$\cSz$ and $\cSnz$, where $\cSz$ is the subsequence of $\cS$ with $\targetf(S_i) = 0$,
and $\cSnz=\cS \setminus \cSz$. For convenience, let us denote the sequence  $\cSnz$ as
$$
    \cSnz ~=~ \Big(\: \big(A_1,\targetf(A_1)\big), \ldots, \big(A_a, \targetf(A_a)\big) \:\Big).
$$
Note that $a$ is a random variable and we can think of the sets the $A_i$ as drawn
independently from $D$, conditioned on belonging to $\cP$. Let
$$
    \cU_{0} ~=~ \Union_{S_i \::\: \targetf(S_i)=0} S_i
    \qquad\text{ and }\qquad
    \cL_0 ~=~ \setst{ S }{ S \subseteq \cU_0 }.
$$

 Using $\cSnz$, the algorithm then constructs a sequence
$ \cSnz' = \big( (x_1,y_1),\ldots,(x_a,y_a) \big) $ of training examples for the binary
classification problem. For each $i \in [a]$, let $y_i$ be $-1$ or $1$, each with
probability $1/2$. Define $x_i$ as in \eqref{eq:coin}:
$$
x_i ~=~
\begin{cases}
\big(\: \ind(A_i) ,\, \targetf(A_i)^2 \:\big) &\qquad\text{(if $y_i = +1$)} \\
\big(\: \ind(A_i) ,\, (n+1) \cdot \targetf(A_i)^2 \:\big) &\qquad\text{(if $y_i = -1$)}.
\end{cases}
$$
The last step of our algorithm is to solve a
linear program in order to find a linear separator $u = (w,-z)$ where
$w \in \bR^n$, $z \in \bR$, and
\begin{itemize}
\item $u$ is consistent with the labeled examples $(x_i,y_i)$ for all $i=1,\ldots,a$, and
\item $w_j=0$ for all $j \in \cU_{0}$.
\end{itemize}
The output hypothesis is $f(S) = \left(\frac{w \transpose \ind(S)}{(n+1) z} \right)^{1/2}$.

To prove correctness, note first that the linear program is feasible; this follows from
our earlier discussion using
the facts that (1) $\cSnz'$ is a set of labeled examples drawn
from $D'$ and labeled by $c^*$, and (2) $\cU_{0} \subseteq \cU_D$. It remains to show
that $f$ approximates the target on most of the points. Let $\cY$ denote the set of
points $S \in \cP$ such that both of the points $(\ind(S), \targetf^2(S))$ and $(\ind(S),
(n+1)\cdot\targetf^2(S))$ are correctly labeled by $\sgn(u \transpose x)$, the linear
separator found by our algorithm. It is easy to see that the function $f$
approximates $\targetf$ to within a factor
$\sqrt{n+1}$ on all the points in the set $\cY$:
for any point $S \in \cY$, we have
\begin{gather*}
w \transpose \ind(S) - z \targetf(S)^2 > 0 \qquad\text{and}\qquad w \transpose \ind(S) - z
(n+1) \targetf(S)^2 < 0
\\
\implies\quad \left(\frac{w \transpose \ind(S)}{(n+1) z} \right)^{1/2}
    ~<~ \targetf(S)
    ~<~ \sqrt{n+1} \left(\frac{w \transpose \ind(S)}{(n+1) z} \right)^{1/2}.
\end{gather*}
So, for any point in $S \in \cY$, the function
$f(S) =\left(\frac{w \transpose \ind(S)}{(n+1) z} \right)^{1/2}$
approximates $\targetf$ to within a factor $\sqrt{n+1}$.

Moreover,  by design the function $f$ correctly labels as $0$ all the examples in
$\cL_0$. To finish the proof, we now note two important facts: for our choice of $\ell
= \frac{16n}{\epsilon} \log \left(\frac{n}{\delta \epsilon} \right)$, with high
probability both $\cP \setminus \cY$  and $\cZ \setminus \cL_0$ have small measure.
The fact that $\cZ \setminus \cL_0$ has small measure follows from an argument similar to the
one in \Claim{handlezeros}. We now prove:

\begin{claim}
If $\ell = \frac{16n}{\epsilon} \log \left(\frac{n}{\delta \epsilon} \right)$, then
 with
probability at least $1-2\delta$, the set $\cP \setminus \cY$ has measure at most
$2\epsilon$ under $D$.
\end{claim}
\begin{subproof}
Let $q = 1-p = \probover{S \sim D}{S \in \cP}$. If $q < \epsilon$ then the claim is
immediate, since $\cP$ has measure at most $\epsilon$. So assume that $q \geq \epsilon$.
Let $\mu = \expect{a} = q \ell$. By assumption $ \mu > 16n \log(n/\delta\epsilon)
\frac{q}{\epsilon}$. Then Chernoff bounds give that
\begin{align*}
\prob{ a < 8n \log (n/\delta\epsilon) \frac{q}{\epsilon} }
    ~<~ \exp(- n \log(n/\delta)q/\epsilon) ~<~ \delta.
\end{align*}
So with probability at least $1-\delta$, we have
    $a \geq 8n \log (qn/\delta\epsilon) \frac{q}{\epsilon}$.
By a standard sample complexity argument~\cite{Vapnik:book98}
(which we reproduce in \Theorem{VCbound} in Appendix \ref{useful-lemmas}),
with probability at least $1-\delta$, any linear separator consistent with $\cS'$ will
be inconsistent with the labels on a set of measure at most $\epsilon/q$ under $D'$.
In particular, this property holds for the linear separator computed by the linear program.
So for any set $S$,
the conditional probability that either
$(\ind(S), \targetf(S)^2)$ or $(\ind(S), (n+1)\cdot\targetf(S)^2)$ is incorrectly labeled,
given that $S \in \cP$, is at most $2 \epsilon / q$.
Thus
$$
\prob{ S \in \cP \And S \not \in \cY }
~=~ \prob{ S \in \cP } \cdot \probg{ S \not \in \cY }{ S \in \cP }
~\leq~ q \cdot (2\epsilon/q),
$$
as required.
\end{subproof}

In summary,
our algorithm produces a hypothesis $f$ that approximates
$\targetf$ to within a factor $n+1$ on the set $\cY \union \cL_\ell$.
The complement of this set is $(\cZ \setminus \cL_\ell) \union (\cP \setminus \cY)$,
which has measure at most $3 \epsilon$, with probability at least $1-3\delta$.
\end{proof}



\paragraph{Remark}
Our algorithm proving \Theorem{n-monotone} is significantly
simpler than the algorithm of Goemans et al.~\cite{nick09} which achieves
a slightly worse approximation factor in the model of
approximately learning everywhere with value queries.

\subsubsection{Extensions}
Our algorithm for learning submodular functions
is quite robust and can be extended to handle more general
scenarios, including forms of noise. In this section we discuss  several such extensions.

It is clear from the proofs of \Theorem{n-monotone} that
any improvements in the approximation factor for approximating submodular functions
by linear functions (i.e., \Lemma{structure-monotone})
for specific subclasses of submodular functions
yield PMAC-learning algorithms with improved approximation factors.

Next, let us consider the
more general case where we do not even assume that the target function is  submodular, but that it is within a factor $\alpha$ of a submodular
function on every point in the instance space. Under this relaxed assumption we are able
to achieve the approximation factor $\alpha \sqrt{n+1}$.
Specifically:

\begin{theorem}
Let $\cF$ be the class of non-negative, monotone,
submodular functions over $\X=2^{[n]}$ and let
$$
\cF' ~=~
\setst{ f }
{ \exists g \in \cF ,\: g(S) \leq f(S) \leq \alpha \cdot g(S)~~\text{for all}~S \subseteq [n]},
$$
for some known $\alpha>1$. There is an algorithm that PMAC-learns $\cF'$ with
approximation factor $\alpha  \sqrt{n+1}$.
The algorithm uses $ \ell ~=~ \frac{48n}{\epsilon} \log \left(\frac{9n}{\delta \epsilon}
\right) $ training examples and runs in time $\poly(n,1/\epsilon,1/\delta)$.
\end{theorem}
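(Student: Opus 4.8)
The plan is to re-run the proof of \Theorem{n-monotone} almost verbatim, changing only one numerical constant in the reduction to linear-separator learning, after first recording the structural fact that replaces \Lemma{structure-monotone} in this setting.

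First I would extract that structural fact. Fix $\targetf \in \cF'$ and a witness $g \in \cF$ with $g(S) \le \targetf(S) \le \alpha\, g(S)$ for all $S$. Applying \Lemma{structure-monotone} to $g$ gives $w \in \bR_+^n$ with $\sqrt{w^\top \chi(S)} \le g(S) \le \sqrt{n}\,\sqrt{w^\top \chi(S)}$, and chaining the two sandwiches and squaring yields $w^\top \chi(S) \le \targetf^2(S) \le \alpha^2 n \cdot w^\top \chi(S)$ for every $S \subseteq [n]$. In particular $\targetf^2$ is approximated on every point to within a factor $\alpha^2 n$ by a non-negative linear function, and $w^\top \chi(S) = 0$ whenever $\targetf(S)=0$, so $w_j = 0$ for every $j$ lying in a zero set of $\targetf$.

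Next I would run \Algorithm{algsubmodular} with the multiplier $n+1$ replaced throughout by $c := \alpha^2(n+1)$: for each non-zero training example $A_i$ set $x_i = (\chi(A_i),\targetf^2(A_i))$ when $y_i=+1$ and $x_i = (\chi(A_i), c\cdot \targetf^2(A_i))$ when $y_i=-1$, keep the constraint $w_j=0$ for $j \in \cU_0$, and output $f(S) = \big(\tfrac{1}{cz}\, w^\top \chi(S)\big)^{1/2}$. Linear separability of the induced labelled sample in $\bR^{n+1}$ is witnessed by $u = \big((c-\tfrac12)w,\,-1\big)$: on a $+1$-point, $u^\top x = (c-\tfrac12)w^\top\chi(S) - \targetf^2(S) \ge (\alpha^2-\tfrac12)\,w^\top\chi(S) > 0$ using $\targetf^2(S) \le \alpha^2 n\, w^\top\chi(S)$ and $\alpha>1$; on a $-1$-point, $u^\top x \le -\tfrac12 w^\top\chi(S) < 0$ using $\targetf^2(S) \ge w^\top\chi(S)$. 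Hence the linear program is feasible (this $u$ also satisfies $w_j=0$ on $\cU_0 \subseteq \cU_D$), and for any $S$ whose two lifted points are correctly classified by the learned separator $(w',-z)$ one gets $\tfrac{1}{cz}w'^\top\chi(S) < \targetf^2(S) < c\cdot \tfrac{1}{cz}w'^\top\chi(S)$, so $f$ approximates $\targetf$ within $\sqrt{c} = \alpha\sqrt{n+1}$ there.

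Finally, the two ``bad-measure'' bounds carry over unchanged. Since $\alpha<\infty$, $\targetf(S)=0 \iff g(S)=0$, and $g$ is non-negative, monotone and submodular, so $\cZ = \{S:\targetf(S)=0\}$ is downward-closed and union-closed; the null-subcube argument of \Claim{handlezeros} then bounds the measure of $\cZ \setminus \cL_\ell$ by $\epsilon$ with probability $1-\delta$. The bound on $\cP \setminus \cY$ is identical to that in \Theorem{n-monotone}, since we are again learning a linear separator in $\bR^{n+1}$, so the same VC/sample-complexity estimate (\Theorem{VCbound}) applies with the same $\ell = \frac{48n}{\epsilon}\log\frac{9n}{\delta\epsilon}$. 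Combining and rescaling $\epsilon,\delta$ gives the claim. I expect no real obstacle: the only points needing care are (i) that passing from $\cF$ to $\cF'$ preserves the combinatorial structure of the zero set — which holds precisely because a finite multiplicative gap creates no new zeros — and (ii) choosing $c$ so that both separation inequalities hold and the final factor is exactly $\alpha\sqrt{n+1}$, for which $c = \alpha^2(n+1)$ works by $\alpha>1$.
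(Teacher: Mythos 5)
Your proposal is correct and takes essentially the same route as the paper: chain the sandwich $g \le \targetf \le \alpha g$ with \Lemma{structure-monotone} applied to $g$ to get $w^\top\chi(S) \le \targetf^2(S) \le \alpha^2 n\, w^\top\chi(S)$, replace the lifting constant $n+1$ by $\alpha^2(n+1)$ in \Algorithm{algsubmodular}, verify linear separability, and carry over the null-subcube and VC arguments unchanged. The only cosmetic difference is your choice of witnessing separator $u = \big((\alpha^2(n+1)-\tfrac12)w,\,-1\big)$ versus the paper's $u = \big(\alpha^2(n+\tfrac12)w,\,-1\big)$; both lie in the valid range and yield the same conclusion. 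You also correctly supply the detail the paper leaves implicit, namely that $\cZ(\targetf)=\cZ(g)$ because the multiplicative sandwich with finite $\alpha$ creates no new zeros, so the zero set retains the downward-closed, union-closed structure needed for \Claim{handlezeros} even though $\targetf$ itself need not be monotone or submodular.
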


\begin{proof}
By assumption, there exists $g \in \cF$ such that $g(S) \leq \targetf(S) \leq
\alpha \cdot g(S)$. Combining this with~\Lemma{structure-monotone}, we get that there
exists $\hat{f}(S) = w \transpose \ind(S)$ such that
$$
w \transpose \ind(S) ~\leq~ \targetf^2(S) ~\leq~ n \cdot \alpha^2 \cdot w \transpose \ind(S)
~~~~~\text{for all}~S \subseteq [n].
$$
We then apply
the algorithm described in \Theorem{n-monotone}  with the following modifications:
$(1)$ in the
second step if $y_i=+1$ we set $x_i = (\ind(S), \targetf^2(S))$ and if $y_i=-1$  we set
$x_i = (\ind(S), \alpha^2 (n+1)  \cdot \targetf(S))$; $(2)$ we output the function $f(S) =
\left(\frac{1}{\alpha^2 (n+1) z} w \transpose \ind(S)\right)^{1/2}$.
It is then easy to show that the distribution over labeled examples induced
by $D'$ and $c^*$ is linearly separable in $\bR^{n+1}$; in particular, $u = (
{\alpha^2}(n+1/2)\cdot w, -1) \in \bR^{n+1}$ defines a good linear separator. The proof then
proceeds as in~\Theorem{n-monotone}.
\end{proof}

We can also extend the result in \Theorem{n-monotone} to the
{\em agnostic case} where we assume that there exists a submodular
function that agrees with the target on all but an $\eta$ fraction of the points; note
that on the $\eta$ fraction of the points the target can be arbitrarily far from a
submodular function.
In this case we can still PMAC-learn with a polynomial
number of samples $O(\frac{n}{\epsilon^2} \log \left(\frac{n}{\delta \epsilon} \right))$,
but using a potentially computationally inefficient procedure.

\begin{theorem}
Let $\cF$ be the class of non-negative, monotone,
submodular functions over $\X=2^{[n]}$. Let
$$\cF' ~=~ \setst{f }{ \exists g \in \cF \text{ s.t. }  f(S) = g(S)
~~\text{on more than $1-\eta$ fraction of the points} }.
$$
There is an algorithm that PMAC-learns $\cF'$ with approximation factor $ \sqrt{n+1}$. That is,
for any distribution $D$  over $\X$, for any $\epsilon$, $\delta$ sufficiently small, with
probability $1-\delta$, the algorithm produces a function $f$ that approximates
$\targetf$ within a multiplicative factor of $\sqrt{n+1}$ on a set of measure
 $1-\epsilon-\eta$ with respect to $D$. The algorithm uses
 $O(\frac{n}{\epsilon^2} \log \left(\frac{n}{\delta \epsilon} \right))$ training examples.
 \end{theorem}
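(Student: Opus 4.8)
The plan is to re-use the reduction from the proof of \Theorem{n-monotone} almost verbatim — lift each labeled example $(S,\targetf(S))$ with $\targetf(S)\neq 0$ to a pair of points in $\bR^{n+1}$ via an independent fair coin flip, and read a hypothesis off a linear separator of the lifted points — but to replace the \emph{realizable} analysis there by an \emph{agnostic} one. The extra $1/\epsilon$ factor in the sample complexity and the loss of computational efficiency are exactly the price of moving to the agnostic regime (agnostically learning halfspaces, or even monotone disjunctions, is $\mathsf{NP}$-hard).

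Concretely, fix the submodular $g \in \cF$ guaranteed by the definition of $\cF'$ and set $\mathcal{N} = \setst{S}{\targetf(S)\neq g(S)}$, so $D(\mathcal{N})\leq\eta$. By \Lemma{structure-monotone} applied to $g$ there is $w^*\in\bR^n_+$ with $w^{*\transpose}\ind(S) \leq g(S)^2 \leq n\cdot w^{*\transpose}\ind(S)$ for every $S$, and (exactly as in the proof of \Theorem{productalg}) the zero set $\cZ_g = \setst{S}{g(S)=0}$ is downward- and union-closed, hence equals $\setst{S}{S\subseteq U^*}$ for $U^* = \Union_{g(S)=0}S$. Let $h^*$ be the hypothesis that outputs $0$ on $\setst{S}{S\subseteq U^*}$ and $\sqrt{w^{*\transpose}\ind(S)}$ elsewhere (this is of the form produced by \Algorithm{algsubmodular}, with $z$ absorbing the normalization). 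One checks that $h^*$ is a valid $\sqrt{n+1}$-approximation of $\targetf$ at \emph{every} point $S\notin\mathcal{N}$: if $S\in\cZ_g$ both sides are $0$, and if $S\notin\cZ_g$ then $g(S)\neq 0$, so $w^{*\transpose}\ind(S)>0$ and $h^*(S)\leq g(S)=\targetf(S)\leq\sqrt n\, h^*(S)\leq\sqrt{n+1}\,h^*(S)$. Hence $h^*$ errs on a set of $D$-measure at most $\eta$.

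The algorithm is \Algorithm{algsubmodular} with its last step replaced by empirical risk minimization over the hypothesis family $\setst{h_{w,z,U}}{w\in\bR^n_+,\ z>0,\ U\subseteq[n]}$, i.e.\ pick $(w,z,U)$ maximizing the number of training examples $S_i$ at which $h_{w,z,U}$ is a valid $\sqrt{n+1}$-approximation of $\targetf$. The point is that, on a labeled pair $(S,\targetf(S))$, the event ``$h_{w,z,U}$ is a valid $\sqrt{n+1}$-approximation at $S$'' is a Boolean combination of a constant number of halfspace-membership tests (linear in $(w,z)$ once $(\ind(S),\targetf^2(S))$ is fixed) together with one subcube-membership test ($S\subseteq U$), so this family has VC dimension $O(n)$. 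Standard agnostic uniform-convergence bounds (e.g.\ \cite{AB99,Vapnik:book98}) then give that with $\ell = O\!\big(\tfrac{n}{\epsilon^2}\log\tfrac{n}{\delta\epsilon}\big)$ i.i.d.\ samples, with probability $\geq 1-\delta$ every hypothesis in the family has empirical error within $\epsilon$ of its $D$-error; the ERM hypothesis $\hat h$ therefore satisfies $\mathrm{err}_D(\hat h)\leq \mathrm{err}_D(h^*)+2\epsilon\leq \eta+2\epsilon$. Rescaling $\epsilon$ and plugging $\hat h$ into the PMAC definition (separating the $\cP$-contribution and the $\cZ$-contribution and passing from $D'$-error to $D$-measure exactly as in the sample-complexity argument inside the proof of \Theorem{n-monotone}) completes the proof.

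The main obstacle is the handling of the zeros of $\targetf$. In the realizable case one recovers the (subcube) zero set of $\targetf$ by the greedy elimination rule $\cU_0 = \Union_{\targetf(S_i)=0}S_i$ (as in \Claim{handlezeros}), but this rule is \emph{not} noise-tolerant: a single corrupted example with $\targetf(S_i)=0$ but $g(S_i)\neq 0$ can enlarge $\cU_0$, and hence the region on which the hypothesis is forced to output $0$, by an arbitrary amount, ruining correctness on a constant fraction of $\cP$ and putting $h^*$ outside the restricted class. The fix is precisely not to commit to $\cU_0$ greedily but to let the null subcube be part of the hypothesis class that ERM searches over — trading efficiency for robustness — which is also what raises the sample complexity to $\tilde O(n/\epsilon^2)$ and makes the procedure only ``potentially computationally inefficient''. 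Everything else is bookkeeping already present in \Theorem{n-monotone}.
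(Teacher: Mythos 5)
Your proof is correct and follows the same route as the paper's own (very terse) sketch: reduce as in \Theorem{n-monotone}, observe that in the agnostic setting the best hypothesis has error at most $\eta$, and invoke uniform convergence (\Theorem{vc-normal}) to get $O\!\big(\frac{n}{\epsilon^2}\log\frac{n}{\delta\epsilon}\big)$ samples, giving up computational efficiency because agnostic halfspace learning is NP-hard. What you handle more carefully than the paper is the zero subcube. The paper's sketch speaks only of "the best linear separator" having error $\eta$, which covers the halfspace component of the hypothesis but silently passes over the fact that the greedy rule $\cU_0=\bigcup_{\targetf(S_i)=0}S_i$ from \Theorem{n-monotone} is \emph{not} noise-tolerant: one corrupted example with $\targetf(S_i)=0$ but $g(S_i)>0$ can enlarge $\cU_0$ and force the output to be zero on a large region where $g$ (and $\targetf$) are positive, ruining the PMAC guarantee there. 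Your fix — making the subcube $U$ a parameter of the hypothesis class and running ERM jointly over $(w,z,U)$, with the VC-dimension accounting (Boolean combination of a halfspace test and a subcube test, each of VC dimension $O(n)$) keeping the sample bound intact — is exactly what is needed to make the agnostic argument actually close. One small stylistic note: once you commit to ERM over the full family $\{h_{w,z,U}\}$ against the direct "valid $\sqrt{n+1}$-approximation" loss, the coin-flip lifting you mention at the start is no longer doing any work; it was a device in \Theorem{n-monotone} to make the \emph{realizable} consistent hypothesis findable by a linear program, and in the agnostic/ERM formulation you can drop it.
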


\begin{proofsketch} The proof proceeds as in~\Theorem{n-monotone}.
The main difference is that in the new feature space $\bR^m$, the best linear separator has error
(fraction of mistakes) $\eta$. It is well known that even in the agnostic case the number
of samples needed to learn a separator of error at most $\eta+\epsilon$ is
$O(\frac{n}{\epsilon^2} \log \left(\frac{n}{\delta \epsilon} \right))$
(see~\Theorem{vc-normal} in Appendix~\ref{useful-lemmas}). However, it is NP-hard to
minimize the number of mistakes, even approximately \cite{RV06}, so the resulting
procedure uses a polynomial number of samples, but it is computationally inefficient.
\end{proofsketch}


\section{An Approximate Characterization of Matroid Rank Functions}
\SectionName{approxconv}

We now present an interesting structural result that is an application of the ideas in
\Section{special}.
The statement is quite surprising:
matroid rank functions are very well approximated by \emph{univariate}, concave functions.
The proof is also based on \Theorem{OURTALAGRAND}.
To motivate the result, consider the following easy construction of submodular functions,
which can be found in Lov\'asz's survey~\cite[pp.~251]{L83}

\begin{proposition}
Let $h : \bR \rightarrow \bR$ be concave.
Then $f : 2^\ground \rightarrow \bR$ defined by $f(S) = h(\card{S})$
is submodular.
\end{proposition}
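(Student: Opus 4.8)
The plan is to verify the diminishing--returns form of submodularity, \eqref{eq:submoddef1}: I would fix $S \subseteq T \subseteq \ground$ and an element $x \in \ground \setminus T$ and check that $f(S+x) - f(S) \geq f(T+x) - f(T)$. Writing $a := \card{S}$ and $b := \card{T}$, so that $a \leq b$, and noting that $x$ lies in neither set, this unwinds to
\[
h(a+1) - h(a) ~\geq~ h(b+1) - h(b).
\]
Thus the entire statement reduces to the purely one--dimensional fact that the discrete derivative $t \mapsto h(t+1) - h(t)$ of a concave function is non-increasing.

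To prove that one-dimensional fact I would invoke concavity directly. If $a = b$ there is nothing to show, so assume $a < b$ and set $\lambda := 1/(b-a+1) \in (0,1]$. A one-line computation shows $b = \lambda\, a + (1-\lambda)(b+1)$ and $a+1 = (1-\lambda)\, a + \lambda\, (b+1)$; the point of the symmetry $a + (b+1) = (a+1) + b$ is precisely that these two convex combinations use complementary weights. Concavity of $h$ then gives $h(b) \geq \lambda\, h(a) + (1-\lambda)\, h(b+1)$ and $h(a+1) \geq (1-\lambda)\, h(a) + \lambda\, h(b+1)$, and adding these two inequalities yields $h(a+1) + h(b) \geq h(a) + h(b+1)$, which rearranges to the claim.

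An essentially identical argument establishes the equivalent pair formulation \eqref{eq:submoddef2} for arbitrary $A, B \subseteq \ground$: setting $p := \card{A \cap B}$ and $q := \card{A \cup B}$ one has $\card{A} + \card{B} = p + q$ and $p \leq \card{A}, \card{B} \leq q$, so $\card{A}$ and $\card{B}$ are two points of the interval $[p,q]$ symmetric about its midpoint; writing each as the appropriate convex combination of $p$ and $q$ and applying concavity twice gives $h(\card{A}) + h(\card{B}) \geq h(p) + h(q) = h(\card{A\cup B}) + h(\card{A \cap B})$. I do not expect any genuine obstacle here: the only point requiring care is pinning down the correct convex--combination coefficients, and the symmetry identity $\card{A} + \card{B} = \card{A \cap B} + \card{A \cup B}$ (equivalently $a + (b+1) = (a+1) + b$) makes that bookkeeping transparent.
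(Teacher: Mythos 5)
Your proof is correct. The key observation — that submodularity of $f(S)=h(\card{S})$ reduces to the one-dimensional inequality $h(a+1)-h(a)\geq h(b+1)-h(b)$ for $a\leq b$, and that this follows by writing $a+1$ and $b$ as complementary convex combinations of $a$ and $b+1$ and applying concavity twice — is exactly right, and your coefficient $\lambda=1/(b-a+1)$ checks out. The alternative route via the $A,B$ formulation using $\card{A}+\card{B}=\card{A\cap B}+\card{A\cup B}$ and $\card{A\cap B}\leq\card{A},\card{B}\leq\card{A\cup B}$ is likewise correct and, if anything, slightly cleaner since it avoids the redundant case split on whether $x$ belongs to $T$.

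For comparison: the paper does not actually supply a proof of this proposition — it is stated as a known fact and attributed to Lov\'asz's survey (\cite[pp.~251]{L83}). So there is nothing in the paper to match your argument against; yours is a self-contained verification of something the paper takes for granted. One small thing worth noting: the paper's displayed definition \eqref{eq:submoddef1} writes $\forall\, x\in\ground$ rather than $\forall\, x\in\ground\setminus T$, and \eqref{eq:submoddef2} is written with $A\subseteq B$ rather than arbitrary $A,B$; both are minor typos in the paper (the latter would make \eqref{eq:submoddef2} a tautology). You correctly use the standard versions — restricting to $x\notin T$ in the marginal form, and taking $A,B$ arbitrary in the lattice form — and under those standard definitions your argument is complete.
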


Surprisingly, we now show that a partial converse is true.

\newcommand{\charthmst}{
    There is an absolute constant $c > 1$ such that the following is true.
    Let $f : 2^\ground \rightarrow \bZ_+$ be the rank function of a matroid with no loops,
    i.e., $f(S) \geq 1$ whenever $S \neq \emptyset$.
    Fix any $\epsilon > 0$, sufficiently small.
    There exists a concave function $h : [0,n] \rightarrow \bR$ such that,
    for \textbf{every} $k \in [n]$,
    and for a $1-\epsilon$ fraction of the sets $S \in \binom{\ground}{k}$,
    \[
     h(k) / (c \log(1/\epsilon)) ~\leq~ f(S) ~\leq c  \log(1/\epsilon)  h(k).
    \]
}
\begin{theorem}
\TheoremName{character}
\charthmst
\end{theorem}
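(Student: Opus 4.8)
The plan is to take $h$ to be the diagonal restriction of the multilinear extension of $f$, and then to show that on each slice $\binom{\ground}{k}$ the value $f(S)$ is tightly concentrated around $h(k)$ by combining the submodular concentration bound \Theorem{OURTALAGRAND} (equivalently \Corollary{rank-concentr}) with a transfer from a product distribution to the uniform slice.

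First I would fix the definition of $h$. Let $F:[0,1]^n\to\bR$ be the multilinear extension of $f$, i.e.\ $F(x)=\expect{f(R(x))}$ where $R(x)$ contains each $i\in\ground$ independently with probability $x_i$, and set $h(k):=F\big((k/n)\cdot\mathbf{1}\big)=\expect{f(X_{k/n})}$, where $X_p$ denotes the product distribution with all marginals equal to $p$. This single function serves all $k$ at once, and it is concave: it is standard \cite{CCPV} that $\partial^2 F/\partial x_i^2=0$ and, for $i\ne j$, $\partial^2 F/\partial x_i\partial x_j=\expect{f(R+i+j)-f(R+i)-f(R+j)+f(R)}\le 0$ by submodularity, so $h''(t)=n^{-2}\sum_{i\ne j}\partial^2 F/\partial x_i\partial x_j\le 0$ on $[0,n]$. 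Moreover, since $f$ is a loopless matroid rank function we have $f(S)\ge 1$ for every nonempty $S$, whence $h(k)\ge\prob{X_{k/n}\ne\emptyset}=1-(1-k/n)^n\ge 1-1/e$ for every $k\in[n]$; in particular $h(k)$ is bounded below by an absolute constant, which will make the lower bound trivial whenever $h(k)$ is small.

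Next, fix $k\in[n]$, write $X=X_{k/n}$, $\mu=h(k)=\expect{f(X)}$, and a slack parameter $\delta$ to be chosen. I would show that with probability $\ge1-\delta$ the value $f(X)$ lies in $[\,\mu/(c'\log(1/\delta)),\,c'\log(1/\delta)\cdot\mu\,]$. The upper tail follows from \Theorem{OURTALAGRAND} with $b$ chosen so that $b-t\sqrt{b}$ equals (roughly) a median of $f(X)$, giving $\prob{f(X)\ge C(\mu+\log(1/\delta))}\le\delta$; since $\mu\ge 1-1/e$ this quantity is $O(\log(1/\delta))\cdot\mu$. The lower tail follows from \Corollary{rank-concentr} (or the lower-tail form of \Theorem{OURTALAGRAND}) when $\mu\gtrsim\log(1/\delta)$, whereas in the complementary regime $\mu\lesssim\log(1/\delta)$ one just uses $f(S)\ge1$ on the slice, so $\mu/(c'\log(1/\delta))\le 1\le f(S)$ automatically. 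To move this from the product distribution to the uniform measure on $\binom{\ground}{k}$, I would condition on $\{|X|=k\}$: the conditional law of $X$ given $|X|=k$ is exactly uniform on $\binom{\ground}{k}$, and $\prob{|X|=k}=\Omega(1/\sqrt{\min(k,n-k)})$ by a Stirling/local-CLT estimate for a binomial at its mean, so any failure event of probability $\le\delta$ under $X$ has conditional probability $O(\sqrt{n}\,\delta)$; choosing $\delta$ appropriately makes this $\le\epsilon$, and $\expectover{X}{f(X)\mid|X|=k}$ relates $\mu$ to $h(k)$ exactly since the two coincide by definition.

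The step I expect to be the main obstacle is exactly this last transfer together with the parameter bookkeeping: conditioning on $|X|=k$ costs a $\mathrm{poly}(n)$ factor in the failure probability, which must be absorbed into the logarithmic approximation factor, and the intermediate range of $\mu$ — too small for the multiplicative concentration of \Corollary{rank-concentr} (which requires $\expect{f(X)}\ge 240/\alpha$) yet not small enough for the $f\ge1$ shortcut — has to be covered by the one-sided Talagrand tail of \Theorem{OURTALAGRAND}. Making these two regimes dovetail cleanly with the threshold of \Corollary{rank-concentr}, and verifying that the error introduced by conditioning genuinely folds into a $\log(1/\epsilon)$-type factor, is where the real work lies; the concavity of $h$ and the reduction to a product distribution are comparatively routine.
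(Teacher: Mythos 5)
Your high-level plan — define $h$ via the diagonal restriction of the multilinear extension, prove concavity from the second-order conditions on $F$, use the loopless hypothesis to get $h(k)\ge 1-1/e$ and to short-circuit the lower tail when $h(k)$ is small, and transfer concentration from the product distribution $R(k/n)$ to the uniform slice — is exactly the paper's plan in broad outline. Concavity and the $h(k)\ge 1-1/e$ bound are handled the same way.

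However, there is a genuine gap in the transfer step, precisely where you flagged ``the real work.'' You propose conditioning on the event $\{|X|=k\}$, whose probability is $\Theta(1/\sqrt{\min(k,n-k)})$; this inflates any failure probability by an $\Omega(\sqrt{n})$ factor in the worst case, so to guarantee conditional failure probability $\le\epsilon$ you must run the Talagrand/Chernoff-style tail at level $\delta=\Theta(\epsilon/\sqrt{n})$. But then the resulting approximation factor is $\Theta(\log(1/\delta))=\Theta(\log n+\log(1/\epsilon))$, which is \emph{not} the $n$-independent $c\log(1/\epsilon)$ asserted by the theorem. Choosing $\delta$ ``appropriately'' does not make the $\sqrt{n}$ loss fold into a $\log(1/\epsilon)$ factor; it imports a $\log n$ term that the statement rules out. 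The paper sidesteps this by never conditioning on the exact slice. Instead it proves (Lemma~\ref{lem:gpmonotone}) that $g'_\tau(k)=\prob{f(S(k))>\tau}$ is monotone nondecreasing in $k$, and then (Lemma~\ref{lem:double}) writes $g_\tau(k/n)=\sum_i g'_\tau(i)\prob{|R(k/n)|=i}\ge g'_\tau(k)\prob{|R(k/n)|\ge k}\ge g'_\tau(k)/2$, using that the mean is a median of the binomial. This conditions on the \emph{one-sided} event $\{|R|\ge k\}$ (resp.\ $\{|R|\le k\}$ for the lower tail), which has probability $\ge 1/2$, so the transfer loses only a factor of $2$ rather than $\sqrt{n}$. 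Without some such monotonicity argument, your approach proves only an $O(\log(n/\epsilon))$ version of the theorem.
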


The idea behind this theorem is as follows.
For $x \in [0,n]$, we define $h(x)$ to be the expected value of $f$
under the product distribution which samples each element independently with probability $x/n$.
The value of $f$ under this distribution is tightly concentrated around $h(x)$,
by the results of \Section{subsec-concentr} and \Section{special}.
For any $k \in [n]$, the distribution defining $h(k)$ is very similar to
the uniform distribution on sets of size $k$,
so $f$ is also tightly concentrated under the latter distribution.
So the value of $f$ for most sets of size $k$ is roughly $h(k)$.
The concavity of this function $h$ is a consequence of submodularity of $f$.


\comment{
\textbf{Note:} This says that we get an $O(\log^2(1/\epsilon))$-approximation.
The analysis here is very sloppy and it can probably be improved to
a $O(\log(1/\epsilon))$-approximation.
As in \Theorem{productalg}, if $h(k)$ is sufficiently large we get a constant-factor approximation.
}

Henceforth, we will use the following notation.
For $p \in [0,1]$, let $R(p) \subseteq \ground$ denote the random variable obtained
by choosing each element of $\ground$ independently with probability $p$.
For $k \in [n]$, let $S(k) \subseteq \ground$ denote a set of cardinality $k$
chosen uniformly at random.
Define the function $h' : [0,1] \rightarrow \bR$ by
$$
h'(p) ~=~ \expect{ f(R(p)) }.
$$
For any $\tau \in \bR$, define the functions
$g_\tau : [0,1] \rightarrow \bR$ and $g'_\tau : [n] \rightarrow \bR$ by
\begin{align*}
g_\tau(p)  &~=~ \prob{ f(R(p)) > \tau } \\
g'_\tau(k) &~=~ \prob{ f(S(k)) > \tau }.
\end{align*}
Finally, let us introduce the notation $X \cong Y$ to denote that random variables
$X$ and $Y$ are identically distributed.

\begin{lemma}
$h'$ is concave.
\end{lemma}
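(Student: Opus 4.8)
The plan is to recognize $h'$ as the restriction of the multilinear extension of $f$ to the diagonal ray, and to deduce its concavity from the diminishing-returns form of submodularity. Write $h'(p) = \expect{f(R(p))} = \sum_{S \subseteq \ground} f(S)\, p^{\card{S}} (1-p)^{n-\card{S}}$; this is a polynomial in $p$, hence smooth, so it suffices to prove that $(h')'$ is non-increasing on $[0,1]$.

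First I would record a formula for the derivative. For $i \in \ground$, let $R_{-i}(p)$ denote the random subset of $\ground \setminus \set{i}$ that includes each element independently with probability $p$, and set $\psi_i(p) \defeq \expect{ f(R_{-i}(p)+i) - f(R_{-i}(p)) }$, the expected marginal value of $i$. Using that the multilinear extension $F$ of $f$ \cite{CCPV} is affine in each coordinate, so that $\partial F / \partial x_i = \expect{ f(R_{-i}+i) } - \expect{ f(R_{-i}) }$, and that $h'(p) = F(p,\ldots,p)$, the chain rule gives $(h')'(p) = \sum_{i \in \ground} \psi_i(p)$. (Alternatively this identity can be checked directly from the binomial expansion above.)

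Next I would show that each $\psi_i$ is non-increasing, which immediately finishes the proof, since a sum of non-increasing functions is non-increasing. Fix $p \le q$ and use the standard monotone coupling: draw i.i.d.\ uniform variables $U_j \in [0,1]$ for $j \in \ground \setminus \set{i}$ and put $R_{-i}(r) = \setst{ j }{ U_j < r }$, so that $R_{-i}(p) \subseteq R_{-i}(q)$ pointwise. Submodularity of $f$ (see \eqref{eq:submoddef1}) states exactly that the marginal value of $i$ cannot increase when the base set is enlarged: $f(T+i) - f(T) \le f(S+i) - f(S)$ whenever $S \subseteq T$. Applying this pointwise with $S = R_{-i}(p)$ and $T = R_{-i}(q)$ and taking expectations yields $\psi_i(q) \le \psi_i(p)$, as desired.

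The argument has no real obstacle: monotonicity and non-negativity of $f$ play no role, only submodularity is used, and the only slightly delicate point is the derivative identity $(h')'(p) = \sum_i \psi_i(p)$, which rests on the elementary fact that the multilinear extension is affine in each coordinate. An essentially equivalent phrasing avoids differentiation entirely: $h'(p) = F(p\mathbf{1})$, and for submodular $f$ the off-diagonal entries of the Hessian of $F$ are non-positive while the diagonal entries vanish, so $F$ is concave along the all-ones direction; but since $h'$ is a univariate polynomial, the coupling computation above is the most self-contained route.
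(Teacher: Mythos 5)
Your proof is correct and takes essentially the same route as the paper: both identify $h'(p) = F(p,\ldots,p)$ via the multilinear extension and reduce concavity to non-positivity of the directional second derivative along the diagonal. The paper simply cites the known fact from \cite{CCPV} that $\partial^2 F / \partial y_i \partial y_j \leq 0$ for all $i,j$ and invokes the chain rule, whereas you unpack the needed consequence directly: you write $(h')'(p) = \sum_i \psi_i(p)$ for the expected marginal values $\psi_i$, and show each $\psi_i$ is non-increasing by the monotone coupling $R_{-i}(p) \subseteq R_{-i}(q)$ combined with the diminishing-returns form of submodularity. This makes the argument self-contained and avoids an appeal to the Hessian computation, at the minor cost of a slightly longer derivation; it proves exactly what is needed (non-positive second derivative along the diagonal) rather than the stronger entrywise statement. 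Both observations that only submodularity is used and that the alternative Hessian phrasing is essentially the paper's proof are accurate.
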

\begin{proof}
One way to prove this is by appealing to the \newterm{multilinear extension} of $f$,
which has been of great value in recent work \cite{CCPV}.
This is the function $F : [0,1]^{[n]} \rightarrow \bR$
defined by $F(y) = \expect{ f(\hat{y}) }$,
where $\hat{y} \in \set{0,1}^{[n]}$ is a random variable obtained by
independently setting $\hat{y}_i = 1$ with probability $y_i$, and $\hat{y}_i = 0$ otherwise.
Then $h'(p) = F(p,\ldots,p)$.
It is known \cite{CCPV} that
$\frac{\partial^2 F}{\partial y_i \partial y_j} \leq 0$ for all $i,j$.
By basic calculus, this implies that the second derivative of $h'$ is non-positive,
and hence $h'$ is concave.
\end{proof}

\comment{
\begin{proof}
Fix scalars
$$
0 \leq \delta \leq \alpha \leq \beta \leq \gamma \leq 1
\qquad\text{s.t.}\qquad
\alpha+\beta-\delta=\gamma.
$$
We will show that
\begin{equation}
\EquationName{concave}
h'(\alpha) + h'(\beta) ~\geq~ h'(\gamma) + h'(\delta),
\end{equation}
which implies that $h'$ is concave.
(The implication would hold even if we required that $\alpha=\beta=(\gamma+\delta)/2$.)

To this end, consider the following process for generating sets.
Pick each element of $V$ independently with probability $\alpha$
and call the resulting set $A$.
Now construct a new set by removing each element of $A$ independently with probability $\delta/\alpha$
and adding each element not in $A$ independently with probability $(\beta-\delta)/(1-\alpha)$.
Call the resulting set $B$.
One may easily verify that
$$
A \cong R(\alpha) \qquad\qquad
B \cong R(\beta) \qquad\qquad
A \intersect B \cong R(\delta) \qquad\qquad
A \union B \cong R(\gamma).
$$
Submodularity implies
\begin{align*}
f(A) + f(B) &~\geq~ f(A\union B) + f(A \intersect B) \\
\implies
\expect{ f(A) } + \expect{ f(B) }
    &~\geq~ \expect{ f(A\union B) } + \expect{ f(A \intersect B) } \\
\implies
\expect{ f(R(\alpha)) } + \expect{ f(R(\beta)) }
    &~\geq~ \expect{ f(R(\gamma)) } + \expect{ f(R(\delta)) },
\end{align*}
which implies \Equation{concave}.
\end{proof}
}

\begin{lemma}
\LemmaName{gpmonotone}
$g'_\tau$ is a monotone function.
\end{lemma}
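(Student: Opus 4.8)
The plan is to prove the statement by a simple coupling argument, showing in fact that $g'_\tau$ is monotone nondecreasing. The key idea is to realize all the random sets $S(1), S(2), \ldots, S(n)$ simultaneously on a common probability space in a nested fashion. Concretely, I would draw a uniformly random permutation $\pi$ of the ground set $\ground$, and for each $j \in [n]$ let $T_j \subseteq \ground$ be the set consisting of the first $j$ elements in the order given by $\pi$. Then by construction $T_1 \subseteq T_2 \subseteq \cdots \subseteq T_n = \ground$, and a short symmetry/counting argument shows that for each fixed $j$ the set $T_j$ is uniformly distributed over $\binom{\ground}{j}$: every $j$-element subset is equally likely to be the length-$j$ prefix of a uniformly random permutation. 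In particular $T_k \cong S(k)$ and $T_{k+1} \cong S(k+1)$.

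Next I would invoke monotonicity of the rank function $f$. Since $f$ is a matroid rank function it is monotone (as recorded in \Section{matroidbackgr}), so $T_k \subseteq T_{k+1}$ implies $f(T_k) \leq f(T_{k+1})$ with probability one. Consequently the event $\{ f(T_k) > \tau \}$ is contained in the event $\{ f(T_{k+1}) > \tau \}$, whence $\prob{ f(T_k) > \tau } \leq \prob{ f(T_{k+1}) > \tau }$. Translating back through the identities in distribution $T_k \cong S(k)$ and $T_{k+1} \cong S(k+1)$ gives $g'_\tau(k) = \prob{ f(S(k)) > \tau } \leq \prob{ f(S(k+1)) > \tau } = g'_\tau(k+1)$. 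Since $k \in [n-1]$ was arbitrary, $g'_\tau$ is monotone.

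There is essentially no serious obstacle here; the only points requiring (routine) justification are that the length-$j$ prefix of a uniformly random permutation is uniform over $\binom{\ground}{j}$, and that monotonicity of $f$ is available, both of which are immediate. The proof would be a couple of lines. Note that the same coupling also handles the companion lemma about $g_\tau$ on product distributions, since one can couple $R(p)$ and $R(p')$ for $p \leq p'$ monotonically as well.
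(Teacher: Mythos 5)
Your proof is correct and uses essentially the same idea as the paper's: a monotone coupling of $S(k)$ and $S(k+1)$ combined with monotonicity of $f$. The paper realizes the coupling more minimally — draw $S = S(k)$ and add a uniformly random element of $V \setminus S$ to get $T \cong S(k+1)$ — whereas you build the global coupling via prefixes of a random permutation, but the argument is otherwise identical.
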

\begin{proof}
Fix $k \in [n-1]$ arbitrarily.
Pick a set $S = S(k)$.
Construct a new set $T$ by adding to $S$ a uniformly chosen element of $V \setminus S$.
By monotonicity of $f$ we have $f(S) > \tau \implies f(T) > \tau$.
Thus $\prob{ f(S) > \tau } \leq \prob{ f(T) > \tau }$.
Since $T \cong S(k+1)$, this implies that $g_\tau(k) \leq g_\tau(k+1)$, as required.
\end{proof}

\begin{lemma}
\LemmaName{double}
$g'_\tau(k) \leq 2 \cdot g_\tau(k/n)$,
for all $\tau \in \bR$ and $k \in [n]$.
\end{lemma}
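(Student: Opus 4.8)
The plan is to prove \Lemma{double} by conditioning the product‑distribution experiment defining $R(k/n)$ on the size of the resulting set, and then combining the monotonicity of $g'_\tau$ (\Lemma{gpmonotone}) with the fact that $k$ is a median of $\card{R(k/n)}$.

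First I would observe that the product distribution defining $R(k/n)$, conditioned on the event $\card{R(k/n)} = m$, is exactly the uniform distribution over the $m$-subsets of $\ground$; in our notation, $(R(k/n) \mid \card{R(k/n)} = m) \cong S(m)$. Conditioning on $\card{R(k/n)}$ and using this, together with the definition of $g_\tau$ and $g'_\tau$, gives
\[
g_\tau(k/n) ~=~ \prob{ f(R(k/n)) > \tau } ~=~ \sum_{m=0}^{n} \prob{\card{R(k/n)} = m} \cdot g'_\tau(m).
\]
Next I would discard the terms with $m < k$ and apply \Lemma{gpmonotone}, which gives $g'_\tau(m) \geq g'_\tau(k)$ for every $m \geq k$, to obtain
\[
g_\tau(k/n) ~\geq~ g'_\tau(k) \cdot \sum_{m \geq k} \prob{\card{R(k/n)} = m} ~=~ g'_\tau(k) \cdot \prob{\card{R(k/n)} \geq k}.
\]

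Finally, $\card{R(k/n)}$ is a sum of $n$ independent Bernoulli$(k/n)$ random variables, hence binomially distributed with integer mean exactly $k$; it is a classical fact that a binomial random variable with integer mean has that mean as a median, so $\prob{\card{R(k/n)} \geq k} \geq 1/2$. Substituting, $g_\tau(k/n) \geq \frac{1}{2}\, g'_\tau(k)$, which rearranges to the claim.

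The argument is short and I do not expect a genuine obstacle: the only ingredient needing care is the cited fact that an integer‑mean binomial has its mean as a median (which can be quoted or re‑derived in a line), and keeping the direction of the inequality straight when truncating the sum to $m \geq k$. The boundary case $k = n$ is consistent since then $R(1) = \ground$ and $\prob{\card{R(1)} \geq n} = 1$.
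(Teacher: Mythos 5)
Your proof is correct and follows essentially the same route as the paper: condition on $\card{R(k/n)}$, use that this conditional law is $S(m)$, apply the monotonicity of $g'_\tau$ to drop the terms with $m<k$, and invoke the integer-mean-is-median fact for the binomial to bound $\prob{\card{R(k/n)}\geq k}$ by $1/2$.
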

\begin{proof}
This lemma is reminiscent of a well-known property of the Poisson approximation
\cite[Theorem 5.10]{MitzenmacherUpfal}, and the proof is also similar.
Let $p=k/n$. Then
\begin{align*}
g_\tau(p)
    &~=~ \prob{ f(R(p)) > \tau } \\
    &~=~ \sum_{i=0}^n \probg{ f(R(p)) > \tau }{ \card{R(p)}=i } \cdot \prob{ \card{R(p)}=i } \\
    &~=~ \sum_{i=0}^n g'_\tau(i) \cdot \prob{ \card{R(p)}=i } \\
    &~\geq~ \sum_{i=k}^n g'_\tau(k) \cdot \prob{ \card{R(p)}=i }
        \qquad\text{(by \Lemma{gpmonotone})}\\
    &~=~ g'_\tau(k) \cdot \prob{ \card{R(p)} \geq k } \\
    &~\geq~ g'_\tau(k) / 2,
\end{align*}
since the mean $k$ of the binomial distribution $B(n,k/n)$ is also a median.
\end{proof}

\begin{proofof}{\Theorem{character}}
For $x \in [0,n]$, define $h(x) = h'(x/n) = \expect{ f(R(x/n)) }$.
Fix $k \in [n]$ arbitrarily.
We use the same constants $L < M < H$ and $K$ as in \Section{special}.

\vspace{12pt}
\noindent\textit{Case 1.}
Suppose that $h(k) \geq L \log(1/\epsilon)$.
As argued in \Equation{talagLargeRank}, since $L \geq 4000$ and $\epsilon \leq 1/2$ we have
$$
    \prob{ f(R(k/n)) < \frac{1}{3} h(k) } ~\leq~ \epsilon
    \qquad\text{and}\qquad
    \prob{ f(R(k/n)) > \frac{5}{3} h(k)} ~\leq~ \epsilon.
$$
By \Lemma{double}, $\prob{ f(S(k)) > \frac{5}{3} h(k)} \leq 2\epsilon$.
By a symmetric argument, which we omit, one can show that
$\prob{ f(S(k)) < \frac{1}{3} h(k) } \leq 2\epsilon$.
Thus,
$$
\prob{~ \smallfrac{1}{3} h(k) \:\leq\: f(S(k)) \:\leq\: \smallfrac{5}{3} h(k) ~} ~\geq~ 1 - 4 \epsilon.
$$
This completes the proof of Case 1.

\vspace{12pt}
\noindent\textit{Case 2.}
Suppose that $h(k) < L \log(1/\epsilon)$.
This immediately implies that
\begin{equation}
\EquationName{fnotsmall}
    \prob{ f(S(k)) < \frac{ h(k) }{ L \log(1/\epsilon) } }
    ~\leq~
    \prob{ f(S(k)) < 1 }
    ~=~ 0,
\end{equation}
since $k \geq 1$, and since we assume that $f(S) \geq 1$ whenever $S \neq \emptyset$.
These same assumptions lead to the following lower bound on $h$:
\begin{equation}
\EquationName{hnotsmall}
h(k) ~\geq~ \prob{ f(R(k/n)) \geq 1 } ~=~ \prob{ R(k/n) \neq \emptyset }
~=~ 1-(1-k/n)^n ~\geq~ 1-1/e.
\end{equation}
Thus
\begin{align*}
           &\prob{ f(S(k)) > \big(2K \log(1/\epsilon) \big) \cdot h(k)} \\
    ~\leq~ &2 \cdot \prob{ f(R(k/n)) > \big(2K \log(1/\epsilon)\big) \cdot h(k) }
        \qquad\text{(by \Lemma{double})}\\
    ~\leq~ &2 \cdot \prob{ f(R(k/n)) > K \log(1/\epsilon) }
        \qquad\text{(by \Equation{hnotsmall})}\\
    ~\leq~ &2 \cdot \epsilon,
\end{align*}
by \Lemma{useful}, statement (3),
since $\expect{f(R(k/n))} = h(k) < L \log(1/\epsilon)$.
Thus,
$$
\prob{~ \frac{ h(k) }{ L \log(1/\epsilon) }
 \:\leq\: f(S(k)) \:\leq\: \big( 2K \log(1/\epsilon) \big) h(k) ~}
  ~\geq~ 1 - 2 \epsilon,
$$
completing the proof of Case 2.
\end{proofof}

\section{Implications of our Matroid Construction for Submodular Optimization}
\SectionName{implications}

The original motivation of our matroid construction in \Section{general-lower}
is to show hardness of learning in the PMAC model.
In this section we show that this construction has implications beyond learning theory;
it reveals interesting structure of matroids and submodular functions.
We illustrate this interesting structure by using it to show strong inapproximability results
for several submodular optimization problems.

\subsection{Submodular Minimization under a Cardinality Constraint}
\label{sec:opt}

Minimizing a submodular function is a fundamental problem in combinatorial optimization.
Formally, the problem is
\begin{equation}
\EquationName{SFM}
    \min \setst{ f(S) }{ S \subseteq [n] }.
\end{equation}
There exist efficient algorithms to solve this problem exactly \cite{GLS,IFF01,Lex00}.

\begin{theorem}
Let $f : 2^{[n]} \rightarrow \bR$ be any submodular function.
\begin{description}
\item[(a)] There is an algorithm with running time $\poly(n)$
that computes the minimum value of \eqref{eq:SFM}.
\item[(b)] There is an algorithm with running time $\poly(n)$
that constructs a lattice which represents all minimizers of \eqref{eq:SFM}.
This lattice can be represented in space $\poly(n)$.
\end{description}
\end{theorem}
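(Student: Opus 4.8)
The plan is to prove the two parts by different arguments: part (a) is the classical polynomial-time solvability of submodular function minimization, while part (b) follows from the fact that the family of minimizers forms a distributive sublattice of $2^{\ground}$, together with Birkhoff's representation theorem.

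For part (a) I would recall the reduction to convex optimization. The Lov\'asz extension $\hat f:[0,1]^{\ground}\to\bR$ of $f$ is convex precisely because $f$ is submodular, it satisfies $\min_{x\in[0,1]^{\ground}}\hat f(x)=\min_{S\subseteq\ground}f(S)$, and a subgradient of $\hat f$ at any point is computable with $n$ evaluations of $f$ via the greedy ordering. Hence the ellipsoid method computes the minimum value in time $\poly(n)$; alternatively one may invoke the strongly polynomial combinatorial algorithms of~\cite{GLS,IFF01,Lex00}. (A minimizing set can be extracted by rounding an almost-optimal fractional point along a chain in its support; such sets are needed in part (b).)

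For part (b), the structural key is that the family of minimizers $\cM = \setst{ S \subseteq \ground }{ f(S) = \mathrm{opt} }$, where $\mathrm{opt}$ is the optimal value of~\eqref{eq:SFM}, is closed under union and intersection: if $f(A)=f(B)=\mathrm{opt}$ then submodularity gives $f(A\union B)+f(A\intersect B)\le f(A)+f(B)=2\,\mathrm{opt}$, and since each term on the left is at least $\mathrm{opt}$, both equal $\mathrm{opt}$. Thus $\cM$ is a finite, hence distributive, sublattice of $2^{\ground}$, with least element $S_{\min}=\bigcap_{S\in\cM}S$ and greatest element $S_{\max}=\bigcup_{S\in\cM}S$, and every minimizer lies in the interval $\setst{T}{S_{\min}\subseteq T\subseteq S_{\max}}$. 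Each of $S_{\min},S_{\max}$ can be found with $n$ calls to the part-(a) algorithm: $i\in S_{\max}$ iff $\min\setst{f(S)}{i\in S}=\mathrm{opt}$, and minimizing $f$ over the interval $\setst{S}{i\in S}$ is again submodular minimization; symmetrically for $S_{\min}$. For each $i\in S_{\max}\setminus S_{\min}$ let $M_i$ be the least minimizer containing $i$ (it exists since the minimizers containing $i$ also form an intersection-closed family), and define a partial order $P$ on the ground set $S_{\max}\setminus S_{\min}$ by $i\preceq j$ iff $i\in M_j$. The relation $P$ is computable with $O(n^2)$ submodular minimizations, since $i\in M_j$ iff $\min\setst{f(S)}{j\in S,\ i\notin S}>\mathrm{opt}$. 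I would then show that $T\mapsto T\setminus S_{\min}$ is an isomorphism from $\cM$ onto the lattice of down-sets of $P$; since $P$ has at most $n$ elements, it is a $\poly(n)$-space representation of all minimizers, and outputting $(S_{\min},S_{\max},P)$ proves the claim.

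The main obstacle is establishing this last isomorphism. One direction is easy: if $T\in\cM$, then for each $j\in T\setminus S_{\min}$ minimality of $M_j$ gives $M_j\subseteq T$, so $T=\bigcup_{j\in T\setminus S_{\min}}M_j$ and $T\setminus S_{\min}$ is a down-set of $P$. The harder direction is that $S_{\min}\union K$ is a minimizer whenever $K$ is a down-set of $P$: here one shows $M_j\setminus S_{\min}\subseteq K$ for every $j\in K$ using that $K$ is a down-set, whence $\bigcup_{j\in K}M_j=S_{\min}\union K$ is a union of minimizers and hence a minimizer by the lattice property. A secondary technical point is the well-definedness of the $M_i$ and the verification that distinct down-sets of $P$ yield distinct minimizers, so that the representation is both sound and complete; this is exactly where the distributivity of $\cM$ (inherited from $2^{\ground}$) is used.
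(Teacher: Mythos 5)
The paper does not actually prove this theorem: it is stated as background with part~(a) attributed to \cite{GLS,IFF01,Lex00} and part~(b) deferred to McCormick's survey \cite[Section 5.1]{McCormick}, so your proposal supplies an argument the paper only cites. Your approach is essentially the standard one (a Picard--Queyranne-style representation of the distributive lattice of minimizers via Birkhoff's theorem), and the overall structure --- compute $\mathrm{opt}$, find $S_{\min}$ and $S_{\max}$ by $n$ constrained minimizations, recover the order by the tests $\min\{f(S):j\in S,\, i\notin S\}>\mathrm{opt}$, and establish the bijection between minimizers and down-sets --- is correct. One technical nit: the relation you define, $i\preceq j$ iff $i\in M_j$, is reflexive and transitive but need not be antisymmetric, since $M_i=M_j$ is possible for $i\neq j$; it is a \emph{preorder}, not a partial order as stated. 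This does not damage the argument --- down-sets of the preorder coincide with down-sets of its quotient poset (equivalently, one may first merge the equivalence classes $\{i:M_i=M_j\}$) --- but the wording should be fixed, and it is precisely at this quotient step that the Birkhoff correspondence cleanly applies. The ``harder direction'' verification you sketch ($\bigcup_{j\in K}M_j=S_{\min}\cup K$ for a down-set $K$, hence a minimizer by union-closure) is correct, including the edge case $K=\emptyset$ which yields $S_{\min}$ itself.
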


The survey of McCormick \cite[Section 5.1]{McCormick} contains
further discussion about algorithms to construct the lattice of minimizers.
This lattice efficiently encodes a lot of information about the minimizers.
For example, given any set $S \subseteq [n]$, one can use the lattice
to efficiently determine whether $S$ is a minimizer of \eqref{eq:SFM}.
Also, the lattice can be used to efficiently find the inclusionwise-minimal and
inclusionwise-maximal minimizer of \eqref{eq:SFM}.
In summary, submodular function minimization is a very tractable optimization problem,
and its minimizers have a rich combinatorial structure.

The submodular function minimization problem becomes
much harder when we impose some simple constraints.
In this section we consider submodular function minimization under a cardinality constraint:
\begin{equation}
\EquationName{SFMCC}
    \min \setst{ f(S) }{ S \subseteq [n], |S| \geq d }.
\end{equation}
This problem, which was considered in previous work \cite{SF},
is a minimization variant of
submodular function maximization under a cardinality constraint \cite{NWF78},
and is a submodular analog of the minimum coverage problem \cite{Vinterbo}.
Unfortunately, \eqref{eq:SFMCC} is not a tractable optimization problem.
We show that, in a strong sense, its minimizers are very unstructured.


%

The main result of this section is that the minimizers of \eqref{eq:SFMCC} do not
have a succinct, approximate representation.

\begin{theorem}
\TheoremName{SFMCC}
There exists a randomly chosen non-negative, monotone, submodular function $f : 2^{[n]} \rightarrow
\bR$ such that,
for any algorithm that performs any number of queries to $f$
and outputs a data structure of size $\poly(n)$,
that data structure cannot represent the minimizers of \eqref{eq:SFMCC}
to within an approximation factor $o(n^{1/3}/\log n)$.
Moreover, any algorithm that performs $\poly(n)$ queries to $f$
cannot compute the minimum value of \eqref{eq:SFMCC} to within a $o(n^{1/3}/\log n)$ factor.
\end{theorem}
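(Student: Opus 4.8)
The plan is to derive both statements from the extremal matroid family of \Theorem{manymatroids}, instantiated with cardinality bound $d := n^{1/3}$, which is exactly $\card A$ for $A \in \cA$, so that every $A \in \cA$ is a feasible point of \eqref{eq:SFMCC}. Fix the target factor $\alpha = o(n^{1/3}/\log n)$; since hardness for a larger factor implies hardness for a smaller one, it suffices to also assume $\alpha = \omega(1)$. Then $n^{1/3}/(8\alpha) = \omega(\log n)$, so I can pick $k$ that is super-polynomial in $n$, still satisfies $k = 2^{o(n^{1/3})}$, and has $8\log k < n^{1/3}/\alpha$. Apply \Theorem{manymatroids} with this $k$ to obtain $\cA$ and $\set{\mat_\cB : \cB \subseteq \cA}$, draw $\cB \subseteq \cA$ uniformly at random, and set $f := \rankf_{\mat_\cB}$ --- a matroid rank function, hence non-negative, monotone, submodular. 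This single random $f$ will witness both claims.

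First I would identify $\mathrm{OPT} := \min\setst{f(S)}{\card S \geq d}$. When $\cB \neq \emptyset$ (which fails with probability only $2^{-k}$) I claim $\mathrm{OPT} = 8\log k$: the upper bound is immediate since $f(A) = 8\log k$ for each $A \in \cB$, and the matching lower bound uses the near-disjointness (expansion) of $\cA$ --- a feasible set of size $d$ either essentially coincides with some $A \in \cB$ (so has rank $8\log k$) or overlaps each $A_j$ indexed by $\cB$ only slightly, and then the constraints defining $\cI_\cB$ cost it at most $o(d)$ in rank, leaving $f(S) \geq d - o(d) > 8\log k$. Now, by \Theorem{manymatroids}, $f(A) = 8\log k = \mathrm{OPT}$ for $A \in \cB$ while $f(A) = n^{1/3}$ for $A \in \cA \setminus \cB$; since $\alpha\,\mathrm{OPT} = 8\alpha\log k < n^{1/3}$, a set $A \in \cA$ is an $\alpha$-approximate minimizer of \eqref{eq:SFMCC} if and only if $A \in \cB$. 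Hence any data structure that correctly represents the $\alpha$-approximate minimizers determines $\cB$ exactly (evaluate it at each of the $k$ sets of $\cA$), so it must store at least $\log\card{2^\cA} = k$ bits. As $k$ is super-polynomial, no $\poly(n)$-bit data structure can be valid for more than a $2^{\poly(n)-k} = o(1)$ fraction of the choices of $\cB$, and averaging over the random $\cB$ proves the first statement, even against an algorithm allowed to read all of $f$.

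For the minimum-value statement I would show a $\poly(n)$-query algorithm cannot distinguish $f = \rankf_{\mat_\cB}$ from the simple function $\rankf_{\mat_\emptyset}$, which satisfies $\rankf_{\mat_\emptyset}(S) = \min(\card S, d)$ and for which the minimum of \eqref{eq:SFMCC} equals $d = n^{1/3}$. The key lemma: with high probability over $\cA$ and $\cB$, every queried set $S$ returns $\min(\card S, d)$ under $f$. Using the rank formula for matroids of this type, a discrepancy $f(S) < \min(\card S, d)$ forces some constraint $\card{S \cap A(J)} \leq g_\cB(J)$ to be violated; combined with $g_\cB(J) \geq 6\log k \cdot \card J$ (a consequence of the expansion of $\cA$), this implies $\card{S \cap A_j} > 6\log k$ for some \emph{single} index $j$ with $A_j \in \cB$. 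For a fixed $S$ this event has probability $n^{-\omega(1)}$ over the random $\cA$ and $\cB$, and because each query is determined by the previous answers (all of which equal $\min(\cdot, d)$ on the good event), a union bound over the $\poly(n)$ queries completes the lemma --- the same obliviousness phenomenon used in \Theorem{mainlb} and \Theorem{mainlbadaptive}. Given indistinguishability, the algorithm's output $v$ would have to satisfy $v \leq \mathrm{OPT} \leq \alpha v$ both when $\mathrm{OPT} = 8\log k$ and when $\mathrm{OPT} = n^{1/3}$; these are inconsistent since $\alpha < n^{1/3}/(8\log k)$, the desired contradiction.

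\textbf{Main obstacle.} Two points will carry the real weight. The first is pinning down $\mathrm{OPT} = 8\log k$: one must rule out feasible sets whose rank collapses below $8\log k$ by lying in the span of a small set, which genuinely uses the expansion hypotheses of \Theorem{manymatroids} and an analysis of which constraints of $\cI_\cB$ can simultaneously be tight. The second is the indistinguishability lemma: aggregated constraints (those with $\card J \geq 2$) threaten to leak information even when no single overlap $\card{S \cap A_j}$ is large, and it takes the non-positive error terms in $g_\cB$ together with the expansion of $\cA$ to show every violated constraint in fact certifies a large single overlap; making this reduction precise, and then pushing the resulting failure probability below $1/\poly(n)$, is the technical heart.
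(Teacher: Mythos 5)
Your data-structure argument matches the paper's counting argument essentially verbatim, and your parameter bookkeeping ($\alpha = o(n^{1/3}/\log n)$, $k$ super-polynomial with $8\log k < n^{1/3}/\alpha$) is sound. But the two "obstacles" you flag need attention, and your route for the query lower bound is genuinely different from the paper's and has a real gap.

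On pinning down $\mathrm{OPT}=8\log k$: you over-engineer it. The paper's version of this claim (stated as the last sentence of \Theorem{newmatroids}) follows immediately from the computation in \Claim{smallrank}: one shows $g_\cB(J)\geq b$ for every nonempty $J$, so \emph{every} set of cardinality at most $b$ is independent; hence any feasible $S$ with $|S|\geq d>b$ contains an independent $b$-subset and has rank at least $b$. No analysis of tight constraints or "spans of small sets" is needed. Your sketch would eventually work, but it is chasing a statement that has a two-line proof.

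On the query lower bound, the paper does something structurally different and cleaner. It reserves one index $u_k$, chooses $\cB\subseteq U\setminus\{u_k\}$, and observes that $\mat_\cB$ does not depend on $A_k$. Then, for each query $S_i$ (which, by induction, is independent of $A_1,\ldots,A_{k-1}$), it retroactively imagines $A_k$ was set to a size-$d$ modification $S_i'$ of $S_i$; with probability $1-2/k$ the whole family is still an expander, so $S_i'$ is independent in $\mat_\cB$ and the query returns $\min(|S_i|,d)$. This handles $|S_i|>d$ for free and never needs a tail bound on $|S\cap A_j|$. Your alternative — directly bounding $\Pr[\exists j\in U_\cB: |S\cap A_j|>6\log k]$ — can be made to work in spirit, but as written it has an inconsistency: you begin by invoking \Theorem{manymatroids} to obtain a \emph{fixed} family $\cA$, then later take probabilities "over the random $\cA$ and $\cB$." For a fixed expander $\cA$ and random $\cB$ alone, the bound you want may fail (an adversarially chosen $S$ could intersect many $A_j$'s heavily once $\cA$ is fixed). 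To make your argument rigorous you would need to carry the randomness of the expander construction throughout — exactly what the paper does — and then the reserved-vertex trick is the path of least resistance and subsumes your concentration lemma. A second, smaller point: your reduction from a violated constraint to a large single overlap is correct for the direction you need (if $\rankf(S)<\min(|S|,d)$, take a maximum independent $I^*\subsetneq S$ and any $x\in S\setminus I^*$; $I^*+x$ violates some constraint strictly, so $S$ does too), but spelling this out matters, since for $|S|>d$ it is not a~priori true that $S$ itself violating a constraint forces low rank.
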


Here, a ``data structure representing the minimizers to within a factor $\alpha$''
is a program of size $\poly(n)$ that, given a set $S$, returns ``yes'' if $S$ is a
minimizer, returns ``no'' if $f(S)$ is at least $\alpha$ times larger than the minimum,
and otherwise can return anything.

Previous work \cite{bobby,SF,nick09} showed that
there exists a randomly chosen non-negative, monotone, submodular function $f : 2^{[n]} \rightarrow
\bR$ such that any algorithm that performs $\poly(n)$ queries to $f$
cannot approximate the minimum value of \eqref{eq:SFMCC} to within a $o(n^{1/2}/\log n)$ factor.
Also, implicit in the work of Jensen and Korte~\cite[pp.~186]{JK}
is the fact that no data structure of size $\poly(n)$ can
\textit{exactly} represent the minimizers of \eqref{eq:SFMCC}.
In contrast, \Theorem{SFMCC} is much stronger because it implies
that no data structure of size $\poly(n)$ can
even \textit{approximately} represent the minimizers of \eqref{eq:SFMCC}.


%




To prove \Theorem{SFMCC} we require the matroid construction of \Section{lb},
which we restate as follows.

\begin{theorem}
\TheoremName{newmatroids}
Let $n$ be a sufficiently large integer and let $h(n)$ be any slowly divergent function.
Define $k = n^{h(n)}+1$, $d = n^{1/3}$, $b = 8 \log k$ and $\tau = d/4 \log k$.

Set $U=\set{u_1,\ldots,u_k}$ and $V = \set{v_1,\ldots,v_n}$.
Suppose that $H=(U \union V,E)$ is a $(d,L,\epsilon)$-lossless expander.
We construct a family $\cA = \set{A_1,\ldots,A_k}$ of subsets of $[n]$, each of size $d$,
by setting
\begin{equation}
\EquationName{Aidef}
A_i ~=~ \setst{ j \in [n] }{ v_j \in \Gamma(\set{u_i}) } \qquad\forall i=1,\ldots,k.
\end{equation}
%
As before, $\Gamma(J)$ denotes the neighbors of the vertex set $J \subseteq U$.

For every $\cB \subseteq U$ there is a matroid
$\mat_\cB = ([n], \cI)$ whose rank function satisfies
\[
\rankf_{\mat_\cB}(A_i) ~=~
    \begin{cases}
    b          &\qquad\text{(if $u_i \in \cB$)} \\
    d          &\qquad\text{(if $u_i \in U \setminus \cB$)}.
    \end{cases}
\]
Furthermore, every set $S \subseteq [n]$ with $|S| \geq b$ has $\rankf_{\mat_\cB}(S) \geq b$.
\end{theorem}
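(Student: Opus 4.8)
\begin{proofof}{\Theorem{newmatroids}}
This statement is a restatement of \Theorem{manymatroids} with the parameters spelled out explicitly, plus one extra observation (the ``Furthermore'' clause). The plan is therefore: (i) check that the parameters fall within the regime handled in \Section{lb}; (ii) invoke the construction and the three claims from \Section{lb} to obtain the stated rank values on the $A_i$; and (iii) derive the ``Furthermore'' clause directly from an inequality already proven there.

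First I would take the two unspecified expander parameters to be exactly the ones used in \Section{lb}, namely $L := n^{1/3}/(2\log k)$ and $\epsilon := 2\log k/n^{1/3} = b/(4d)$. Since $h$ is slowly divergent, $h(n)\log n = o(n^{1/3})$, so $k = n^{h(n)}+1$ satisfies $\log k = o(n^{1/3})$; in particular $b = 8\log k < d = n^{1/3}$ and $\tau = d/(4\log k) \geq 1$ for $n$ sufficiently large, matching the hypotheses of \Theorem{manymatroids}. With these parameters a $(d,L,\epsilon)$-lossless expander $H=(U\union V,E)$ with $\card U = k$, $\card V = n$ exists by \Theorem{expanders} (or \Theorem{ourexpanders}); moreover, from \emph{any} such expander the sets $A_i = \setst{ j \in [n] }{ v_j \in \Gamma(\set{u_i}) }$ form a family $\cA=\set{A_1,\ldots,A_k}$ of $d$-element sets satisfying the expansion inequalities \Equation{nice11}.

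Next, for each $\cB \subseteq U$ I would let $\mat_\cB$ be the matroid constructed in \Section{lb} by applying \Theorem{modified} to the subfamily indexed by $U_\cB$ with all $b_i$ equal to $b$; its independent sets are $\cI_\cB = \setst{ I }{ \card I \leq d \And \card{I\intersect A(J)} \leq g_\cB(J) ~\forall J\subseteq U_\cB,\, \card J < \tau }$. By \Claim{mtlarge}, $g_\cB$ is $(d,\tau)$-large, so $\mat_\cB$ is indeed a matroid; \Claim{smallrank} gives $\rankf_{\mat_\cB}(A_i) = b$ when $u_i \in \cB$, and \Claim{largerank} gives $\rankf_{\mat_\cB}(A_i) = d$ when $u_i \in U\setminus\cB$, which is precisely the displayed case distinction.

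Finally, for the ``Furthermore'' clause I would reuse the inequality established inside the proof of \Claim{smallrank}: $g_\cB(J) \geq b$ for every $J \subseteq U_\cB$ with $1 \leq \card J < \tau$, while the constraint for $J=\emptyset$ is trivial. Hence every $I$ with $\card I = b$ satisfies all defining inequalities of $\cI_\cB$ --- indeed $\card{I\intersect A(J)} \leq \card I = b \leq g_\cB(J)$, and $\card I = b < d$ --- so $I\in\cI_\cB$. Thus any $S$ with $\card S \geq b$ contains a $b$-element independent set, giving $\rankf_{\mat_\cB}(S) \geq b$. There is essentially no obstacle here beyond parameter bookkeeping (ensuring $b<d$ and $\tau\geq 1$): all the substance was carried out in \Section{lb}, and the one genuinely new point --- the uniform lower bound $\rankf_{\mat_\cB}(S)\geq b$ for all large $S$ --- follows immediately once one observes that every constraint defining $\cI_\cB$ has right-hand side at least $b$.
\end{proofof}
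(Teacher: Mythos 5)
Your proposal is correct, and it matches the paper's own (implicit) treatment: the paper never gives an explicit proof of \Theorem{newmatroids}, but simply introduces it with ``we restate'' the construction from \Section{lb}, leaving the reader to instantiate the parameters. You do exactly that --- choosing $L=n^{1/3}/(2\log k)$ and $\epsilon = 2\log k/n^{1/3}$ so that $\epsilon=b/(4d)$, observing that $k=n^{h(n)}+1=2^{o(n^{1/3})}$ since $h$ is slowly divergent (whence $b<d$ and $\tau\to\infty$), and then quoting \Theorem{modified}, \Claim{mtlarge}, \Claim{smallrank}, and \Claim{largerank} verbatim. The one item not contained in \Theorem{manymatroids} is the ``Furthermore'' clause, and your derivation of it is exactly right and is the shortest available route: \Claim{smallrank}'s internal inequality gives $g_\cB(J)\geq b$ for every nonempty $J$ arising in the definition of $\cI_\cB$, the $J=\emptyset$ constraint is vacuous, and $b<d$; hence every $b$-element subset of $[n]$ is independent, so $\rankf_{\mat_\cB}(S)\geq b$ whenever $\card S\geq b$. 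This is presumably what the authors had in mind, since they use this lower bound (without separate argument) to assert that the minimum of the cardinality-constrained problem is exactly $b$ when $\cB\neq\emptyset$. The only cosmetic point worth noting is that you should say $\tau\geq 2$ (not merely $\tau\geq 1$) for the truncated construction to impose any nontrivial $A$-constraints at all; this holds for large $n$ and does not affect correctness.
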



\begin{proofof}{\Theorem{SFMCC}}
Pick a subset $\cB \subseteq U \setminus \set{u_k}$ randomly.
We now define a submodular function on the ground set $[n]$.
Set $L = d / 2 \log k$ and $\epsilon = 1/L$.
We apply \Theorem{ourexpanders} to obtain a random bipartite multigraph $H$.
With probability at least $1-2/k$, the resulting graph $H$ is a $(d,L,\epsilon)$-lossless expander
(after eliminating parallel edges).
In this case, we can apply \Theorem{newmatroids} to obtain the matroid $\mat_\cB$,
which we emphasize does not depend on $\Gamma(\set{u_k})$.
Define $A_i$ as in \eqref{eq:Aidef} for $i=1,\ldots,k-1$.

Now consider an algorithm $\ALG$ which performs any number of queries to $f$
and attempts to represent $\cB$ in $\poly(n)$ bits.
Since $\cB$ is a random subset of $U \setminus \set{u_k}$, which has cardinality $n^{h(n)}$,
the probability that $\cB$ can be represented in $\poly(n)$ bits is $o(1)$.
If $\cB$ cannot be exactly represented by $\ALG$ then,
with probability $1/2$, there is some set $A_i$ whose value is not correctly represented.
The multiplicative error in the value of $A_i$ is $d/b = o(n^{1/3}/\log n)$.

Next we will argue that any algorithm $\ALG$ performing $m = \poly(n)$ queries to
$f = \rankf_{\mat_\cB}$ has low probability of determining whether $\cB = \emptyset$.
If $\cB = \emptyset$ then the minimum value of \eqref{eq:SFMCC} is $d=n^{1/3}$,
whereas if $\cB \neq \emptyset$ then the minimum value of \eqref{eq:SFMCC} is $b=O(h(n) \log n)$.
Therefore this will establish the second part of the theorem.

Suppose the algorithm $\ALG$ queries
the value of $f$ on the sets $S_1,\ldots,S_m \subseteq [n]$.
Consider the $i\th$ query and suppose inductively that
$\rankf_{\mat_\cB}(S_j) = \rankf_{\mat_\emptyset}(S_j)$ for all $j<i$.
Thus $\ALG$ has not yet distinguished between the cases
$f=\rankf_{\mat_\cB}$ and $f=\rankf_{\mat_\emptyset}$.
Consequently the set $S_i$ used in the $i\th$ query is independent of $A_1,\ldots,A_{k-1}$.

Let $S_i'$ be a set of size $|S_i'|=d$ obtained from $S_i$ by
either adding (if $|S_i|<d$) or removing (if $|S_i| > d$) arbitrary elements of $[n]$,
or setting $S_i'=S_i$ if $|S_i|=d$.
We will apply \Theorem{ourexpanders} again, but this time we make an additional observation.
Since the definition of expansion does not depend on the labeling of the ground set,
one may assume in \Theorem{ourexpanders} that one vertex in $U$, say $u_k$, chooses its neighbors
deterministically and that all remaining vertices in $U$ choose their neighbors at random.
Specifically, we will set
$$
    \Gamma(\set{u_k}) ~=~ \setst{ v_j }{ j \in S_i' }.
$$
The neighbors $\Gamma(\set{u_i})$ for $i<j$ are not randomly rechosen;
they are chosen to be the same as they were in the first invocation of \Theorem{ourexpanders}.
With probability at least $1-2/k$ we again obtain a $(d,L,\epsilon)$-lossless expander,
in which case \Theorem{newmatroids} shows that $\rankf_{\mat_\cB}(S_i') = d = |S_i'|$.
That event implies
$$
    \rankf_{\mat_\cB}(S_i) ~=~ \begin{cases}
        |S_i|=\rankf_{\mat_\emptyset}(S_i)  &\quad\text{(if $|S_i|<d$)} \\
        d=\rankf_{\mat_\emptyset}(S_i)      &\quad\text{(if $|S_i| \geq d$)},
    \end{cases}
$$
and hence the inductive hypothesis holds for $i$ as well.

By a union bound over all $m$ queries, the probability of distinguishing whether
$B = \emptyset$ is at most $2m/k = o(1)$.
\end{proofof}

\subsection{Submodular $s$-$t$ Min Cut}

Let $G$ be an undirected graph with edge set $E$ and $n = \card{E}$.
Let $s$ and $t$ be distinct vertices of $G$.
A set $C \subseteq E$ is called an $s$-$t$ cut if every $s$-$t$ path intersects $C$.
Let $\cC \subset 2^E$ be the collection of all $s$-$t$ cuts.
The \newterm{submodular $s$-$t$ min cut} problem \cite{JB} is
\begin{equation}
\EquationName{SSTMC}
\min \setst{ f(C) }{ C \in \cC },
\end{equation}
where $f : 2^E \rightarrow \bR$ is a non-negative, monotone, submodular function.

\begin{theorem}[Jegelka and Bilmes~\cite{JB}]
Any algorithm for the submodular $s$-$t$ min cut problem
with approximation ratio $o(n^{1/3})$ must perform exponentially many queries to $f$.
\end{theorem}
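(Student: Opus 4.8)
The plan is to mimic the lower-bound argument behind \Theorem{SFMCC}: we encode the extremal matroids of \Theorem{newmatroids} as a submodular edge-weight function on a graph whose minimum $s$-$t$ cuts play the role of the sets $A_i$, and then run an indistinguishability argument. First I would fix the graph. Let $d := n^{1/3}$ and partition the edge set $[n]$ into $d$ blocks $P_1, \ldots, P_d$, each of size $n/d = n^{2/3}$; let $G$ be the union of $d$ internally vertex-disjoint $s$-$t$ paths, the $\ell$-th path carrying the edges of $P_\ell$ in series. Then every $s$-$t$ cut of $G$ meets each of these $d$ disjoint paths, so it has at least $d$ edges, and the minimum $s$-$t$ cuts are exactly the transversals of the block partition, of which there are $(n/d)^d = 2^{\Theta(n^{1/3}\log n)}$. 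So the graph by itself already supplies a super-polynomially large family of min cuts, all of size $d$; the task is to put a submodular $f$ on the edges under which a hidden, super-polynomially rare subfamily of these min cuts has dramatically smaller value.

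Next I would build the hard functions. I want to apply \Theorem{newmatroids}, but with the extra requirement that each $A_i$ be a transversal of $P_1, \ldots, P_d$, so that it is a bona fide minimum $s$-$t$ cut of $G$. This should be obtained by running a block-structured variant of the expander of \Theorem{ourexpanders}, in which each left vertex $u_i$ has exactly one neighbour in each block $P_\ell$; the random-multigraph/union-bound analysis should carry through with the same asymptotic parameters, the only change being that collisions between neighbourhoods are now governed by the block size $n^{2/3}$ rather than by $n$. For a uniformly random $\cB \subseteq U \setminus \{u_k\}$ set $f := \rankf_{\mat_\cB}$; by \Theorem{newmatroids} this function is non-negative, monotone, submodular, has rank $d$, satisfies $\rankf_{\mat_\cB}(S) \ge b$ whenever $|S| \ge b$, and has $\rankf_{\mat_\cB}(A_i) = b = 8\log k$ if $u_i \in \cB$ and $\rankf_{\mat_\cB}(A_i) = d$ otherwise. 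Hence, if $\cB = \emptyset$ then every $s$-$t$ cut $C$ has $\rankf_{\mat_\emptyset}(C) = \min(|C|, d) = d$, so the optimum of \eqref{eq:SSTMC} equals $d$; whereas if $\cB \ne \emptyset$ then some $A_i$ is an $s$-$t$ cut of value $b$ while every cut still has value at least $b$, so the optimum equals $b$. Since $\log k = o(n^{1/3})$ we have $b = o(d)$, a multiplicative gap of $d/b = \tOmega(n^{1/3})$; as in \Theorem{SFMCC} the exact exponent of the gap and the size of $k$ can be traded off.

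The third ingredient is the indistinguishability argument, transcribed from the proof of \Theorem{SFMCC}. The algorithm sees $G$ but not $\cB$ (indeed not even which min cuts belong to $\cA$). A query at a set $S$ can reveal $f = \rankf_{\mat_\cB}$ rather than $\rankf_{\mat_\emptyset}$ only if $S$ is forced to overlap some $A_i$ with $u_i \in \cB$ heavily; for any set not of this special form the two rank functions agree. Using the uncommitted vertex $u_k$ one re-routes, for the $i$-th query $S_i$, the neighbourhood $A_k$ to a size-$d$ transversal that contains or is contained in $S_i$, keeping the answer equal to $\rankf_{\mat_\emptyset}(S_i)$; a union bound over the queries shows that unless the algorithm makes $\Omega(k)$ (super-polynomially, indeed exponentially, many) queries, with probability $1-o(1)$ it sees a transcript consistent with both $\cB = \emptyset$ and $\cB \ne \emptyset$. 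Therefore no algorithm making fewer than exponentially many queries can compute the optimum of \eqref{eq:SSTMC}, or output a cut attaining it, to within a factor $o(n^{1/3})$.

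The hard part will be the second step: simultaneously (i) forcing the matroid sets $A_i$ to be minimum $s$-$t$ cuts of a graph on only $\poly(n)$ vertices — which is the point of the block-structured expander — and (ii) checking that this structural restriction does not spoil the expansion parameters $(d, L, \epsilon)$ that \Theorem{newmatroids} needs in order for its $(d,\tau)$-large condition to hold. Once those are in place, the remainder is a faithful adaptation of the \Theorem{SFMCC} argument.
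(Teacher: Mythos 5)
This theorem is a \emph{citation} (to Jegelka and Bilmes~\cite{JB}); the paper itself does not prove it. The proof the paper actually gives, immediately following, is of the distinct \Theorem{stmincut} (strengthened to representation-hardness, weakened to $o(n^{1/3}/\log n)$ and to $\poly(n)$ queries). Your proposal reconstructs \Theorem{stmincut} quite faithfully --- the $d$ vertex-disjoint paths, the block-structured expander of \Theorem{modifiedexpanders}, the matroid family from \Theorem{newmatroids}, the $u_k$ re-routing trick, and the distinguish-$\cB=\emptyset$ argument are all the paper's own ingredients --- but that is not the theorem you were asked to prove, and the two statements are not equivalent.

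The concrete gap is a parameter tension that your construction cannot resolve. To rule out $m$ queries your indistinguishability step needs $k \gg m$, so ``exponentially many queries'' forces $\log k = \Omega(n^{c})$ for some $c>0$. But the multiplicative gap in your construction is $d/b = n^{1/3}/(8\log k)$, so once $\log k$ is a polynomial in $n$ the gap collapses to $O(n^{1/3-c})$ or even $O(1)$. Conversely, getting the gap up to $\tilde\Omega(n^{1/3})$ forces $\log k = O(\log n)$, i.e.\ $k = \poly(n)$, which only rules out polynomially many queries --- exactly the $\poly(n)$-query, $o(n^{1/3}/\log n)$-factor statement of \Theorem{stmincut}. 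No choice of $k = 2^{o(n^{1/3})}$ gives both ``exponential queries'' and ``$o(n^{1/3})$ factor'' simultaneously. The Jegelka--Bilmes lower bound does not need this paper's matroid family; it is proved (in the Svitkina--Fleischer~\cite{SF} style) by hiding a \emph{single} random cut $A$ --- essentially the $k=1$ instance of your framework. With only one hidden set, $b$ can be taken $O(\log n)$ (or even $O(1)$ after slight modification), the gap is $\Omega(n^{1/3})$, and the number of possible hidden cuts is $(n/d)^d = 2^{\Theta(n^{1/3}\log n)}$, so exponentially many queries are needed to find $A$ by a direct union bound. That simpler argument gives the cited theorem; your richer, $k$-set construction gives the paper's modified theorem instead.
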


Modifying their result to incorporate our matroid construction in \Section{general-lower},
we obtain the following theorem.

\begin{theorem}
\TheoremName{stmincut}
Let $d = n^{1/3}$.
Let $G$ be a graph with edge set $E$ consisting of $d$ internally-vertex-disjoint $s$-$t$ paths,
each of length exactly $n/d$.
Assume that $f : 2^E \rightarrow \bR$ is a non-negative, monotone, submodular function.
For any algorithm that performs any number of queries to $f$
and outputs a data structure of size $\poly(n)$,
that data structure cannot represent the minimizers of \eqref{eq:SSTMC}
to within an approximation factor $o(n^{1/3}/\log n)$.
Moreover, any algorithm that performs $\poly(n)$ queries to $f$
cannot compute the minimum value of \eqref{eq:SSTMC} to within a $o(n^{1/3}/\log n)$ factor.
\end{theorem}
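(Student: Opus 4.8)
The plan is to follow the proof of \Theorem{SFMCC} almost verbatim, with the cardinality constraint replaced by the constraint ``$C$ is an $s$-$t$ cut'' and with the matroid construction arranged so that the sets $A_i$ of \Theorem{newmatroids} are themselves $s$-$t$ cuts of $G$. The one genuinely new ingredient is a \emph{path-respecting} lossless expander. Identify $E$ with $[n]$ and let $B_1,\dots,B_d$ be the edge sets of the $d$ internally-vertex-disjoint paths of $G$, each of size $n/d$. I would build $H=(U\cup V,E)$ with $|U|=k$ and $V=[n]$ by letting every $u\in U$ choose one uniformly random vertex inside each block $B_\ell$, independently over $u$ and over $\ell$; then $\Gamma(\{u\})$ is a transversal of $B_1,\dots,B_d$, so it contains exactly one edge of each path and is an $s$-$t$ cut. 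Feeding this $H$ into \Theorem{newmatroids} (with the same $k,d,b,\tau$) produces, for every $\cB\subseteq U\setminus\{u_k\}$, the matroid rank function $f=\rankf_{\mat_\cB}$ with $f(A_i)=b=8\log k$ when $u_i\in\cB$, $f(A_i)=d=n^{1/3}$ when $u_i\notin\cB$, and $f(S)\ge b$ for every $S$ with $|S|\ge b$.

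The first step is to show that this path-respecting $H$ is a $(d,L,\epsilon)$-lossless expander with probability $1-O(1/k)$ for the parameters of \Theorem{newmatroids}; the analysis parallels that of \Theorem{ourexpanders}. For a fixed $J\subseteq U$, the number of within-block collisions among the $d|J|$ random edges incident to $J$ is stochastically dominated by a binomial of mean $O(d^2|J|^2/n)$, and the bad event $|\Gamma(J)|<(1-\epsilon)d|J|$ needs at least $\epsilon d|J|$ such collisions, so a Chernoff bound bounds its probability by $(e/(4\log^2 k))^{2|J|\log k}$. The essential point is that this bound decays at a rate proportional to $|J|$, so it dominates the $\binom{k}{|J|}\le 2^{|J|\log k}$ sets of size $|J|$, and a geometric sum over $|J|\le L$ gives the claim; a naive union bound with a $|J|$-independent threshold fails here, since there are $2^{\Theta(n^{1/3})}$ sets $J$ of size at most $L$.

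Given $f$, every $s$-$t$ cut $C$ of $G$ has $|C|\ge d\ge b$ (using $b=8\log k\ll n^{1/3}$, which holds for the slowly divergent exponent $h$ with $k=n^{h(n)}+1$), so $f(C)\ge b$; and $A_i$ is an $s$-$t$ cut with $f(A_i)=b$ whenever $u_i\in\cB$. Thus the minimum value in \Equation{SSTMC} is $b$ if $\cB\ne\emptyset$, while for $\cB=\emptyset$ the matroid is the rank-$d$ uniform matroid and the minimum is $d$. For the data-structure bound, pick $\cB\subseteq U\setminus\{u_k\}$ uniformly at random: any $\poly(n)$-bit data structure that represents the minimizers to within a factor $d/b=o(n^{1/3}/\log n)$ must, restricted to $A_1,\dots,A_{k-1}$, say ``yes'' exactly on $\{A_i:u_i\in\cB\}$, hence it encodes $\cB$; since there are $2^{\poly(n)}$ such data structures but $2^{k-1}=2^{n^{h(n)}}$ choices of $\cB$, no such data structure exists for a $1-o(1)$ fraction of $\cB$. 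For the value bound, I would mirror the adaptive-query argument: given the $i$-th query $S_i$, form a size-$d$ set $S_i'$ from $S_i$ by deleting (if $|S_i|>d$) or adding (if $|S_i|<d$) arbitrary edges, re-invoke the expander construction keeping $\Gamma(\{u_j\})$ for $j<k$ fixed and setting $\Gamma(\{u_k\})=\{v_j:j\in S_i'\}$ --- an arbitrary size-$d$ set is permissible since $u_k\notin\cB$ means $A_k$ need not be a transversal --- and observe it is still a lossless expander with probability $1-O(1/k)$, because only sets $J$ containing $u_k$ are affected and the collision bound for those still goes through with $\Gamma(\{u_k\})$ fixed adversarially (its threshold again scales with $|J|$). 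Then \Theorem{newmatroids} gives $\rankf_{\mat_\cB}(S_i')=d$, whence $\rankf_{\mat_\cB}(S_i)=\min(|S_i|,d)=\rankf_{\mat_\emptyset}(S_i)$; a union bound over the $m=\poly(n)$ queries shows the algorithm cannot distinguish $\cB=\emptyset$ (value $d$) from a random nonempty $\cB$ (value $b$).

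The main obstacle will be the path-respecting expander itself: we need a lossless expander with super-polynomially many left vertices, all of whose one-vertex neighborhoods are transversals of a prescribed partition of $V$ into $d$ blocks, and the expansion analysis is tight, succeeding only because the failure probability for a fixed $J$ decays fast enough (roughly $2^{-\Omega(|J|\log k\,\log\log k)}$) to beat the $\binom{k}{|J|}$ counting factor. Everything downstream --- invoking \Theorem{newmatroids}, the counting argument for the data structure, and the indistinguishability argument for queries --- is essentially identical to the proof of \Theorem{SFMCC}.
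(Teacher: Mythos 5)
Your plan matches the paper's own proof in all essentials: you identify the block-respecting expander (the paper's \Theorem{modifiedexpanders}, where each $u_\ell$ with $\ell<k$ selects one neighbor per path and $u_k$ is handled arbitrarily so the adaptive-query argument survives), feed it into \Theorem{newmatroids} to make each $A_i$ a minimal $s$-$t$ cut with rank $b$ or $d$ depending on $\cB$, and then run the same data-structure counting and query-indistinguishability arguments as in \Theorem{SFMCC}. Your expansion analysis (binomial domination of within-block collisions plus a Chernoff bound whose exponent scales with $|J|\log k$) is a cosmetic rephrasing of the paper's sequential-repeat bound $\binom{jd}{\epsilon jd}((j+d)d/n)^{\epsilon jd}$, and your observation that the failure probability must decay at rate $\Omega(|J|\log k)$ to beat $\binom{k}{|J|}$ is exactly the point of that computation.
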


The proof of this theorem is almost identical to the proof of \Theorem{SFMCC}.
All that we require is a slightly different expander construction.

\begin{theorem}
\TheoremName{modifiedexpanders}
Let $U=\set{u_1,\ldots,u_k}$ and $V$ be disjoint vertex sets, where $|V|=n$
and $n$ is a multiple of $d$.
Write $V$ as the disjoint union $V = V_1 \cup \cdots \cup V_{d}$ where each $|V_i| = n/d$.

Generate a random bipartite multigraph $H$ with left-vertices $U$ and right-vertices $V$ as follows.
The vertex $u_k$ has exactly $d$ neighbors in $V$, chosen deterministically and arbitrarily.
For each vertex $u_\ell$ with $\ell \leq k-1$, pick exactly one neighbor from each $V_i$,
uniformly and independently at random.
So each vertex in $U$ has degree exactly $d$.

Suppose that $k \geq 4$, $L \geq d$, $d \geq \log(k)/\epsilon$ and $n \geq 22 L d/\epsilon$.
Then, with probability at least $1-2/k$, the multigraph $H$ has no parallel edges and satisfies
\begin{align*}
|\neigh(\set{u})| &~=~ d \qquad\forall u \in U \\
|\neigh(J)| &~\geq~ (1-\epsilon) \cdot d \cdot |J| \qquad\forall J \subseteq U,\, |J| \leq L.
\end{align*}
\end{theorem}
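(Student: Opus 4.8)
The plan is to follow the proof of \Theorem{ourexpanders} given in \Appendix{expander}, adapting it to the two features of this construction: the neighbours of each random left-vertex are drawn one per part $V_i$ rather than globally from $V$, and one distinguished vertex $u_k$ has its neighbourhood fixed in advance. The degree statement $|\neigh(\set{u})| = d$ for every $u \in U$ is immediate, and the multigraph is automatically simple: a random vertex $u_\ell$ with $\ell \le k-1$ selects one neighbour from each of the pairwise-disjoint parts $V_1,\dots,V_d$ and hence never repeats a neighbour, while $u_k$'s $d$ neighbours are distinct by construction. So it remains only to show that the inequality $|\neigh(J)| \ge (1-\epsilon)d|J|$ holds simultaneously for all $J \subseteq U$ with $|J| \le L$ except with probability at most $2/k$.

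The core is a union bound over $J$. Fix $J$ with $|J| = s \le L$ and set $J' = J \setminus \set{u_k}$. Reveal the $d|J'|$ random edges incident to $J'$ one at a time, in a fixed order (say ordered first by part index and then by left-vertex index), and call a revealed edge a \emph{duplicate} if its right-endpoint was already a neighbour of $J$ --- i.e.\ it coincides with the endpoint of a previously revealed edge in the same part, or with a neighbour of $u_k$ in that part. Then $|\neigh(J)| = d|J| - (\text{number of duplicate edges})$, so the failure event forces at least $m := \lceil \epsilon d s \rceil$ duplicates. The key point is that a part $V_i$ has only $n/d$ vertices and, using $L \ge d$, at most $|J'| + d \le 2L$ of them are ever neighbours of $J$; hence, conditioned on \emph{any} history of previously revealed edges, the probability that the next random edge is a duplicate is at most $2L/(n/d) = 2dL/n$, since its endpoint is uniform in its part and independent of that history. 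Chaining this estimate over any fixed $m$-subset of the $d|J'|$ edges and union-bounding over the choice of that subset gives
\[
\prob{ |\neigh(J)| < (1-\epsilon) d s }
~\le~ \binom{d|J'|}{m} \left( \frac{2dL}{n} \right)^{\!m}.
\]

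It then remains to insert the hypotheses and sum. Since $m \ge \epsilon d|J'|$ we have $\binom{d|J'|}{m} \le (e\,d|J'|/m)^m \le (e/\epsilon)^m$, and $n \ge 22 dL/\epsilon$ gives $2dL/n \le \epsilon/11$, so the right-hand side above is at most $(e/11)^m$, which is exponentially small in $m \ge \epsilon d s \ge s \log k$ by the assumption $d \ge \log(k)/\epsilon$. Summing over the at most $\binom{k}{s} \le k^s$ sets $J$ of each size $s$ and then over $1 \le s \le L$, the constants in the hypotheses are chosen precisely so that this geometric series is bounded by $2/k$; together with the simplicity of $H$ this proves the theorem.

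The step I expect to be the main obstacle --- and the only place where the modifications over \Theorem{ourexpanders} enter in an essential way --- is the bookkeeping around $u_k$. One must check that including $u_k$ in $J$ still yields the full bound $|\neigh(J)| \ge (1-\epsilon)d|J|$ rather than merely $(1-\epsilon)d(|J|-1)$: this works because $u_k$ contributes a block of $d$ genuinely distinct neighbours while the duplicate count is measured against the target $d|J|$. One must also verify that $u_k$'s (possibly unbalanced) neighbourhood, which could place as many as $d$ of its neighbours inside a single part, is safely absorbed into the ``$+d$'' slack above --- this is exactly why the hypothesis on $n$ carries the constant $22$ rather than the $16$ of \Theorem{ourexpanders} --- and that the per-edge conditional bound $2dL/n$ is genuinely uniform over all histories, which holds because each part has only $n/d$ vertices of which at most $2L$ are ever touched.
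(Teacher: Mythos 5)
Your proposal is correct and follows essentially the same route as the paper's proof: both reduce to the argument of \Theorem{ourexpanders}, bound the conditional probability of a ``repeat'' in a given part $V_i$ by counting the at most $|J'|+d \le 2L$ vertices of that part that could already be neighbours of $J$, and absorb the extra $+d$ term (coming from $u_k$'s possibly unbalanced neighbourhood) using the strengthened hypothesis $n \ge 22Ld/\epsilon$. Your observation that $H$ is simple with probability $1$, and your bookkeeping showing that $u_k$'s $d$ deterministic distinct neighbours keep the target at $d|J|$ rather than $d(|J|-1)$, are exactly the points the paper handles implicitly.
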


\begin{proof}
The proof is nearly identical to the proof of \Theorem{ourexpanders} in \Appendix{expander}.
The only difference is in analyzing the probability of a repeat when sampling the neighbors
of a set $J \subseteq U$ with $|J|=j$.
First consider the case that $u_k \in J$.
When sampling the neighbors $\Gamma(J)$, an element $v_i$ is considered a repeat
if $v_i \in \set{ v_1, \ldots, v_{i-1} }$ or if $v_i \in \Gamma(\set{u_k})$.
Conditioned on $v_1, \ldots, v_{i-1}$, the probability of a repeat is
at most $\frac{j+d}{n/d}$.
If $u_k \not\in J$ then this probability is at most $jd/n$.
Consequently, the probability of having more than $\epsilon j d$ repeats is at most
$$
\binom{j d}{\epsilon j d} \Big( \frac{(j+d) d}{n} \Big)^{\epsilon j d}
    ~\leq~ \Big( \frac{e}{\epsilon} \Big)^{\epsilon j d}
           \Big( \frac{(j+d) d}{n} \Big)^{\epsilon j d}
    ~\leq~ (1/4)^{\epsilon j d}.
$$
The last inequality follows from $j+d \leq 2L$ and our hypothesis $n \geq 22 L d / \epsilon$.
The remainder of the proof is identical to the proof of \Theorem{ourexpanders}.
\end{proof}

\begin{proofsketchof}{\Theorem{stmincut}}
Let $V_i$ be the edges of the $i\th$ $s$-$t$ path.
The minimal $s$-$t$ cuts are those which choose exactly one edge from each $s$-$t$ path;
in other words, they are the transversals of the $V_i$'s.
Let $V = V_1 \cup \cdots \cup V_d$; this is also the edge set of the graph $G$.

As in \Theorem{SFMCC} we apply \Theorem{modifiedexpanders} and \Theorem{newmatroids}
to obtain a matroid $\mat_\cB$.
Because the expander construction of \Theorem{modifiedexpanders} ensures that each vertex $u_\ell$
has exactly one neighbor in each $V_i$, the corresponding set $A_\ell$ is a minimal $s$-$t$ cut.

Suppose $\ALG$ performs any number of queries to $f = \rankf_{\mat_\cB}$. The set $\cB$ has low
probability of being representable in $\poly(n)$ bits, in which case
there is an $s$-$t$ min cut $A_i$ whose value is not correctly represented with probability $1/2$.
The multiplicative error in the value of $A_i$ is $d/b = o(n^{1/3}/\log n)$.
This proves the first part of the theorem.

Similarly, any algorithm $\ALG$ performing $m = \poly(n)$ queries to
$f$ has low probability of determining whether $\cB = \emptyset$.
If $\cB = \emptyset$ then the minimum value of \eqref{eq:SSTMC} is $d=n^{1/3}$,
whereas if $\cB \neq \emptyset$ then the minimum value of \eqref{eq:SSTMC} is $b=O(h(n) \log n)$.
This proves the second part of the theorem.
\end{proofsketchof}

\subsection{Submodular Vertex Cover}

Let $G=(V,E)$ be a graph with $n = \card{V}$.
A set $C \subseteq V$ is a vertex cover if every edge has at least one endpoint in $C$.
Let $\cC \subset 2^V$ be the collection of vertex covers in the graph.
The \newterm{submodular vertex cover} problem \cite{GKTW09,IN09} is
\begin{equation}
\EquationName{SVC}
\min \setst{ f(S) }{ S \in \cC },
\end{equation}
where $f : 2^V \rightarrow \bR$ is a non-negative, submodular function.
An algorithm for this problem is said to have \newterm{approximation ratio} $\alpha$ if,
for any function $f$, it returns a set $S$ for which
$f(S) \leq \alpha \cdot \min \setst{ f(S) }{ S \in \cC }$.

\begin{theorem}[Goel et al.~\cite{GKTW09}, Iwata and Nagano \cite{IN09}]
There is an algorithm which performs $\poly(n)$ queries to $f$ and has approximation ratio $2$.
\end{theorem}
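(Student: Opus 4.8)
The plan is to round a convex relaxation built from the \emph{Lov\'asz extension}. Given a submodular $f : 2^V \to \bR$, which we may assume is normalized ($f(\emptyset)=0$), its Lov\'asz extension $\hf : [0,1]^V \to \bR$ is the convex function $\hf(x) = \int_0^1 f(\setst{u}{x_u \geq \theta})\,d\theta$; it is convex precisely because $f$ is submodular, it agrees with $f$ on $\set{0,1}^V$, and both $\hf$ and a subgradient of it can be evaluated with $\poly(n)$ calls to a value oracle for $f$ (sort the coordinates of $x$ and read off the associated chain). Consider the relaxation
\[
  \min \setst{ \hf(x) }{ x \in P }, \qquad
  P ~=~ \setst{ x \in [0,1]^V }{ x_u + x_v \geq 1 ~~\forall (u,v) \in E }.
\]
Since $P$ has a trivial polynomial-time separation oracle and $\hf$ is convex with polynomial-time (sub)gradients, this convex program can be solved to optimality in polynomial time by the ellipsoid method \cite{GLS}. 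It is a genuine relaxation: for every vertex cover $C \in \cC$ we have $\ind(C) \in P$ and $\hf(\ind(C)) = f(C)$, so its optimum $\hf(x^*)$ is at most $\min\setst{f(C)}{C \in \cC}$.

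First I would solve this program to obtain an optimal $x^*$, and then apply \emph{threshold rounding}. For $\theta \in (0,\tfrac12]$ set $S_\theta := \setst{u}{x^*_u \geq \theta}$. Each $S_\theta$ is a vertex cover: for an edge $(u,v)$ the constraint $x^*_u + x^*_v \geq 1$ forces $\max\set{x^*_u,x^*_v} \geq \tfrac12 \geq \theta$, so one endpoint lies in $S_\theta$. As $\theta$ ranges over $(0,\tfrac12]$, $S_\theta$ takes at most $n+1$ distinct values (it changes only at coordinate values of $x^*$); the algorithm enumerates all of them, evaluates $f$ on each, and outputs the one of minimum value, call it $S$. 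For the guarantee I would then use the integral formula for $\hf$ together with non-negativity of $f$:
\[
  f(S) ~=~ \min_{\theta \in (0,1/2]} f(S_\theta)
       ~\leq~ \frac{1}{1/2}\int_0^{1/2} f(S_\theta)\,d\theta
       ~\leq~ 2 \int_0^{1} f(S_\theta)\,d\theta
       ~=~ 2\,\hf(x^*)
       ~\leq~ 2 \min\setst{f(C)}{C \in \cC},
\]
where the first inequality is ``minimum $\leq$ average'' and the second discards $\int_{1/2}^1 f(S_\theta)\,d\theta \geq 0$. This gives approximation ratio $2$, and the algorithm clearly runs in $\poly(n)$ time with $\poly(n)$ oracle calls.

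I do not expect a serious obstacle; the only points needing care are (i) justifying that the convex relaxation is solvable in polynomial time, which is standard once $\hf$ and its subgradients are polynomial-time computable, and (ii) the normalization bookkeeping, since the clean identity $\hf(x) = \int_0^1 f(S_\theta)\,d\theta$ requires $f(\emptyset)=0$. An alternative route, closer to what \cite{GKTW09,IN09} actually do, replaces the randomized/enumeration step with a single threshold at $\tfrac12$: one first shows via an uncrossing/perturbation argument (exploiting convexity of $\hf$: perturb non-half-integral coordinates of $x^*$ toward $\set{0,\tfrac12,1}$ in opposite directions, preserving feasibility, and use that $x^*$ is a midpoint of two feasible optima) that the relaxation has a \emph{half-integral} optimal solution $x^* \in \set{0,\tfrac12,1}^V$; then with $A = \setst{u}{x^*_u = 1}$ and $B = \setst{u}{x^*_u = \tfrac12}$, evaluating $\hf$ on this half-integral point gives $\hf(x^*) = \tfrac12\big(f(A) + f(A\union B)\big) \geq \tfrac12 f(A\union B)$, so the vertex cover $A\union B$ already satisfies $f(A\union B) \leq 2\,\hf(x^*) \leq 2\,\mathrm{OPT}$. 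Establishing half-integrality is the one mildly technical step of that variant, and it is avoided entirely by the threshold-rounding argument above.
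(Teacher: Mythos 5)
The paper does not prove this theorem; it simply cites it to Goel et al.\ and Iwata--Nagano, so there is no ``paper's own proof'' to compare against. Your reconstruction is essentially correct and is the standard route (and the route Iwata--Nagano take): minimize the Lov\'asz extension over the fractional vertex-cover polytope, then threshold-round at $\tfrac12$, with the factor $2$ falling out of ``$\min \leq \text{average}$'' applied to $\int_0^{1/2} f(S_\theta)\,d\theta$ together with non-negativity of $f$ to discard $\int_{1/2}^1$. Your alternative half-integrality route is also the one used in the cited papers and is accurately sketched.

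One remark on a detail you flag as ``bookkeeping'': do \emph{not} normalize $f$. Replacing $f$ by $f' = f - f(\emptyset)$ can destroy non-negativity (a non-negative submodular $f$ can have $f(S) < f(\emptyset)$ for $S \neq \emptyset$), and your only use of non-negativity is exactly the step that discards $\int_{1/2}^1 f(S_\theta)\,d\theta \geq 0$, which would then fail. Fortunately the normalization is unnecessary: the identity $\hf(x) = \int_0^1 f(\setst{u}{x_u \geq \theta})\,d\theta$ is the Lov\'asz extension for \emph{any} $f$, including $f(\emptyset) \neq 0$ (the $\theta \in (\max_u x_u, 1]$ slice contributes $(1 - \max_u x_u)\,f(\emptyset)$, matching the usual convex-combination form with $S_0 = \emptyset$), $\hf$ is still convex, and $\hf(\ind(S)) = f(S)$ exactly. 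So the clean fix is to drop the ``we may assume normalized'' sentence entirely; the rest of your argument then goes through verbatim with the original non-negative $f$ and no monotonicity assumption, which also matches the paper's remark that the monotonicity restriction in Goel et al.\ is unnecessary.
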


Goel et al.\ only state that their algorithm is applicable for monotone, submodular functions,
but the monotonicity restriction seems to be unnecessary.

\begin{theorem}[Goel et al.~\cite{GKTW09}]
For any constant $\epsilon>0$, any algorithm for the submodular vertex cover problem
with approximation ratio $2-\epsilon$ must perform exponentially many queries to $f$.
\end{theorem}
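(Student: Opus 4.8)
The plan is to prove this query lower bound by the now-standard ``hidden input'' method of Svitkina and Fleischer~\cite{SF} and Goemans et al.~\cite{nick09}: exhibit a distribution over monotone submodular cost functions that is essentially indistinguishable, under a subexponential number of value queries, from a single fixed ``baseline'' function, yet on which the optimal vertex cover is cheaper by a factor approaching $2$. I would take the graph $G$ to be a perfect matching on $n$ vertices, with edges $\{a_i,b_i\}$ for $i \in [n/2]$; its inclusionwise-minimal vertex covers are exactly the transversals $T_J := \{a_i : i \in J\} \cup \{b_i : i \notin J\}$ for $J \subseteq [n/2]$, of which there are $2^{n/2}$, all of size $n/2$.

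Next I would fix parameters $v=\Theta(n)$ and a slope $\eta$ and define, for each $J$, a cost function $f_J$ that agrees with a baseline function $g$ everywhere except in a small ``dip'' around $T_J$. Concretely, $g$ would be a truncated-cardinality-type function, e.g.\ $g(S)=\min\{\,2v,\ c\,|S|\,\}$ with $c$ chosen so that $g$ takes value $(2-o(1))v$ on every transversal, and $f_J(S)=\min\{\,g(S),\ v+\eta\, d_J(S)\,\}$, where $d_J(S)$ is a modular ``distance'' from $S$ to $T_J$ (of the form $|S\setminus T_J|$ plus a modular penalty for elements of $T_J$ missing from $S$). The properties I must verify are: (i) $f_J$ is normalized, non-negative, monotone and submodular --- the choice of $d_J$ is dictated by the need for the pointwise minimum $\min\{g,\ v+\eta d_J\}$ to stay submodular, which is most naturally checked via an uncrossing argument in the spirit of \Lemma{generalindep}; (ii) $\min\{f_J(S):S\text{ a vertex cover}\}=f_J(T_J)=v$, with $T_J$ the unique cheapest vertex cover and every other vertex cover costing at least $(2-\epsilon)v$ for the chosen constants; and (iii) $f_J$ and $g$ coincide outside a Hamming ball of radius $o(n)$ around $T_J$, so the dip region contains at most $2^{o(n)}$ sets.

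The heart of the argument is indistinguishability. Given a deterministic algorithm $\ALG$ making $q=2^{o(n)}$ value queries, I would argue inductively that, as long as $\ALG$ has queried only sets outside the dip, every answer it has received equals the $J$-independent value $g(\cdot)$, so its next query is a function of the baseline answers alone and is chosen obliviously to $J$; hence the whole query path visits a fixed collection of at most $q$ sets. By (iii), the set of ``bad'' $J$ --- those for which some queried set lies in the dip around $T_J$ --- has size at most $q\cdot 2^{o(n)}<2^{n/2}$, so there is a choice of $J$ for which $\ALG$ sees exactly the baseline. On that input $\ALG$ outputs some vertex cover $S^\star$ with $f_J(S^\star)=g(S^\star)\geq(2-\epsilon)v$ by (ii)--(iii), while the optimum is $f_J(T_J)=v$; so the approximation ratio is at least $2-\epsilon$. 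Randomized algorithms are handled by Yao's principle, fixing the best random string against the uniform distribution over $J$. (An alternative route would be to instead invoke the symmetry-gap framework, identifying a single instance on which the symmetric optimum is twice the true optimum.)

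The step I expect to be the main obstacle is (i) together with the calibration in (ii): one must simultaneously choose the ball radius, the dip depth $v$, the slope $\eta$ and the constant $c$ so that $f_J$ is \emph{exactly} submodular (not merely approximately), the dip is deep enough that $\mathrm{OPT}=v$ but shallow enough --- and the baseline large enough --- that every other vertex cover still costs at least $(2-\epsilon)v$, and the dip's support remains $2^{o(n)}$ so the union bound over queries closes. Pushing the gap all the way to $2-\epsilon$ is the delicate point, since it forces the baseline value on transversals to be essentially twice the dip depth, which pins down $c$ and constrains the ball to radius $o(n)$; and submodularity of $\min\{g,\ v+\eta d_J\}$ is what restricts the admissible shapes of $d_J$, best handled by the uncrossing technique of \Lemma{generalindep}.
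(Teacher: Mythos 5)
The paper does not prove this theorem; it is stated as a citation to Goel et al.~\cite{GKTW09}, and the paper then goes on to prove a \emph{different} (weaker-ratio but stronger-conclusion) statement about representations of minimizers using the matroid construction of \Lemma{pairwise}. So there is no internal proof to compare against. Treating your sketch as a reconstruction of the Goel et al.\ argument: the overall plan is the right one. The graph (a perfect matching, minimal vertex covers = transversals), the hidden-input/indistinguishability framework \`a la Svitkina--Fleischer, the induction that the query path is oblivious to $J$ while all queries land outside the dip, adding the output set $S^\star$ to the union bound, and Yao's principle for randomization are all as they should be; the paper even uses the same matching-graph setup for its own adjacent $4/3$ result.

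The genuine gap is in step (i), exactly where you flag trouble, and it is more serious than a calibration issue. The form $f_J = \min\{g,\ v+\eta\,d_J\}$ with $g$ itself a non-constant truncated function and $d_J$ modular is not submodular in general: the pointwise minimum of two submodular functions is not submodular, and there is no uncrossing-style rescue here --- \Lemma{generalindep} is a tool for showing a set family is a matroid, not for verifying submodularity of a real-valued function, so invoking it is a category error. The constructions that actually work (Svitkina--Fleischer and, in essence, Goel et al.) avoid an outer min of two nontrivial submodular functions. The standard shape is
\[
f_J(S) \;=\; \min\Bigl\{\alpha,\ \ |S\setminus T_J| \;+\; \min\bigl(|S\cap T_J|,\ \beta\bigr)\Bigr\},
\]
which is submodular because $|S\setminus T_J|$ is modular, $\min(|S\cap T_J|,\beta)$ is a uniform-matroid rank (hence monotone submodular), their sum is monotone submodular, and truncating a monotone submodular function by a \emph{constant} $\alpha$ preserves submodularity. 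With this form, $f_J(T_J)=\min\{\alpha,\beta\}$, a typical transversal has $|S\cap T_J|$ concentrated near $n/4$ so $f_J(S)$ is near $\alpha$, and one tunes $\alpha\approx 2\beta$ with $\beta=\Theta(\epsilon n)$ so that the gap is $2-\epsilon$ while the dip region (sets on which $f_J$ differs from the $J$-independent baseline) has measure $2^{-\Omega(\epsilon^2 n)}$, which is what makes the union bound over $2^{o(n)}$ queries close. If you replace your $\min\{g,\ v+\eta d_J\}$ with this truncated-sum form, the rest of your argument goes through.
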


Modifying their result to incorporate our matroid construction in \Section{general-lower},
we obtain the following theorem.

\begin{theorem}
Let $G = (U \cup V, E)$ be a bipartite graph.
Assume that $f : 2^{U \cup V} \rightarrow \bR$ is a non-negative, monotone, submodular function.
Let $\epsilon \in (0,1/3)$ be a constant.
For any algorithm that performs any number of queries to $f$
and outputs a data structure of size $\poly(n)$,
that data structure cannot represent the minimizers of \eqref{eq:SVC}
to within an approximation factor better than $4/3-\epsilon$.
Moreover, any algorithm that performs $\poly(n)$ queries to $f$
cannot compute the minimum value of \eqref{eq:SSTMC} to within a $4/3-\epsilon$ factor.
\end{theorem}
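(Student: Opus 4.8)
\begin{proofsketch}
The plan is to follow the template of the proof of \Theorem{SFMCC}, with the cardinality constraint replaced by the vertex-cover constraint, while adapting the bipartite hard instance underlying the Goel et al.~\cite{GKTW09} $(2-\epsilon)$ lower bound so that it carries the extremal matroids of \Theorem{newmatroids}. Concretely, I would fix a slowly divergent $h(n)$, set $k = n^{h(n)}+1$, $d = n^{1/3}$, $b = 8\log k$ and $\tau = d/4\log k$, and choose a subset $\cB \subseteq \set{u_1,\ldots,u_{k-1}}$ uniformly at random. Invoking \Theorem{modifiedexpanders} (a minor variant of \Theorem{ourexpanders} tailored to the internally-vertex-disjoint layout of $G$) and then \Theorem{newmatroids} gives, with probability $1-2/k$, a matroid $\mat_\cB$ on the vertex set $U\union V$ whose rank function takes value $b$ on $A_i$ when $u_i\in\cB$ and value $d=\card{A_i}$ when $u_i\notin\cB$, and that does not depend on $\Gamma(\set{u_k})$. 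The submodular cost function is $f = \rankf_{\mat_\cB}$, which is automatically non-negative, monotone and submodular, and $G=(U\union V,E)$ is the Goel et al.\ bipartite instance, arranged so that a superpolynomial family of its minimal vertex covers is indexed by $\cA = \set{A_1,\ldots,A_k}$, and so that every minimal vertex cover outside this family is forced, on a constant fraction of the gadgets of $G$, to use the more expensive of two local alternatives. With the gadget parameters chosen appropriately this makes the optimum of \eqref{eq:SVC} a factor $4/3-\epsilon$ larger when $\cB=\emptyset$ than when $\cB\neq\emptyset$; note that the small constant $4/3$ (in place of the $n^{1/3}/\log n$ available in the unconstrained and $s$-$t$-cut settings) is inherited from the gadget, whose purpose is precisely to clamp the gap to a constant.

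For the hardness of representing the minimizers: $\cB$ ranges over a family of size $2^{k-1}=2^{n^{h(n)}}$, so with probability $1-o(1)$ it cannot be described in $\poly(n)$ bits, and hence no data structure of size $\poly(n)$ can record for every $A_i$ whether $A_i$ is a minimizer of \eqref{eq:SVC}. Whenever it errs on some $A_i$, the value it certifies differs from the true optimum by a multiplicative factor at least $4/3-\epsilon$, by the gap above; this gives the first assertion.

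For the hardness of computing the optimum value with $m=\poly(n)$ queries, I would argue adaptively exactly as in the proof of \Theorem{SFMCC}. Maintaining inductively that the first $i-1$ queries have not distinguished $f=\rankf_{\mat_\cB}$ from $f=\rankf_{\mat_\emptyset}$, the $i\th$ query set is independent of $\Gamma(\set{u_1}),\ldots,\Gamma(\set{u_{k-1}})$, and I re-invoke \Theorem{modifiedexpanders} with $\Gamma(\set{u_k})$ chosen deterministically so that this query set becomes rank-preserving, whence with probability $1-2/k$ the $i\th$ query also fails to distinguish the two matroids. A union bound over the $m$ queries shows the algorithm distinguishes $\cB=\emptyset$ from $\cB\neq\emptyset$ with probability only $O(m/k)=o(1)$; since these two cases give optima of \eqref{eq:SVC} differing by the factor $4/3-\epsilon$, no $\poly(n)$-query algorithm can approximate the optimum within $4/3-\epsilon$.

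The main obstacle is the construction of $G$, together with the verification that $f=\rankf_{\mat_\cB}$ (restricted to the subsets of $U\union V$ that actually arise) is non-negative, monotone and submodular while simultaneously (i) realizing a superpolynomial family of minimal vertex covers faithfully encoding the secret $\cB$ and (ii) keeping the penalty for a wrong cover exactly a $(4/3-\epsilon)$ factor. This is where Goel et al.'s gadget must be grafted onto the matroid; the remaining technical points are checking that the sets in the family really are minimal vertex covers and that the adaptive re-randomization of \Theorem{modifiedexpanders} preserves this minimality.
\end{proofsketch}
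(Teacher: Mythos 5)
Your proposal recycles the heavy expander/truncation machinery from the $s$-$t$ min cut proof (i.e.\ \Theorem{modifiedexpanders} together with \Theorem{newmatroids} with $d = n^{1/3}$, $b = 8\log k$), but that machinery cannot be made to work here, and the paper's actual argument is quite different and much lighter. The core obstruction you don't resolve is that \Theorem{newmatroids} requires the $A_i$'s to be small sets (size $n^{1/3}$) that are nearly disjoint, whereas minimal vertex covers of any reasonable bipartite graph have size $\Theta(n)$ and, being so large, necessarily have pairwise intersections on the order of $n/4$. No choice of gadget changes this: if each $A_i$ must pick exactly one endpoint of each of $\Theta(n)$ disjoint edges, two random $A_i$'s agree on roughly half the edges. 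So the lossless-expander hypothesis fails outright. You noticed the symptom --- that your parameters would give a gap of $n^{1/3}/\log n$ rather than $4/3$, which you then wave away by postulating that ``the gadget clamps the gap'' --- but you never exhibit such a gadget, and none is needed.

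What the paper actually does: take $G$ to be a perfect matching with $|U|=|V|=|E|=n/2$, so the minimal vertex covers are exactly the $2^{n/2}$ transversals of the matching, each of size $n/2$. Pick $k = 2^{\epsilon^2 n/40}$ of them uniformly at random; a Chernoff-plus-union bound gives $|A_i\cap A_j| \le (1+\epsilon)n/4$ for all $i\neq j$. Then apply the \emph{pairwise} construction \Lemma{pairwise} (not the $\tau$-truncated expander construction) with $d=n/2$ and $b=(3+\epsilon)n/8$: the hypothesis $d \le 2b - |A_i\cap A_j|$ is satisfied, so $\cI_\cB$ is a matroid for every $\cB\subseteq\cA$. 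The $4/3-\epsilon$ gap is then exactly $d/b=(n/2)/((3+\epsilon)n/8)$, which is forced to be a constant because each $A_i$ is large and the pairwise intersections are $\Theta(n)$ --- this is inherent, not a design choice to be ``clamped.'' The representation/query hardness then follows by the same two-pronged argument as in \Theorem{SFMCC}, but without any adaptive re-randomization of expander neighborhoods (which is irrelevant here since no expander is used). In short: your approach would fail at the matroid-construction step, and the actual proof substitutes the weak pairwise uncrossing lemma for the strong expander construction, trading the $n^{1/3}/\log n$ gap for a constant gap that is all that is geometrically possible in this setting.
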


\begin{proofsketch}
Let $G$ be a graph such that $|U|=|V|=|E|=n/2$,
and where the edges in $E$ form a matching between $U$ and $V$.
The minimal vertex covers are those that contain exactly one endpoint of each edge in $E$.
Set $k = 2^{\epsilon^2 n / 40}$.
Let $\cA = \set{A_1,\cdots,A_k}$ be a collection of independently and uniformly chosen minimal
vertex covers.  For any $i \neq j$, $\expect{ | A_i \cap A_j | } = n/4$
and a Chernoff bound shows that
$\prob{ | A_i \cap A_j | > (1+\epsilon) n/4 } \leq \exp( - \epsilon^2 n/12 )$.
A union bound shows that, with high probability,
$|A_i \cap A_j| \leq (1+\epsilon) n/4$ for all $i \neq j$.

We now apply \Lemma{pairwise} (in \Appendix{pairwise})
with each $b_i = b = (3+\epsilon)n/8$ and $d = n/2$.
We have
$$
\min_{i,j \in [k]} (b_i + b_j - \card{A_i \intersect A_j})
~\geq~ 2b - (1+\epsilon)n/4
~=~ 2(3+\epsilon)n/8 - (1+\epsilon)n/4
~=~ n/2,
$$
and therefore the hypotheses of \Lemma{pairwise} are satisfied.
It follows that, for any set $\cB \subseteq \cA$ the set
$$
\cI_\cB ~=~
    \setst{ I }{\: \card{I} \leq d \:\And\: \card{I \intersect A_j} \leq b ~~~\forall A_j \in \cB }
$$
is the family of independent sets of a matroid.
Let $f = \rankf_{\mat_\cB}$ be the rank function of this matroid.

Suppose $\ALG$ performs any number of queries to $f$. The set $\cB$ has low
probability of being representable in $\poly(n)$ bits, in which case
there is a minimal vertex cover $A_i$ whose value is not correctly represented with probability $1/2$.
The multiplicative error in the value of $A_i$ is
$$
\frac{d}{b} ~=~ \frac{n/2}{(3+\epsilon)n/8} ~>~ \frac{4}{3} - \epsilon.
$$
This proves the first part of the theorem.

Similarly, any algorithm $\ALG$ performing $m = \poly(n)$ queries to
$f$ has low probability of determining whether $\cB = \emptyset$.
If $\cB = \emptyset$ then the minimum value of \eqref{eq:SSTMC} is $d$,
whereas if $\cB \neq \emptyset$ then the minimum value of \eqref{eq:SSTMC} is $b$.
The multiplicative error is at least $d/b$, proving the second part of the theorem.
%
\end{proofsketch}

\section{Implications to Algorithmic Game Theory and Economics}
\label{sec:econ}

 An important consequence of our matroid construction in Section~\ref{section-new-extremal-matroid}
 is that matroid rank functions  do not have a ``sketch'',
i.e., a concise, approximate representation.
As matroid rank functions can be shown to satisfy the ``gross substitutes'' property~\cite{kazuo-book}, our work implies that gross substitute functions do not have a
concise, approximate representation. This provides a surprising answer to an 
open
question in economics~\cite{BingLM04PresentationaSoSV,liad-thesis,BN05}.
In this section we define gross substitutes functions, briefly describe their importance in economics,
and formally state the implications of our results for these functions.

Gross substitutes functions play an important role in algorithmic game theory and economics,
particularly through their use as valuation functions in combinatorial
auctions~\cite{CramtonSS06CombinatorialA,GulS99WalrasianEwGS,book07}.
Intuitively, in a gross substitutes valuation, increasing the price of certain items can
not reduce the demand for items whose price has not changed. Formally:


\begin{definition}
\label{def:GS}
For price vector $\vec{p} \in\reals^{n}$, the \newterm{demand correspondence}
$\demandSet{}_{f}(\vec{p})$ of valuation $f$ is the collection of preferred sets
at prices $\vec{p}$, i.e.,
$$
 \demandSet{}_{f}(\vec{p}) ~=~ \argmax_{S \subseteq \{1, \ldots, n\}}
 \set{ f(S) - \smallsum{j \in S}{}\, p_{j} }.
$$
A function $f$ is \newterm{gross substitutes} (GS) if for any price vector $\vec{q} \geq \vec{p}$
(i.e., for which $q_{i} \geq p_{i} ~\:\forall i \!\in\! [n]$), and any $A \in \demandSet{}_{f}(\vec{p})$ there exists $A' \in \demandSet{}_{f}(\vec{q})$ with $ A' \supseteq \{i\in A: p_{i} = q_{i} \}$.
\end{definition}

In other words, the gross substitutes property requires that all items $i$
in some preferred set $A$ at the old prices $\vec{p}$ and
for which the old and new prices are equal ($p_i = q_i$)
are simultaneously contained in some preferred set $A'$ at the new prices $\vec{q}$.

Gross substitutes valuations (introduced by Kelso and Crawford~\cite{KelsoC82JobMCFaGS})
enjoy several appealing structural properties whose implications been extensively studied by many
researchers~\cite{BingLM04PresentationaSoSV}.
For example, given bidders with gross substitutes valuations, simple item-price ascending auctions
can be used for determining the socially-efficient allocation.
As another example, the gross substitute condition is actually necessary for important economic
conclusions. For example, Gul and Stacchetti~\cite{GulS99WalrasianEwGS} and Milgrom~\cite{milg2000}
showed that given any valuation that is not gross substitutes, one can specify very simple
valuations for the other agents to create an economy in which no Walrasian equilibrium exists.

One important unsolved question
concerns the complexity of describing gross substitutes valuations.
Several researchers have asked whether there exist a ``succinct'' representation for such
valuations \cite{BingLM04PresentationaSoSV} \cite[Section 6.2.1]{liad-thesis}
\cite[Section 2.2]{BN05}.
In other words, can a bidder disclose the exact details of his valuation without conveying an
exceptionally large amount of information?
An implications of our work is that the answer to this question is ``no'',
in a very strong sense. Our work implies that
gross substitutes functions cannot be represented succinctly, even approximately,
and even with a large approximation factor.
Formally:

\begin{definition}
We say that $g:2^{[n]} \rightarrow \reals_+$ is an $\alpha$-sketch for $f:2^{[n]} \rightarrow \reals_+$ if $g$ can be represented in $\poly(n)$ space and for every set $S$ we have
that $f(S)/{\alpha} \leq g(S) \leq f(S)$.
\end{definition}

As matroid rank functions are known to satisfy the gross substitute property~\cite{kazuo-book}, our
work implies that  gross substitutes do not have a concise, approximate representation.
Specifically:

\begin{theorem}
\label{lowerboundGS}
Gross substitute functions do not admit  $o({n^{1/3}}/{\log n})$ sketches.
\end{theorem}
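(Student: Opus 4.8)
The plan is to derive Theorem~\ref{lowerboundGS} directly from \Theorem{manymatroids}, using two facts: matroid rank functions are gross substitutes \cite{kazuo-book}, and \Theorem{manymatroids} manufactures, for a single super-polynomial family of sets, a distinct matroid rank function for \emph{every} labeling of those sets as \textsc{High}/\textsc{Low}. Since every matroid rank function is gross substitutes, it suffices to exhibit a matroid rank function admitting no $o(n^{1/3}/\log n)$-sketch. So fix an arbitrary approximation factor $\alpha = \alpha(n) = o(n^{1/3}/\log n)$; I want a matroid rank function with no $\alpha$-sketch. The key point is that $n^{1/3}/\alpha = \omega(\log n)$, which leaves room to pick an integer $k = k(n)$ satisfying simultaneously: (i) $k = 2^{o(n^{1/3})}$, so \Theorem{manymatroids} applies; (ii) $k$ super-polynomial in $n$, i.e.\ $\log k = \omega(\log n)$; and (iii) $8 \log k < n^{1/3}/\alpha$. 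For instance, taking $\log k$ of order $\sqrt{(n^{1/3}/\alpha)\log n}$ works, since this quantity is $\omega(\log n)$, is $o(n^{1/3})$ (as $\alpha \geq 1$), and lies below $n^{1/3}/(8\alpha)$ once $n$ is large. Apply \Theorem{manymatroids} with this $k$ to obtain $\cA = \{A_1,\dots,A_k\}$ with each $\card{A_i} = n^{1/3}$, and the matroids $\mat_\cB$ with $\rankf_{\mat_\cB}(A_i) = 8\log k$ for $A_i \in \cB$ and $\rankf_{\mat_\cB}(A_i) = n^{1/3}$ for $A_i \not\in \cB$.

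Next I would show that no single function can be an $\alpha$-sketch of two distinct functions in this family. Suppose $\cB \neq \cB'$ and $g$ is an $\alpha$-sketch of both $\rankf_{\mat_\cB}$ and $\rankf_{\mat_{\cB'}}$; choosing $A_i$ in the symmetric difference, say $A_i \in \cB \setminus \cB'$, the sketch inequalities give $g(A_i) \leq \rankf_{\mat_\cB}(A_i) = 8\log k$ and $g(A_i) \geq \rankf_{\mat_{\cB'}}(A_i)/\alpha = n^{1/3}/\alpha$, contradicting (iii). Hence, if every gross substitutes function had an $\alpha$-sketch, then assigning to each $\cB \subseteq \cA$ an $\alpha$-sketch of $\rankf_{\mat_\cB}$ would yield an injection from a set of size $2^k$ into the collection of $\poly(n)$-space data structures, which has size at most $2^{\poly(n)}$. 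Since $k$ is super-polynomial, $2^k > 2^{\poly(n)}$ for $n$ large, a contradiction. Therefore some $\rankf_{\mat_\cB}$ admits no $\alpha$-sketch; being gross substitutes, it witnesses Theorem~\ref{lowerboundGS}.

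I expect the only genuine work to be the elementary parameter bookkeeping in (i)--(iii): confirming that the slack between the hypothesis $\alpha = o(n^{1/3}/\log n)$ and the $\tOmega(n^{1/3})$ rank-gap of \Theorem{manymatroids} is always enough to make $\log k$ simultaneously $\omega(\log n)$, $o(n^{1/3})$, and smaller than $n^{1/3}/(8\alpha)$. Once $k$ is chosen, the interval-disjointness argument and the counting step are immediate, and the gross-substitutes property of matroid rank functions is quoted from \cite{kazuo-book}.
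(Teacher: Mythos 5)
Your argument is correct and follows the same route the paper intends (though the paper omits an explicit proof of this theorem, appealing instead to the construction and the analogous counting argument used in the proof of \Theorem{SFMCC}): matroid rank functions are gross substitutes, \Theorem{manymatroids} yields $2^k$ rank functions that pairwise disagree by a factor $n^{1/3}/(8\log k)$ on some $A_i$, and a counting/pigeonhole argument shows no $\poly(n)$-space family of sketches can be injective on a super-polynomially large index set. Your parameter choice $\log k \asymp \sqrt{(n^{1/3}/\alpha)\log n}$ is a clean way to verify the needed simultaneous bounds, and the interval-disjointness step is the right observation about why distinct labelings force distinct sketches.
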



\section{Conclusions}

In this work we have used a learning theory perspective to uncover new structural properties of
submodular functions.
We have presented the first algorithms and lower bounds for learning submodular functions
in a distributional learning setting.
We also presented numerous implications of our work in algorithmic game theory,
economics, matroid theory and combinatorial optimization.

Regarding learnability, we presented polynomial upper and lower bounds on the approximation
factor achievable when using only a polynomial number of examples drawn
i.i.d.\ from an arbitrary distribution.
We also presented a simple algorithm achieving a constant-factor approximation under product
distributions.
These results show that, with respect to product distributions,
submodular functions behave in a fairly simple manner, whereas with
respect to general distributions, submodular functions
behave in a much more complex manner.

We constructed a new family of matroids with interesting technical properties
in order to prove our lower bound on PMAC-learnability.
The existence of these matroids also resolves an open question in economics:
an immediate corollary of our construction is that gross substitutes functions have no succinct,
approximate representation.
We also used these matroids to show that the optimal solutions of various submodular optimization
problems can have a very complicated structure.

The PMAC model provides a new approach for analyzing the learnability of real-valued functions.
This paper has analyzed submodular functions in the PMAC model.
We believe that it will be interesting to study PMAC-learnability of other classes of real-valued
functions.
Indeed, as discussed below, subsequent work has already studied subadditive and XOS
functions in the PMAC model.

One technical question left open by this work is determining the precise approximation factor
achievable for PMAC-learning submodular functions --- there is a gap between the
$O(n^{1/2})$ upper bound in \Theorem{n-monotone} and the $\tOmega(n^{1/3})$ lower bound
in~\Theorem{mainlb}. We suspect that the lower bound can be improved to
$\tilde{\Omega}(n^{1/2})$. If such an improved lower bound is possible, the matroids or
submodular functions used in its proof are likely to be very interesting.

\comment{
In this work we develop the first theoretical analysis for learning submodular functions in
a distributional learning setting.
We prove polynomial upper and lower bounds on the approximability
guarantees achievable in the general case by using only a polynomial number of examples drawn
i.i.d.\ from the underlying distribution.
We also provide improved guarantees, achieving constant-factor approximations, under natural distributional assumptions.
These results provide new insights on the inherent complexity of submodular functions.
 Our results show that with respect to product distributions,
submodular functions behave in a fairly simple manner, whereas with
respect to general distributions, submodular functions cannot be
well-approximated by {\em any} function of polynomial description
complexity.

Our work combines central issues in optimization (submodular functions
and matroids) with central issues in learning (learnability of natural but complex
classes of functions in a distributional setting).  Our analysis brings a twist on the
usual learning theory models and uncovers some interesting  structural and extremal properties of
matroid and submodular functions, which are likely to be useful in other contexts as well.

Our PMAC model provides a new approach for analyzing the learnability of real-valued functions.
It would be interesting to understand the PMAC-learnability of other natural classes of
real-valued functions. A concrete technical question is to close the gap between the
$O(n^{1/2})$ upper bound in \Theorem{n-monotone} and the $\tOmega(n^{1/3})$ lower bound
in~\Theorem{mainlb}. We suspect that the lower bound can be improved to
$\tilde{\Omega}(n^{1/2})$. If such an improved lower bound is possible, the matroids or
submodular functions used in its proof are likely to be very interesting.
}

\comment{
\section {Open Questions}

Our work opens up a number of interesting research directions.

\begin{itemize}
\item
Our PMAC model provides a new approach for analyzing the learnability of real-valued functions.
It would be interesting to understand the PMAC-learnability of other natural classes of
real-valued functions.

\item
Are there particular subclasses of submodular functions for which one can
PMAC-learn with approximation ratio better than $O(\sqrt{n})$, perhaps under additional
distributional assumptions?

\item
How well can one PMAC-learn non-monotone submodular functions?

\item
The algorithm in \Section{special} learns $1$-Lipschitz functions
under a product distribution. It trivially extends to learning $L$-Lipschitz functions
 for any constant $L$.
How well can one PMAC-learn non-Lipschitz functions under a product distribution?

\item
A concrete technical question is to close the gap between the
$O(n^{1/2})$ upper bound in \Theorem{n-monotone} and the $\tOmega(n^{1/3})$ lower bound
in~\Theorem{mainlb}. We suspect that the lower bound can be improved to
$\tilde{\Omega}(n^{1/2})$. If such an improved lower bound is possible, the matroids or
submodular functions used in its proof are likely to be very interesting.


\item
A result similar to \Theorem{OURTALAGRAND} can be proven even if $f$ is non-monotone,
using self-bounding functions. Can our proof of \Theorem{OURTALAGRAND}
be generalized to obtain such a result?

\end{itemize}

}

\subsection{Subsequent Work}
\SectionName{subsequent}
Following our work, several authors have provided further results for learning submodular functions in a distributional learning setting.

 Balcan et al.~\cite{BCIW} and Badanidiyuru et al.\ \cite{BDFKNR} have provided further learnability results in the PMAC model for various classes of set functions commonly used in algorithmic game theory and economics. Building on our algorithmic technique, Balcan et al.~\cite{BCIW} give a computationally efficient algorithm for PMAC-learning subadditive functions to within a $\tO(\sqrt{n})$ factor. They also provide new target-dependent learnability result for \textsc{XOS} (or fractionally subadditive) functions. Their algorithms use the algorithmic technique that we develop in~\Section{general-upper}, together with new structural results for these classes of functions.
Badanidiyuru et al.\ \cite{BDFKNR} consider the problem of \newterm{sketching}
subadditive and submodular functions.
They show that the existence of such a sketch implies that PMAC-learning
to within a factor $\alpha$ is possible if computational efficiency is ignored.
As a consequence they obtain (computationally inefficient) algorithms for PMAC-learning
to within a $\tO(\sqrt{n})$ factor for subadditive functions,
and to within a $1+\epsilon$ factor for both coverage functions and OXS functions.

Regarding inapproximability, both Badanidiyuru et al.\ and Balcan et al.\
show that XOS (i.e., fractionally subadditive) functions do not have
sketches that approximate to within a factor $\tilde{o}(\sqrt{n})$.
Consequently, every algorithm for PMAC-learning XOS functions
must have approximation factor $\tOmega(\sqrt{n})$.
The construction used to prove this result is significantly simpler than our construction in
\Section{general-lower}, because XOS functions are a more expressive class than submodular
functions.

Motivated by problems in privacy preserving data analysis,
Gupta et al.\ \cite{GHRU}
considered how to perform statistical queries to a data set
in order to learn the answers to all statistical queries from a certain class.
They showed that this problem can be efficiently solved
when the queries are described by a submodular function.
One of the technical pieces in their work is an algorithm to learn submodular functions under a
product distribution. A main building block of their technique is the algorithm we provide in~\Section{special}
for learning under a product distribution, and their analysis is inspired by ours.
Their formal guarantee is incomparable to ours:
it is stronger in that they allow non-Lipschitz and non-monotone functions,
but it is weaker in that they require access to the submodular function via a value oracle,
and they guarantee only additive error (assuming the function is appropriately normalized).
Moreover, their running time is $n^{\poly{(1/\epsilon)}}$ whereas ours is  $\poly{(n,1/\epsilon)}$.

Cheraghchi et al.\ \cite{CKKL} study the noise stability of submodular functions.
As a consequence they obtain an algorithm for learning a submodular function
under product distributions.
Their algorithm also works for non-submodular and non-Lipschitz functions,
and only requires access to the submodular function via statistical queries, though the running time is $n^{\poly{(1/\epsilon)}}$.
Their algorithm is agnostic (meaning that they do not assume the target function is submodular),
and their performance guarantee proves that the $L_1$-loss of their hypothesis is
at most $\epsilon$ more than the best error achieved by any submodular function
(assuming the function is appropriately normalized).

Raskhodnikova and Yaroslavtsev~\cite{rask12} consider learnability of integer-valued, submodular functions and
 prove that any submodular function $f: \{0,1\}^n \rightarrow \{0,1,\ldots,k\}$ can be represented as a pseudo-Boolean $2k$-DNF formula.
They use this to provide an algorithm for learning such functions
using membership queries under the uniform distribution;
the algorithm runs in time polynomial in $\poly(n, k^{O(k \log k / \epsilon)}, 1/\epsilon)$.

\section*{Acknowledgments}
We thank Jan Vondr\'ak for simplifying our original proof of \Theorem{mainthm},
 Atri Rudra for explaining how our original proof of \Theorem{manymatroids}
was connected to expander graphs, and Florin Constantin for discussions about gross substitutes. 
We also thank Avrim Blum, Shahar Dobzinski, Steve Hanneke, Satoru Iwata, Lap Chi Lau, Noam Nisan, Alex Samorodnitsky, Mohit Singh, Santosh Vempala,
and Van Vu for helpful discussions.

\smallskip

This work was
supported in part by NSF grants CCF-0953192 and CCF-1101215, AFOSR grant FA9550-09-1-0538, a NSERC
Discovery Grant, and a Microsoft Research Faculty Fellowship.



\appendix

\section{Standard Facts}

\subsection{Submodular Functions}
\AppendixName{facts-submodular}

\begin{theorem}
\TheoremName{coverage-subm}
Given a finite universe $U$, let $S_1, S_2, \ldots, S_n$ be
subsets of $U$. Define $f : 2^{[n]} \rightarrow \bR_+$ by
$$f(A) =\left|\cup_{i\in A}
S_i\right|~~~\text{for}~~~~A \subseteq [n].$$
Then $f$ is monotone and submodular.
More generally, for any non-negative weight function $w : U \rightarrow \bR_+$,
the function $f$ defined by
$$f(A) =w\left(\cup_{i\in A} S_i\right)~~~\text{for}~~~~A \subseteq [n]$$
is monotone and submodular.
\end{theorem}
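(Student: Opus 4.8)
The plan is to prove the general (weighted) statement directly, since the unweighted case is recovered by taking $w \equiv 1$. For $T \subseteq U$ I will write $w(T) = \sum_{u \in T} w(u)$, so that $w$ is a non-negative, monotone, \emph{modular} set function on $2^U$; for $A \subseteq [n]$ I abbreviate $S(A) = \bigcup_{i \in A} S_i$, so that $f(A) = w(S(A))$. The only combinatorial fact I need is the elementary identity $S(A \cup B) = S(A) \cup S(B)$ for all $A,B \subseteq [n]$; in particular $S(A+x) = S(A) \cup S_x$.

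For monotonicity, I would simply observe that $A \subseteq B$ implies $S(A) \subseteq S(B)$, and since $w$ takes non-negative values this gives $w(S(A)) \le w(S(B))$, i.e.\ $f(A) \le f(B)$.

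For submodularity I would verify the diminishing-marginal-returns form \Equation{submoddef1}. Fix $A \subseteq B \subseteq [n]$ and $x \in [n]$. Using modularity of $w$ together with $S(A+x) = S(A) \cup S_x$, the marginal gain of adding $x$ to $A$ is $f(A+x) - f(A) = w(S_x \setminus S(A))$, and likewise $f(B+x) - f(B) = w(S_x \setminus S(B))$. Since $S(A) \subseteq S(B)$ we have $S_x \setminus S(B) \subseteq S_x \setminus S(A)$, so non-negativity of $w$ yields $w(S_x \setminus S(B)) \le w(S_x \setminus S(A))$, which is precisely $f(B+x) - f(B) \le f(A+x) - f(A)$, as required.

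There is no real obstacle here; the only subtlety worth flagging is that $S(A \cap B)$ need not equal $S(A) \cap S(B)$ (only the inclusion $S(A\cap B) \subseteq S(A)\cap S(B)$ holds), so working with the marginal-returns form \Equation{submoddef1} is cleaner than with \Equation{submoddef2}. Non-negativity of $w$ is invoked exactly once, in the final inequality, and monotonicity of $f$ is used nowhere in the submodularity argument.
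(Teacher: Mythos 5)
Your proof is correct. The paper states this as a standard fact in the appendix and gives no proof of its own, so there is nothing to compare against; the argument you give — reducing the marginal gain $f(A+x)-f(A)$ to $w(S_x \setminus S(A))$ via modularity of $w$ and then using $S(A) \subseteq S(B)$ together with non-negativity of $w$ — is exactly the standard argument, and your remark that $S(A \cap B) \subsetneq S(A) \cap S(B)$ in general (so the diminishing-returns form \Equation{submoddef1} is the right one to verify) is a genuine and worthwhile caution.
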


\begin{lemma}
\label{minim-subm} The minimizers of any submodular function are closed under union and
intersection.
\end{lemma}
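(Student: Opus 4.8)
The plan is to derive everything from the submodular inequality \eqref{eq:submoddef2} together with the defining property of a minimizer. Let $f : 2^{[n]} \rightarrow \bR$ be submodular, let $m = \min_{S \subseteq [n]} f(S)$ be its minimum value (this is well-defined since $2^{[n]}$ is finite, so the collection of minimizers is nonempty), and let $\cM = \setst{ S }{ f(S) = m }$ denote the collection of minimizers. Fix $A, B \in \cM$; I want to show that both $A \union B$ and $A \intersect B$ lie in $\cM$.

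The one key step is the chain of inequalities
$$
2m ~=~ f(A) + f(B) ~\geq~ f(A \union B) + f(A \intersect B) ~\geq~ m + m ~=~ 2m,
$$
where the first inequality is submodularity applied to $A$ and $B$, and the second uses that $m$ is a lower bound for $f$ on every subset of $[n]$. Since the leftmost and rightmost quantities are equal, every inequality in the chain must be an equality; in particular $f(A \union B) + f(A \intersect B) = 2m$. Combining this with $f(A \union B) \geq m$ and $f(A \intersect B) \geq m$ forces $f(A \union B) = f(A \intersect B) = m$, i.e., $A \union B \in \cM$ and $A \intersect B \in \cM$, which is exactly the claim.

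I do not expect any real obstacle here: a single squeezing argument handles closure under union and closure under intersection simultaneously, so no case analysis is needed. The only minor points worth a sentence in the write-up are noting that $\cM \neq \emptyset$ so that the statement is meaningful, and being explicit that the final equalities follow from the tightness of the whole chain rather than from either inequality in isolation.
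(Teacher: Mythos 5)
Your proof is correct and follows essentially the same squeeze argument as the paper's own proof: both apply the submodular inequality $f(A)+f(B)\geq f(A\cup B)+f(A\cap B)$ to two minimizers and then use minimality to force every inequality into an equality. The only cosmetic difference is that you make the squeeze explicit via $2m = f(A)+f(B) \geq f(A\cup B)+f(A\cap B) \geq 2m$, whereas the paper states the reverse inequality $f(A\cap B)+f(A\cup B)\geq f(A)+f(B)$ directly.
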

\begin{proof}Assume that $J_1$ and $J_2$ are minimizers for $f$. By submodularity we have
$$f(J_1)+f(J_2) \geq f(J_1 \cap J_2) + f(J_1 \cup J_2).$$ We also have $$f(J_1 \cap J_2) + f(J_1
\cup J_2)  \geq f(J_1)+f(J_2),$$ so $f(J_1)=f(J_2)= f(J_1 \cap J_2) = f(J_1 \cup J_2)$,
as desired.
\end{proof}


\subsection{Sample Complexity Results}
\label{useful-lemmas}
We state here several known sample complexity bounds that were used for proving the results in \Section{general-upper}. 
See, e.g., \cite{DGL:book96,AB99}.

\begin{theorem}
\TheoremName{VCbound}
Let $\hclass$ be a set of functions from $\X$
to $\{-1,1\}$ with finite VC-dimension $\vcdim \geq 1$. Let
$D$ be an arbitrary, but fixed probability distribution
over $\X$ and let $c^*$ be an arbitrary target function. For any $\epsilon$, $\delta>0$, if we
draw a sample $\trainS$ from $D$ of size $$m (\epsilon, \delta, \vcdim)=\frac{1}{\epsilon} \left(4\vcdim \log {\left(
\frac{1}{\epsilon} \right)}+ 2 \log {\left( \frac{2}{\delta}
\right)} \right),$$ then with probability $1-\delta,$ all
hypotheses with error $\geq \epsilon$ are inconsistent with the
data; i.e., uniformly for all $h \in C$ with $\err(h) \geq \epsilon$, we have $\herr(h)>0$.
Here $\err(h)=\Pr_{x \sim D}{[h(x) \neq c^*(x)]}$ is the true error of $h$ and
 $\herr(h)=\Pr_{x \sim \trainS}{[h(x) \neq c^*(x)]}$ is the empirical error of $h$.
\end{theorem}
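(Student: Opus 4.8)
The plan is to prove this via the classical two-sample symmetrization argument (the ``ghost sample'' technique), combined with the Sauer--Shelah lemma. Let $B$ be the bad event that some $h \in \hclass$ has $\err(h) \geq \epsilon$ yet $\herr(h) = 0$ on the sample $\trainS$ of size $m := m(\epsilon,\delta,\vcdim)$; the goal is to show $\prob{B} \leq \delta$.

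First I would introduce a second independent ``ghost'' sample $\trainS'$ of the same size $m$ drawn from $D$, and establish the symmetrization inequality $\prob{B} \leq 2\,\prob{B'}$, where $B'$ is the event that some $h \in \hclass$ is consistent with $\trainS$ but errs on at least $\epsilon m / 2$ of the points of $\trainS'$. The factor of $2$ comes from the fact that, for any fixed $h$ with $\err(h) \geq \epsilon$, a one-sided Chernoff (or Chebyshev) estimate shows that $h$ errs on at least $\epsilon m/2$ points of the fresh sample $\trainS'$ with probability at least $1/2$, provided $m = \Omega(1/\epsilon)$.

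Next I would condition on the $2m$-point multiset $\trainS \cup \trainS'$ and argue over the uniformly random partition of it into two halves of size $m$ (this is legitimate because $\trainS$ and $\trainS'$ are i.i.d.). For a fixed $h$ making $r \geq \epsilon m / 2$ errors among these $2m$ points, the probability over the partition that all $r$ errors land in $\trainS'$ is at most $2^{-r} \leq 2^{-\epsilon m/2}$. Since only the restriction of $h$ to these $2m$ points matters, the Sauer--Shelah lemma bounds the number of distinct relevant hypotheses by $\sum_{i \leq \vcdim} \binom{2m}{i} \leq (2em/\vcdim)^{\vcdim}$ (valid once $2m \geq \vcdim$), so a union bound gives $\prob{B'} \leq (2em/\vcdim)^{\vcdim} \cdot 2^{-\epsilon m/2}$, and therefore $\prob{B} \leq 2 \, (2em/\vcdim)^{\vcdim} \, 2^{-\epsilon m/2}$.

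Finally I would substitute the stated value $m = \frac{1}{\epsilon}\big(4\vcdim \log(1/\epsilon) + 2 \log(2/\delta)\big)$ and check that the right-hand side is at most $\delta$, using that the $\log(1/\epsilon)$ term dominates $\log(2em/\vcdim)$ once $\epsilon$ is small. The hard part here is purely bookkeeping rather than conceptual: one must align the symmetrization constant, the $2^{-\epsilon m/2}$ tail bound, and the Sauer--Shelah estimate so that precisely this value of $m$ suffices (rather than a constant-factor-larger one), which requires some care about rounding of the threshold $\epsilon m/2$, the base of the logarithm, and checking that $m$ is already large enough both for the symmetrization step and for the condition $2m \geq \vcdim$. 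Alternatively, since this is a textbook result, one could simply invoke it from \cite{DGL:book96,AB99}, as the excerpt does.
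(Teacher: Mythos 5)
The paper does not prove this theorem at all: it appears in the appendix under ``Sample Complexity Results'' as a known bound and is cited directly to the textbooks \cite{DGL:book96,AB99}, exactly as you anticipate in your closing sentence. Your outline of the underlying argument — ghost-sample symmetrization to reduce the one-sided event $B$ to the two-sample event $B'$, a $2^{-r}$ bound on the probability that all $r$ errors of a fixed dichotomy land in the second half under the random split, Sauer--Shelah to cap the number of effective dichotomies at $(2em/\vcdim)^{\vcdim}$, then a union bound and solve for $m$ — is precisely the classical proof one finds in those references, and it is correct in structure. The caveats you flag yourself are the only real ones: the step $\prob{B}\le 2\prob{B'}$ needs $m=\Omega(1/\epsilon)$ so that the fresh sample accrues $\ge \epsilon m/2$ errors with probability $\ge 1/2$ for a fixed bad $h$; the bound $\sum_{i\le \vcdim}\binom{2m}{i}\le (2em/\vcdim)^{\vcdim}$ needs $2m\ge \vcdim$; and the final substitution to recover the exact expression $\frac{1}{\epsilon}(4\vcdim\log(1/\epsilon)+2\log(2/\delta))$ is sensitive to logarithm base and rounding conventions, so what one actually proves cleanly this way is a bound with these constants up to the particular textbook's normalization. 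None of this is a conceptual gap; it is the routine bookkeeping that the cited sources carry out in full.
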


\smallskip

\begin{theorem}
\TheoremName{vc-normal}
Suppose that $\hclass$ is a set of functions from
$\X$ to $\{-1,1\}$ with finite VC-dimension $\vcdim \geq 1$. For any
distribution $D$ over $\X$, any target function (not necessarily in
$\hclass$), and any $\epsilon$, $\delta>0$, if we draw a
sample from $D$  of size $$m (\epsilon, \delta, \vcdim) =
\frac{64}{\epsilon^2} \left(2\vcdim \ln {\left( \frac{12}{\epsilon}
\right)}+ \ln {\left( \frac{4}{\delta} \right)} \right),$$ then with
probability at least $1-\delta$, we have $\left|\err(h) -
\herr(h)\right| \leq \epsilon$ for all
$\hyp \in \hclass$.
\end{theorem}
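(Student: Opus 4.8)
\noindent The plan is to reconstruct the classical double-sample argument of Vapnik--Chervonenkis theory; textbook treatments of exactly this agnostic uniform-convergence bound appear in \cite{DGL:book96,AB99}, so a fully acceptable option is simply to cite one of those. To give the argument directly, I would proceed in four steps. First, the \emph{symmetrization} (ghost-sample) lemma: introduce an independent second sample $\trainS'$ of size $m$ drawn from $D$, write $\herr'(\hyp)$ for the empirical error of $\hyp$ on $\trainS'$, and show
\[
\prob{\: \sup_{\hyp \in \hclass} |\err(\hyp) - \herr(\hyp)| > \epsilon \:}
~\leq~ 2\,\prob{\: \sup_{\hyp \in \hclass} |\herr(\hyp) - \herr'(\hyp)| > \epsilon/2 \:},
\]
where the factor $2$ and the halving of $\epsilon$ arise from a Chebyshev estimate that needs the mild side condition $m\epsilon^2 \geq 2$, which the stated $m$ comfortably satisfies.

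Next I would condition on the multiset of $2m$ points in $\trainS \cup \trainS'$ and use the \emph{random-permutation} trick: conditionally, the right-hand event depends only on the uniformly random split of these $2m$ points into $\trainS$ and $\trainS'$, so for each \emph{fixed} $\hyp$ the gap $\herr(\hyp) - \herr'(\hyp)$ is a mean-zero average of bounded terms and Hoeffding's inequality (for sampling without replacement) bounds its tail by $2\exp(-c\, m\epsilon^2)$ for an absolute constant $c$. Crucially, after conditioning only the behaviour of $\hyp$ on these $2m$ points matters, so by the Sauer--Shelah lemma the union bound runs over at most $(2em/\vcdim)^{\vcdim}$ distinct functions (using $\vcdim \geq 1$). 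Combining the pieces gives a tail bound of the shape
\[
\prob{\: \sup_{\hyp \in \hclass} |\err(\hyp) - \herr(\hyp)| > \epsilon \:}
~\leq~ c_1 \Big(\tfrac{2em}{\vcdim}\Big)^{\vcdim} \exp(-c_2\, m\epsilon^2)
\]
for absolute constants $c_1,c_2$. The final step is to verify that the stated $m = \tfrac{64}{\epsilon^2}\big(2\vcdim\ln(12/\epsilon) + \ln(4/\delta)\big)$ drives this expression below $\delta$; taking logarithms, one uses $\ln x \leq x/e$ to dominate the $\vcdim\ln(2em/\vcdim)$ term by a small multiple of $\vcdim\ln(1/\epsilon)$, after which the inequality closes.

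The main obstacle here is not conceptual but bookkeeping: pinning down the explicit constants ($64$, $12$, $4$, and the factor $2$ in front of $\vcdim$) so that the last step goes through cleanly, and confirming the side condition $m\epsilon^2 \geq 2$ used in symmetrization. The symmetrization, permutation, and Sauer--Shelah steps are entirely standard, which is why the lemma is stated as a ``standard fact'' with the references \cite{DGL:book96,AB99}.
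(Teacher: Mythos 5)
Your proposal is correct and takes essentially the same route as the paper: the paper states this result in the appendix on sample complexity as a standard fact and simply refers the reader to the textbooks \cite{DGL:book96,AB99} without giving a proof, which is exactly the option you flag as ``fully acceptable.'' Your outline of the symmetrization / random-permutation / Sauer--Shelah argument is the classical derivation of this agnostic uniform-convergence bound, and the stated $m$ comfortably clears the side conditions you mention, so the sketch is sound.
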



\section{Proofs for Concentration of Submodular Functions}
\AppendixName{appendix-special}

\subsection{Proof of~\Theorem{OURTALAGRAND}}

\repeatclaim{\Theorem{OURTALAGRAND}}{\OURTALAGRAND}

\begin{proof}
We begin by observing that the theorem is much easier to prove
in the special case\footnote
    {An initial draft of our paper proved only this easier case.
    After learning of the similar concentration inequality by Chekuri et al.~\cite{CVZ10},
    we extended our proof to handle functions $f$ that are not integer-valued.}
that $f$ is integer-valued.
Together with our other hypotheses on $f$,
this implies that $f$ must actually be a matroid rank function.
Whenever $f(S)$ is large, this fact can ``certified'' by any maximal independent subset of $S$.
The theorem then follows easily from a version of Talagrand's inequality
which leverages this certification property;
see, e.g., \cite[\S 7.7]{AlonSpencer} or \cite[\S 10.1]{MolloyReed}.

We now prove the theorem in its full generality.
We may assume that $t \leq \sqrt{b}$, otherwise the theorem is trivial, since $f(X)$ is non-negative.
%
Talagrand's inequality states: for any $\cA \subseteq \set{0,1}^n$ and $y \in \set{0,1}^n$
drawn from a product distribution,
\begin{equation}
\EquationName{talagrand}
\prob{ y \in \cA } \cdot \prob{ \rho(\cA,y) > t } ~\leq~ \exp(-t^2/4),
\end{equation}
where $\rho$ is a distance function defined by
$$
\rho(\cA,y)
    ~=~ \sup_{\substack{\alpha \in \bR^n \\ \norm{\alpha}_2=1}}
        \min_{z \in \cA} \sum_{i \::\: y_i \neq z_i} \alpha_i.
$$
We will apply this inequality to the set
$\cA \subseteq 2^V$ defined by $\cA = \setst{ X }{ f(X) < b - t \sqrt{b} }$.

\begin{claim}
For every $Y \subseteq V$, $f(Y) \geq b$ implies $\rho( \cA, Y ) > t$.
\end{claim}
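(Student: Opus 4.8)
The plan is, for a fixed $Y$ with $f(Y) \geq b$, to exhibit a single non-negative unit vector $\alpha$ witnessing $\rho(\cA,Y) > t$: a vector with $\norm{\alpha}_2 = 1$ and $\sum_{i : Y_i \neq z_i} \alpha_i > t$ for every $z \in \cA = \setst{ z }{ f(z) < b - t\sqrt b }$. Since $V$ is finite, $\cA$ is finite, so it suffices to establish the strict inequality for each individual $z \in \cA$ and then take the minimum over $\cA$. We may assume $\cA \neq \emptyset$ (otherwise $\rho(\cA,Y) = +\infty$) and, as already noted, that $t \leq \sqrt b$.

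The vector $\alpha$ will come from a modular lower bound on $f$ that is tight at $Y$. Listing the elements of $Y$ as $y_1,\dots,y_m$, set $w_{y_j} = f(\{y_1,\dots,y_j\}) - f(\{y_1,\dots,y_{j-1}\})$ and $w_i = 0$ for $i \notin Y$; this is the greedy point of the polymatroid of $f$ relative to $Y$. It has three properties: monotonicity and the $1$-Lipschitz assumption give $w_i \in [0,1]$; a telescoping sum gives $\sum_i w_i = f(Y)$ (taking $f$ normalized, as matroid rank functions are — the general case needs only a harmless change of constants); and submodularity, via the fact that each $y_j$ has a no-smaller marginal when added to $S \cap \{y_1,\dots,y_{j-1}\}$ than to $\{y_1,\dots,y_{j-1}\}$, gives $\sum_{i \in S} w_i \leq f(S)$ for every $S \subseteq Y$. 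I would then rescale by $b/f(Y) \leq 1$ to form $w^* := (b/f(Y))\,w$, which still lies in $[0,1]^n$ and still satisfies $\sum_{i\in S} w^*_i \le f(S)$ for $S \subseteq Y$, but now has $\sum_i w^*_i = b$. The crucial gain is that $\norm{w^*}_2^2 = \sum_i (w^*_i)^2 \leq \sum_i w^*_i = b$, so $\norm{w^*}_2 \leq \sqrt b$, whereas $\norm{w}_2$ could be as large as $\sqrt{f(Y)}$, which is worthless when $f(Y) \gg b$.

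Now put $\alpha = w^*/\norm{w^*}_2$, a non-negative unit vector supported on $Y$. For any $z \in \cA$, since $\alpha$ vanishes off $Y$,
\[
\sum_{i : Y_i \neq z_i} \alpha_i ~\geq~ \sum_{i \in Y \setminus z} \alpha_i ~=~ \frac{1}{\norm{w^*}_2}\Big( \sum_{i \in Y} w^*_i ~-~ \sum_{i \in Y \cap z} w^*_i \Big).
\]
Here $\sum_{i \in Y} w^*_i = b$, while $\sum_{i \in Y \cap z} w^*_i \leq f(Y \cap z) \leq f(z) < b - t\sqrt b$ — the first inequality because $Y \cap z \subseteq Y$, the second by monotonicity, the last because $z \in \cA$ — so the bracket exceeds $t\sqrt b$. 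Dividing by $\norm{w^*}_2 \leq \sqrt b$ gives $\sum_{i : Y_i \neq z_i}\alpha_i > t$, and minimizing over the finite set $\cA$ yields $\rho(\cA,Y) > t$, which is the claim.

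The step I expect to be the real content is the passage to a \emph{fractional} certificate. The matroid-style instinct is to pick $I \subseteq Y$ on which $f$ is already large and put uniform weight on $I$; but a submodular $f$ can require $\card{I}$ far larger than $b$ before $f(I)$ reaches $b$, so no small such $I$ need exist. The modular lower bound $w$ — together with the rescaling that pins its $\ell_1$-mass to exactly $b$, which is what forces $\norm{w^*}_2 \le \sqrt b$ — is exactly what repairs this, and the verification $\sum_{i\in S} w^*_i \le f(S)$ for $S \subseteq Y$ is precisely where submodularity enters.
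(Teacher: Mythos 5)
Correct, and essentially the paper's proof: both construct the greedy marginal vector $w$ along an enumeration of $Y$, invoke submodularity via the ordered telescoping argument to obtain the modular lower bound $\sum_{i\in S} w_i \le f(S)$ for $S\subseteq Y$, and use the $1$-Lipschitz property to control the $\ell_2$ norm. Your two cosmetic changes --- arguing directly by exhibiting a single unit vector good against every $z\in\cA$, rather than by contradiction from a hypothetical $Z$, and rescaling by $b/f(Y)$ so that $\norm{w^*}_2 \le \sqrt b$, which sidesteps the paper's implicit appeal to the monotonicity of $x\mapsto x - t\sqrt x$ on $[t^2,\infty)$ --- are real simplifications, and the normalization caveat you flag is harmless: replacing $f$ by $f-f(\emptyset)$ and adjusting $b$ only fails to preserve the setup when $\cA$ was already empty and the claim vacuous.
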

\begin{subproof}
Suppose to the contrary that $\rho( \cA, Y ) \leq t$.
By relabeling, we can write $Y$ as $Y = \set{1,\ldots,k}$.
For $i \in \set{0,\ldots,k}$, let $E_i = \set{1,\ldots,i}$.
Define
$$
    \alpha_i ~=~
    \begin{cases}
    f( E_{i} ) - f( E_{i-1} ) &\qquad\text{(if $i \in Y$)} \\
    0 &\qquad\text{(otherwise).}
    \end{cases}
$$
Since $f$ is monotone and $1$-Lipschitz, we have $0 \leq \alpha_i \leq 1$.
Thus $\norm{\alpha}_2 \leq \sqrt{ \sum_i \alpha_i } \leq \sqrt{f(Y)}$,
by non-negativity of $f$.

The definition of $\rho$ and our supposition $\rho( \cA, Y ) \leq t$ imply that
there exists $Z \in \cA$ with
\begin{equation}
\EquationName{foundZ}
    \sum_{i \in (Y \setminus Z) \union (Z \setminus Y)} \alpha_i
    ~\leq~ \rho(\cA,Y) \cdot \norm{\alpha}_2
    ~\leq~ t \sqrt{f(Y)}.
\end{equation}
We may assume that $Z \subset Y$, since $Z \intersect Y$ also satisfies the desired conditions.
This follows since monotonicity of $f$ implies that $\alpha \geq 0$ and
that $\cA$ is downwards-closed.

We will obtain a contradiction by showing that $f(Y)-f(Z) \leq t \sqrt{f(Y)}$.
First let us order $Y \setminus Z$ as $(\phi(1), \ldots, \phi(m) )$,
where $\phi(i) < \phi(j)$ iff $i < j$.
Next, define $F_i = Z \union \set{ \phi(1), \ldots, \phi(i) } \subseteq Y$.
Note that $E_j \subseteq F_{\phi^{-1}(j)}$;
this follows from our choice of $\phi$,
since $Z \subseteq F_{\phi^{-1}(j)}$ but we might have $Z \not\subseteq E_j$.
Therefore
\begin{align*}
f(Y) - f(Z)
	&~=~ \sum_{i=1}^m  \big( f( F_i ) - f(F_{i-1}) \big) \\
	&~=~ \sum_{j \in Y \setminus Z}  \big( f( F_{\phi^{-1}(j)} ) - f( F_{\phi^{-1}(j)-1} ) \big) \\
	&~\leq~ \sum_{j \in Y \setminus Z}  \big( f( E_{j} ) - f( E_{j-1} ) \big)
        \qquad\text{(since $E_j \subseteq F_{\phi^{-1}(j)}$ and $f$ is submodular)} \\
	&~=~ \sum_{j \in Y \setminus Z} \alpha_j \\
	&~\leq~ t \sqrt{f(Y)} \qquad\text{(by \Equation{foundZ})}.
\end{align*}
So $f(Z) \geq f(Y) - t \sqrt{f(Y)} \geq b - t \sqrt{b}$,
since $f(Y) \geq b$ and $t \leq \sqrt{b}$.
This contradicts $Z \in \cA$.
\end{subproof}

This claim implies $\prob{ f(Y) \geq b } \leq \prob{ \rho(\cA,Y) > t }$,
so the theorem follows from \Equation{talagrand}.
\end{proof}

\subsection{Proof of \Corollary{rank-concentr}
}

\repeatclaim{\Corollary{rank-concentr}}{\rankconcentr}

\begin{proof}
Let $Y=f(X)$ and let $M$ be a median of $Y$.
The idea of the proof is simple:
\Theorem{OURTALAGRAND} shows tight concentration of $Y$ around $M$.
Since $Y$ is so tightly concentrated, we must have $\expect{Y} \approx M$.
This allows us to show tight concentration around $\expect{Y}$.
The remainder of the proof is simply a matter of detailed calculations.
Similar arguments can be found in \cite[\S 2.5]{Janson} and \cite[\S 20.2]{MolloyReed}.

\begin{claim}
\ClaimName{convenientTalagrandBound}
$$
\prob{ \abs{Y-M} \geq \lambda } ~\leq~
\begin{cases}
4 e^{-\lambda^2 / 8 M} &\qquad(0 \leq \lambda \leq M) \\
2 e^{-\lambda / 8} &\qquad(\lambda \geq M). \\
\end{cases}
$$
Also,
$$
\prob{ \abs{Y-M} \geq \lambda } ~\leq~
4 e^{-\lambda^2 / 24 M} \qquad(0 \leq \lambda \leq 5M).
$$
\end{claim}
\begin{subproof}
First, apply \Theorem{OURTALAGRAND} with $b = M$ and $t = \lambda / \sqrt{M}$.
Since $\prob{Y \geq b} \leq 1/2$, we get
\begin{equation}
\EquationName{convenientTalagrandMedian1}
\prob{ Y \leq M - \lambda } ~\leq~ 2 \exp( -t^2 / 4 M ).
\end{equation}
Next, apply \Theorem{OURTALAGRAND} with $b = M + \lambda$ and $t = \lambda / \sqrt{M+\lambda}$.
Since $\prob{Y \leq b - t \sqrt{b}} = \prob{Y \leq M} \leq 1/2$, we get
\begin{equation}
\EquationName{convenientTalagrandMedian2}
\prob{ Y \geq M + \lambda } ~\leq~ 2 \exp\big( -t^2 / 4 (M+\lambda) \big).
\end{equation}
Combining \eqref{eq:convenientTalagrandMedian1} and \eqref{eq:convenientTalagrandMedian2}
proves the claim.
\end{subproof}

\begin{claim}
\ClaimName{expectMedGap}
$\abs{ \expect{Y} - M } \leq 15 \sqrt{ \expect{Y} } + 16$.
Consequently, if $\expect{Y} \geq 256$ then
$\abs{ \expect{Y} - M } \leq 16 \sqrt{ \expect{Y} }$.
\end{claim}
\begin{subproof}
This is a standard calculation; see, e.g., \cite[\S 2.5]{Janson}.
Using \Claim{convenientTalagrandBound},
\begin{align*}
\abs{ \expect{Y} - M }
 &~\leq~ \expect{ \abs{Y - M} } \\
 &~=~ \int_{0}^\infty \prob{ \abs{ Y - M } \geq \lambda } \,d\lambda \\
 &~=~ \int_{0}^M 4 e^{-\lambda^2/8M} \,d\lambda
  ~+~ \int_{M}^\infty 2 e^{-\lambda/8} \,d\lambda \\
 &~\leq~ 4 \sqrt{2 \pi M} + 16 e^{-M/8}.
\end{align*}
Since $Y \geq 0$ we have $0 \leq M \leq 2 \expect{Y}$ (by Markov's inequality), so
$$
\abs{ \expect{Y} - M } ~\leq~ 15 \sqrt{ \expect{Y} } + 16.
$$
This quantity is at most $16 \sqrt{ \expect{Y} }$ if $\expect{Y} \geq 256$.
\end{subproof}

\noindent\textit{Case 1:}
$\expect{Y} \geq 584 / \alpha^2$.
Then
\begin{align}
\nonumber
&\prob{ \abs{Y-\expect{Y}} \geq (\sqrt{2}t+16) \sqrt{\expect{Y}} } \\\nonumber
~\leq~&
    \prob{ \abs{Y-M} \geq (\sqrt{2}t+16) \sqrt{\expect{Y}} - \abs{\expect{Y}-M} } \\\nonumber
~\leq~&
    \prob{ \abs{Y-M} \geq t \sqrt{2 \expect{Y}} }
    \qquad\text{(by \Claim{expectMedGap})}\\\nonumber
~\leq~&
    \prob{ \abs{Y-M} \geq t \sqrt{M} }
    \qquad\text{(since $\expect{Y} \geq M/2$)} \\\EquationName{TalagrandExpectation}
~\leq~&
    4 \exp( - t^2 / 24 ) \qquad\text{(if $t \leq 5 \sqrt{M}$)},
\end{align}
by \Claim{convenientTalagrandBound}.
Set $t=(\alpha \sqrt{\expect{Y}} - 16)/\sqrt{2}$.
One may check that
\begin{equation}
\EquationName{rootebound}
\frac{t^2}{24}
    ~=~ \frac{(\alpha \sqrt{\expect{Y}} - 16)^2}{2 \cdot 24}
    ~\geq~ \frac{\alpha^2 \expect{Y}}{422},
\end{equation}
since we assume $\expect{Y} \geq 584/\alpha^2$.
Furthermore,
\begin{align*}
\sqrt{\expect{Y}}
    &~\leq~ \sqrt{M} + \sqrt{\abs{\expect{Y}-M}} \\
    &~\leq~ \sqrt{M} + 4 \expect{Y}^{1/4} \qquad\text{(by \Claim{expectMedGap})} \\
    &~\leq~ \sqrt{M} + 0.82 \sqrt{\expect{Y}},
\end{align*}
since $\expect{Y}^{1/4} \geq 584^{1/4} > 4/0.82$.
Rearranging, $0.18 \cdot \sqrt{\expect{Y}} \leq \sqrt{M}$.
Therefore we have
$$
t   ~<~ \sqrt{\expect{Y}}/\sqrt{2}
    ~<~ 4 \cdot 0.18 \cdot \sqrt{\expect{Y}}
    ~\leq~ 4 \sqrt{M},
$$
so we may apply \eqref{eq:TalagrandExpectation} with this value of $t$.
\begin{align*}
\prob{ \abs{Y-\expect{Y}} \geq \alpha \expect{Y} }
    &~=~ \prob{ \abs{Y-\expect{Y}} \geq (\sqrt{2}t+16) \sqrt{\expect{Y}} } \\
    &~\leq~ 4 \exp( - t^2 / 24 ) \\
    &~\leq~ 4 \exp( - \alpha^2 \expect{Y} / 422 ),
\end{align*}
by \eqref{eq:rootebound}.

\vspace{3pt}

\noindent\textit{Case 2:}
$\expect{Y} < 584 / \alpha^2$.
Then $\alpha^2 \expect{Y} / 422 < \ln(4)$, so
$$
4 \exp(-\alpha^2 \expect{Y} / 422 ) ~>~ 1,
$$
and the claimed inequality is trivial.
\end{proof}

\section{Additional Proofs for Learning Submodular Functions}

\subsection{Learning Boolean Submodular Functions}
\AppendixName{boolean}

\begin{theorem}
\PropositionName{submodbinary}
The class of monotone, Boolean-valued, submodular functions is efficiently PMAC-learnable
with approximation factor $1$.
\end{theorem}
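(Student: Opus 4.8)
The plan is to characterize the structure of a monotone, Boolean-valued, submodular function $f : 2^{[n]} \to \{0,1\}$ and show that this structure is simple enough to be learned exactly from a polynomial number of examples. First I would observe that since $f$ is monotone and $\{0,1\}$-valued, the set $\cZ = \{ S : f(S) = 0 \}$ is downward-closed. Moreover, submodularity (in the form \eqref{eq:submoddef2}, $f(A)+f(B) \geq f(A\cup B)+f(A\cap B)$) forces $\cZ$ to be union-closed: if $f(A)=f(B)=0$ then $f(A\cup B)+f(A\cap B) \leq 0$, and non-negativity gives $f(A\cup B)=0$. A nonempty downward-closed, union-closed family has a unique maximal element $U^* := \bigcup_{S \in \cZ} S$, and $\cZ = \{ S : S \subseteq U^* \}$. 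Hence $f$ is determined entirely by the single set $U^*$: $f(S) = 0$ iff $S \subseteq U^*$, and $f(S)=1$ otherwise. (If $f \equiv 1$, take $U^*$ to be undefined / the "empty" subcube; if $f(\emptyset)$ could be $1$ then $f\equiv 1$ by monotonicity, so the only interesting case is $f(\emptyset)=0$.)

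Given this structural fact, the learning algorithm is immediate and mirrors Case 2 of \Algorithm{talagrand}: draw $\ell = \frac{1}{\epsilon}\log(n/\delta)$ i.i.d.\ training examples $(S_i, f(S_i))$, form $U = \bigcup_{i : f(S_i) = 0} S_i$, and output the hypothesis $h$ with $h(S) = 0$ if $S \subseteq U$ and $h(S) = 1$ otherwise. Since $U \subseteq U^*$, monotonicity gives $f(T) = 0$ for every $T \subseteq U$, so $h$ never errs on the zeros it predicts; and on any $S \not\subseteq U$ we have $h(S) = 1 \geq f(S)$... more precisely, $h$ can only err on sets $S$ with $S \subseteq U^*$ but $S \not\subseteq U$, i.e.\ on $\cZ \setminus \{ T : T \subseteq U\}$. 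The approximation factor is $\alpha = 1$ because the hypothesis and target are both $\{0,1\}$-valued and $h$ agrees exactly with $f$ wherever it is correct.

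The sample-complexity bound follows from exactly the argument in \Claim{handlezeros}: whenever the uncovered-zero region $\cZ \setminus \{T : T \subseteq U\}$ has measure at least $\epsilon$, a fresh example falls into it within the next $O(\log(n/\delta)/\epsilon)$ draws with probability $\geq 1 - \delta/n$, and each such event strictly increases $|U|$, which can happen at most $n$ times. A union bound over these $n$ phases shows that with probability $\geq 1-\delta$ the final uncovered region has measure at most $\epsilon$, giving $\probover{S\sim D}{h(S) = f(S)} \geq 1-\epsilon$, which is the PMAC guarantee with $\alpha=1$ (indeed this is just PAC-learning). The running time is clearly polynomial.

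I do not expect a genuine obstacle here — the main content is the structural observation that a monotone Boolean submodular function is exactly the indicator of the complement of a subcube, and after that everything reduces to learning a monotone conjunction / subcube, which is a textbook PAC result. The one point requiring a little care is the degenerate boundary cases ($f \equiv 0$, $f \equiv 1$), which are handled trivially by the same algorithm, and making sure the "union-closed" deduction correctly invokes non-negativity rather than just submodularity.
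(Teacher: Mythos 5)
Your proposal is correct and arrives at exactly the same structural characterization the paper uses: a monotone, Boolean-valued, submodular function is the indicator of the complement of a subcube, equivalently a monotone disjunction. The only differences are cosmetic: the paper derives the characterization element-by-element via decreasing marginal values applied to singletons, whereas you derive it by showing the zero set is both downward-closed and (by submodularity plus non-negativity) union-closed; and you spell out the subcube-learning algorithm and its sample complexity explicitly, while the paper simply cites the textbook PAC-learnability of disjunctions.
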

\begin{proof}
Let $f: 2^{[n]} \rightarrow \{0,1\}$ be an arbitrary monotone, boolean, submodular function.
We claim that $f$ is either constant or a monotone disjunction.
If $f(\emptyset)=1$ then this is trivial, so assume $f(\emptyset)=0$.


Since submodularity is equivalent to the property of decreasing marginal values,
and since $f(\emptyset)=0$, we get
$$
    f(T \cup \set{x}) - f(T) \leq  f( \set{x})
    \qquad\forall T \subseteq [n], x \in [n] \setminus T.
$$
If $f(\set{x})=0$ then this together with monotonicity implies that
$f(T \cup \set{x}) = f(T)$ for all $T$.
On the other hand, if $f(\set{x})=1$ then monotonicity implies that $f(T) = 1$ for all $T$ such that
$x \in T$.
Thus we have argued that $f$ is a disjunction:
$$
    f(S) ~=~ \begin{cases}
        1 &\text{(if $S \intersect X \neq \emptyset$)} \\
        0 &\text{(otherwise)}
    \end{cases},
$$
where $X = \setst{ x }{ f(\set{x}) = 1 }$.
This proves the claim.

It is well known that the class of disjunctions is easy to learn
in the supervised learning setting~\cite{KV:book94,Vapnik:book98}.
\end{proof}

Non-monotone, Boolean, submodular functions need not be disjunctions.
For example, consider the function
$f$ where $f(S)=0$ if $S \in \set{\emptyset,[n]}$ and $f(S)=1$ otherwise;
it is submodular, but not a disjunction.
However, it turns out that any submodular boolean function is a $2$-DNF.
This was already known~\cite{boolean-subm},
and it can be proven by case analysis as in \Proposition{submodbinary}.
It is well known that $2$-DNFs are efficiently PAC-learnable. 
We summarize this discussion as follows.

\begin{theorem}
The class of Boolean-valued,
submodular functions is efficiently PMAC-learnable with approximation factor $1$.
\end{theorem}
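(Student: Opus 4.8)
The plan is to reduce the problem to the classical fact that $2$-DNF formulas are efficiently PAC-learnable, using the structural claim that every Boolean-valued submodular function is a $2$-DNF. For Boolean-valued functions the PMAC model with approximation factor $1$ is literally the PAC model: since the hypothesis space consists of non-negative functions, the constraint $f(S) \le \targetf(S) \le \alpha f(S)$ with $\alpha = 1$ and $\targetf(S) \in \{0,1\}$ forces $f(S) = \targetf(S)$, so the PMAC guarantee is exactly agreement with $\targetf$ on a $1-\epsilon$ fraction of $D$ with confidence $1-\delta$. Thus it suffices to exhibit an efficient PAC-learner, which we obtain by composing the structural lemma with the known learner for $2$-DNF.

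First I would establish the structure, refining the monotone case treated earlier (\Proposition{submodbinary}) to arbitrary Boolean submodular $f$; the fact itself is due to \cite{boolean-subm}. Let $\cZ = \setst{S}{f(S)=0}$, and assume $\cZ \neq \emptyset$ and $\cZ \neq 2^{[n]}$ (otherwise $f$ is a constant, a trivial $2$-DNF). Applying \Equation{submoddef2} to $A,B \in \cZ$ gives $f(A\cup B) + f(A\cap B) \le 0$, so by non-negativity $\cZ$ is closed under union and intersection. Write $Z_0 = \bigcap_{Z \in \cZ} Z$, $Z_1 = \bigcup_{Z \in \cZ} Z$, and for $i \in Z_1$ let $Z^{(i)} = \bigcap \setst{Z \in \cZ}{i \in Z} \in \cZ$. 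A short verification shows that $S \in \cZ$ if and only if $Z_0 \subseteq S \subseteq Z_1$ and $Z^{(i)} \subseteq S$ for every $i \in S$: the "only if" direction is immediate, and for "if" one checks $S = Z_0 \cup \bigcup_{i \in S} Z^{(i)} \in \cZ$ using closure under union. Each of these finitely many membership conditions, when violated, is witnessed by a conjunction of at most two literals --- $\bar{x}_i$ for $i \in Z_0$, $x_i$ for $i \notin Z_1$, and $x_i \wedge \bar{x}_j$ for each pair with $j \in Z^{(i)}$ --- so $f$ equals the disjunction of all these terms, a $2$-DNF.

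Second, I would invoke the standard learner: introduce a fresh Boolean variable for each of the $O(n^2)$ conjunctions of at most two literals over $x_1,\dots,x_n$, so that a $2$-DNF becomes a monotone disjunction over $O(n^2)$ variables, and learn that disjunction by the elimination algorithm (maintain the disjunction of all not-yet-eliminated terms, deleting a term whenever a negative example satisfies it). A VC-dimension / sample-complexity bound such as \Theorem{VCbound} shows that $\poly(n,1/\epsilon,1/\delta)$ examples suffice to make the error at most $\epsilon$ with probability at least $1-\delta$; the resulting hypothesis is $\{0,1\}$-valued, hence non-negative, and evaluatable in time $\poly(n)$, so every requirement of the PMAC model is met.

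I expect the only delicate point to be the structural step --- verifying that the sublattice $\cZ$ is cut out by unit and implication ($2$-clause) constraints --- and even that is elementary given closure under union and intersection, being precisely the case analysis alluded to above and recorded in \cite{boolean-subm}. The learning half is entirely classical \cite{KV:book94}, so the bulk of the write-up is the $2$-DNF characterization, after which the theorem follows by combining it with the observation that PMAC-learning with $\alpha = 1$ coincides with PAC-learning.
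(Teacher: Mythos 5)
Your proposal is correct and follows the same route as the paper: observe that for $\{0,1\}$-valued functions the PMAC guarantee with $\alpha=1$ collapses to PAC, invoke the fact that every Boolean-valued submodular function is a $2$-DNF, and then apply the classical PAC-learner for $2$-DNFs. The only difference is one of exposition --- the paper defers the $2$-DNF characterization to the citation \cite{boolean-subm} with the remark that it ``can be proven by case analysis as in \Proposition{submodbinary}'', whereas you supply a self-contained derivation via the sublattice structure of $\cZ$ (closure under union and intersection, and the representation $S = Z_0 \cup \bigcup_{i \in S} Z^{(i)}$), which is a clean and correct way to carry out that case analysis.
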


\subsection{Learning under Product Distributions}
\label{appendix-learn-product}

\repeatclaim{\Lemma{useful}}{\usefullemma}

\begin{proof}
(1): Let $\hat{f} : 2^{[n] \times [\ell]} \rightarrow \bR$ be defined by
$$
\hat{f}(S_1,\ldots,S_\ell) ~=~ \sum_{i=1}^\ell \targetf(S_i).
$$
It is easy to check that $\hat{f}$ is also non-negative, monotone, submodular and $1$-Lipschitz.
We will apply \Corollary{rank-concentr} to $\hat{f}$ with $\alpha = 1/10$.
Then
\begin{align}
\nonumber
\prob{ \mu \:<\: M \log(1/\epsilon) }
    &~=~ \prob{ \smallsum{i=1}{\ell} \targetf(S_i) \:<\: M \ell \log(1/\epsilon) }
        \\\nonumber
    &~\leq~ \prob{ \Big|\hat{f}(X) - \expect{\hat{f(X)}}\Big| \:>\: \expect{\hat{f}(X)}/10 }
        \qquad\text{(since $M \leq 0.9 \cdot H$)}
        \\\nonumber
    &~\leq~ 4 \exp\big( - \expect{\hat{f}(X)} / 42200 \big)
        \\\nonumber
    &~\leq~ 4 \exp\big( - \ell / 4 \big)
        \qquad\text{(since $H > 10550$)}
        \\\EquationName{usefulineq1}
    &~\leq~ 4 \delta^3 ~\leq~ \delta/4.
\end{align}

(2): Let $\hat{f}$ and $X$ be as above.
Then
\begin{align*}
\prob{ \smallfrac{5}{6} \expect{\targetf(S)} \:\leq\: \mu \leq \smallfrac{4}{3} \expect{\targetf(S)}}
    &~\leq~ \prob{ \big|\mu-\expect{\targetf(S)}\big| \:>\: \expect{\targetf(S)}/10 } \\
    &~=~ \prob{ \big|\hat{f}(X)-\expect{\hat{f}(X)}\big| \:>\: \expect{\hat{f}(X)}/10 } \\
    &~\leq~
    \delta/4,
\end{align*}
as in \eqref{eq:usefulineq1}, since $L \geq 10550$.

(3): Set $b = K \log(1/\epsilon)$ and $t = 4 \sqrt{\log(1/\epsilon)}$.
Since $K - 4 \sqrt{K} \geq 2H$ we have
$b - t \sqrt{b} \geq 2H \log(1/\epsilon) \geq 2 \expect{\targetf(S)}$, and so
$\prob{ \targetf(S) \leq b - t \sqrt{b}} \geq 1/2$ by Markov's inequality.
By \Theorem{OURTALAGRAND}, we have
$\prob{ \targetf(S) \geq b } \leq 2 \exp( - t^2 / 4 ) \leq \epsilon$
since $\epsilon \leq 1/2$.

(4): Set $b = M \log(1/\epsilon) \ell$ and $t=4 \sqrt{\log(1/\delta)}$.
Then
\begin{align*}
b - t \sqrt{b}
    &~=~ M \log(1/\epsilon) \ell
        - 4 \sqrt{\log(1/\delta)} \sqrt{ M \log(1/\epsilon) \ell } \\
    &~>~ (M - \sqrt{M}) \log(1/\epsilon) \ell
\end{align*}
since $4 \sqrt{\log(1/\delta) \ell} \leq \ell$ and $\epsilon \leq 1/5$.
Then, by Markov's inequality,
\begin{align*}
\prob{ \smallsum{i=1}{\ell} \targetf(S_i) \leq b - t \sqrt{b}}
&~\geq~ 1 - \frac{ \expect{\smallsum{i=1}{\ell} \targetf(S_i)} }
                 {(M - \sqrt{M}) \log(1/\epsilon) \ell} \\
&~\geq~ 1 - \frac{ L \log(1/\epsilon) \ell }{(M - \sqrt{M}) \log(1/\epsilon) \ell} \\
&~\geq~ 1/20
\end{align*}
since $L / (M - \sqrt{M}) \leq 0.95$.
Applying \Theorem{OURTALAGRAND} to the submodular function
$\hat{f}(S_1,\ldots,S_\ell) = \smallsum{i=1}{\ell} \targetf(S_i)$,
we have
$\prob{ \sum_{i=1}^\ell \targetf(S_i) \geq b }
\leq 20 \exp(-t^2/4) \leq 20 \cdot \delta^4 \leq \delta/4$.
\end{proof}

\subsection{Learning Lower Bounds}
\AppendixName{mainlb}

\repeatclaim{\Theorem{mainlbadaptive}}{\thmmainlbadaptive}

\smallskip

\begin{proof}
First, consider a fully-deterministic learning algorithm $\ALG$, i.e.,
an algorithm that doesn't even sample from $D$, though
it knows $D$ and can use it in deterministically choosing queries.  Say
this algorithm makes $q < n^c$ queries (which could be chosen adaptively).
Each query has at most $n$ possible answers, since the minimum rank of
any set is zero and the maximum rank is at most $n$. So the total number of
possible sequences of answers is at most $n^q$.

Now, since the algorithm is deterministic, the hypothesis it outputs
at the end is uniquely determined by this sequence of answers.  To be
specific, its choice of the second query is uniquely determined by the
answer given to the first query, its choice of the third query is
uniquely determined by the answers given to the first two queries, and
by induction, its choice of the $i$th query $q_i$ is uniquely determined
by the answers given to all queries $q_1$,...,$q_{i-1}$ so far. Its
final hypothesis is uniquely determined by all $q$ answers.
This then implies that $\ALG$ can output at most $n^q$ different hypotheses.

We will apply \Theorem{manymatroids} with $k = 2^t$ where $t=c\log(n) + \log(\ln n) + 14$ (so $k = n^c \cdot \ln(n) \cdot 2^{14} > 10000 \cdot q \cdot \ln(n)$).
Let $\cA$ and $\cM$ be the families constructed by \Theorem{manymatroids}.
Let the underlying distribution $D$ on $2^\ground$ be the uniform distribution on $\cA$.
(Note that $D$ is not a product distribution.)
Choose a matroid $\mat_\cB \in \cM$ uniformly at random and let the target function be
$\targetf = \rankf_{\mat_\cB}$.
Let us fix a hypotheses $h$ that $\ALG$ might output. By Hoeffding bounds, we have:

$$
\Pr_{\targetf, \cS}\Bigg[~
    \Pr_{A \sim D} \Big[\:
        \targetf(A) \not \in \big[h(A), \smallfrac{n^{1/3}}{16 t} h(A)\big]
        \leq 0.49 \:\Big]
    ~\Bigg]
    ~\leq~ e^{-2 (.01)^2 k}
    ~=~ e^{-2q \cdot \ln(n)}
    ~=~ n^{-2q},
$$ i.e., with probability at least $1- n^{-2q}$, $h$ has high approximation
error on over $49\%$ of the examples.

By a union bound over all over all the $n^q$ hypotheses $h$ that $\ALG$ might
output, we obtain that with probability at least $1/4$
(over the draw of the training samples) 
the hypothesis function  output by $\ALG$ does not approximate $\targetf$ within a $o({n^{1/3}}/{\log n})$ factor on at least $1/4$ fraction of the
examples  under $D$.

The above argument is a fixed randomized strategy for the adversary that works
against any deterministic $\ALG$ making at most $n^c$ queries.  By Yao's minimax principle, this means
that, for any randomized algorithm making at most $n^c$ queries, there exists $\mat_\cB$
which the algorithm does not learn well, even with arbitrary value queries.
\end{proof}

\repeatclaim{\Corollary{crypto}}{\cryptocor}


\begin{proofof}{\Corollary{crypto}}
The argument follows Kearns-Valiant \cite{KV94}.
We will apply \Theorem{manymatroids} with $k = 2^t$ where $t = n^\epsilon$.
There exists a family of pseudorandom Boolean functions $H_t = \setst{ h_y }{ y \in \set{0,1}^t }$,
where each function is of the form $h_y : \set{0,1}^t \rightarrow \set{0,1}$.
Choose an arbitrary bijection between $\set{0,1}^t$ and $\cA$.
Then each $h_y \in H_t$ corresponds to some subfamily $\cB \subseteq \cA$,
and hence to a matroid rank function $\rankf_{\mat_\cB}$.
Suppose there is a PMAC-learning algorithm for this family of functions
which achieves approximation ratio better than $n^{1/3} / 16t$ on a set of measure $1/2 + 1/\poly(n)$.
Then this algorithm must be predicting the function $h_y$ on a set of size
$1/2 + 1/\poly(n) = 1/2 + 1/\poly(t)$.
This is impossible, since the family $H_t$ is pseudorandom.
\end{proofof}

\section{Expander Construction}
\AppendixName{expander}

%


\repeatclaim{\Theorem{ourexpanders}}{
    Let $G=(U \union V, E)$ be a random multigraph where
    $\card{U}=k$, $\card{V}=n$, and every $u \in U$
    has exactly $d$ incident edges, each of which has an endpoint chosen
    uniformly and independently from all nodes in $V$.
    Suppose that $k \geq 4$, $d \geq \log(k)/\epsilon$ and $n \geq 16 L d/\epsilon$.
    Then, with probability at least $1-2/k$,
    \begin{equation}
    \EquationName{exp2}
    |\neigh(J)| ~\geq~ (1-\epsilon) \cdot d \cdot |J| \qquad\forall J \subseteq U,\, |J| \leq L.
    \end{equation}
    If it is desired that $|\neigh(\set{u})| = d$ for all $u \in U$
    then this can be achieved by replacing any parallel edges incident on $u$
    by new edges with distinct endpoints.
    This cannot decrease $|\neigh(J)|$ for any $J$, and so \eqref{eq:exp2} remains satisfied.
}


The proof is an variant of the argument in Vadhan's survey \cite[Theorem 4.4]{Vadhan}.

\begin{proof}
Fix $j \leq L$ and consider any set $J \subseteq U$ of size $\card{J} = j$.
The sampling process decides the neighbors $\neigh(J)$
by picking a sequence of $j d$ neighbors $v_1,\ldots,v_{j d} \in V$.
An element $v_i$ of that sequence is called a \newterm{repeat} if
$v_i \in \set{v_1,\ldots,v_{i-1}}$.
Conditioned on $v_1,\ldots,v_{i-1}$, the probability that $v_i$ is a repeat is at most $j d/n$.
The set $J$ violates \eqref{eq:exp2} only if there exist more than
$\epsilon j d$ repeats.
The probability of this is at most
$$
\binom{j d}{\epsilon j d} \Big( \frac{j d}{n} \Big)^{\epsilon j d}
    ~\leq~ \Big( \frac{e}{\epsilon} \Big)^{\epsilon j d}
           \Big( \frac{j d}{n} \Big)^{\epsilon j d}
    ~\leq~ (1/4)^{\epsilon j d}.
$$
The last inequality follows from $j \leq L$ and our hypothesis $n \geq 16 L d / \epsilon$.
So the probability that there exists a $J \subseteq U$ with $j = \card{J}$
that violates \eqref{eq:exp2} is at most
$$
    \binom{k}{j} (1/4)^{-\epsilon j d}
    ~\leq~ k^j 2^{-2 \epsilon j d}
    ~=~ 2^{-j(2 \epsilon d - \log k)}
    ~\leq~ k^{-j},
$$
since $d \geq \log(k)/\epsilon$.
Therefore the probability that any $J$ with $\card{J} \leq L$
violates \eqref{eq:exp2} is at most
$$
\sum_{j \geq 1} k^{-j} ~\leq~ 2/k.
$$
%
\end{proof}

\comment{
    This proof doesn't work

\begin{proof}
Let $G=(U \union V, E)$ be a random graph where every $u \in U$ is connected to a
uniformly random set of $d$ distinct vertices in $V$.
We claim that $G$ satisfies the desired conditions with positive probability.

For any set $J \subseteq U$ of size $\card{J} \leq L$ and
any set $T \subseteq V$ of size $(1-\epsilon) \cdot d \cdot |J|$,
let $\cE_{J,T}$ be the event that $\neigh(J) \subseteq T$.
If no event $\cE_{J,T}$ holds, then $G$ satisfies the desired conditions.

Consider a particular $\cE_{J,T}$.
Let $j=\card{J}$ and $t_j = \card{T}=(1-\epsilon) d j$.
Then
$$
    \prob{ \cE_{J,T} }
        ~=~ \Bigg( \prod_{i=0}^{d-1} \frac{t_j-i}{n-i} \Bigg)^j
        ~\leq~ \Bigg( \prod_{i=0}^{d-1} \frac{1}{n-i} \Bigg)^j t_j^{dj}.
$$
Taking a union bound over all $\cE_{J,T}$, the probability of failure is at most
\begin{align*}
     \sum_{j=1}^L \binom{k}{j} \binom{n}{t_j}
         \Bigg( \prod_{i=0}^{d-1} \frac{1}{n-i} \Bigg)^j t_j^{dj}
~\leq~& \sum_{j=1}^L \binom{k}{j}
        \Bigg( \prod_{i=0}^{t_j-1} (n-i) \Bigg) \Bigg(\frac{e}{t_j}\Bigg)^{t_j}
         \Bigg( \prod_{i=0}^{d-1} \frac{1}{n-i} \Bigg)^j t_j^{dj} \\
~\leq~& \sum_{j=1}^L k^j
        \Bigg( \prod_{i=0}^{d-1} (n-i) \Bigg)^{(1-\epsilon)j}
         \Bigg( \prod_{i=0}^{d-1} \frac{1}{n-i} \Bigg)^j t_j^{\epsilon dj} \\
~=~& \sum_{j=1}^L k^j
         \Bigg( \prod_{i=0}^{d-1} \frac{1}{n-i} \Bigg)^{\epsilon j} t_j^{\epsilon dj} \\
\end{align*}
\end{proof}
}

\section{Special Cases of the Matroid Construction}
\AppendixName{specialcases}

The matroid constructions of \Theorem{mainthm} and \Theorem{modified}
have several interesting special cases.

\subsection{Partition Matroids}

We are given disjoint sets $A_1,\ldots,A_k$ and values $b_1,\ldots,b_k$.
We claim that the matroid $\cI$ defined in \Theorem{mainthm} is a partition matroid.
To see this, note that $g(J) = \sum_{j \in J} b_j$, since the $A_j$'s are disjoint,
so $g$ is a modular function.
Similarly, $\card{ I \intersect A(J) }$ is a modular function of $J$.
Thus, whenever $\card{J}>1$, the constraint $\card{ I \intersect A(J) } \leq g(J)$
is redundant --- it is implied by the constraints $\card{ I \intersect A_j } \leq b_j$
for $j \in J$.
So we have
$$\cI
 ~=~ \setst{ I }{ \card{I \intersect A(J)} \leq g(J) ~~\forall J \subseteq [k] }
 ~=~ \setst{ I }{ \card{I \intersect A_j} \leq b_j ~~\forall j \in [k] },
$$
which is the desired partition matroid.

\subsection{Pairwise Intersections}
\AppendixName{pairwise}

We are given sets $A_1,\ldots,A_k$ and values $b_1,\ldots,b_k$.
We now describe the special case of the matroid construction which only considers
the pairwise intersections of the $A_i$'s.

\begin{lemma}
\LemmaName{pairwise}
Let
$ d $ be a non-negative integer such that
$d \leq \min_{i,j \in [k]} (b_i + b_j - \card{A_i \intersect A_j}).
$
Then
$$
\cI
    ~=~ \setst{ I }{ \card{I} \leq d \And \card{I \intersect A_j} \leq b_j
            ~\forall j \in [k] }
$$
is the family of independent sets of a matroid.
\end{lemma}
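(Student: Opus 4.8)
The plan is to obtain this lemma as the $\tau = 2$ instance of \Theorem{modified}, followed by an application of ordinary matroid truncation. First I would instantiate $g$ as in \Equation{fdef} with the given values $b_i$, so that $g(J) = \sum_{j\in J} b_j - \big(\sum_{j\in J}\card{A_j} - \card{A(J)}\big)$, and verify that $g$ is $(d,2)$-large in the sense of \Equation{large}. Since $2\tau - 2 = 2$ when $\tau = 2$, this is just three checks: $g(\emptyset) = 0 \geq 0$; $g(\set{j}) = b_j \geq 0$ for every $j$, which holds because taking $i = j$ in the hypothesis gives $0 \leq d \leq 2 b_j - \card{A_j}$, hence $b_j \geq \card{A_j}/2 \geq 0$; and, for $i \neq j$, $g(\set{i,j}) = b_i + b_j - \card{A_i \intersect A_j} \geq d$, which is exactly the hypothesis. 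In particular this also gives $\emptyset \in \cI$, so the families below are non-empty.

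With $g$ shown to be $(d,2)$-large, \Theorem{modified} yields that $\bar\cI = \setst{ I }{ \card{I \intersect A(J)} \leq \bar g(J) ~\forall J \subseteq [k]}$ is a matroid, where $\bar g(\emptyset) = 0$, $\bar g(\set{j}) = g(\set{j}) = b_j$, and $\bar g(J) = d$ whenever $\card{J} \geq 2$. Thus the constraints defining $\bar\cI$ are precisely $\card{I \intersect A_j} \leq b_j$ for all $j \in [k]$, together with $\card{I \intersect A(J)} \leq d$ for all $J$ with $\card{J} \geq 2$. Next I would apply the standard matroid truncation at rank $d$ to $\bar\cI$, i.e., keep only the independent sets of size at most $d$; this is again the family of independent sets of a matroid. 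Finally I would observe that once $\card{I} \leq d$ is imposed, every constraint $\card{I \intersect A(J)} \leq d$ is automatically satisfied because $\card{I \intersect A(J)} \leq \card{I}$. Hence the independent sets of the truncated matroid are exactly $\setst{ I }{ \card{I} \leq d \And \card{I \intersect A_j} \leq b_j ~\forall j \in [k]} = \cI$, as desired.

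There is really no obstacle here: the only content is the $(d,2)$-largeness check, which falls out of the hypothesis, and everything else is bookkeeping about which constraints are active; the point of recording this as a separate lemma is simply that the pairwise-intersection form is the most convenient corollary of \Theorem{modified} for the optimization applications. An alternative route would be to apply \Lemma{generalindep} directly with $\cC = \set{[n]} \union \setst{ A_j }{ j \in [k]}$; one can check that if $A_i$ and $A_j$ are simultaneously tight for some $I \in \cI$ and $A_i \intersect A_j \neq \emptyset$, then the chain $d \geq \card{I} \geq \card{I \intersect (A_i \union A_j)} \geq b_i + b_j - \card{A_i \intersect A_j} \geq d$ forces $\card{I} = d$, so the constraint $[n]$ is tight. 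However, turning this into the exact uncrossing property \Equation{uncross2} is awkward because $A_i \union A_j$ need not belong to $\cC$, which is precisely why routing through \Theorem{modified} and truncation is the cleaner path.
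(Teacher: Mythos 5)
Your proposal takes essentially the same route as the paper: verify that $g$ (with the given $b_i$'s) is $(d,2)$-large and invoke \Theorem{modified}, with the standard truncation step (which the paper handles in the remark following \Theorem{modified}) spelled out explicitly. The bookkeeping about which constraints are active and which become redundant once $\card{I}\leq d$ is imposed is correct, as is your observation that the alternative direct appeal to \Lemma{generalindep} is awkward because $A_i \union A_j$ need not lie in $\cC$.

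One point deserves a flag. You derive $b_j \geq 0$ by taking $i=j$ in the hypothesis, reading $\min_{i,j\in[k]}$ as ranging over \emph{all} ordered pairs including the diagonal. That reading is inconsistent with the paper's own use of the lemma: in the paving-matroid application right below it, the paper takes $d=m$, $b_i=m-1$, $\card{A_i}=m$, so the diagonal term $2b_i-\card{A_i}=m-2<m=d$ would violate the hypothesis if $i=j$ were allowed. The intended reading is therefore $\min$ over $i\neq j$, and the non-negativity of the $b_j$'s is a tacit standing assumption (the paper's discussion section explicitly stipulates ``let $r,b_1,\ldots,b_k$ be non-negative integers''). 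Your proof survives unchanged if you simply assume $b_j\geq 0$ rather than attempting to extract it from the hypothesis; as written, you are proving the lemma under a strictly stronger hypothesis than the one the paper actually applies.
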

\begin{proof}
Note that for any pair $J=\set{i,j}$,
we have $g(J)=b_i + b_j - \card{A_i \intersect A_j}$.
Then
$$
  d
  ~\leq~ \min_{i,j \in [k]} (b_i + b_j - \card{A_i \intersect A_j})
  ~=~ \min_{J \subseteq [k] ,\, \card{J}=2} g(J),
$$
so $g$ is $(d,2)$-large.
The lemma follows from \Theorem{modified}.
\end{proof}

\subsection{Paving Matroids}

A \newterm{paving matroid} is defined to be a matroid $\mat=(V,\cI)$ of rank $m$
such that every circuit has cardinality either $m$ or $m+1$.
We will show that every paving matroid can be derived from our matroid construction
(\Theorem{modified}).
First of all, we require a structural lemma about paving matroids.

\begin{lemma}
\LemmaName{pavstruct}
Let $\mat=(V,\cI)$ be a paving matroid of rank $m$.
There exists a family $\cA = \set{A_1, \ldots, A_k} \subset 2^V$
such that
\begin{subequations}
\begin{gather}
\EquationName{pav1}
\cI ~=~ \setst{ I }{ \card{I} \leq m ~\:\wedge\:~ \card{I \intersect A_i} \leq m-1 ~\:\forall i } \\
\EquationName{pav2}
\card{A_i \intersect A_j} ~\leq~ m-2 \quad\forall i \neq j
\end{gather}
\end{subequations}
\end{lemma}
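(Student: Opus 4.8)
The plan is to take $\cA$ to be the collection of all \emph{hyperplanes} of $\mat$ (i.e.\ flats of rank $m-1$) whose cardinality is at least $m$; since $V$ is finite, $\cA$ is a finite family, possibly empty. I would then verify \eqref{eq:pav1} and \eqref{eq:pav2} using only two elementary facts about a paving matroid of rank $m$. First, every subset of size at most $m-1$ is independent: a dependent set contains a circuit, and in a paving matroid every circuit has size $m$ or $m+1$. Second, if $C$ is a circuit of size exactly $m$, then $\rankf(C)=m-1$ (it is dependent, so $\rankf(C)<m$, and deleting any element leaves an independent set of size $m-1$), hence its closure $\mathrm{cl}(C)$ is a rank-$(m-1)$ flat, i.e.\ a hyperplane, and $|\mathrm{cl}(C)|\ge |C| = m$, so $\mathrm{cl}(C)\in\cA$.

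For \eqref{eq:pav1}, the forward inclusion is immediate: if $I$ is independent then $|I|\le m$, and for every $A\in\cA$ the set $I\cap A$ is independent and contained in $A$, so $|I\cap A| = \rankf(I\cap A)\le \rankf(A) = m-1$. For the reverse inclusion, suppose $I$ satisfies $|I|\le m$ and $|I\cap A|\le m-1$ for all $A\in\cA$, yet $I$ is dependent. Then $I$ contains a circuit $C$ with $|C|\in\{m,m+1\}$; combined with $|C|\le|I|\le m$ this forces $|C|=m$ and therefore $I=C$. By the second fact, $A:=\mathrm{cl}(C)$ is a member of $\cA$, and $I\cap A\supseteq C = I$, so $|I\cap A|=m>m-1$, a contradiction. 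Hence $I$ is independent, proving \eqref{eq:pav1}.

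For \eqref{eq:pav2}, I would use that the intersection of two flats is again a flat, and that a flat contained in a flat of the same rank must coincide with it (if $F_1\subseteq F_2$ with $\rankf(F_1)=\rankf(F_2)$, then every $x\in F_2$ has $\rankf(F_1+x)\le\rankf(F_2)=\rankf(F_1)$, so $x\in\mathrm{cl}(F_1)=F_1$). Now fix $A_i\ne A_j$ in $\cA$ and suppose $|A_i\cap A_j|\ge m-1$. By the first elementary fact, $A_i\cap A_j$ contains an independent set of size $m-1$, so $\rankf(A_i\cap A_j)\ge m-1$; but $A_i\cap A_j\subseteq A_i$ forces $\rankf(A_i\cap A_j)=m-1$. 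Thus $A_i\cap A_j$ is a rank-$(m-1)$ flat contained in the rank-$(m-1)$ flat $A_i$, hence $A_i\cap A_j = A_i$; symmetrically $A_i\cap A_j=A_j$, contradicting $A_i\ne A_j$. Therefore $|A_i\cap A_j|\le m-2$.

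The hard part here is genuinely just getting the two structural facts about paving matroids stated in exactly the form needed --- in particular that a size-$m$ circuit has rank $m-1$, so that its closure is a true hyperplane lying in $\cA$, and that flats intersect to flats and interact cleanly with rank. I would also double-check the degenerate cases $m=0$ and $m=1$ (where $\cA$ may be empty, since there are no hyperplanes of size $\ge m$): in those cases \eqref{eq:pav2} is vacuous and \eqref{eq:pav1} reduces to the correct description of a rank-$0$ or rank-$1$ matroid, so the statement holds as written.
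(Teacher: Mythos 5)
Your proof is correct, but it takes a genuinely different route from the paper's. The paper does not construct $\cA$ explicitly: it observes that the family of size-$m$ circuits satisfies \eqref{eq:pav1}, then picks \emph{any} family satisfying \eqref{eq:pav1} of minimum cardinality, and derives \eqref{eq:pav2} by contradiction. If $|A_i \cap A_j| \ge m-1$, then either $\rankf(A_i \cup A_j) \le m-1$, in which case replacing $A_i, A_j$ by $A_i \cup A_j$ gives a smaller family still satisfying \eqref{eq:pav1} (contradicting minimality), or $\rankf(A_i \cup A_j) = m$, in which case $\rankf(A_i \cup A_j) + \rankf(A_i \cap A_j) \ge m + (m-1) > \rankf(A_i) + \rankf(A_j)$, contradicting submodularity. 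You instead exhibit a canonical choice of $\cA$ --- the hyperplanes of cardinality at least $m$ --- and verify both properties directly using lattice-of-flats facts (intersections of flats are flats; a flat contained in a flat of equal rank equals it). The two arguments are comparably short, but they buy different things: the paper's minimality trick avoids invoking any flat/closure machinery and shows that \emph{some} family works without identifying it, while your proof pins down a specific, intrinsic $\cA$ (which is essentially the standard ``$d$-partition'' description of paving matroids from matroid theory) and makes the two underlying paving-matroid facts --- every set of size $\le m-1$ is independent, and size-$m$ circuits have rank $m-1$ --- completely explicit. Your handling of the degenerate cases $m \in \{0,1\}$ is a nice touch that the paper leaves implicit.
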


Related results can be found in Theorem 5.3.5, Problem 5.3.7 and Exercise 5.3.8 of
Frank's book \cite{Frank}.

\begin{proof}
It is easy to see that there exists $\cA$ satisfying \Equation{pav1},
since we may simply take $\cA$ to be the family of circuits which have size $m$.
So let us choose a family $\cA$ that satisfies \Equation{pav1} and minimizes $\card{\cA}$.
We will show that this family must satisfy \Equation{pav2}.
Suppose otherwise, i.e., there exist
$i \neq j$ such that $\card{A_i \intersect A_j} \geq m-1$.

\textit{Case 1:}
$r(A_i \union A_j) \leq m-1$. Then
$
\cA \setminus \set{ A_i, A_j } \union \set{ A_i \union A_j }
$
also satisfies \Equation{pav1}, contradicting minimality of $\card{\cA}$.

\textit{Case 2:}
$r(A_i \union A_j) = m$.
Observe that $r(A_i \intersect A_j) \geq m-1$ since $|A_i \intersect A_j| \geq m-1$
and every set of size $m-1$ is independent.
So we have
$$
r(A_i \union A_j) + r(A_i \intersect A_j) ~\geq~
m + (m-1) ~>~ (m-1)+(m-1) ~\geq~ r(A_i) + r(A_j).
$$
This contradicts submodularity of the rank function.
\end{proof}

For any paving matroid, \Lemma{pavstruct} implies that its independent sets can be
written in the form
$$
\cI ~=~ \setst{ I }{ \card{I} \leq m ~\:\wedge\:~ \card{I \intersect A_i} \leq m-1 ~\:\forall i },
$$
where $\card{A_i \intersect A_j} \leq m-2$ for each $i \neq j$.
This is a special case of \Theorem{modified} since we
may apply \Lemma{pairwise} with each $b_i = m-1$ and $d = m$, since
$$
    \min_{i,j \in [k]} (b_i + b_j - \card{A_i \intersect A_j})
    ~\geq~ 2(m-1) - (m-2)
    ~=~ m.
$$




%


\comment{

\begin{definition}
For a sample $S = \{z_1, ..., z_m\}$ generated by a distribution $D$ on a set $Z$
and a real-valued function class $\fclass$ with a domain $Z$, the {\em empirical Rademacher complexity}
of $\fclass$ is the random variable:

    $${\hrad}_S(\fclass) =\E_{\sigma} {\left[ \sup_{h \in \fclass} \frac{1}{m}\sum_i^m {\sigma_i \cdot h(z_i)} \right]}$$
where $\sigma = (\sigma_1, ..., \sigma_m)$ are independent $\{-1, +1\}$-valued (Rademacher) random variables.

The (distributional) Rademacher complexity of $\fclass$ is:

    $${\rad}_m(\fclass) = \E_S{ [{\rad}_S(\fclass)]}.$$

\end{definition}

Let $Z=X \times Y$, where $X$ is the instance space and $Y$ is the label space. With these notations we have:
\begin{theorem} Fix $\fclass$ be a class of functions mapping from $Z$ to $[a,a+1]$. Let $S = \{z_1,
..., z_m\}$ be an i.i.d.\ sample
from $D$. Then with probability $1-\delta$ over the random draws of sample of size $m$ we have:
    $$\E_D[f(z)] \leq \hE_D{[f(z)]} + \rad_l(\fclass) + \sqrt{\frac{\ln(2/\delta)}{2m}}
             \leq \hE_D{[f(z)]} + \hrad_S(\fclass) + 3\sqrt{\frac{\ln(2/\delta)}{2m}}.$$
\end{theorem}

}

\comment{
\section{Rademacher Complexity}
\AppendixName{Radem}

We show   here how the  results in \Section{general-lower}
 can be used to provide a lower bound on the
Rademacher complexity of monotone submodular functions, a natural measure of the
complexity of a class of functions, which we review in Appendix~\ref{useful-lemmas}.
 Let
$\fclass$ be the class of monotone submodular functions and let the loss function
$L_f(x,y)$ to be $0$ if $f(x) \leq y \leq
 \alpha f(x)$ and $1$ otherwise. Let ${\fclass_L}$ be the class of functions induced by the original class and the
loss function.

Take $\alpha$ to be $n^{1/3}/(2 \log^2 n)$.
Let $D_X$ be the distribution on $\set{0,1}^n$ that
is uniform on the set $\cA$ defined in \Theorem{mainlb}.
Let $\targetf$ be the target function and let $D$ be the induced distribution
over $\set{0,1}^n \times \bR$ by the distribution $D_X$ and the target function $\targetf$.

\begin{theorem}
For $m=\poly(n, 1/\epsilon, 1/\delta)$, for any sample $S$ of size $m$ from $D$,
${\hrad}_S(\fclass_L)\geq1/4$. Moreover ${\rad}_m(\fclass_L) \geq 1/4$.
\end{theorem}

\begin{proof}
Let $S=\{(x_1,\targetf(x_1)), ..., (x_l, \targetf(x_l))\}$ be our i.i.d.~set of labeled examples.
It is easy to show that $m=\poly(n, 1/\epsilon, 1/\delta)$, then w.h.p.~the points $x_i$ are different.
Fix a vector $\sigma \in \{-1,1\}^m$. If $\sigma(i)$ is $1$ consider
$h_{\sigma}(x_i)= n^{1/3} + \log^2 n -\targetf(x_i)$;
if  $\sigma_i$ is $-1$ consider $h_{\sigma}(x_i) = \targetf(x_i)$.
If we then average over $\sigma$ the quantity $\frac{1}{m} \sum_{i=1}^m {\sigma_i \cdot h_{\sigma}(x_i)}$, since
for each $x_i$ there are approximately the same
number of $+1$ as $-1$ $\sigma_i$ values, we  get a large constant $\geq 1/4$, as desired.
\end{proof}}

\end{document}